\documentclass{lmcs}
\pdfoutput=1

\usepackage{lastpage}
\lmcsdoi{16}{3}{11}
\lmcsheading{}{\pageref{LastPage}}{}{}%
{Apr.~18,~2014}{Aug.~20,~2020}{}

\usepackage{pstricks,pst-node,pst-tree}
\usepackage[utf8]{inputenc}
\usepackage{amssymb}
\usepackage{amsmath}
\usepackage{graphicx}
\usepackage{amsthm}
\usepackage{ifthen}
\usepackage{import}
\usepackage{color}

\newcommand{\calA}{{\mathcal A}}
\newcommand{\calB}{{\mathcal B}}
\newcommand{\calC}{{\mathcal C}}
\newcommand{\calD}{{\mathcal D}}
\newcommand{\calE}{{\mathcal E}}
\newcommand{\calP}{{\mathcal P}}

\newcommand{\calS}{{\mathcal S}}
\newcommand{\calT}{{\mathcal T}}
\newcommand{\calX}{{\mathcal X}}

\newcommand{\lab}[1]{\label{#1}}
\newcommand{\subrun}[3]{#1{\restriction}_{#2,#3}}
\newcommand{\Rz}[1]{\langle #1\rangle}
\renewcommand{\epsilon}{\varepsilon}
\renewcommand{\phi}{\varphi}
\newcommand{\topp}{\mathsf{top}} 
\newcommand{\tops}{\mathsf{top}} 

\newcommand{\pop}{\mathsf{pop}}
\newcommand{\posl}{\mathsf{pos}{\downarrow}}
\newcommand{\poslinv}{\mathsf{pos}_+}
\newcommand{\posp}{\mathsf{p}_{+1}}
\newcommand{\push}{\mathsf{push}}
\renewcommand{\read}{\mathsf{read}}
\newcommand{\collapse}{\mathsf{collapse}}
\newcommand{\branch}{\mathsf{branch}}
\newcommand{\st}{\mathsf{st}}
\newcommand{\ass}{\mathsf{ass}}
\renewcommand{\red}{\mathsf{red}}
\newcommand{\pb}{\mathsf{pb}}
\newcommand{\rd}{\mathsf{rd}}
\newcommand{\conf}{\mathsf{conf}}
\newcommand{\pow}{\mathsf{pow}}
\newcommand{\low}{\mathsf{low}}
\newcommand{\high}{\mathsf{high}}
\newcommand{\len}{\mathsf{len}}
\newcommand{\pr}{\mathsf{pr}}
\newcommand{\np}{\mathsf{np}}
\newcommand{\type}{\mathsf{type}}
\newcommand{\stype}{\mathsf{stype}}
\newcommand{\hist}{\mathsf{hist}}
\newcommand{\up}{\mathsf{up}}
\newcommand{\ret}{\mathsf{ret}}
\newcommand{\rank}{\mathit{rank}}
\newcommand{\stars}{\mathit{stars}}
\newcommand{\Nat}{\mathbb{N}}
\renewcommand{\mid}{\colon\,}
\newcommand{\sbf}{\mathbf{s}}
\renewcommand{\tt}{\mathbf{t}}
\newcommand{\uu}{\mathbf{u}}
\newcommand{\vv}{\mathbf{v}}
\newcommand{\RR}{\mathbf{R}}
\renewcommand{\SS}{\mathbf{S}}
\newcommand{\TT}{\mathbf{T}}
\newcommand{\depth}{\mathsf{depth}}
\newcommand{\prz}[2]{[#1,#2]}

\definecolor{darkgreen}{RGB}{0,191,0}
\providecommand{\new}[1]{#1}
\providecommand{\old}[1]{}
\newcommand{\dtempty}{\mathsf{empty},}
\newcommand{\dtread}{\read,}
\newcommand{\dtpop}{\pop,}
\newcommand{\dtpush}{\push,}

\begin{document}

\author{Pawe\l\ Parys}
\address{University of Warsaw, ul.~Banacha 2, 02-097 Warszawa, Poland}
\email{parys@mimuw.edu.pl}
\thanks{Work supported by the National Science Center (decision DEC-2012/07/D/ST6/02443).
The author holds a post-doctoral position supported by Warsaw Center of Mathematics and Computer Science.}

\title{On the Expressive Power of\texorpdfstring{\\~}{}
  Higher-Order Pushdown Systems}

\begin{abstract}
	We show that deterministic collapsible pushdown automata of second order can recognize
	a language that is not recognizable by any deterministic higher-order pushdown automaton (without collapse) of any order.
	This implies that there exists a tree generated by a second order collapsible pushdown system (equivalently, by a recursion scheme of second order) that is not generated by any deterministic higher-order pushdown
	system (without collapse) of any order (equivalently, by any safe recursion scheme of any order).
	As a side effect, we present a pumping lemma for deterministic higher-order pushdown automata, which potentially can be useful for other
	applications.
\end{abstract}

\keywords{Higher-order pushdown systems, collapse, higher-order recursion schemes}
\subjclass{F.1.1. Models of computation---Relations between models}
\titlecomment{This is a full version of our conference paper \cite{ho-new}}

\maketitle

\section{Introduction}

Already in the 70's, Maslov \cite{Mas74,Mas76} generalized the concept of pushdown automata to higher-order pushdown automata ($n$-PDA)
by allowing the stack to contain other stacks rather than just atomic elements.
In the last decade, renewed interest in these automata has arisen. 
They are now studied not only as acceptors of string languages, but also as generators of graphs and trees.
It was an interesting problem whether the class of trees generated by $n$-PDA coincides with the class of trees generated by order-$n$ recursion schemes.
Knapik, Niwiński, and Urzyczyn \cite{easy-trees} showed something similar but different: that this class coincides
with the class of trees generated by \emph{safe} order-$n$ recursion schemes (safety is a syntactic restriction on the recursion scheme),
and Caucal \cite{Caucal02} gave another characterization: trees of order $n+1$ are obtained from trees of order $n$
by an MSO-interpretation of a graph, followed by application of unfolding.

Driven by the question whether safety implies a semantical restriction to recursion schemes
Hague, Murawski, Ong, and Serre \cite{collapsible} extended the model of $n$-PDA to order-$n$ collapsible pushdown automata ($n$-CPDA) by
introducing a new stack operation called collapse,
and proved that the class of trees generated by $n$-CPDA coincides with the class of trees generated by order-$n$ recursion schemes
(earlier, Knapik, Niwiński, Urzyczyn, and Walukiewicz \cite{panic} introduced panic automata, a model equivalent to $2$-CPDA).
Let us mention that these trees have decidable MSO theory \cite{ong-lics}, 
and that higher-order recursion schemes have close connections with verification of some real life higher-order programs \cite{Kobayashi09}.

Nevertheless, it was still an open question whether these two hierarchies of trees are possibly the same hierarchy?
This problem was stated in Knapik et al.~\cite{easy-trees} 
and repeated in other papers concerning higher-order pushdown automata \cite{panic,AehligMO05,ong-lics,collapsible}.
A partial answer to this question was given in our previous paper \cite{parys-panic}: 
there is a tree generated by a $2$-CPDA that is not generated by any $2$-PDA.
We prove the following stronger property.

\begin{thm}\lab{thm:main-tree}
	There is a tree generated by a $2$-CPDA (equivalently, by a recursion scheme of order $2$) that is not generated by any $n$-PDA, for any $n$
	(equivalently, by any safe recursion scheme of any order).
\end{thm}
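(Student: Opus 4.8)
The plan is to deduce the theorem from a purely language-theoretic separation supported by a new pumping lemma, via three steps: a reduction of Theorem~\ref{thm:main-tree} to a statement about recognized languages, a concrete witness language together with the $2$-CPDA recognizing it, and a pumping lemma for deterministic $n$-PDA that the witness violates. For the reduction I would use the standard machinery behind the equivalences of \cite{easy-trees,collapsible}: from an $n$-PDA (resp.\ $2$-CPDA) generating a ranked tree $t$ one obtains, by simulating it along a single branch, a deterministic $n$-PDA (resp.\ $2$-CPDA) recognizing the ``marked-branch'' language $\{\,w \mid \text{the node at address $w$ in $t$ carries a distinguished label}\,\}$; conversely, a deterministic $2$-CPDA recognizing a language $L$ is easily turned into a $2$-CPDA generating a tree $t_L$ whose marked-branch language is exactly $L$. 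Hence it suffices to exhibit a language $L_\star$ recognized by a deterministic $2$-CPDA but by no deterministic $n$-PDA, for any $n$; the tree in the statement is then $t_{L_\star}$, which strengthens the order-$2$ separation of \cite{parys-panic}.

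The language $L_\star$ must encode a long-distance dependency that collapse resolves in one step but that ``locally iterative'' $n$-PDA behaviour cannot keep. Concretely, an input of $L_\star$ describes a long sequence of blocks, each produced together with a level-$2$ collapse link pointing back to the configuration that created it, followed by a verification phase in which, after a single $\collapse$, the device tests an equality between a freshly supplied value and a value stored below arbitrarily many earlier blocks. A deterministic $2$-CPDA recognizes $L_\star$ by pushing, for each block, a fresh order-$1$ stack whose top symbol carries such a link, then collapsing to the targeted block and comparing counters; checking this construction is routine bookkeeping, which I would only sketch. The design point is that the blocks can be interleaved so that a correct word with $k$ blocks forces $k$ essentially independent synchronisations between far-apart parts of the word.

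The technical heart --- the ``side effect'' promised in the abstract --- is a pumping lemma: for every deterministic $n$-PDA there is a constant $N$ such that every accepted word of length greater than $N$ decomposes into boundedly many factors containing a bounded number of nonempty ``pumpable'' blocks, organised in at most $n$ nested levels, which can be iterated simultaneously so that the resulting words remain accepted. I would prove this by induction on the order: a sufficiently long run must revisit, at the topmost stack level, a configuration equivalent to an earlier one under a suitable finite equivalence that accounts for the exposed $(n-1)$-stack, yielding an outermost loop; the behaviour strictly inside such a loop is governed by an $(n-1)$-PDA-like device, to which the induction hypothesis applies, producing the nested structure. Formalising ``equivalent configuration'', ``return'', and the restriction $\subrun{\rho}{i}{j}$ of a run $\rho$, and controlling how iterating one block perturbs the others so that they stay simultaneously pumpable, is the delicate part and absorbs most of the effort.

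To conclude, suppose $L_\star$ were recognized by a deterministic $n$-PDA, and apply the pumping lemma to a long word $w \in L_\star$ carrying more than $n$ blocks. A nonempty pumpable block of any permitted decomposition can, through its at most $n$ nested levels, influence in the verification phase only boundedly many of the synchronisations, whereas membership in $L_\star$ ties the verified value to all of the more than $n$ preceding blocks; hence some iteration breaks an equality and leaves $L_\star$ --- a contradiction, so no deterministic $n$-PDA recognizes $L_\star$. Combined with the reduction, this proves the theorem. I expect the main obstacle to be the interaction between the last two steps: one must prove a pumping lemma uniform enough to apply at every order \emph{and} design $L_\star$ so that its defining constraint provably withstands every decomposition the lemma can produce --- which, in particular, requires making the ``independence of the $k$ synchronisations'' robust even against pumpable blocks that straddle several blocks of the input.
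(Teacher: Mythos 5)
Your overall architecture---reduce the tree statement to a separation of word languages recognized by deterministic devices, exhibit a witness language recognized by a deterministic $2$-CPDA, and refute recognizability by any deterministic $n$-PDA via a pumping argument---is the same as the paper's, and your reduction between trees and marked-branch languages is essentially the paper's Section~\ref{sec:words-trees}. The genuine gaps are in the other two steps. The pumping lemma you propose is of a classical ``decompose an accepted word into boundedly many simultaneously iterable blocks, nested to depth $n$'' form, and your proof sketch (find a repeated equivalent configuration at the top stack level, recurse into an $(n-1)$-PDA-like device inside the loop) does not go through for higher-order stacks: a lower-order substack can be copied and later re-exposed in many different surrounding contexts, so the behaviour ``strictly inside a loop'' is not governed by a single lower-order device, and no finite equivalence on exposed configurations controls the number of distinct ways a substack is revisited. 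The paper's answer to exactly this obstacle is its type system for configurations and the notion of $(k,\phi)$-parallel runs, and its pumping lemma (Theorem~\ref{thm:pumping}) is a deliberately weaker, ``unnatural'' statement about milestone configurations and runs whose initial segment reads only stars---not a general decomposition of accepted words. No pumping lemma of the strength you assume is known for $n$-DPDA, and your refutation depends on it essentially.

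Second, even granting a pumping lemma, your concluding step (``iterating a block breaks an equality'') skips the central difficulty the paper flags explicitly in its overview: after pumping, the automaton might never revisit the stack region where the changed count is stored, because the relevant quantity could have been propagated elsewhere by stack manipulations or could accidentally coincide with another quantity already on the stack. The paper rules this out by pumping in two different places of the input, extracting by pigeonhole two $(\calA,\lambda)$-sequence-equivalent sequences of configurations, and invoking the boundedness dichotomy of Theorem~\ref{thm:stypes}: the numbers of $\sharp$ symbols read by the continuations from the two sequences must be both bounded or both unbounded, whereas membership in $U$ forces one to be bounded and the other unbounded. Your witness language is also left too vague to support such an argument; the paper's $U$ (with $\stars(\cdot)$ counting stars before the last unmatched opening bracket) and the nested words $w_{k+1}=w_k^N]^N[$ are engineered precisely so that the induction on the order $k$ and the two-place pumping argument can be carried out. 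As it stands, the proposal identifies the right obstacles in its final sentence but does not supply the ideas (types, sequence-equivalence, milestones) that the paper introduces to overcome them.
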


This confirms that the correspondence between higher-order recursion schemes and higher-order pushdown automata is not perfect.
The tree used in Theorem~\ref{thm:main-tree} (after some adaptations) comes from Knapik et al.~\cite{easy-trees} and from that time was conjectured to be a good example.

In this paper we work with PDA that recognize words instead of generating trees.
While in general PDA used to recognize word languages can be nondeterministic, 
trees generated by PDA closely correspond to word languages recognized by deterministic PDA.
Technically, we prove the following theorem, from which Theorem~\ref{thm:main-tree} follows (it is shown in Section~\ref{sec:words-trees} how these theorems are related).

\begin{thm}\lab{thm:main-det}
	There is a language recognized by a deterministic $2$-CPDA that is not recognized by any deterministic $n$-PDA, for any $n$.
\end{thm}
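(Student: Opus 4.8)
The plan is to exhibit one concrete witness language $L$, recognise it by a deterministic $2$-CPDA whose essential use of $\collapse$ is transparent, and then rule out every deterministic $n$-PDA by a pumping argument. For $L$ I would take a suitable variant of the Urzyczyn-style language underlying the tree of Knapik et al.~\cite{easy-trees} — the family already used in \cite{parys-panic} — in which a word encodes a finite, possibly deeply nested bracketing with one distinguished ``marked'' opening bracket, followed by a block of unary symbols, and membership requires that this block reproduces a quantity (or a substring) rigidly determined by the marked position inside the nesting. The reason such a language separates the two models is \emph{not} a growth-rate phenomenon — higher-order PDA of large order can produce enormously long words — but a structural one: a $2$-CPDA can plant a collapse link at the marked bracket and later jump \emph{directly} back to that position, whereas a deterministic $n$-PDA, which can only return to a position by popping or by having pre-emptively duplicated the relevant sub-stack, cannot do so once the material between the mark and the block is itself deeply nested.

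Constructing the deterministic $2$-CPDA is the routine part. The automaton scans the input left to right, maintaining on its order-$1$ stack a representation of the currently open brackets via $\push_1$ and $\pop_1$; at the moment the marked bracket is read it performs the $\push_1$ that records a collapse link to the then-current stack, so that a collapse from anything later pushed above it lands exactly in the configuration present at the mark. When the final unary block begins, the automaton performs $\collapse$, recovering that configuration regardless of how much pushing, popping and copying happened in between, and then consumes the block while decrementing, accepting iff the block is exhausted precisely when the recovered structure is. Every transition is forced by the current input letter, so the automaton is deterministic; hence $L$ is recognised by a deterministic $2$-CPDA.

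The heart of the argument — and the side result advertised in the abstract — is a pumping lemma for deterministic $n$-PDA. I would prove: for every deterministic $n$-PDA $\calA$ there is a constant $p$ such that every $w\in L(\calA)$ with $|w|>p$ admits a factorisation $w=w_0w_1w_2\cdots w_kw_{k+1}$ with $k$ bounded by a constant depending only on $\calA$, with $0<|w_1\cdots w_k|\le p$, and such that $w_0w_1^{t_1}\cdots w_k^{t_k}w_{k+1}\in L(\calA)$ for all iteration counts $t_1,\dots,t_k$ in a suitable monotone range; crucially the pumpable factors are short and ``local'', so that iterating them can only add a bounded, input-independent amount to the unary block. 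The proof proceeds level by level through the stack hierarchy: along the run on $w$ one records, besides the sequence of visited states, a finite abstraction of the stack — the top order-$(n{-}1)$ component truncated to bounded height together with, for each order-$j$ sub-stack currently present, the state in which it was created and the state in which it was last re-entered — shows this abstraction ranges over a finite set, finds a repetition by the pigeonhole principle, and then invokes determinism: when the abstraction repeats, the automaton is \emph{forced} to behave identically on any common continuation, so the piece of $w$ between the two occurrences may be removed or iterated while preserving acceptance.

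The theorem then follows. Choose $w\in L$ with extremely deep nesting between the marked bracket and the final unary block; the pumping lemma produces short pumpable factors, and iterating one that lies inside the nesting increases the depth — hence changes the quantity the unary block must encode — without bound, while the factors inside the block can be iterated only at fixed rates chosen independently of that. Since membership forces an exact match between the two sides, some pumped word lies outside $L$, contradicting $L=L(\calA)$. This proves Theorem~\ref{thm:main-det}; Theorem~\ref{thm:main-tree} is obtained from it by the word-to-tree transfer carried out in Section~\ref{sec:words-trees}. The delicate point throughout is the pumping lemma: at order $\ge 2$, ``pumping'' must respect the nested structure of the stacks and, above all, the places at which the automaton can re-enter a previously built sub-stack, so choosing the right finite stack abstraction — strong enough to be finite yet to force the rigidity that makes the contradiction go through — is where the real work lies.
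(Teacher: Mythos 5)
There is a genuine gap, and it sits exactly where you locate ``the real work'': the pumping lemma you propose is false for higher-order DPDA in the classical form you state. A $2$-PDA recognizes $\{a^{2^m}\mid m\geq 0\}$, and no factorisation with $0<|w_1\cdots w_k|\le p$ and freely iterable short factors can stay inside that language, because the gaps between consecutive members grow without bound while your pumped words differ from $w$ by an amount linear in the iteration counts. The deeper reason is structural: a factor pumped at position $i$ of the input may build a sub-stack that is later duplicated arbitrarily many times by $\push^j$ operations of higher order and re-entered via returns, so its contribution to the suffix of the run is \emph{not} ``a bounded, input-independent amount'' --- it can be amplified $(n{-}1)$-fold exponentially. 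This is precisely why the paper's pumping lemma (Theorem~\ref{thm:pumping}) has its admittedly unnatural shape: it only guarantees that after inserting extra stars the continuation is $(k,\phi)$-\emph{parallel} to the original (same images under a finite monoid morphism, same positionless topmost $k$-stacks at the decomposition points), and it says nothing about the number of $\sharp$ symbols read, which must then be controlled separately by the sequence-equivalence machinery of Theorem~\ref{thm:stypes}.

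Your proof of the pumping lemma also does not go through as described. An abstraction that records data ``for each order-$j$ sub-stack currently present'' does not range over a finite set, since the number of such sub-stacks is unbounded; and even if two configurations agree on a genuinely finite abstraction (the paper's $(\calA,\phi)$-types), determinism does not force identical continuations, because the continuation may perform returns into the parts of the stack that differ between the two occurrences. The paper needs all of Section~\ref{sec:types} (run descriptors with assumption sets and productivity flags, annotated stacks, the $\low$/$\high$ bounds) to recover even the weaker conclusion that a \emph{parallel} run exists, plus the milestone analysis of Section~\ref{sec:milestone} to know where pumping is legitimate. Finally, your concluding step (``some pumped word lies outside $L$'') assumes the automaton cannot compensate for the pumped material elsewhere in the stack; ruling that out is the third difficulty the paper flags in its overview, and it is resolved only by comparing \emph{two} pumped sequences (stars inserted before versus after the critical bracket) and showing via sequence-equivalence that the automaton cannot make the sharp counts unbounded in one and bounded in the other. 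As it stands, your argument would ``prove'' too much --- e.g.\ that $\{a^{2^m}\}$ is not recognizable by any $n$-PDA.
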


As a side effect, in Section~\ref{sec:pumping} we present a pumping lemma for higher-order pushdown automata.
Although its formulation is not very natural, we believe it may be useful for some other applications.
The lemma is similar to the pumping lemma from our another paper \cite{parys-pumping}; see Section~\ref{sec:pumping} for some comments.
Earlier, several pumping lemmas related to the second order of the pushdown hierarchy were proposed \cite{hayashi-pumping,gilman-pumping,kartzow-pumping}.

This paper is an extended version of our conference paper \cite{ho-new}.
The proof of Theorem~\ref{thm:main-tree} goes along the same line, but with essential differences in details.
The part about types (Section~\ref{sec:types}) was simplified slightly, in the cost of complicating other parts 
(which was necessary since Theorem~\ref{thm:types} is now proven in a weaker form than in the conference paper).

\subsection{Related Work}

One may ask a similar question for word languages instead of trees: is there a language recognized by a CPDA that is not recognized by any (nondeterministic) PDA?
This is an independent problem.
The answer is known only for order $2$ and is opposite:
one can see that in $2$-CPDA the collapse operation can be simulated by nondeterminism, 
hence $2$-PDA and $2$-CPDA recognize the same languages \cite{AehligMO05}.
It is also an open question whether all word languages recognized by CPDA are context-sensitive.

We have shown \cite{collapse-data} that the collapse operation increases the expressive power of deterministic higher-order pushdown automata with data.
In this model of automata each letter from the input word is equipped by a data value, which comes from an infinite set;
these data values can be stored on the stack and compared with other data values.
In such a setting the proof becomes easier than in the no-data case considered in this paper.

One can consider configuration graphs of $n$-PDA and $n$-CPDA, and their $\epsilon$-closures.
We know \cite{collapsible} that there is a $2$-CPDA whose configuration graph has undecidable MSO theory, 
hence which is not a configuration graph of an $n$-PDA, nor an $\epsilon$-closure of such,
as they all have decidable MSO theories.

Engelfriet \cite{Engelfriet91} showed that the hierarchies of word languages and of trees generated by PDA are strict 
(that is, for each $n$ there is a language recognized by an $n$-PDA that is not recognized by any $(n-1)$-PDA, and similarly for trees).
As observed by Hau{\ss}ner and Kartzow \cite{HeussnerKartzow}, his proof works equally well for these hierarchies for CPDA, once we know that the reachability problem for $n$-CPDA is $(n-1)$-EXPTIME complete
(which follows from Kobayashi and Ong \cite{emptiness-n-1-exptime}).

\section{Preliminaries}


For natural numbers $a$, $b$, where $b\geq a-1$, by $\prz{a}{b}$ we denote the set $\{a,\dots,b\}$ (which is empty if $b=a-1$).

In the whole paper, the letter $n$ is used exclusively for the order of pushdown automata, which is usually assumed to be fixed and known implicitly.

We now define \emph{stacks of order $k$} (\emph{$k$-stacks} for short).
Traditionally, a $0$-stack is just a single symbol, and a $k$-stack for $k\geq 1$ is a (possibly empty) sequence of nonempty $(k-1)$-stacks.
However, having a $k$-stack that is a part of an $r$-stack for $k<r$, it is convenient to know where this $k$-stack is located in the $r$-stack.
For this reason, we equip every element of a stack by its position, written as a vector of natural numbers.
Thus, for a fixed alphabet $\Gamma$ (of \emph{stack symbols}), a stack of order $0$ is a pair $(\gamma,x)$, 
where $\gamma\in\Gamma$ and $x=(x_n,x_{n-1},\dots,x_1)$ is a vector of $n$ positive integers, called a \emph{position}.
Then, for $k\in\prz{1}{n}$ we define $k$-stacks by induction:
a $k$-stack is a list $[s_1,s_2,\dots,s_m]$ of nonempty $(k-1)$-stacks (where, by convection, all $0$-stacks are nonempty) 
for which there exist numbers $x_n,x_{n-1},\dots,x_{k+1}$ such that, for $i\in\prz{1}{m}$, all positions in $s_i$ are of the form $(x_n,x_{n-1},\dots,x_{k+1},i,y_{k-1},y_{k-2},\dots,y_1)$.
By $\Gamma^k_*$ and $\Gamma^k_+$ we denote the the set of order-$k$ stacks, and the set of nonempty order-$k$ stack, respectively, where $k\in\prz{0}{n}$.
The top of a stack is on the right.

For example, when we have a $3$-stack $s$, and $n=5$, then the second $0$-stack of the third $1$-stack (counting from the bottom) of the bottommost $2$-stack of $s$
is of the form $(\gamma,(x_5,x_4,1,3,2))$, where $x_5$ and $x_4$ say where $s$ is located in an imaginary $5$-stack; the numbers $x_5$ and $x_4$ should be the same in the whole $s$.

For a $k$-stack $s^k$, where $k\in\prz{0}{n-1}$, let $\posp(s^k)$ be the $k$-stack obtained from $s^k$ by increasing the $(n-k)$-th coordinate of all its positions by $1$.
For example $\posp((\gamma,(2,3)))=(\gamma,(2,4))$, and $\posp([(\gamma,(2,1)),(\gamma,(2,2))])=[(\gamma,(3,1)),(\gamma,(3,2))]$.

Let us emphasize that when for two $k$-stacks $s^k$, $t^k$ we write $s^k=t^k$, we mean that not only their contents are equal, 
but also positions contained in their $0$-stacks are equal; thus, when $s^k$ and $t^k$ come from the same $n$-stack, 
this actually means that $s^k$ and $t^k$ refer to the same $k$-stack.

While comparing two stacks, we sometimes need to ignore positions contained in their $0$-stacks, and compare only their contents.
For a $k$-stack $s^k$, let \emph{positionless stack} $\posl(s^k)$ be the list of lists of ... of lists of stack symbols obtained from $s^k$ by removing positions from all $0$-stacks.
We say that two $k$-stacks $s^k, t^k$ are \emph{positionless-equal}, denoted $s^k\cong t^k$, when $\posl(s^k)=\posl(t^k)$.
When $s^n_-$ is a positionless $n$-stack, there is a unique $n$-stack $s^n$ such that $s^n_-=\posl(s^n)$; we write $\poslinv(s^n_-)$ for $s^n$.

The \emph{size} of a $k$-stack $s^k$, denoted $|s^k|$, is the number of $(k-1)$-stacks it contains.
When $s^k=[s_1,s_2,\dots,s_m]\in\Gamma^k_*$, and $s^{k-1}\in\Gamma^{k-1}_+$, and $[s_1,s_2,\dots,s_m,s^{k-1}]$ is a valid $k$-stack,
we denote this $k$-stack by $s^k:s^{k-1}$.
The operator ``$:$'' is assumed to be right associative (i.e., e.g., $s^2:s^1:s^0=s^2:(s^1:s^0)$).
When $0\leq k\leq r$, and $s^r=t^r:t^{r-1}:\dots:t^k\in\Gamma^r_+$, by $\tops^k(s^r)$ we denote the topmost $k$-stack of $s^r$, that is, $t^k$.
We use the name \emph{positionless topmost $k$-stack} for $\posl(\topp^k(\cdot))$.

When $\Gamma$ is fixed, the \emph{stack operations} of order $k\geq 1$ are $\pop^k$ and $\push^k_\gamma$ for each $\gamma\in\Gamma$.
We can apply them to a nonempty $r$-stack for $r\geq k$, which gives the following:
\begin{itemize}
\item	$\pop^k(s^r:s^{r-1}:\dots:s^k:s^{k-1})=s^r:s^{r-1}:\dots:s^k$, that is, we remove the topmost $(k-1)$-stack;
	it is defined only when the topmost $k$-stack contains at least two $(k-1)$-stacks;
\item	$\push^k_\gamma(s^r:s^{r-1}:\dots:s^0)=s^r:s^{r-1}:\dots:s^{k+1}:(s^k:s^{k-1}:\dots:s^0):\posp(s^{k-1}:s^{k-2}:\dots:s^1:(\gamma,x))$ for $s^0=(\gamma',x)$,
	that is, we duplicate the topmost $(k-1)$-stack, and then we replace the topmost stack symbol by $\gamma$, adjusting appropriately all positions.\footnote{
	 	In the classical definition the topmost symbol can be changed only when $k=1$ (for $k\geq 2$ it required that $\gamma=\gamma'$).
	 	We make this (unimportant) extension to have a uniform definition of $\push^k$ for all $k$.
	 }
\end{itemize}

A \emph{deterministic word-recognizing pushdown automaton of order $n$} (\emph{$n$-DPDA} for short) is a tuple $(A,\Gamma,\gamma_I,Q,q_I,F,\delta)$ 
where $A$ is an input alphabet, $\Gamma$ is a stack alphabet, $\gamma_I\in\Gamma$ is an initial stack symbol, 
$Q$ is a set of states, $q_I\in Q$ is an initial state, $F\subseteq Q$ is a set of accepting states,
and $\delta$ is a transition function that maps every element of $Q\times\Gamma$ into one of the following objects:
\begin{itemize}
\item	$\read(\vec{q})$, where $\vec{q}:A\to Q$ is an injective function, or
\item	$(q,op)$, where $q\in Q$ and $op$ is a stack operation of order at most $n$.
\end{itemize}

A \emph{configuration} of $\calA$ consists of a state and of a nonempty $n$-stack, that is, it is an element of $Q\times\Gamma_+^n$.
The \emph{initial} configuration consists of the initial state $q_I$ and of the $n$-stack containing only one $0$-stack, 
enclosing the initial stack symbol $\gamma_I$.
We use the notation $\pi_i((p_1,\dots,p_k))=p_i$; in particular for a configuration $c$, $\pi_1(c)$ denotes its state, and $\pi_2(c)$ its stack.
Additionally, for a set $X$ of tuples we define $\pi_i(X)$ to be $\{\pi_i(p)\mid p\in X\}$.
In order to shorten the notation, for a configuration $c$ we sometimes write $\tops^k(c)$ or $\pop^k(c)$ for $\tops^k(\pi_2(c))$ or $\pop^k(\pi_2(c))$, respectively.

We use a shorthand $\delta(c)$ for a configuration $c$ to denote $\delta(\pi_1(c),\posl(\tops^0(c)))$.
A configuration $d$ is a \emph{successor} of a configuration $c$, if
\begin{itemize}
\item	$\delta(c)=\read(\vec{q})$, and $d=(\vec{q}(a),\pi_2(c))$ for some $a\in A$, or
\item	$\delta(c)=(q,op)$, and $d=(q,op(\pi_2(c)))$.
\end{itemize}
Notice that a configuration $c$ has 
\begin{itemize}
\item	$|A|$ successors, if the transition is $\read(\vec{q})$;
\item	no successors, if the operation is $\pop^k$ but there is only one $(k-1)$-stack on the topmost $k$-stack;
\item	one successor, otherwise.
\end{itemize}

Next, we define a \emph{run} of $\calA$. 
For $0\leq i \leq m$, let $c_i$ be a configuration.
A run $R$ from $c_0$ to $c_m$ is a sequence $c_0,c_1,\dots,c_m$ such that, for each $i\in\prz{1}{m}$, $c_i$ is a successor of $c_{i-1}$.
We set $R(i)=c_i$ and call $\lvert R\rvert=m$ the \emph{length of $R$}. 
The \emph{subrun} $\subrun{R}{i}{j}$ is $c_i,c_{i+1},\dots,c_j$. 
For runs $R,S$ with $R(\lvert R \rvert)=S(0)$, we write $R\circ S$ for the \emph{composition} of $R$ and $S$ that is defined as expected. 
Sometimes we also consider infinite runs, such that the sequence $c_0,c_1,c_2,\dots$ is infinite.
However, unless stated explicitly, a run is finite.

The \emph{word read by a run} is a word over the input alphabet $A$.
For a run from a configuration $c$ to its successor $d$, it is the empty word if the transition between them is of the form $(q,op)$.
If the transition is $\read(\vec{q})$, this is the one-letter word consisting of the letter $a$ for which $\pi_1(d)=\vec{q}(a)$ (this letter is determined uniquely, as $\vec{q}$ is
injective).
For a longer run $R$ this is defined as the concatenation of the words read by the subruns $\subrun{R}{i-1}{i}$ for $i\in\prz{1}{|R|}$.
A run is \emph{accepting} if it ends in a configuration whose state is accepting.
A word $w$ is \emph{accepted by $\calA$} if it is read by some accepting run starting in the initial configuration.
The \emph{language recognized by $\calA$} is the set of words accepted by $\calA$.

\subsection{Collapsible $2$-DPDA}

In Section~\ref{sec:lang} we also use deterministic collapsible pushdown automata of order $2$ ($2$-DCPDA for short).
Such automata are defined like $2$-DPDA, with the following differences.
A $0$-stack contains now three parts: a symbol from $\Gamma$, a position, and a natural number, but
still only the symbol (together with a state) is used to determine which transition is performed from a configuration.
The $\push^1_\gamma$ operation sets the number in the topmost $0$-stack to the current size of the $2$-stack (while $\push^2_\gamma$ does not modify these numbers).
We have a new stack operation $\collapse$.
Its result $\collapse(s)$ is obtained from $s$ by removing its topmost $1$-stacks, so that only $k-1$ of them are left, where
$k$ is the number stored in $\tops^0(s)$ (intuitively, we remove all $1$-stacks on which the topmost $0$-stack is present).

\section{Relation between Word Languages and Trees}\label{sec:words-trees}

In this section we describe how word languages recognized by DPDA are related to trees generated by PDA.
Before seeing how Theorem~\ref{thm:main-det} implies Theorem~\ref{thm:main-tree}, we need to define how $n$-PDA are used to generate trees.
We consider ranked, potentially infinite trees.
Beside of the input alphabet $A$ we have a function $\rank\colon A\to\Nat$;
a tree node labelled by some $a\in A$ has always $\rank(a)$ children.

Automata used to generate trees are defined like DPDA or DCPDA (in particular they are deterministic as well), 
with the difference that they do not have the set of accepting states, 
and that instead of the $\read(\vec{q})$ transitions, there are $\branch(a,q_1,q_2,\dots,q_{\rank(a)})$ transitions, 
for $a\in A$, and for pairwise distinct states $q_1,q_2,\dots,q_{\rank(a)}\in Q$.
If the transition from $c$ is $\delta(c)=\branch(a,q_1,q_2,\dots,q_{\rank(a)})$, in a successor $d$ of $c$ we have $\pi_2(d)=\pi_2(c)$ and $\pi_1(d)=q_i$
for some $i\in\prz{1}{\rank(a)}$ (in particular $c$ has no successors if $\rank(a)=0$).

Let $T(\calA)$ be the set of all configurations $c$ of $\calA$ reachable from the initial one, such that a $\branch$ transition should be performed from $c$.
If there is a configuration of $\calA$ reachable from the initial one, from which there is no run to a configuration from $T(\calA)$,
by definition $\calA$ does not generate any tree.
Otherwise, a \emph{tree generated by $\calA$} has runs from the initial configuration to a configuration from $T(\calA)$ as its nodes.
A node $R$ is labelled by $a\in A$ such that $\delta(R(|R|))=\branch(a,q_1,q_2,\dots,q_{\rank(a)})$.
A node $S$ is its $i$-th child ($1\leq i\leq \rank(a)$), if $S$ is the composition of $R$ and a run $S'$ that uses a $\branch$ transition only in its first
transition, and for which $\pi_1(S'(1))=q_i$.
Notice that the graph obtained this way is really an $A$-labelled ranked tree.

We now see how Theorem~\ref{thm:main-tree} follows from Theorem~\ref{thm:main-det}.
Let $L\subseteq A^*$ be the language recognized by a $2$-DCPDA $\calA$ that is not recognized by any $n$-DPDA, for any $n$ ($L$ exists by Theorem~\ref{thm:main-det}).
First, we transform $\calA$ into a $2$-DCPDA $\calB$, recognizing $L$ as well, such that each configuration of $\calB$ reachable from the initial one has a successor.
Observe that the only reason why in $\calA$ there may be configurations with no successors is that it wants to empty a stack using a $\pop$ operation.
To avoid such situations, $\calB$ should have some bottom-of-stack marker $\bot$ on the bottom of each $1$-stack, and on the bottom of the $2$-stack (a $1$-stack containing only the $\bot$ marker).
Thus, $\calB$ starts with the $\bot$ marker as the initial stack symbol, performs $\push^2_{\bot}$ and $\push^1_{\gamma_I}$, placing the original initial stack symbol $\gamma_I$.
Then, whenever $\calA$ blocks because it wants to empty a stack, in $\calB$ the bottom-of-stack marker is uncovered; 
in such a situation $\calB$ starts some loop with no accepting state.
There is also a technical detail, that a $\pop$ operation that would block $\calA$, in $\calB$ can enter an accepting state;
to overcome this problem, every $\pop$ operation ending in an accepting state should first end in some auxiliary, not accepting state, from which (if the bottom-of-stack marker is not seen) the
accepting state is reached.

Next, we create a tree-generating $2$-CPDA $\calC$, which generates a tree over the alphabet $B=\{{X},{Y},{Z}\}$, where $\rank({X})=|A|$ 
and $\rank({Y})=\rank({Z})=1$.
It is obtained from $\calB$ in two steps. 
First, we replace each transition $\read(\vec{q})$ of $\calB$ by the transition $\branch({X},\vec{q}(a_1),\vec{q}(a_2),\dots,\vec{q}(a_{|A|}))$, where $A=\{a_1,\dots,a_{|A|}\}$.
Then, in each transition we replace the resulting state $q$ by a fresh auxiliary state $\overline{q}$, and from $\overline{q}$ (for any topmost stack symbol) we perform
transition $\branch({Y},q)$ if $q$ was accepting, or transition $\branch({Z},q)$ if $q$ was not accepting
(this way, after each step of the original automaton, we perform a transition $\branch({Y},\cdot)$ or $\branch({Z},\cdot)$).
Notice that from each configuration of $\calC$ reachable from the initial one, there exists a run to a configuration from $T(\calC)$, as required by the definition of a
tree-generating CPDA.
Let $t_\calC$ be the tree generated by $\calC$.

Finally, suppose that $t_\calC$ can also be generated by some $n$-PDA $\calD$ (without collapse).
From $\calD$ we create a word-recognizing $n$-DPDA $\calE$.
We replace each transition of the form $\branch({X},q_1,q_2,\dots,q_{|A|})$ of $\calD$ by the transition $\read(\vec{q})$, where $\vec{q}(a_i)=q_i$.
We replace each transition $\branch({Y},q)$ of $\calD$ by the transition $(p,\push^1_\gamma)$ for a fresh accepting state $p$ and some stack symbol $\gamma$;
from $(p,\gamma)$ we perform the transition $(q,\pop^1)$ (thus, we replace $\branch({Y},q)$ by a pass through an accepting state).
The same for a $\branch({Z},q)$ transition, but the fresh state $p$ is not accepting.

Notice that $\calE$ recognizes $L$; this contradicts our assumptions about $L$, so $t_\calC$ is not generated by any $n$-PDA.
Indeed, take any word $w\in L$.
We have an accepting run of $\calB$ that reads $w$ and starts in the initial configuration.
This run corresponds to a run of $\calC$, that is, to a path $p$ in $t_\calC$ from the root to a ${Y}$-labelled node.
Letters of $w$ tell us which child the path $p$ chooses in ${X}$-labelled nodes: if $i$-th letter of $w$ is $a_j$, 
then from the $i$-th ${X}$-labelled node of $p$, the path continues to the $j$-th child.
This path $p$ corresponds also to a run of $\calD$, so to a run of $\calE$.
This run starts in the initial configuration, ends with an accepting state, and reads $w$; thus, $\calE$ accepts $w$.
Similarly, each word accepted by $\calE$ is also accepted by $\calB$.

We also recall that a tree is generated by a recursion scheme of order $2$ if and only if it is generated by a $2$-CPDA \cite{collapsible}, 
and that a tree is generated by a safe recursion scheme of order $n$ if and only if it is generated by an $n$-PDA \cite{easy-trees};
this implies the ``equivalently'' parts of Theorem~\ref{thm:main-tree}.

\section{The Separating Language}\lab{sec:lang}

In this section we define a language $U$ that can be recognized by a $2$-DCPDA, but not by any $n$-DPDA, for any $n$.
It is a language over the alphabet $A=\{[,],\star,\sharp\}$.
For a word $w\in \{[,],\star\}^*$ we define $\stars(w)$.
Whenever in some prefix of $w$ there are more closing brackets than opening brackets, $\stars(w)=0$.
Also when in the whole $w$ we have the same number of opening and closing brackets, $\stars(w)=0$.
Otherwise, let $\stars(w)$ be the number of stars in $w$ before the last opening bracket that is not closed.
Let $U$ be the set of words $w\sharp^{\stars(w)+1}$, for any $w\in\{[,],\star\}^*$ 
(i.e., these are words $w$ consisting of brackets and stars, followed by $\stars(w)+1$ sharp symbols).

It is known that languages similar to $U$ can be recognized by a $2$-DCPDA (cf., e.g., Aehlig, de Miranda, and Ong~\cite{AehligMO05}), but for completeness we briefly show it below.
The $2$-DCPDA uses three stack symbols: $X$ (used to mark the bottom of $1$-stacks), $Y$ (used to count brackets), 
$Z$ (used to mark the bottommost $1$-stack).
The initial symbol is $X$.
The automaton first pushes $Z$, makes a copy of the $1$-stack (i.e., it performs $\push^2_{Z}$), and pops $Z$ 
(hence the first $1$-stack is marked with $Z$, unlike any other $1$-stack used later).
Then, for an opening bracket we push $Y$,
for a closing bracket we pop $Y$,
and for a star we perform $\push^2_{\gamma}$ (where $\gamma$ is the topmost stack symbol).
Hence for each star we have a $1$-stack and on the last $1$-stack we have as many $Y$ symbols as the number of currently open brackets.
If for a closing bracket the topmost symbol is $X$, it means that in the word read so far we have more closing brackets than opening brackets;
in this case we should accept suffixes of the form $\{[,],\star\}^*\sharp$, which is easy.

Finally, the $\sharp$ symbol is read.
If the topmost symbol is $X$, we have read as many opening brackets as closing brackets, hence we should accept one $\sharp$ symbol.
Otherwise, the topmost $Y$ symbol corresponds to the last opening bracket that is not closed.
We execute the $\collapse$ operation.
It leaves the $1$-stacks created by the stars read before this bracket, except one (plus the first $1$-stack). 
Thus, the number of $1$-stacks is precisely equal to $\stars(w)$.
Now we should read as many $\sharp$ symbols as we have $1$-stacks, plus one (after each $\sharp$ symbol we perform $\pop^2$), 
and then accept.

In the remaining part of the paper we prove that any $n$-DPDA cannot recognize $U$;
in particular all automata appearing in the following sections do not use collapse.

\section{Overview of the Proof}

Before we start the real proof, in this section we present its general structure, on the intuitive level.
Let us first see why $U$ cannot be recognized by any $1$-DPDA $\calA$.
Consider the input word
\begin{align*}
	w_1=[\star^{n_1}[\star^{n_2}\dots[\star^{n_N}[\star^{m_{N+1}}]\star^{m_N}]\dots\star^{m_1}]\star^{m_0}[
\end{align*}
(where each bracket is matched, except the last opening bracket).
Notice that $\stars(w_1)$ equals the sum of all $n_{i}$ and $m_i$, so $\calA$, after reading $w_1$, has to store all these numbers in its stack.
Thus, it first stores the number $n_1$ on the stack (by repeating some stack symbol $n_1$ times),
then it can mark that there was an opening bracket, then it stores $n_2$, and so on (see Figure~\ref{fig-stack-w1});
none of these numbers can be removed later.
\begin{figure*}
\begin{center}
	\import{pics/}{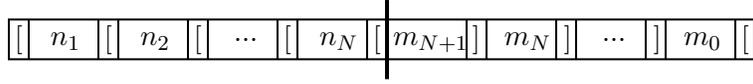}
\end{center}
\caption{The stack of a $1$-DPDA after reading the word $w_1$}
\label{fig-stack-w1}
\end{figure*}
Now consider the prefix $w_{1,i}$ of $w_1$ that ends just after the $i$-th closing bracket.
Since $\calA$ is deterministic, the stack at the end of $w_{1,i}$ looks similar: it is just shorter, but for sure it ends to the right of the vertical line, 
which denotes the stack size after the last opening bracket.
We see that $\stars(w_{1,i})=n_1+\dots+n_{N-i}$.
Thus, when $\calA$ sees a $\sharp$ after $w_{1,i}$, it has to remove (ignore) the numbers above $n_{N-i}$, and sum the rest.
In particular it passes the vertical line in some state $q_i$.
We see that for each $i$, at the moment of crossing this line, the stack is the same (everything to the right of the line is removed), only the state $q_i$ can differ.
So in fact each $q_i$ has to be different, since for each $i$ we expect a different behavior.
This is a contradiction when $N$ is greater than the number of states.

It follows that $\calA$ is of order at least $2$, and while reading $w_1$ at some moment a push of order $2$ has to be performed,
where in the topmost $1$-stack we don't remember some of the numbers $n_i$ or $m_i$
(for example, in order to recognize $w_1$, after each $]$ we can copy the topmost $1$-stack, 
and remove a fragment of its copy, so that the matching opening bracket is on the top).
But now we can consider the word
\begin{align*}
	w_2=w_1\star^{n'_1}w_1\star^{n'_2}\dots w_1\star^{n'_N}w_1\star^{m'_{N+1}}]\star^{m'_N}]\dots\star^{m'_1}]\star^{m'_0}[\,,
\end{align*}
where the numbers $n_i, m_i$ in each copy of $w_1$ are independent (so in fact each $w_1$ is a different word).
Notice that each $w_1$ ends by an unmatched opening bracket; they are matched by the closing brackets at the end of $w_2$.
We can now almost repeat the previous reasoning.
First, $\stars(w_2)$ equals the sum of all numbers, so they all have to be kept on the stack.
Then, we draw a line after reading the last $w_1$ (that is, separating the $1$-stacks created before that moment from those created later).
By the order-$1$ argument, some number from each $w_1$ is not present in the topmost $1$-stack after reading this $w_1$,
so it cannot be present above the line.
Next, for each $i$ we try to end the word already after the $i$-th closing bracket (among those at the end of $w_2$, not those inside words $w_1$).
When we have a $\sharp$ after each of these prefixes, we have to go below the line and behave differently
(include a different subset of those values which are not present above the line),
so we have to cross the line in different states.
This is again a contradiction when $N$ is greater than the number of states.
By induction we can continue like this, and nesting the words $w_n$ again we can show that for each order of the DPDA there is a problem.

Although the above idea of the proof looks simple, formalizing it is not straightforward.
We have to deal with the following issues:
\begin{enumerate}
\item	Above we have argued why a $1$-DPDA cannot deal correctly with the word $w_1$.
	But in fact we should consider any $n$-DPDA, and prove that it is impossible that it stores all numbers from $w_1$ inside one $1$-stack.
	Then there arises a problem: when crossing ``the line'' it is no longer true that the stack can only be of one form.
	Indeed, the topmost $1$-stack has one fixed form, but we can cross the line in a copy of this $1$-stack, with anything below this $1$-stack.
	We can even cross the line multiple times, in several copies of the $1$-stack.
	Thus, it is no longer true that the number of states gives the number of ways in which we can visit a substack.
	The ways of visiting a substack are described by types of stacks and by types of sequences of configurations, defined in Section~\ref{sec:types}.
	The key point is that there are finitely many types for a fixed DPDA.
\item	Where exactly is a number stored in a stack?
	And, where exactly ``the line'' should be placed?
	This is not sharp, since a DPDA may delay some stack operations by keeping information in its state, 
	as well as it may temporarily create some fancy redundant structures on the stack, which are removed later in the run.
	To deal with this issue, in Section~\ref{sec:milestone} we define milestone configurations.
	Intuitively, these are configurations in which no additional garbage is present on the stack.
\item	Finally, why it would be wrong when, while reading the $\sharp$ symbols, the automaton did not visit a place where there is stored a number that is a part of $\stars(\cdot)$?
	Maybe, accidentally, this number is equal to some other amount in the stack.
	Or maybe it was propagated to some other region on the stack by some involved manipulations.
	To overcome this difficulty, in Section~\ref{sec:pumping} we prove a pumping lemma.
	It allows to change any of the numbers in the input word,
	without altering too much the whole stack.
	If some number (included in $\stars(\cdot)$) is changed, the DPDA has to enter the part of the stack changed by the pumping lemma;
	otherwise it would incorrectly accept after the same number of the $\sharp$ symbols for two words with different $\stars(\cdot)$.
\end{enumerate}

\section{The History Function and Special Runs}\lab{sec:basic}

We begin this section by defining the history function.
Then we define two classes of runs that are particularly interesting for us, namely $k$-upper runs and $k$-returns.

For any run $R$ and any $k$-stack $s^k$ of $R(|R|)$, where $k\in\prz{0}{n}$, we define a $k$-stack $\hist(R,s^k)$.
Intuitively, $\hist(R,s^k)$ is the (unique) $k$-stack of $R(0)$, which evolved to the $k$-stack $s^k$ in $R(|R|)$.
Formally, we define $\hist(R,s^k)$ by induction on the length of $R$, starting with the case of $k=0$.
When $|R|=0$, we take $\hist(R,s^0)=s^0$.
Consider now a longer run $R=S\circ T$ with $|T|=1$.
We take $\hist(R,s^0)=\hist(S,s^0)$ if the last transition of $R$ is $\read$ or performs $\pop$,  
as well as if the transition performs $\push^r_{\gamma}$ and $s^0$ is not in the topmost $(r-1)$-stack of $R(|R|)$.
If the last transition of $R$ performs $\push^r_{\gamma}$ and $s^0$ is in the topmost $(r-1)$-stack of $R(|R|)$, 
then $\hist(R,s^0)=\hist(S,t^0)$, where $t^0$ is equal to $s^0$ with the $(n-r+1)$-th coordinate of its position decreased by $1$ 
(i.e.,~$t^0$ is the $0$-stack of $T(0)$ from which $s^0$ was obtained as a copy).
Notice that (for technical convenience) $\hist$ works in this way also for the topmost $0$-stack, 
although the content of the topmost $0$-stack changes during the $\push^r_\gamma$ operation.
For $k>0$, we define $\hist(R,s^k)$ to be the $k$-stack of $R(0)$ containing $\hist(R,s^0)$ for all $0$-stacks $s^0$ in $s^k$ 
(observe that when $s^0$, $t^0$ are two $0$-stacks in $s^k$, the $0$-stacks $\hist(R,s^0)$ and $\hist(R,t^0)$ are in the same $k$-stack).

It is important to notice that whenever $R=S\circ T$, then $\hist(S,\hist(T,s^k))=\hist(R,s^k)$.
In the sequel we extensively use this property, which we call \emph{compositionality of histories}.

For $k\in\prz{0}{n}$, we say that a run $R$ is \emph{$k$-upper} if $\hist(R,\topp^k(R(|R|)))=\topp^k(R(0))$;
let $\up^k$ be the set of all such runs.
Intuitively, a run $R$ is $k$-upper when 
the topmost $k$-stack of $R(|R|)$ is a copy of the topmost $k$-stack of $R(0)$, but possibly some changes were made to it.
Notice that $\up^n$ contains all runs, $\up^k\subseteq \up^l$ for $k\leq l$, 
and for a run $R\circ S$ with $S\in \up^k$ it holds $R\in \up^k\iff R\circ S\in \up^k$ (the last property is by compositionality of histories).

For $k\in\prz{1}{n}$, a run $R$ is a \emph{$k$-return} if
\begin{itemize}
\item	$\hist(R,\topp^{k-1}(R(|R|)))=\topp^{k-1}(\pop^k(R(0)))$, and
\item	$\subrun{R}{i}{|R|}\not\in \up^{k-1}$ for all $i\in\prz{0}{|R|-1}$.
\end{itemize}
Let $\ret^k$ be the set of $k$-returns.
Observe that $\ret^k\subseteq \up^k$.
Intuitively, $R$ is a $k$-return when the topmost $k$-stack of $R(|R|)$ is obtained from the topmost $k$-stack of $R(0)$ by removing its topmost $(k-1)$-stack 
(but not only in the sense of contents, but we require that really it was obtained this way).

\begin{exa}
	Consider a $2$-DPDA, and its run $R$ of length $6$ in which $\posl(\pi_2(R(0)))=[[a,b],[c,d]]$, and in which the operations between consecutive configurations are
	\begin{align*}
		\push^2_e\,,\ \pop^1\,,\ \pop^2\,,\ \pop^1\,,\ \push^1_d\,,\ \pop^1\,.
	\end{align*}
	Recall that our definition is that a $\push$ of any order can change the topmost stack symbol.
	The contents of the stacks of the configurations in the run, and subruns being $k$-upper runs and $k$-returns are presented in Table~\ref{tab:example}.
	\begin{table}
		\caption{Stack contents of the example run, and subruns being $k$-upper runs and $k$-returns}
		\label{tab:example}
		\vspace*{-3ex}
		\begin{align*}
			\begin{array}{c|l|l|l|l|l}
				j
				 &\posl(\pi_2(R(j)))
				 &i\mid\subrun{R}{i}{j}\in \up^0
				 &i\mid\subrun{R}{i}{j}\in \up^1
				 &i\mid\subrun{R}{i}{j}\in \ret^1
				 &i\mid\subrun{R}{i}{j}\in \ret^2\\
				\hline
				0&[[a,b],[c,d]]&0&0&-&-\\
				1&[[a,b],[c,d],[c,e]]&0,1&0,1&-&-\\
				2&[[a,b],[c,d],[c]]&2&0,1,2&0,1&-\\
				3&[[a,b],[c,d]]&0,3&0,3&-&1,2\\
				4&[[a,b],[c]]&4&0,3,4&0,3&-\\
				5&[[a,b],[c,d]]&4,5&0,3,4,5&-&-\\
				6&[[a,b],[c]]&4,6&0,3,4,5,6&5&-
			\end{array}
		\end{align*}
	\end{table}
	Notice that $R$ is not a $1$-return.
	We have $\hist(\subrun{R}{0}{5}, (d,(2,2)))=(c,(2,1))$.
\end{exa}

\subsection{Basic Properties of Runs}

We now state several easy propositions, which are useful later, and also give more intuition about the above definitions.

\begin{prop}\lab{prop:pre-k}
	Let $R$ be a $k$-upper run (where $k\in\prz{0}{n}$) such that $\subrun{R}{i}{|R|}\not\in \up^k$ for each $i\in\prz{1}{|R|-1}$.
	Then either
	\begin{itemize}
	\item $\tops^k(R(0))\cong\tops^k(R(|R|))$; 
		additionally for every $0$-stack $s^0$ in $\tops^k(R(|R|))$, 
		$\hist(R,s^0)$ is the corresponding $0$-stack in $\tops^k(R(0))$, or
	\item $|R|=1$ and the only transition of $R$ performs $\pop^r$ for $r\leq k$, or $\push^r_\gamma$ for $r\leq k$.
	\end{itemize}
\end{prop}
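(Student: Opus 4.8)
The plan is to argue by cases on the last transition of $R$, using the hypothesis that no proper final subrun is $k$-upper to control how much of the topmost $k$-stack can have been disturbed. First I would dispose of the trivial case $|R| = 0$: then $R(0) = R(|R|)$ and the first bullet holds with $\hist$ being the identity. So assume $|R| \geq 1$ and write $R = S \circ T$ with $|T| = 1$; let $op$ be the operation of the last transition (or it is a $\read$). The key observation is that the hypothesis $\subrun{R}{i}{|R|} \notin \up^k$ for $i \in \prz{1}{|R|-1}$ together with the remark following the definition of $\up^k$ — namely $S' \in \up^k$ and $R' \circ S' \in \up^k$ are equivalent when $S' \in \up^k$ — forces $S$ itself to have very restricted behaviour on its topmost $k$-stack.

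Here is the structure I expect. If $op$ is $\read$, then $\pi_2(R(|R|)) = \pi_2(S(|S|))$ and $\hist(T, \cdot)$ is the identity on $0$-stacks, so $\subrun{R}{|R|-1}{|R|} \in \up^k$ unless $|R| = 1$; but $|R| = 1$ with a $\read$ transition is impossible here (a $\read$ does not change the stack, so it falls under the first bullet with $S$ of length $0$, i.e. this reduces to the trivial case again, or more cleanly: $\read$ never makes a length-$1$ run fail to be $k$-upper, so actually the hypothesis forces $|R| - 1 = 0$, and then $\tops^k(R(0)) \cong \tops^k(R(|R|))$ with the identity history). Similarly if $op$ is $\pop^r$ or $\push^r_\gamma$ with $r > k$: such an operation does not touch the topmost $k$-stack at all (it only modifies things strictly above level $k$ in the stack structure, or — for $\pop^r$, $\push^r$ with $r > k$ — rearranges whole $k$-stacks without altering the topmost one's contents or the histories of its $0$-stacks), so again $\subrun{R}{|R|-1}{|R|} \in \up^k$, forcing $|R| = 1$; but a single such transition with $r > k$ leaves $\tops^k$ literally unchanged (as a positionless stack, with matching histories), which is the first bullet. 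If $op$ is $\pop^r$ or $\push^r_\gamma$ with $r \leq k$ and $|R| = 1$, we are exactly in the second bullet. The remaining, and main, case is $op = \pop^r$ or $\push^r_\gamma$ with $r \leq k$ but $|R| \geq 2$: here I want to derive a contradiction with the hypothesis, i.e. show some proper final subrun is $k$-upper.

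For that main case the idea is to trace the topmost $k$-stack backwards through the single last transition. Since $r \leq k$, the operation $op$ acts within the topmost $k$-stack of $S(|S|)$; applying compositionality of histories, $\hist(S, \topp^k(S(|S|))) = \hist(R, \hist(T, \topp^k(R(|R|))))$, and $\hist(T, \topp^k(R(|R|))) = \topp^k(S(|S|))$ because a single order-$\leq k$ operation keeps the topmost $k$-stack a "descendant" of the previous topmost $k$-stack (this is the content I would verify directly from the definitions of $\pop^r$, $\push^r_\gamma$, and $\hist$ on $0$-stacks: every $0$-stack of the new topmost $k$-stack has $\hist(T, \cdot)$ lying in the old topmost $k$-stack). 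Combined with $R \in \up^k$, i.e. $\hist(R, \topp^k(R(|R|))) = \topp^k(R(0))$, we get $\hist(S, \topp^k(S(|S|))) = \topp^k(R(0)) = \topp^k(S(0))$, so $S \in \up^k$. But then, by the equivalence $S \in \up^k \Rightarrow (S \in \up^k \iff T' \circ S \in \up^k$), reading it the other way, $\subrun{R}{|S|}{|R|} = T$… — actually what I want is $\subrun{R}{1}{|R|} \in \up^k$: since $S \in \up^k$ and $\subrun{S}{1}{|S|}$ differs from $S$ by dropping the first transition, I'd need the analogous statement with $\up^k$ closed under dropping a leading transition when the whole run is in $\up^k$ and… this needs a little care. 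The cleaner route: from $S \in \up^k$ and $|S| = |R| - 1 \geq 1$ I want $\subrun{R}{1}{|R|} = \subrun{S}{1}{|S|} \circ T \in \up^k$; by compositionality this is $\hist(\subrun{S}{1}{|S|}, \hist(T, \topp^k)) = \hist(\subrun{S}{1}{|S|}, \topp^k(S(|S|)))$, and since $S = \subrun{S}{0}{1} \circ \subrun{S}{1}{|S|} \in \up^k$ with the leading transition of $S$ being order-$\leq k$ too (or not — it could be anything), monotonicity $\up^k \subseteq \up^\ell$ and the closure property let me conclude $\subrun{S}{1}{|S|} \in \up^k$, hence $\subrun{R}{1}{|R|} \in \up^k$, contradicting the hypothesis for $i = 1$.

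The main obstacle is exactly this last bookkeeping: showing that a single order-$\leq k$ operation preserves "being a history-descendant of the topmost $k$-stack", and then propagating $\up^k$-membership from $S$ to the proper final subrun $\subrun{R}{1}{|R|}$ to reach the contradiction. Both are routine unwindings of the definitions of $\hist$, $\push^r_\gamma$, $\pop^r$ and of the closure/monotonicity remarks stated after the definition of $\up^k$, but they require being careful about the distinction between a $k$-stack and its positionless version, and about the fact that $\push^r_\gamma$ with $r \leq k$ does change the topmost $0$-stack's symbol (handled by the parenthetical remark in the definition of $\hist$ that histories of the topmost $0$-stack are tracked "through" the symbol change).
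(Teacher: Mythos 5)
Your case analysis on the \emph{last} transition of $R$ has a genuine hole, and it sits exactly where the real content of the proposition lies. You claim that when the last operation is $\pop^r$ with $r>k$ it ``does not touch the topmost $k$-stack'', so that $\subrun{R}{|R|-1}{|R|}\in\up^k$ and hence $|R|=1$. This is false: $\pop^r$ with $r>k$ removes the entire topmost $(r-1)$-stack, which \emph{contains} the topmost $k$-stack, so the new topmost $k$-stack is a previously buried one whose history under that single step is itself, not $\topp^k(R(|R|-1))$; thus $\subrun{R}{|R|-1}{|R|}\notin\up^k$ and no contradiction with the hypothesis is available. Concretely, the subrun $\subrun{R}{0}{3}$ of the paper's running example (operations $\push^2_e,\pop^1,\pop^2$, taken with $k=0$) satisfies all the hypotheses, has length $3$, and ends with $\pop^2$; it lands in the first bullet, not the second. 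This ``go up with a $\push$ of order $>k$, come back down with a $\pop$ of order $>k$'' shape is the main case, and no inspection of the last transition alone can dispose of it. The paper instead argues globally: since $R\in\up^k$ but $\subrun{R}{i}{|R|}\notin\up^k$ for all $i\in\prz{1}{|R|-1}$, the ancestor $\hist(\subrun{R}{i}{|R|},\topp^k(R(|R|)))$ is never the topmost $k$-stack of $R(i)$ at any intermediate time; a non-topmost $k$-stack is never modified by a transition, so $\topp^k(R(|R|))$ is an untouched copy of $\topp^k(R(0))$, covered by the first operation and only uncovered at the very end --- the first bullet.

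Two smaller points. In your ``main case'' ($op$ of order $\le k$ or $\read$, $|R|\ge2$) the detour through $S\in\up^k$ and then $\subrun{S}{1}{|S|}\in\up^k$ is both unnecessary and unjustified --- $S\in\up^k$ does not imply $\subrun{S}{1}{|S|}\in\up^k$ (the closure property only runs in the direction you cite when the \emph{suffix} is known to be $k$-upper). But you do not need it: once you know $T=\subrun{R}{|R|-1}{|R|}\in\up^k$, that already contradicts the hypothesis at $i=|R|-1$. Finally, for $|R|=1$ with $\push^r_\gamma$ and $r>k$, the topmost $k$-stack is \emph{not} literally unchanged: by this paper's convention the push replaces the topmost stack symbol by $\gamma$, so the $\cong$ part of the first bullet can fail. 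That is arguably a small imprecision in the statement itself (compare Proposition~\ref{prop:upper}(2), which admits $\push^r_\gamma$ for arbitrary $r$ in the length-one case), but a proof should flag it rather than assert the stack is unchanged.
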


\proof
	For $|R|\leq 1$ we immediately fall into one of the possibilities.
	Otherwise, we look at the history of the topmost $k$-stack of $R(|R|)$. 
	It is covered by the first operation of $R$, and then it is not the topmost $k$-stack until $R(|R|)$.
	Thus, it remains unchanged (we have the first possibility).
\qed

Next, we give four propositions about $k$-upper runs and $k$-returns.

\begin{prop}\lab{prop:size2pre}
	Let $R$ be a $k$-upper run, where $k\in\prz{1}{n}$.
	Then $R$ is $(k-1)$-upper if and only if $|\tops^k(R(0))|\leq\tops^k(R(i))|$ for each $i\in\prz{0}{|R|}$ such that $\subrun{R}{i}{|R|}\in\up^k$.
\end{prop}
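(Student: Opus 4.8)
The idea is to cut $R$ at the moments where a proper suffix first becomes $k$-upper, to read off the shape of each resulting piece from Proposition~\ref{prop:pre-k}, and then to trace the position of the topmost $(k-1)$-stack backwards through the run. Concretely, I would set $I=\{i\in\prz{0}{|R|}\mid\subrun{R}{i}{|R|}\in\up^k\}$, write its elements increasingly as $0=i_0<i_1<\dots<i_m=|R|$ (indeed $0\in I$ since $R$ is $k$-upper, and $|R|\in I$ since the empty run is $k$-upper), and for $l\in\prz{1}{m}$ put $R_l=\subrun{R}{i_{l-1}}{i_l}$ and $N_l=|\topp^k(R(i_l))|$. Using the noted fact that $X\in\up^k$ if and only if $X\circ Y\in\up^k$ whenever $Y\in\up^k$ (applied with $Y=\subrun{R}{i_l}{|R|}$, which is $k$-upper because $i_l\in I$, and $X=\subrun{R}{i_{l-1}}{i_l}$ resp.\ $X=\subrun{R}{j}{i_l}$ for $i_{l-1}<j<i_l$, together with $i_{l-1}\in I$), one checks that each $R_l$ is $k$-upper and that $\subrun{R}{j}{i_l}\notin\up^k$ for all $i_{l-1}<j<i_l$; hence Proposition~\ref{prop:pre-k} applies to every $R_l$. (If $|R|=0$ there are no pieces and both sides of the equivalence hold trivially.)

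Fixing $l$ and applying Proposition~\ref{prop:pre-k} to $R_l$: in its first alternative we get $\topp^k(R(i_{l-1}))\cong\topp^k(R(i_l))$ with the stated correspondence of $0$-stacks, so that $N_l=N_{l-1}$ and $\hist(R_l,\cdot)$ sends the $j$-th $(k-1)$-stack of $\topp^k(R(i_l))$ to the $j$-th $(k-1)$-stack of $\topp^k(R(i_{l-1}))$; in its second alternative $R_l$ is a single $\pop^r$ or $\push^r$ step with $r\leq k$, and inspecting the definition of $\hist$ in each sub-case yields the uniform conclusion that, for every $j\in\prz{1}{N_l}$,
\[
	\hist\bigl(R_l,\ \text{($j$-th $(k-1)$-stack of $\topp^k(R(i_l))$)}\bigr)
	=\text{($\min(j,N_{l-1})$-th $(k-1)$-stack of $\topp^k(R(i_{l-1}))$)},
\]
and that this $(k-1)$-stack lies inside $\topp^k(R(i_{l-1}))$. (For $r<k$ the step rewrites only the interior of the topmost $(k-1)$-stack and $N_l=N_{l-1}$; for $\pop^k$ one has $N_l=N_{l-1}-1$ and the surviving $(k-1)$-stacks keep their indices; for $\push^k$ one has $N_l=N_{l-1}+1$, the old $(k-1)$-stacks keep their indices and the freshly duplicated topmost one inherits index $N_{l-1}$ --- in all cases the image index is $\min(j,N_{l-1})$.)

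Now I would compose. Since $\topp^{k-1}(R(|R|))$ is the $N_m$-th $(k-1)$-stack of $\topp^k(R(i_m))$, compositionality of histories and an induction on the number of applied factors (using $\min(\min(a,b),c)=\min(a,b,c)$, and noting that the running index stays $\leq N_l$ whenever $\hist(R_l,\cdot)$ is applied) show that applying $\hist(R_m,\cdot),\hist(R_{m-1},\cdot),\dots,\hist(R_1,\cdot)$ in turn produces the $\min(N_m,N_{m-1},\dots,N_0)$-th $(k-1)$-stack of $\topp^k(R(i_0))=\topp^k(R(0))$; that is, $\hist(R,\topp^{k-1}(R(|R|)))$ is precisely that $(k-1)$-stack. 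Hence $R$ is $(k-1)$-upper, i.e.\ $\hist(R,\topp^{k-1}(R(|R|)))=\topp^{k-1}(R(0))$, exactly when $\min_{0\leq l\leq m}N_l=N_0=|\topp^k(R(0))|$; and since $N_0$ itself occurs in the minimum, this holds if and only if $N_l\geq N_0$ for every $l$, i.e.\ $|\topp^k(R(0))|\leq|\topp^k(R(i))|$ for every $i\in\prz{0}{|R|}$ with $\subrun{R}{i}{|R|}\in\up^k$, which is the asserted condition.

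\textbf{Main obstacle.} The delicate part is the per-step bookkeeping of the second paragraph: one must compute $\hist$ on $(k-1)$-stacks for each elementary operation allowed by Proposition~\ref{prop:pre-k}, being careful about the index shift caused by $\push^k$ (which inserts a new $(k-1)$-stack) and about the history of the topmost $0$-stack when a push rewrites its symbol, and one must confirm that the pieces $R_l$ genuinely satisfy the hypotheses of Proposition~\ref{prop:pre-k}. Everything else reduces to routine manipulations with the $\min$ operation.
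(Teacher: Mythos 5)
Your proposal is correct and follows essentially the same route as the paper: the paper also introduces the set $X=\{i\mid \subrun{R}{i}{|R|}\in\up^k\}$, applies Proposition~\ref{prop:pre-k} to the pieces between consecutive elements of $X$, and proves the history formula $\hist(\subrun{R}{b}{e},s_e(r))=s_b\big(\min(\{r\}\cup\{r_l\mid l\in X,\ b\leq l<e\})\big)$ by induction, from which the stated equivalence is read off exactly as in your last paragraph. The only cosmetic difference is that the paper states the min-formula for arbitrary $b\le e$ in $X$ (since it reuses it for Propositions~\ref{prop:pre-jest-fajne}--\ref{prop:return-jest-fajny}), whereas you specialize to $b=0$, $e=|R|$, which suffices here.
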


\begin{prop}\lab{prop:pre-jest-fajne}
	Let $S\circ T$ be a $(k-1)$-upper run in which $T$ is $k$-upper, where $k\in\prz{1}{n}$.
	Then $S$ is $(k-1)$-upper.
\end{prop}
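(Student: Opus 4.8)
The plan is to reduce everything to the size characterization of Proposition~\ref{prop:size2pre}. Write $R=S\circ T$, so by hypothesis $R\in\up^{k-1}$ and $T\in\up^k$. The first step is to observe that $S$ is itself $k$-upper: using the stated equivalence that, for a composite run with $k$-upper suffix, the prefix is $k$-upper iff the whole run is, applied with prefix $S$ and suffix $T\in\up^k$, we get $S\in\up^k\iff S\circ T\in\up^k$; and $R=S\circ T\in\up^{k-1}\subseteq\up^k$, hence $S\in\up^k$.

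Now, since $S\in\up^k$ and $k\in\prz{1}{n}$, Proposition~\ref{prop:size2pre} applies to $S$, so to conclude $S\in\up^{k-1}$ it suffices to verify that $|\topp^k(S(0))|\leq|\topp^k(S(i))|$ for every $i\in\prz{0}{|S|}$ with $\subrun{S}{i}{|S|}\in\up^k$. I would obtain this by transferring the analogous inequality from $R$: since $R$ is both $k$-upper and $(k-1)$-upper, Proposition~\ref{prop:size2pre} (the ``only if'' direction) gives $|\topp^k(R(0))|\leq|\topp^k(R(j))|$ for every $j\in\prz{0}{|R|}$ with $\subrun{R}{j}{|R|}\in\up^k$.

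To carry out the transfer, fix $i\leq|S|$ with $\subrun{S}{i}{|S|}\in\up^k$. Since $S$ is a prefix of $R$ we have $S(0)=R(0)$, $S(i)=R(i)$, and $\subrun{R}{i}{|R|}=\subrun{S}{i}{|S|}\circ T$. As both $\subrun{S}{i}{|S|}$ and $T$ lie in $\up^k$, the composition equivalence (prefix $k$-upper, suffix $k$-upper $\Rightarrow$ whole run $k$-upper) yields $\subrun{R}{i}{|R|}\in\up^k$. Therefore the bound for $R$ gives $|\topp^k(S(0))|=|\topp^k(R(0))|\leq|\topp^k(R(i))|=|\topp^k(S(i))|$, which is precisely what Proposition~\ref{prop:size2pre} needs in order to conclude $S\in\up^{k-1}$.

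The only delicate point is invoking the ``$\iff$'' of the composition property with the correct prefix/suffix roles in the two places it is used---once to push $k$-upperness down from $R$ to $S$, and once to push it up from a tail subrun of $S$ to the matching tail subrun of $R$---and checking that $\subrun{R}{i}{|R|}$ really factors as $\subrun{S}{i}{|S|}\circ T$ for $i\leq|S|$; both are immediate from the definitions of composition and of subrun, so the remainder is pure bookkeeping.
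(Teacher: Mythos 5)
Your proof is correct. It rests on the same idea as the paper's argument---the size characterization of $(k-1)$-upperness---but packages it differently. The paper proves Propositions~\ref{prop:size2pre}--\ref{prop:return-jest-fajny} together from a shared internal equivalence (Equivalence~\eqref{eq:hist2}): it fixes the set $X$ of indices $i$ with $\subrun{R}{i}{|R|}\in\up^k$, applies the equivalence at $(0,|R|)$ to get $r_0\leq r_i$ for all $i\in X$, notes that $|S|\in X$ because $T$ is $k$-upper, and applies the equivalence again at $(0,|S|)$. You instead treat Proposition~\ref{prop:size2pre} as a black box and apply it twice, once to $R$ and once to $S$; this forces you to do two extra pieces of bookkeeping that the paper gets for free from working inside one run: (i) establishing that $S$ is itself $k$-upper so that Proposition~\ref{prop:size2pre} is applicable to it, via the cancellation property $S\in\up^k\iff S\circ T\in\up^k$, and (ii) matching the two quantification sets, i.e., showing that $\subrun{S}{i}{|S|}\in\up^k$ implies $\subrun{R}{i}{|R|}\in\up^k$ by composing with $T$. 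Both steps are handled correctly (and there is no circularity, since the paper's proof of Proposition~\ref{prop:size2pre} does not rely on Proposition~\ref{prop:pre-jest-fajne}). What your version buys is a derivation that depends only on the public statement of Proposition~\ref{prop:size2pre} and the elementary closure properties of $\up^k$, rather than on the internal machinery of the combined proof; the cost is the slightly longer transfer argument.
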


\begin{prop}\lab{prop:pomp-bez-zmian}
	Let $R$ be a run that is not $(k-1)$-upper, where $k\in\prz{1}{n}$.
	Suppose that $\subrun{R}{0}{j}$ is $(k-1)$-upper for the greatest index $j\in\prz{0}{|R|-1}$ such that $\subrun{R}{j}{|R|}$ is $k$-upper (in particular such an index $j$ exists).
	Then $R$ is a $k$-return.
\end{prop}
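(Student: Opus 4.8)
The plan is to verify the two defining conditions of a $k$-return for the run $R$, using the index $j$ provided by the hypothesis. First I would record why such an index $j$ exists at all: $\subrun{R}{|R|}{|R|}$ is the trivial run, which is $k$-upper (the empty run is $k$-upper for every $k$), so the set of indices $i\in\prz{0}{|R|}$ with $\subrun{R}{i}{|R|}\in\up^k$ is nonempty; moreover $\subrun{R}{0}{0}$ is $(k-1)$-upper, so among those $i$ for which $\subrun{R}{i}{|R|}\in\up^k$ there is at least one (namely $i=0$ if $0$ qualifies, and in general the relevant prefix set is nonempty because $\subrun{R}{0}{0}$ is $(k-1)$-upper) with $\subrun{R}{0}{i}$ being $(k-1)$-upper; hence the greatest such $j$ is well-defined and $j\leq|R|-1$ because $R$ itself is not $(k-1)$-upper, so $i=|R|$ is excluded. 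Here I would also note $j$ is a legitimate splitting point: writing $R=\subrun{R}{0}{j}\circ\subrun{R}{j}{|R|}$, the first factor is $(k-1)$-upper and the second is $k$-upper.

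For the second $k$-return condition, namely $\subrun{R}{i}{|R|}\notin\up^{k-1}$ for all $i\in\prz{0}{|R|-1}$, I would argue by contradiction. Suppose $\subrun{R}{i}{|R|}\in\up^{k-1}$ for some $i\in\prz{0}{|R|-1}$. Since $\up^{k-1}\subseteq\up^k$, this subrun is also $k$-upper, so by maximality of $j$ we get $i\leq j$. Then $\subrun{R}{i}{|R|}=\subrun{R}{i}{j}\circ\subrun{R}{j}{|R|}$ with $\subrun{R}{j}{|R|}\in\up^k$; by Proposition~\ref{prop:pre-jest-fajne} (applied with $S=\subrun{R}{i}{j}$, $T=\subrun{R}{j}{|R|}$), the prefix $\subrun{R}{i}{j}$ is $(k-1)$-upper. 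Composing with the $(k-1)$-upper run $\subrun{R}{0}{i}$ — wait: I actually have $\subrun{R}{0}{j}$ is $(k-1)$-upper, and $\subrun{R}{i}{j}$ is $(k-1)$-upper, so $\subrun{R}{0}{i}$ is $(k-1)$-upper as well since $\subrun{R}{0}{j}=\subrun{R}{0}{i}\circ\subrun{R}{i}{j}$ and being $(k-1)$-upper is preserved when we remove a $(k-1)$-upper suffix (stated in the excerpt as the property $R\in\up^k\iff R\circ S\in\up^k$ for $S\in\up^k$). Now $R=\subrun{R}{0}{i}\circ\subrun{R}{i}{|R|}$ with both factors $(k-1)$-upper, hence $R\in\up^{k-1}$, contradicting the hypothesis. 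So the second condition holds.

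For the first $k$-return condition, namely $\hist(R,\topp^{k-1}(R(|R|)))=\topp^{k-1}(\pop^k(R(0)))$, I would use the structure at the splitting point $j$ together with Proposition~\ref{prop:pre-k}. The subrun $\subrun{R}{j}{|R|}$ is $k$-upper, and by the second condition just proved, $\subrun{R}{i}{|R|}\notin\up^{k-1}$ for every $i$ in range; restricted to $i\in\prz{j+1}{|R|-1}$ this says $\subrun{R}{i}{|R|}\notin\up^{k-1}$, so in particular $\subrun{R}{i}{|R|}\notin\up^k$ cannot be concluded directly — instead I would want the hypothesis of Proposition~\ref{prop:pre-k} for $\subrun{R}{j}{|R|}$, which requires $\subrun{R}{i}{|R|}\notin\up^k$ for $i\in\prz{j+1}{|R|-1}$; this follows from maximality of $j$, since if some such $i>j$ had $\subrun{R}{i}{|R|}\in\up^k$ then $i$ would be a larger qualifying index — but qualifying also requires $\subrun{R}{0}{i}$ to be $(k-1)$-upper, which we just showed holds whenever $\subrun{R}{i}{|R|}\in\up^k$ and $i>j$ (by the argument in the previous paragraph, with $\up^k$ in place of $\up^{k-1}$), contradicting maximality. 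So Proposition~\ref{prop:pre-k} applies to $\subrun{R}{j}{|R|}$: either $\topp^k(R(j))\cong\topp^k(R(|R|))$ with histories matching $0$-stacks, or $|R|-j=1$ and the single transition is a $\pop^r$ or $\push^r_\gamma$ with $r\leq k$. In the first case, $\topp^{k-1}(R(|R|))$ is (literally, as a stack with positions tracked through $\hist$) the topmost $(k-1)$-stack of $\topp^k(R(j))$, so $\hist(\subrun{R}{j}{|R|},\topp^{k-1}(R(|R|)))=\topp^{k-1}(R(j))$; in the second case the transition must be exactly $\pop^k$ (it cannot be $\pop^r$ or $\push^r_\gamma$ with $r<k$, nor $\push^k_\gamma$, since each of those keeps the run in $\up^{k-1}$, contradicting $R\notin\up^{k-1}$ via the decomposition $R=\subrun{R}{0}{j}\circ\subrun{R}{j}{|R|}$ with $\subrun{R}{0}{j}\in\up^{k-1}$), whence $\hist(\subrun{R}{j}{|R|},\topp^{k-1}(R(|R|)))=\topp^{k-1}(\pop^k(R(j)))$. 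In either subcase, using that $\subrun{R}{0}{j}$ is $(k-1)$-upper — so $\hist(\subrun{R}{0}{j},\topp^{k-1}(R(j)))=\topp^{k-1}(R(0))$ and, compatibly, $\hist(\subrun{R}{0}{j},\topp^{k-1}(\pop^k(R(j))))=\topp^{k-1}(\pop^k(R(0)))$ — and compositionality of histories, I obtain $\hist(R,\topp^{k-1}(R(|R|)))=\topp^{k-1}(\pop^k(R(0)))$.

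The main obstacle I anticipate is the bookkeeping in the last paragraph: pinning down exactly which elementary transition occurs at step $j$ when $\subrun{R}{j}{|R|}$ has length $1$, and checking that a $(k-1)$-upper run composed with that transition is $(k-1)$-upper unless the transition is precisely $\pop^k$, so that the hypothesis $R\notin\up^{k-1}$ forces the $\pop^k$ case. The other delicate point is the interplay between "$i$ is a qualifying index" meaning simultaneously $\subrun{R}{i}{|R|}\in\up^k$ and $\subrun{R}{0}{i}\in\up^{k-1}$; one must consistently carry both halves of this condition through the maximality arguments, which is why the observation that $\subrun{R}{i}{|R|}\in\up^k$ together with $i>j$ already forces $\subrun{R}{0}{i}\in\up^{k-1}$ (via Proposition~\ref{prop:pre-jest-fajne}) is doing the real work.
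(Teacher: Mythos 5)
Your overall plan --- split $R$ at $j$, apply Proposition~\ref{prop:pre-k} to $\subrun{R}{j}{|R|}$ to force the last step to be a $\pop^k$, then compose histories --- is essentially the paper's, and your argument for the second condition of a $k$-return (via Proposition~\ref{prop:pre-jest-fajne} and cancellation of upper suffixes) is correct; it is a reasonable alternative to the paper's derivation from Equivalence~\eqref{eq:hist2}. (Your discussion of why $j$ exists is muddled, but harmless, since existence is part of the hypothesis.)

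The final paragraph, however, has a genuine gap, in two places. The smaller one: in the first case of Proposition~\ref{prop:pre-k} you obtain $\hist(\subrun{R}{j}{|R|},\topp^{k-1}(R(|R|)))=\topp^{k-1}(R(j))$, and composing this with $\subrun{R}{0}{j}\in\up^{k-1}$ yields $\hist(R,\topp^{k-1}(R(|R|)))=\topp^{k-1}(R(0))$, i.e.\ $R\in\up^{k-1}$; this case must therefore be ruled out as contradicting the hypothesis, whereas you assert that ``in either subcase'' the desired equality $\hist(R,\topp^{k-1}(R(|R|)))=\topp^{k-1}(\pop^k(R(0)))$ follows --- which it does not in that case. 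The serious one: the step $\hist(\subrun{R}{0}{j},\topp^{k-1}(\pop^k(R(j))))=\topp^{k-1}(\pop^k(R(0)))$, which you justify only by the word ``compatibly'', is not a consequence of $\subrun{R}{0}{j}\in\up^{k-1}$. A $(k-1)$-upper run may enlarge the topmost $k$-stack --- a single $\push^k_\gamma$ is already $(k-1)$-upper --- and then the $(k-1)$-stack directly below the top of $\tops^k(R(j))$ is the old topmost $(k-1)$-stack (or a fresh copy of it), with history $\topp^{k-1}(R(0))$ rather than $\topp^{k-1}(\pop^k(R(0)))$. To close the gap you must prove $|\tops^k(R(j))|=|\tops^k(R(0))|$, and this requires using the hypothesis $R\notin\up^{k-1}$ a second time: if $|\tops^k(R(j))|>|\tops^k(R(0))|$, then after the final $\pop^k$ we would still have $|\tops^k(R(|R|))|\geq|\tops^k(R(0))|$, and Proposition~\ref{prop:size2pre} would make $R$ itself $(k-1)$-upper. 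This size bookkeeping is exactly what the paper's Equality~\eqref{eq:hist} and Equivalence~\eqref{eq:hist2} supply, and it is the missing piece of your argument.
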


\begin{prop}\lab{prop:return-jest-fajny}
	Let $R$ be a $k$-return, where $k\in\prz{1}{n}$.
	Then $\pop^k(\tops^k(R(0)))\cong\tops^k(R(|R|))$.
	Additionally for every $0$-stack $s^0$ in $\tops^k(R(|R|))$, $\hist(R,s^0)$ 
	is the corresponding $0$-stack in $\pop^k(\tops^k(R(0)))$.
\end{prop}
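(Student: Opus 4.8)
The plan is to reduce to Proposition~\ref{prop:pre-k} by a standard peeling argument on the $k$-return $R$. Recall that being a $k$-return means $\hist(R,\topp^{k-1}(R(|R|)))=\topp^{k-1}(\pop^k(R(0)))$ together with the minimality condition $\subrun{R}{i}{|R|}\not\in\up^{k-1}$ for all $i\in\prz{0}{|R|-1}$; in particular $R\in\up^k$ and $R\notin\up^{k-1}$. First I would isolate the last transition that acts on the topmost $k$-stack in a way that changes its set of $(k-1)$-substacks below the top one. Concretely, let $j\in\prz{0}{|R|-1}$ be the largest index such that $\subrun{R}{j}{|R|}\in\up^{k-1}$ is false but $\subrun{R}{j}{|R|}\in\up^k$ — actually the cleanest route is: consider the first transition of $R$ and argue as in Proposition~\ref{prop:pre-k}. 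Since $R\notin\up^{k-1}$, by compositionality of histories and the minimality clause, the very first transition of $R$ must already destroy the property of being $(k-1)$-upper, i.e.\ it is either $\pop^r$ or $\push^r_\gamma$ for some $r\leq k$, and since $\hist$ of the final topmost $(k-1)$-stack equals $\topp^{k-1}(\pop^k(R(0)))$ (the second $(k-1)$-stack from the top of the initial configuration's topmost $k$-stack), that first transition is in fact $\pop^k$.

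Once I know the first transition of $R$ is $\pop^k$, write $R=T\circ R'$ where $T$ has length $1$ and performs $\pop^k$. Then $\pi_2(T(|T|))=\pop^k(\pi_2(R(0)))$, so $\topp^k(T(|T|))=\pop^k(\topp^k(R(0)))$ and the $\hist$ of every $0$-stack in $\topp^k(T(|T|))$ is literally the corresponding $0$-stack of $\pop^k(\topp^k(R(0)))$ (a $\pop$ does not touch positions of what remains). It remains to show that $R'$ leaves the topmost $k$-stack positionless-unchanged and acts as the identity on histories of its $0$-stacks; then the claim follows by compositionality of histories, using that $\pop^k(\topp^k(R(0)))\cong\pop^k(\topp^k(R(0)))$ and transitivity of $\cong$. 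For $R'$ I want exactly Proposition~\ref{prop:pre-k} applied to $R'$: I must check $R'\in\up^k$ and $\subrun{R'}{i}{|R'|}\notin\up^k$ for each $i\in\prz{1}{|R'|-1}$. Membership $R'\in\up^k$ holds because $R=T\circ R'\in\up^k$, $T\in\up^k$ (it is a $\pop^k$), and the composition property $R\circ S\in\up^k\iff R\in\up^k$ when $S\in\up^k$. The minimality of the $\up^k$-cuts of $R'$ follows from the minimality clause in the definition of $k$-return: if some proper inner suffix of $R'$ were $k$-upper, one checks (again by compositionality) that the corresponding suffix of $R$ would be $(k-1)$-upper, since starting from a configuration whose topmost $k$-stack equals a copy of $R'(0)$'s topmost $k$-stack and which sits inside $R(0)$'s topmost $k$-stack, being $k$-upper from there forces being $(k-1)$-upper relative to the whole run — contradicting that $R$ is a $k$-return. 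Proposition~\ref{prop:pre-k} then gives either $\topp^k(R'(0))\cong\topp^k(R'(|R'|))$ with the identity-on-histories property, which is what we need, or that $|R'|=1$ and its single transition is $\pop^r$ or $\push^r_\gamma$ for $r\leq k$; in the latter degenerate case a direct inspection (using the first bullet of the $k$-return definition, which pins down $\hist(R,\topp^{k-1}(R(|R|)))$) rules out anything except the trivial situation, and the conclusion is immediate.

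The main obstacle I expect is the bookkeeping in the second step: correctly translating "$R$ is a $k$-return" into the hypothesis "$\subrun{R'}{i}{|R'|}\notin\up^k$ for inner $i$" needed by Proposition~\ref{prop:pre-k}, and making sure the identity-on-histories statement composes correctly through the initial $\pop^k$ — in particular that "corresponding $0$-stack" is preserved when we go from $\pop^k(\topp^k(R(0)))$ through $T$ and then through $R'$. Everything else is routine once the first transition of $R$ is identified as $\pop^k$; identifying that, using only the definition of $k$-return and compositionality of histories, is the real content, and it is where I would be most careful.
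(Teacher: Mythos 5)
Your argument breaks at its very first and most important step: the claim that the first transition of a $k$-return must perform $\pop^k$ is false. The definition of a $k$-return only requires that the history of the final topmost $(k-1)$-stack be $\topp^{k-1}(\pop^k(R(0)))$ and that no proper suffix be $(k-1)$-upper; it says nothing about when (or by which operation) the ``descent'' happens. Proposition~\ref{prop:return} makes this explicit: a $k$-return may begin with a $\read$, with $\pop^j$ for $j<k$, or with a $\push$ of any order. A concrete counterexample sits in Table~\ref{tab:example}: $\subrun{R}{1}{3}$ is a $2$-return whose first transition performs $\pop^1$, and $\subrun{R}{0}{2}$ is a $1$-return whose first transition performs $\push^2_e$. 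The intuition ``some suffix must destroy $(k-1)$-upperness, hence the first one does'' does not hold: the condition $\subrun{R}{i}{|R|}\notin\up^{k-1}$ for all $i<|R|$ is a statement about where the history of the \emph{final} topmost $(k-1)$-stack lands, and is perfectly compatible with the run first building structure on top of the stack and only much later uncovering $\topp^{k-1}(\pop^k(R(0)))$. Since everything after this point is predicated on the decomposition $R=T\circ R'$ with $T$ a single $\pop^k$, the proof does not go through.

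There is a second, independent gap in the reduction to Proposition~\ref{prop:pre-k}. That proposition requires that no proper suffix of the run be \emph{$k$-upper}, whereas the definition of a $k$-return only forbids proper suffixes from being \emph{$(k-1)$-upper}. You try to bridge this by asserting that a $k$-upper inner suffix of $R'$ would force the corresponding suffix of $R$ to be $(k-1)$-upper, but $\up^{k-1}\subsetneq\up^k$ and the implication goes the wrong way: a suffix can easily be $k$-upper without being $(k-1)$-upper (again, $\subrun{R}{0}{2}$ in Table~\ref{tab:example} is in $\up^1\setminus\up^0$). A $k$-return can repeatedly push and pop at order $k$, producing many $k$-upper suffixes along the way, so the hypothesis of Proposition~\ref{prop:pre-k} simply cannot be established for $R'$. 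The paper avoids both problems by proving this proposition jointly with Propositions~\ref{prop:size2pre}--\ref{prop:pomp-bez-zmian} via Equality~\eqref{eq:hist} and Equivalence~\eqref{eq:hist2}: one tracks the sizes $r_i=|\tops^k(R(i))|$ over the set $X$ of indices whose suffix is $k$-upper, shows that the return condition forces $r_i>r_{|R|}$ for all $i\in X\setminus\{|R|\}$, and reads off from Equality~\eqref{eq:hist} that the final topmost $k$-stack consists of the $r_{|R|}=r_0-1$ bottommost $(k-1)$-stacks of the initial one, unmodified and with the correct histories. Some argument of this global, size-tracking flavour (rather than peeling off the first transition) appears to be genuinely needed here.
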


\proof[Proof of Propositions~\ref{prop:size2pre}-\ref{prop:return-jest-fajny}]
	Recall that a $(k-1)$-upper run, a composition of two $k$-upper runs, and a $k$-return are special cases of $k$-upper runs.
	Thus in all four propositions we have a $k$-upper run $R$, where $k\in\prz{1}{n}$ (where for Proposition~\ref{prop:pre-jest-fajne} we take $R=S\circ T$).
	Let $X$ denote the set of those indices $i\in\prz{0}{|R|}$ for which $\subrun{R}{i}{|R|}$ is $k$-upper.
	Notice that $0\in X$ and $|R|\in X$.
	For $i\in X$, let $r_i=|\tops^k(R(i))|$, and let $s_i(r)$ be the $r$-th $(k-1)$-stack 
	in $\tops^k(R(i))$ (for $r\in\prz{1}{r_i}$).
	We claim that for all $b,e\in X$ with $b\leq e$, and for each $r\in\prz{1}{r_{e}}$,
	\begin{align}
		\hist(\subrun{R}{b}{e},s_e(r))=s_b\big(\min(\{r\}\cup\{r_l\mid l\in X\land b\leq l<e\})\big)\,.\label{eq:hist}
	\end{align}
	\begin{figure*}
		\begin{center}
			\includegraphics[scale=0.7]{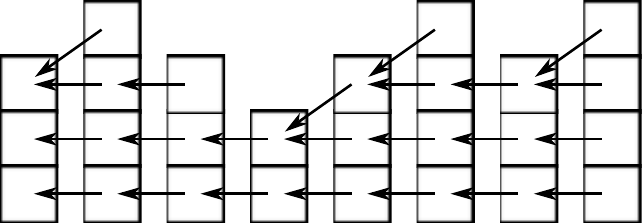}
			\hspace{.8cm}
			\includegraphics[scale=0.7]{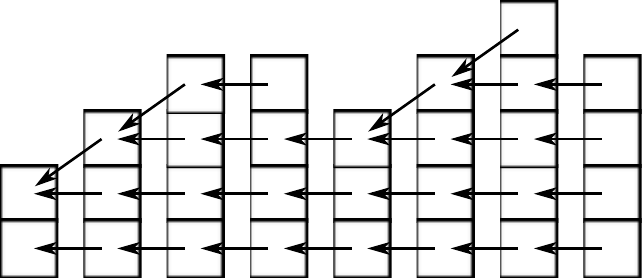}
			\hspace{.8cm}
			\includegraphics[scale=0.7]{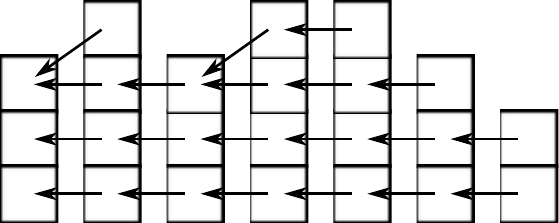}
		\end{center}
		\caption{Illustrations to Equality~\eqref{eq:hist} and Propositions~\ref{prop:size2pre}-\ref{prop:return-jest-fajny}.
			Particular columns represent a $k$-stack in consecutive configurations of a run.
			Arrows show the value of the $\hist$ function.
			The run on the first diagram is not $(k-1)$-upper.
			The run on the second diagram is $(k-1)$-upper; 
			notice that every its prefix is $(k-1)$-upper, assuming that (in the final configuration of the prefix) the illustrated $k$-stack is the topmost one (cf.~Proposition~\ref{prop:pre-jest-fajne}).
			The run on the last diagram is a $k$-return.}
		\label{fig:eq-6-1}
	\end{figure*}%
	See Figure~\ref{fig:eq-6-1} for an illustration.
	We prove Equality~\eqref{eq:hist} by induction on $e-b$.
	For $e=b$ it is true.
	For the induction step consider the smallest $e'\in X$ that is greater than $e$.
	Notice that for all $l\in[e+1,e'-1]$ necessarily $\subrun{R}{l}{e'}\not\in\up^k$ (since $\subrun{R}{l}{e'}\in\up^k$ implies that $\subrun{R}{l}{|R|}\in\up^k$, that is, that $l\in X$), 
	so the subrun $\subrun{R}{e}{e'}$ is in one of the forms described by Proposition~\ref{prop:pre-k}.
	For both of them we see that for each $r\in\prz{1}{r_{e'}}$,
	\begin{align*}
		\hist(\subrun{R}{e}{e'},s_{e'}(r))=s_e(\min\{r,r_{e}\})\,.
	\end{align*}
	Together with the induction assumption for $b,e$, this implies Equality~\eqref{eq:hist} for $b,e'$.

	We also claim that for all $b,e\in X$ with $b\leq e$,
	\begin{align}
		\subrun{R}{b}{e}\in\up^{k-1} \quad\Leftrightarrow\quad r_b\leq r_i\mbox{ for each }i\in X\cap[b,e]\,.\label{eq:hist2}
	\end{align}
	Indeed, $\subrun{R}{b}{e}$ is $(k-1)$-upper 
	if and only if $\hist(\subrun{R}{b}{e},s_{e}(r_{e}))=s_{b}(r_{b})$,
	and, as we see from Equality~\eqref{eq:hist}, the latter holds if and only if $r_{b}\leq r_i$ for each $i\in X\cap[b,e]$.
	
	Proposition~\ref{prop:size2pre} follows directly from Equivalence~\eqref{eq:hist2} used with $b=0$ and $e=|R|$.

	In order to prove Proposition~\ref{prop:pre-jest-fajne}, we suppose that $R=S\circ T$, that $R$ is $(k-1)$-upper, and that $T$ is $k$-upper.
	Using Equivalence~\eqref{eq:hist2} with $b=0$ and $e=|R|$, we obtain that $r_0\leq r_i$ for each $i\in X$.
	Since $T$ is $k$-upper, $|S|\in X$.
	Thus, we can use Equivalence~\eqref{eq:hist2} with $b=0$ and $e=|S|$; it tells us that $S$ is $(k-1)$-upper, as required.

	Consider now a situation as in Proposition~\ref{prop:pomp-bez-zmian}, namely,
	let $R\not\in\up^{k-1}$, let $j=\max(X\cap\prz{0}{|R|-1})$, and let $\subrun{R}{0}{j}\in\up^{k-1}$.
	If $j<|R|-1$, then from Proposition~\ref{prop:pre-k} applied to $\subrun{R}{j}{|R|}$ we obtain that $\hist(\subrun{R}{j}{|R|},\topp^{k-1}(R(|R|)))=\topp^{k-1}(R(j))$, that is,
	that $\subrun{R}{j}{|R|}$ is $(k-1)$-upper.
	Since a composition of $(k-1)$-upper runs is $(k-1)$-upper, this contradicts the assumptions that $\subrun{R}{0}{j}$ is $(k-1)$-upper but $R$ is not.
	Thus, $j=|R|-1$.
	By Equivalence~\eqref{eq:hist2}, the assumptions $\subrun{R}{0}{|R|-1}\in \up^{k-1}$ and $R\not\in\up^{k-1}$ imply that $r_0\leq r_i$ for each $i\in X\setminus\{|R|\}$, 
	but not for $i=|R|$.
	It follows that $r_0=r_{|R|-1}=r_{|R|}+1$, since $|r_{|R|-1}-r_{|R|}|\leq 1$.
	From Equality~\eqref{eq:hist} we deduce that $\hist(R,\topp^{k-1}(R(|R|)))=\topp^{k-1}(\pop^k(R(0)))$, 
	and from Equivalence~\eqref{eq:hist2} that $\subrun{R}{i}{|R|}\not\in\up^{k-1}$ for $i\in X\cap[0,|R|-1]$.
	For $i\not\in X$, we also have that $\subrun{R}{i}{|R|}\not\in\up^k\supseteq\up^{k-1}$ by definition of $X$.
	Thus $R$ is a $k$-return.

	Finally, suppose that $R$ is a $k$-return.
	By definition, this implies that $\subrun{R}{i}{|R|}$ is not $(k-1)$-upper for any $i\in X\setminus\{|R|\}$, 
	so, by Equivalence~\eqref{eq:hist2}, for every $i\in X\setminus\{|R|\}$ there is some $j\in X$ such that $j>i$ and $r_i>r_j$.
	By transitivity, it actually holds that $r_i>r_{|R|}$ for each $i\in X\setminus\{|R|\}$.
	Thus, Equality~\eqref{eq:hist} implies that $\hist(R,s_{|R|}(r))=s_0(r)$ for each $r\in\prz{1}{r_{|R|}}$;
	moreover, $\hist(\subrun{R}{i}{|R|},s_{|R|}(r))\neq\topp^{k-1}(R(i))$ 
	for all $i<|R|$, which implies that $s_{|R|}(r)$ is an unmodified copy of $s_0(r)$ (a $(k-1)$-stack can be modified only when it becomes the topmost $(k-1)$-stack).
	It means that $\tops^k(R(|R|))$ consists of the $r_{|R|}$ bottommost $(k-1)$-stacks of $\tops^k(R(0))$,
	also in the sense of the history function.
	By the definition of a $k$-return, $\hist(R,s_{|R|}(r_{|R|}))=s_0(r_0-1)$, so $r_{|R|}=r_0-1$.
\qed

We now give one more proposition, which is an immediate consequence of Proposition~\ref{prop:return-jest-fajny}.

\begin{prop}\lab{prop:push-return-jest-fajny}
	Let $R$ be a run such that its first transition performs $\push^k_\gamma$, and $\subrun{R}{1}{|R|}$ is a $k$-return, where $k\in\prz{1}{n}$.
	Then $\tops^k(R(0))\cong\tops^k(R(|R|))$.
	Additionally, for every $0$-stack $s^0$ in $\tops^k(R(|R|))$, 
	$\hist(R,s^0)$ is the corresponding $0$-stack in $\tops^k(R(0))$.
\qed\end{prop}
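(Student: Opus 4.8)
The plan is to reduce the statement directly to Proposition~\ref{prop:return-jest-fajny} applied to the subrun $\subrun{R}{1}{|R|}$, and then to track how the first $\push^k_\gamma$ transition interacts with the history function. First I would write $R = T_0 \circ S$, where $T_0$ is the single transition performing $\push^k_\gamma$ and $S = \subrun{R}{1}{|R|}$ is the given $k$-return. Since $S$ is a $k$-return, Proposition~\ref{prop:return-jest-fajny} gives $\pop^k(\tops^k(R(1))) \cong \tops^k(R(|R|))$, together with the information that each $0$-stack $s^0$ of $\tops^k(R(|R|))$ has $\hist(S, s^0)$ equal to the corresponding $0$-stack of $\pop^k(\tops^k(R(1)))$. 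So it suffices to understand the effect of the initial $\push^k_\gamma$ transition, namely to relate $\pop^k(\tops^k(R(1)))$ and $\tops^k(R(0))$, both at the level of positionless contents and at the level of the history function.

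The key observation is that $\push^k_\gamma$ acts on the topmost $k$-stack by duplicating its topmost $(k-1)$-stack and then (possibly) rewriting the single topmost stack symbol of that fresh copy. Concretely, if $\tops^k(R(0)) = t^k : t^{k-1}$ then, by the definition of $\push^k_\gamma$, we have $\tops^k(R(1)) = t^k : t^{k-1} : \posp(\text{(the modified copy of } t^{k-1}))$, hence $\pop^k(\tops^k(R(1))) = t^k : t^{k-1} = \tops^k(R(0))$ — here the equality is literal, not merely up to $\cong$, because $\pop^k$ discards exactly the newly created $(k-1)$-stack and leaves the old $k$-stack untouched. For the history part, I would invoke the definition of $\hist$ for a $\push^r_\gamma$ transition: for a $0$-stack $s^0$ of $R(1)$ that is \emph{not} in the topmost $(k-1)$-stack of $R(1)$ we have $\hist(T_0, s^0) = s^0$ itself, viewed in $R(0)$; and every $0$-stack of $\pop^k(\tops^k(R(1)))$ is indeed outside the freshly pushed topmost $(k-1)$-stack. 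Combining this with compositionality of histories, $\hist(R, s^0) = \hist(T_0, \hist(S, s^0))$, and $\hist(S, s^0)$ lands in $\pop^k(\tops^k(R(1))) = \tops^k(R(0))$ outside the topmost $(k-1)$-stack of $R(1)$, so applying $\hist(T_0, \cdot)$ leaves it fixed. This yields exactly the claimed statement: $\tops^k(R(0)) \cong \tops^k(R(|R|))$, and $\hist(R, s^0)$ is the corresponding $0$-stack of $\tops^k(R(0))$.

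The main point requiring care — the only genuine obstacle — is the bookkeeping of positions across $\push^k_\gamma$: the operation $\posp$ shifts the $(n-k+1)$-th coordinate on the copied $(k-1)$-stack, and the rewriting of the topmost symbol changes the content of one $0$-stack. I would emphasize that neither of these touches any $0$-stack of the \emph{non-duplicated} part $t^k : t^{k-1}$, which is precisely $\pop^k(\tops^k(R(1)))$; that is why the reduction through $\pop^k$ is clean and why the $\hist$ clause for $\push$ applies to all the relevant $0$-stacks without modification. Everything else is a routine unwinding of the definitions of $\push^k_\gamma$, of $\hist$, and of compositionality of histories, so no separate figure or lengthy computation is needed beyond citing Proposition~\ref{prop:return-jest-fajny}.
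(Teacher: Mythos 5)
Your proposal is correct and follows exactly the route the paper intends: the paper states this proposition as an immediate consequence of Proposition~\ref{prop:return-jest-fajny} applied to the $k$-return $\subrun{R}{1}{|R|}$, combined with the observation that $\pop^k(\tops^k(R(1)))$ is literally $\tops^k(R(0))$ and that the $\hist$ clause for $\push^k_\gamma$ fixes every $0$-stack outside the freshly pushed topmost $(k-1)$-stack. Your careful unwinding of the position bookkeeping is the same argument, just written out in full.
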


\subsection{Characterization of Returns and Upper Runs}

Next we give two propositions, which describe possible forms of upper runs and returns.

\begin{prop}\lab{prop:upper}
	A run $R$ is $k$-upper (where $k\in\prz{0}{n}$) if and only if
	\begin{enumerate}
	\item	$|R|=0$, or
	\item	$|R|=1$, and the only transition of $R$ is $\read$, or it performs $\push^r_\gamma$ for any $r$, or $\pop^r$ for $r\leq k$, or
	\item	the first transition of $R$ performs $\push^r_\gamma$ for $r\geq k+1$, and $\subrun{R}{1}{|R|}$ is an $r$-return, or
	\item	$R$ is a composition of two nonempty $k$-upper runs.
	\end{enumerate}
\end{prop}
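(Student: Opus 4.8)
The plan is to prove both implications, with the ``if'' direction being essentially a routine verification using the definitions and compositionality of histories, and the ``only if'' direction being the substantial part, proven by induction on $|R|$.

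For the ``if'' direction, I would check each of the four cases. Case~(1) is immediate since $\hist$ of a length-$0$ run is the identity. For case~(2), when $|R|=1$: a $\read$ transition does not touch the stack, a $\push^r_\gamma$ transition duplicates and modifies the topmost $(r-1)$-stack and one checks from the definition of $\hist$ that the new topmost $k$-stack (for $k\le n$, its history unwinds the copy made by the push) still has history equal to $\topp^k(R(0))$ — more carefully, for $r\le k$ the topmost $k$-stack of $R(|R|)$ lies inside the new topmost $r$-stack and its history is the old topmost $k$-stack, while for $r\ge k+1$ one needs $r$ to actually be $\le k$ here, so the precise phrasing is ``$\push^r_\gamma$ for any $r$'' precisely because the topmost $k$-stack's contents sit inside the duplicated $(r-1)$-stack when $r\le k$ and the duplication is of a stack of order $\ge k$ when $r\ge k+1$; in both situations the history traces back correctly — and a $\pop^r$ for $r\le k$ removes a $(r-1)$-stack from inside the topmost $k$-stack, leaving the topmost $k$-stack's history unchanged as the old topmost $k$-stack. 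Case~(3): if the first transition is $\push^r_\gamma$ with $r\ge k+1$ and $\subrun{R}{1}{|R|}$ is an $r$-return, then by Proposition~\ref{prop:push-return-jest-fajny} we get $\topp^r(R(0))\cong\topp^r(R(|R|))$ with the matching $0$-stacks corresponding under $\hist$, and since $k\le r$ the topmost $k$-stack sits inside this, giving $\hist(R,\topp^k(R(|R|)))=\topp^k(R(0))$, i.e.\ $R\in\up^k$. Case~(4) follows directly from compositionality of histories: if $R=R_1\circ R_2$ with both $\up^k$, then $\hist(R,\topp^k(R(|R|)))=\hist(R_1,\hist(R_2,\topp^k(R_2(|R_2|))))=\hist(R_1,\topp^k(R_1(|R_1|)))=\topp^k(R_1(0))$.

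For the ``only if'' direction, suppose $R\in\up^k$ and $|R|\ge 1$, and assume $R$ is not a nontrivial composition of two $k$-upper runs (otherwise we are in case~(4)). The idea is to look at the \emph{first} transition of $R$ and analyze what it can be. Write $R=T\circ S$ where $|T|=1$. If $T$ performs $\read$, or $\push^r_\gamma$ for $r\le k$, or $\pop^r$ for $r\le k$, then — by the ``if'' direction reasoning — $T\in\up^k$; hence by the stated closure property (for $R\circ S$ with $T=R\in\up^k$ as a prefix... actually using: for a run $T\circ S$ with $T\in\up^k$ it holds $S\in\up^k\iff T\circ S\in\up^k$, which follows from compositionality) $S\in\up^k$, so $R$ is a composition of two $k$-upper runs, either landing in case~(4) (if $|S|\ge 1$) or in case~(2) (if $|S|=0$, i.e.\ $|R|=1$). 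The remaining possibility is that the first transition performs $\push^r_\gamma$ or $\pop^r$ for some $r\ge k+1$. A $\pop^r$ with $r\ge k+1$ as the first transition removes the whole topmost $k$-stack (it is inside the popped $(r-1)$-stack), so its history from $R(0)$ is not $\topp^k(R(0))$ — this would contradict $R\in\up^k$ unless... one must be careful, but intuitively once the topmost $k$-stack of $R(0)$ is destroyed it can never come back as a copy, so $\pop^r$ for $r\ge k+1$ is impossible as the first transition of a $k$-upper run. That leaves $\push^r_\gamma$ with $r\ge k+1$ as the first transition; I then need to show $\subrun{R}{1}{|R|}=S$ is an $r$-return, which puts us in case~(3). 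Here the key is Proposition~\ref{prop:pomp-bez-zmian}: since $R=T\circ S\in\up^k$ but (by the non-composition assumption) no proper nonempty prefix of $S$ extended by $T$ is $k$-upper, one can extract that $S\notin\up^{r-1}$ and that the greatest index $j$ with $\subrun{S}{j}{|S|}\in\up^r$ has $\subrun{S}{0}{j}\in\up^{r-1}$, whence $S$ is an $r$-return by Proposition~\ref{prop:pomp-bez-zmian}.

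The main obstacle I expect is the last step: carefully justifying, from the hypothesis that $R$ is $k$-upper and not decomposable as a composition of two nonempty $k$-upper runs, that after a leading $\push^r_\gamma$ (with $r\ge k+1$) the rest $S$ satisfies exactly the hypotheses of Proposition~\ref{prop:pomp-bez-zmian} for order $r$ — i.e.\ that $S\notin\up^{r-1}$ and that the largest $j$ with $\subrun{S}{j}{|S|}\in\up^r$ has $\subrun{S}{0}{j}\in\up^{r-1}$. The point is that the push raises the topmost $(r-1)$-stack, so the topmost $k$-stack of $R(|R|)$ being a copy of that of $R(0)$ forces $S$ to eventually "return" from the duplicated $(r-1)$-stack, i.e.\ to be $r$-upper in a way that undoes the push at the $(r-1)$-level without being $(r-1)$-upper overall (else we could split off a $k$-upper prefix). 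Pinning down these index manipulations — translating ``$R$ is not a composition of two $k$-upper runs'' into the precise minimality statement feeding Proposition~\ref{prop:pomp-bez-zmian} — is where the real work lies; everything else is bookkeeping with $\hist$ and the already-established propositions.
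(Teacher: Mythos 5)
Your overall architecture matches the paper's: the ``if'' direction is the routine verification you describe (with Proposition~\ref{prop:push-return-jest-fajny} for Case~(3)), and in the ``only if'' direction the first transition is analyzed, $\pop^r$ with $r\geq k+1$ is excluded, and the transitions $\read$, $\pop^r$ and $\push^r_\gamma$ with $r\leq k$ peel off into Case~(2) or~(4). The substance of the proposition, however, is the remaining branch, and there you only name the target and the tool without doing the work --- and the deferred work is not mere ``bookkeeping''. Two things are missing. First, you never actually establish that $S=\subrun{R}{1}{|R|}$ satisfies $S\in\up^r$ and $S\not\in\up^{r-1}$. The paper gets this from the observation that $\hist(\subrun{R}{0}{1},s^k)=\topp^k(R(0))$ holds for exactly two $k$-stacks $s^k$ of $R(1)$, namely $\topp^k(R(1))$ and $\topp^k(\pop^r(R(1)))$; since $S\not\in\up^k$ (by your non-decomposability assumption), the history of $\topp^k(R(|R|))$ under $S$ must be the second one, which sits inside $\topp^{r-1}(\pop^r(R(1)))\neq\topp^{r-1}(R(1))$. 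Your ``intuitively \dots\ forces $S$ to eventually return'' is exactly this claim, but it needs the two-preimage count to become a proof.

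Second, and more seriously: verifying the hypothesis of Proposition~\ref{prop:pomp-bez-zmian} --- that $\subrun{S}{0}{j}\in\up^{r-1}$ for the greatest $j<|S|$ with $\subrun{S}{j}{|S|}\in\up^r$ --- from ``$R$ is not a composition of two nonempty $k$-upper runs'' is not an index manipulation; it is essentially the contrapositive of the paper's Case~(4) argument. Concretely, if that hypothesis fails, then some intermediate suffix $\subrun{R}{x}{|R|}$ is $(r-1)$-upper; taking $x$ minimal, $\subrun{R}{1}{x}$ is an $r$-return, and one must then apply Proposition~\ref{prop:push-return-jest-fajny} to $\subrun{R}{0}{x}$ to conclude that $\hist(\subrun{R}{x}{|R|},\topp^k(R(|R|)))$ is actually $\topp^k(R(x))$ (being $(r-1)$-upper alone does not give this for $k<r-1$), which exhibits $R$ as a composition of two nonempty $k$-upper runs. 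The paper avoids this contrapositive detour: it takes $x$ to be the smallest positive index with $\subrun{R}{x}{|R|}\in\up^{r-1}$, proves $\subrun{R}{1}{x}\in\ret^r$ directly from the definition, and lands in Case~(3) if $x=|R|$ and in Case~(4) otherwise. Your route can be completed, but only by importing exactly these two arguments; as written, the proof of the hard case is absent. (A small additional caution: your ``prefix version'' of the closure property --- $T\in\up^k$ and $T\circ S\in\up^k$ imply $S\in\up^k$ --- does not follow from compositionality alone; it needs that $\topp^k(T(1))$ is the \emph{unique} $k$-stack of $T(1)$ whose history is $\topp^k(T(0))$, which holds for $\read$, $\pop^r$ and $\push^r_\gamma$ with $r\leq k$ but fails for $\push^r_\gamma$ with $r\geq k+1$ --- the very reason that last case is the hard one.)
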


\newcommand{\spl}{x}

\begin{proof}
	The right-to-left implication is almost immediate; in Case (3) we use Proposition~\ref{prop:push-return-jest-fajny}.
	
	Concentrate on the left-to-right implication.
	If $|R|=0$, then we have Case (1).
	Suppose that $|R|\geq 1$.
	Notice that the first transition, between $R(0)$ and $R(1)$, cannot perform $\pop^r$ for $r\geq k+1$, as such an operation removes the 
	topmost $k$-stack of $R(0)$, which contradicts the assumption that $R$ is $k$-upper.
	Thus, if $|R|=1$, then we have Case (2).
	Suppose that $|R|\geq 2$.
	If the first transition is $\read$, or performs $\pop^r$ for $r\leq k$, or $\push^r_\gamma$ for $r\leq k$,
	then both $\subrun{R}{0}{1}$ and $\subrun{R}{1}{|R|}$ are $k$-upper; we have Case (4).
	We can do the same when the operation is $\push^r_\gamma$ for $r\geq k+1$ and $\subrun{R}{1}{|R|}$ is $k$-upper.

	The remaining case is that the first operation is $\push^r_\gamma$ for $r\geq k+1$ and $\subrun{R}{1}{|R|}$ is not $k$-upper.
	Notice that $\hist(\subrun{R}{0}{1},s^k)=\topp^k(R(0))$ holds only for two $k$-stack of $R(1)$: for $s^k=\topp^k(R(1))$ and for $s^k=\topp^k(\pop^r(R(1)))$.
	So, because $R$ is $k$-upper and $\subrun{R}{1}{|R|}$ is not $k$-upper, 
	which by definition means that $\hist(R,\topp^k(R(|R|)))=\topp^k(R(0))$ and $\hist(\subrun{R}{1}{|R|},\topp^k(R(|R|)))\neq\topp^k(R(1))$, 
	it has to be $\hist(\subrun{R}{1}{|R|},\topp^k(R(|R|)))=\topp^k(\pop^r(R(1)))$.
	Thus, also $\hist(\subrun{R}{1}{|R|},\topp^{r-1}(R(|R|)))=\topp^{r-1}(\pop^r(R(1)))$.
	Let $\spl$ be the smallest positive index for which $\subrun{R}{\spl}{|R|}$ is $(r-1)$-upper.
	Then $\hist(\subrun{R}{1}{\spl},\topp^{r-1}(R(\spl)))=\topp^{r-1}(\pop^r(R(1)))$
	(by compositionality of histories, because $\hist(\subrun{R}{\spl}{|R|},\topp^{r-1}(R(|R|)))=\topp^{r-1}(R(\spl))$ and $\hist(\subrun{R}{1}{|R|},\topp^{r-1}(R(|R|)))=\topp^{r-1}(\pop^r(R(1)))$), 
	and there is no $i\in\prz{1}{\spl-1}$ such that $\subrun{R}{i}{\spl}$ is $(r-1)$-upper
	(because $\subrun{R}{i}{\spl}\in\up^{r-1}$ and $\subrun{R}{\spl}{|R|}\in\up^{r-1}$ would imply that $\subrun{R}{i}{|R|}\in\up^{r-1}$).
	Thus, $\subrun{R}{1}{\spl}$ is an $r$-return.
	The knowledge at this point of the proof is summarized in Figure~\ref{fig:prop-6-7}.
 
	\begin{figure*}
		\begin{center}
			\begin{minipage}{.45\textwidth}
				\begin{center}
					\import{pics/}{prop-6-7.pdf_tex_ok}
					\caption{Illustration for the proof of Proposition~\ref{prop:upper}}
					\label{fig:prop-6-7}
				\end{center}
			\end{minipage}\hspace{0.08\textwidth}%
			\begin{minipage}{.45\textwidth}
				\begin{center}
					\import{pics/}{prop-6-8.pdf_tex_ok}
					\caption{Illustration for the proof of Proposition~\ref{prop:return}}
					\label{fig:prop-6-8}
				\end{center}
			\end{minipage}
		\end{center}
	\end{figure*}
	
	If $\spl=|R|$, then we have Case (3).
	For the remaining part of the proof suppose that $\spl<|R|$.
	Let $s^k=\hist(\subrun{R}{\spl}{|R|},\topp^k(R(|R|)))$.
	Because $\subrun{R}{\spl}{|R|}\in\up^{r-1}$ and $\topp^k(R(|R|))$ is in $\topp^{r-1}(R(|R|))$ (recall that $k\leq r-1$), we have that $s^k$ is in $\tops^{r-1}(R(\spl))$.
	On the other hand, because $R\in\up^k$, by compositionality of histories we know that $\hist(\subrun{R}{0}{\spl},s^k)=\topp^k(R(0))$.
	Proposition~\ref{prop:push-return-jest-fajny} applied to $\subrun{R}{0}{\spl}$ (its first operation is $\push^r_\gamma$, and $\subrun{R}{1}{\spl}$ is an $r$-return)
	implies that $s^k=\topp^k(R(\spl))$, that is, that $\subrun{R}{0}{|R|}$ and $\subrun{R}{\spl}{|R|}$ are $k$-upper.
	Thus, we have Case (4).
\end{proof}

\begin{prop}\lab{prop:return}
	A run $R$ is an $r$-return (where $r\in\prz{1}{n}$) if and only if
	\begin{enumerate}
	\item	$|R|=1$, and the only transition of $R$ performs $\pop^r$, or
	\item	the first transition of $R$ is $\read$, or it performs $\pop^k$ for $k<r$, or $\push^k_\gamma$ for $k\neq r$, and $\subrun{R}{1}{|R|}$ is an $r$-return, or
	\item	the first transition of $R$ performs $\push^k_\gamma$ for $k\geq r$, and $\subrun{R}{1}{|R|}$ is a composition of a $k$-return and an $r$-return.
	\end{enumerate}
\end{prop}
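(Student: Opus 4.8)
The plan is to prove the two implications separately, by induction on $|R|$, closely following the structure of the proof of Proposition~\ref{prop:upper}. The right-to-left implication is routine. For Case~(1) a single $\pop^r$ is visibly an $r$-return. For Case~(2) the first transition leaves the second-from-top $(r-1)$-stack of the topmost $r$-stack intact (for $\push^k_\gamma$ with $k>r$ it is recreated as a history-faithful copy), so by compositionality of histories the first defining clause of an $r$-return for $R$ reduces to the one for $\subrun{R}{1}{|R|}$, while the clause ``no suffix is $(r-1)$-upper'' transfers verbatim for indices $i\geq 1$ and holds for $i=0$ because $\hist(R,\topp^{r-1}(R(|R|)))=\topp^{r-1}(\pop^r(R(0)))\neq\topp^{r-1}(R(0))$. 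For Case~(3), writing $\subrun{R}{1}{|R|}=T_1\circ T_2$ with $T_1$ a $k$-return and $T_2$ an $r$-return, Proposition~\ref{prop:push-return-jest-fajny} applied to the composition of the first transition with $T_1$ shows that the topmost $k$-stack is restored together with the histories of its $0$-stacks; then $T_2$ performs the actual $r$-level return, compositionality again gives the first clause, and the non-$(r-1)$-upper condition for indices inside $T_1$ follows from $\up^{r-1}\subseteq\up^{k-1}$ together with Proposition~\ref{prop:pre-jest-fajne}.

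For the left-to-right implication, if $|R|=1$ then $R$ is $r$-upper but, by taking $i=0$ in the definition, not $(r-1)$-upper; by Proposition~\ref{prop:upper} a one-step $r$-upper run performs $\read$, $\push$ of some order, or $\pop^t$ with $t\leq r$, and only $\pop^r$ among these fails to be $(r-1)$-upper, so we are in Case~(1). Now let $|R|\geq 2$ and examine the first transition. It cannot be $\pop^t$ for $t>r$, which removes $\topp^r(R(0))$ and hence destroys the $(r-1)$-stack $\topp^{r-1}(\pop^r(R(0)))$ whose descendant must reach $R(|R|)$. It cannot be $\pop^r$ either: then $\hist(\subrun{R}{0}{1},\cdot)$ is the identity on the $(r-1)$-stacks of the topmost $r$-stack of $R(1)$, so compositionality forces $\hist(\subrun{R}{1}{|R|},\topp^{r-1}(R(|R|)))=\topp^{r-1}(R(1))$, i.e.\ $\subrun{R}{1}{|R|}\in\up^{r-1}$, contradicting that $R$ is an $r$-return. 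If the first transition is $\read$, $\pop^k$ with $k<r$, or $\push^k_\gamma$ with $k<r$, it alters only the topmost $(r-1)$-stack, and compositionality transfers both clauses of the definition from $R$ to $\subrun{R}{1}{|R|}$, which is therefore an $r$-return; that is Case~(2).

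The remaining case --- the first transition is $\push^s_\gamma$ with $s\geq r$ --- is the crux. Here $\subrun{R}{1}{|R|}$ is $s$-upper, and is $(s-1)$-upper precisely when the $(r-1)$-stack $\hist(\subrun{R}{1}{|R|},\topp^{r-1}(R(|R|)))$ lies inside the freshly pushed $(s-1)$-stack $\topp^{s-1}(R(1))$. In that subcase --- which forces $s>r$, since for $s=r$ it would give $\topp^r(R(0))\cong\topp^r(R(|R|))$ by Proposition~\ref{prop:push-return-jest-fajny}, contradicting that $R$ is an $r$-return --- that $(r-1)$-stack equals $\topp^{r-1}(\pop^r(R(1)))$, so $\subrun{R}{1}{|R|}$ is itself an $r$-return, landing us in Case~(2). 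Otherwise $\subrun{R}{1}{|R|}\notin\up^{s-1}$, and I would split $\subrun{R}{1}{|R|}=\subrun{R}{1}{x}\circ\subrun{R}{x}{|R|}$, where $x$ marks the point at which the run first ``returns out of'' the pushed $(s-1)$-stack --- for $s>r$ the least $x\geq 2$ with $\subrun{R}{x}{|R|}\in\up^{s-1}$, and for $s=r$ the least $x\geq 2$ with $\hist(\subrun{R}{1}{x},\topp^{r-1}(R(x)))=\topp^{r-1}(\pop^r(R(1)))$. Using the minimality of $x$ and compositionality one checks that $\subrun{R}{1}{x}$ satisfies both clauses of an $s$-return (Proposition~\ref{prop:pomp-bez-zmian}, applied to $\subrun{R}{1}{|R|}$, is the tool that identifies it as a genuine return rather than a mere $s$-upper run); then Proposition~\ref{prop:push-return-jest-fajny} applied to the first transition composed with $\subrun{R}{1}{x}$ restores the topmost $s$-stack with faithful histories, whence compositionality and the hypothesis that $R$ is an $r$-return yield that $\subrun{R}{x}{|R|}$ is an $r$-return --- placing us in Case~(3). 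The main obstacle is precisely this decomposition: proving that such an $x$ exists and is strictly below $|R|$ (so that both pieces are nonempty), i.e.\ that the run is forced to undo the order-$s$ push before it can reach the strictly lower $r$-level demanded by $R$ being an $r$-return --- for otherwise $\subrun{R}{1}{|R|}$ would be an $s$-return and Proposition~\ref{prop:push-return-jest-fajny} would contradict that $R$ is an $r$-return --- together with the verification of the non-upper clause of the $s$-return $\subrun{R}{1}{x}$.
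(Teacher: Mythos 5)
Your overall strategy coincides with the paper's: right-to-left by direct verification via Proposition~\ref{prop:push-return-jest-fajny} and compositionality of histories; left-to-right by a case analysis on the first transition, with the crux being $\push^s_\gamma$ for $s\geq r$, resolved by splitting $\subrun{R}{1}{|R|}$ at the moment the run leaves the pushed stack. The right-to-left direction and all left-to-right cases except the last are essentially correct, and for $s>r$ your splitting point (the least $x\geq 2$ with $\subrun{R}{x}{|R|}\in\up^{s-1}$) does work: that condition already guarantees that the history of $\topp^{r-1}(R(|R|))$ under $\subrun{R}{x}{|R|}$ lies in the surviving topmost $(s-1)$-stack, which is what lets Proposition~\ref{prop:push-return-jest-fajny} identify it as $\topp^{r-1}(\pop^r(R(x)))$.

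The gap is in the subcase $s=r$. Your splitting point there --- the least $x\geq 2$ with $\hist(\subrun{R}{1}{x},\topp^{r-1}(R(x)))=\topp^{r-1}(\pop^r(R(1)))$ --- can land inside a higher-order copy of the stack that is destroyed later in the run, and then $\subrun{R}{x}{|R|}$ is not an $r$-return. Concretely, take $r<n$, let $\topp^r(R(0))=[u_1,\dots,u_m]$, and let $R$ perform $\push^r_\gamma$, $\push^{r+1}_\gamma$, $\pop^r$, $\pop^{r+1}$, $\pop^r$, $\pop^r$. One checks that $R$ is an $r$-return: its final topmost $(r-1)$-stack is the original $u_{m-1}$, with the correct history, and no proper suffix is $(r-1)$-upper. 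Your condition is first met at $x=3$, inside the order-$(r+1)$ copy: $\topp^{r-1}(R(3))$ is the copy $u_m'$ of $u_m$, and $\hist(\subrun{R}{1}{3},u_m')=u_m=\topp^{r-1}(\pop^r(R(1)))$. Indeed $\subrun{R}{1}{3}\in\ret^r$, but $\subrun{R}{3}{6}\notin\ret^r$: we have $\hist(\subrun{R}{3}{6},\topp^{r-1}(R(6)))=u_{m-1}$, lying in the original $r$-stack, whereas $\topp^{r-1}(\pop^r(R(3)))=u_{m-1}'$ lies in the copy. The paper avoids this by taking the smallest $x$ such that $\subrun{R}{1}{x}\notin\up^{k-1}$ \emph{and} $\subrun{R}{x}{|R|}\in\up^{k}$; the second conjunct pins the split to the surviving copy (here $x=5$) and is exactly what is needed to apply Proposition~\ref{prop:push-return-jest-fajny} and conclude $\subrun{R}{x}{|R|}\in\ret^r$. (Your parenthetical appeal to Proposition~\ref{prop:pomp-bez-zmian} ``applied to $\subrun{R}{1}{|R|}$'' also cannot be right as stated, since that proposition would then certify $\subrun{R}{1}{|R|}$ itself, rather than $\subrun{R}{1}{x}$, as the $k$-return; the paper applies it to $\subrun{R}{1}{x}$.)
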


\begin{proof}
	Let us analyze the right-to-left implication, which is easier.
	Case (1) is trivial.
	In Case (2) we observe that
	$\hist(\subrun{R}{0}{1},\topp^{r-1}(\pop^r(R(1))))=\topp^{r-1}(\pop^r(R(0)))$ 
	(it is important that $k\neq r$ in the case of $\push^k_\gamma$), and hence 
	\begin{align*}
		\hist(R,\topp^{r-1}(R(|R|)))&=\hist(\subrun{R}{0}{1},\hist(\subrun{R}{1}{|R|},\topp^{r-1}(R(|R|))))\\
			&=\hist(\subrun{R}{0}{1},\topp^{r-1}(\pop^r(R(1))))=\topp^{r-1}(\pop^r(R(0)))\,.
	\end{align*}
	In particular, this implies that $R$ is not $(r-1)$-upper;
	moreover $\subrun{R}{i}{|R|}$ is not $(r-1)$-upper for $i\in[1,|R|-1]$ because $\subrun{R}{1}{|R|}$ is an $r$-return.
	Thus, $R$ is an $r$-return.
	In Case (3), let $\spl-1$ be the length of the first return (so the $k$-return ends in $R(\spl)$).
	The situation is depicted in Figure~\ref{fig:prop-6-8}.
	Recall that $k\geq r$.
	By Proposition~\ref{prop:push-return-jest-fajny},
	$\hist(\subrun{R}{0}{\spl},\topp^{r-1}(\pop^r(R(\spl))))=\topp^{r-1}(\pop^r(R(0)))$.
	Since $\hist(\subrun{R}{\spl}{|R|},\topp^{r-1}(R(|R|)))=\topp^{r-1}(\pop^r(R(\spl)))$, we conclude that $\hist(R,\topp^{r-1}(R(|R|)))=\topp^{r-1}(\pop^r(R(0)))$.
	This in particular implies that $R$ is not $(r-1)$-upper.
	Because $\subrun{R}{\spl}{|R|}$ is an $r$-return, $\subrun{R}{i}{|R|}$ cannot be $(r-1)$-upper for $i\in\prz{\spl}{|R|-1}$.
	If $\subrun{R}{i}{|R|}$ was $(r-1)$-upper for some $i\in\prz{1}{\spl-1}$, 
	then $\hist(\subrun{R}{i}{\spl},\topp^{r-1}(\pop^r(R(\spl))))=\topp^{r-1}(R(i))$.
	This would imply that $\subrun{R}{i}{\spl}$ is $(k-1)$-upper (both for $k>r$ and $k=r$), 
	which is impossible, because $\subrun{R}{1}{\spl}$ is a $k$-return.
	We conclude that $R$ is an $r$-return.

	Concentrate now on the left-to-right implication.
	Before starting the proof, notice that in order to prove that $\subrun{R}{\spl}{|R|}\in\ret^r$ for some $\spl\in\prz{0}{|R|}$, 
	it is enough to check that $\hist(\subrun{R}{\spl}{|R|},\topp^{r-1}(R(|R|)))=\topp^{r-1}(\pop^r(R(\spl)))$:
	the condition that $\subrun{R}{i}{|R|}\not\in\up^{r-1}$ for all $i\in[\spl,|R|-1]$ is ensured by the fact that $R$ itself is an $r$-return.
	
	Of course $|R|\geq 1$.
	Because $R$ is an $r$-return, 
	\begin{align}
		\hist(\subrun{R}{0}{1},\hist(\subrun{R}{1}{|R|},\topp^{r-1}(R(|R|))))&=\hist(R,\topp^{r-1}(R(|R|)))\nonumber\\
			&=\topp^{r-1}(\pop^r(R(0)))\,.\label{eq:6-8}
	\end{align}
	
	Observe that the first operation, between $R(0)$ and $R(1)$, cannot be $\pop^k$ for $k\geq r+1$, as after such an operation 
	there would be no $(r-1)$-stack $s^{r-1}$ of $R(1)$ such that $\hist(\subrun{R}{0}{1},s^{r-1})=\topp^{r-1}(\pop^r(R(0)))$, 
	which contradicts Equality~\eqref{eq:6-8}.
	
	Suppose that the first operation of $R$ is $\pop^r$.
	In this situation, the only $(r-1)$-stack $s^{r-1}$ of $R(1)$ such that $\hist(\subrun{R}{0}{1},s^{r-1})=\topp^{r-1}(\pop^r(R(0)))$ is 
	$s^{r-1}=\topp^{r-1}(R(1))$, 
	and thus we have $\hist(\subrun{R}{1}{|R|},\topp^{r-1}(R(|R|)))=\topp^{r-1}(R(1))$ by Equality~\eqref{eq:6-8}. This means that $\subrun{R}{1}{|R|}$ is $(r-1)$-upper.
	A nonempty suffix of an $r$-return cannot be $(r-1)$-upper, so $|R|=1$; we have Case (1).

	Next, suppose that the first operation is $\read$, or $\pop^k$ for $k\leq r-1$, or $\push^k_\gamma$ for $k\leq r-1$.
	In this situation, the only $(r-1)$-stack $s^{r-1}$ of $R(1)$ such that $\hist(\subrun{R}{0}{1},s^{r-1})=\topp^{r-1}(\pop^r(R(0)))$ 
	is $s^{r-1}=\topp^{r-1}(\pop^r(R(1)))$, 
	and thus $\hist(\subrun{R}{1}{|R|},\topp^{r-1}(R(|R|)))=\topp^{r-1}(\pop^r(R(1)))$, by Equality~\eqref{eq:6-8}. 
	In consequence, $\subrun{R}{1}{|R|}$ is an $r$-return;
	we have Case (2).	
	
	Finally, suppose that the first operation of $R$ is $\push^k_\gamma$ for $k\geq r$.
	If $k>r$, then there are two $(r-1)$-stacks $s^{r-1}$ of $R(1)$ such that $\hist(\subrun{R}{0}{1},s^{r-1})=\topp^{r-1}(\pop^r(R(0)))$, 
	namely $s^{r-1}=\topp^{r-1}(\pop^r(R(1)))$ and $s^{r-1}=\topp^{r-1}(\pop^r(\pop^k(R(1))))$.
	If $k=r$, only the latter possibility remains: $s^{r-1}=\topp^{r-1}(\pop^r(\pop^k(R(1))))$.
	By Equality~\eqref{eq:6-8}, $\hist(\subrun{R}{1}{|R|},\topp^{r-1}(R(|R|)))$ has to be one of these two $(r-1)$-stacks.

	Suppose first that $\hist(\subrun{R}{1}{|R|},\topp^{r-1}(R(|R|)))=\topp^{r-1}(\pop^r(R(1)))$ and $k>r$.
	Then $\subrun{R}{1}{|R|}$ is an $r$-return;
	we have Case (2). 

	The opposite possibility is that $\hist(\subrun{R}{1}{|R|},\topp^{r-1}(R(|R|)))=\topp^{r-1}(\pop^r(\pop^k(R(1))))$.
	Because $\topp^{r-1}(R(|R|))$ and $\topp^{r-1}(\pop^r(\pop^k(R(1))))$ are in $\topp^k(R(|R|))$ and $\topp^k(R(1))$, respectively (recall that $k\geq r$),
	this implies that $\subrun{R}{1}{|R|}\in\up^k$.
	Let $\spl$ be the smallest positive index such that $\subrun{R}{1}{\spl}\not\in\up^{k-1}$ and $\subrun{R}{\spl}{|R|}\in\up^k$
	(it exists: in the worst case we can take $\spl=|R|$, since $\subrun{R}{1}{|R|}\not\in\up^{k-1}$ and $\subrun{R}{|R|}{|R|}\in\up^k$).
	Because $\subrun{R}{1}{|R|}$ and $\subrun{R}{\spl}{|R|}$ are $k$-upper, also $\subrun{R}{1}{\spl}$ is $k$-upper.
	Moreover, because $\subrun{R}{1}{\spl}\not\in\up^{k-1}$ and $\subrun{R}{1}{1}\in\up^{k-1}$, necessarily $\spl>1$.
	Let also $j$ be the greatest index in $\prz{1}{\spl-1}$ such that $\subrun{R}{j}{\spl}\in\up^k$ (it exists, because $\subrun{R}{1}{\spl}\in\up^k$ and $1\in\prz{1}{\spl-1}$).
	Then $\subrun{R}{j}{|R|}$ (a composition of two $k$-upper runs) is $k$-upper, and thus $\subrun{R}{1}{j}$ is not $(k-1)$-upper, by minimality of $\spl$.
	In such a situation, Proposition~\ref{prop:pomp-bez-zmian} implies that $\subrun{R}{1}{\spl}$ is a $k$-return.
	Let $t^{r-1}=\hist(\subrun{R}{\spl}{|R|},\topp^{r-1}(R(|R|)))$.
	Because $\subrun{R}{\spl}{|R|}\in\up^k$ and $\topp^{r-1}(R(|R|))$ is in $\topp^k(R(|R|))$ (since $k\geq r$), we have that $t^{r-1}$ is in $\tops^k(R(\spl))$.
	On the other hand, because $R\in\ret^r$, by compositionality of histories we know that $\hist(\subrun{R}{0}{\spl},t^{r-1})=\topp^{r-1}(\pop^r(R(0)))$.
	Proposition~\ref{prop:push-return-jest-fajny} applied to $\subrun{R}{0}{\spl}$ (its first operation is $\push^k_\gamma$, and $\subrun{R}{1}{\spl}$ is a $k$-return)
	implies that $t^{r-1}=\topp^{r-1}(\pop^r(R(\spl)))$, that is, that $\subrun{R}{\spl}{|R|}\in\ret^r$; we have Case (3).
\end{proof}

\section{Types and Sequence Equivalence}\lab{sec:types}

In this section we assign to each configuration a type from a finite set.
The slogan is that configurations with the same positionless topmost $k$-stacks and the same type are starting points of similar $k$-upper runs.
We start by an example.

\begin{exa}\lab{ex:types}
	Consider a $3$-DPDA that (while being in some state) can perform the following $1$-upper run:
	it executes $\pop^1$, $\push^3$, and then it starts analyzing the topmost $2$-stack using $\pop^1$ and $\pop^2$;
	when a $0$-stack containing a fixed stack symbol $a$ is found, the automaton performs $\pop^3$; the run ends in the same state as it begins.
	As an effect of this run, only the topmost $0$-stack is removed, so this is indeed a $1$-upper run.
	Notice that it can be executed only when the topmost $2$-stack contains the $a$ symbol, and can be repeated as long as the topmost $1$-stack is nonempty.
	Consider now two configuration of this $3$-DPDA, having the same positionless topmost $1$-stack.
	If additionally the topmost $2$-stacks of both configurations contain the $a$ symbol, 
	then from each of them we can start the $1$-upper run described above, and repeat it the same number of times.
	
	Because a $1$-upper run can arbitrarily modify the topmost $1$-stack, we consider configurations having the same positionless topmost $1$-stack.
	On the other hand, we summarize the rest of the stack in a small piece of information, called a type.
	In this example we only need to know whether there is the $a$ symbol in the topmost $2$-stack (below the topmost $1$-stack).
	In general, whenever a $3$-DPDA removes the topmost $1$-stack and starts analyzing the stack below, 
	next it has to remove the whole topmost $2$-stack (since we consider a $1$-upper run).
	Thus, for each entering state (i.e.,~the state when removing the topmost $1$-stack) we only need to know the exit state (i.e.,~the state when removing the topmost $2$-stack).
	For higher orders the situation is slightly more complicated, but similar.
\end{exa}

There is also a second goal of this section.
Suppose that we have a sequence of configurations, all having the same positionless topmost $k$-stack and the same type.
Then, as said above, from each of them we can execute a similar $k$-upper run.
Typically, these $k$-upper runs are prefixes of some accepting runs.
We want to determine whether such accepting runs can read an unbounded number of $\sharp$ symbols, or not.
(For technical reasons, we consider $n$-returns instead of accepting runs.)

For this section we fix an $n$-DPDA $\calA$ with stack alphabet $\Gamma$, state set $Q$, and input alphabet $A$ that contains a distinguished symbol $\sharp$.
Moreover, we fix a morphism $\phi\colon A^*\to M$ into a finite monoid $M$.
For a run $R$ reading a word $w$, by $\phi(R)$ we denote $\phi(w)$, and by $\sharp(R)$ we denote the number of sharps in $w$.
The goal of the morphism is to describe when two upper runs read a similar word:
we want to distinguish input words evaluating to different elements of $M$.

Recall that when both $R\circ S$ and $S$ are $k$-upper runs, then $R$ is $k$-upper as well.
It follows that any nonempty $k$-upper run $R$ can be uniquely represented as a composition of the maximal number of nonempty $k$-upper runs $R_1\circ\dots\circ R_r$:
we keep on cutting off minimal suffixes that are $k$-upper (notice that infixes or even prefixes of $R_i$ can be $k$-upper, but suffixes are not).
We compare $k$-upper runs using the following definition of being $(k,\phi)$-parallel.

\begin{defi}
	Let $R=R_1\circ\dots\circ R_r$ and $S=S_1\circ\dots\circ S_s$ be $k$-upper runs decomposed into the maximal number of nonempty $k$-upper runs.
	We say that $R$ and $S$ are \emph{$(k,\phi)$-parallel}
	when $r=s$, 
	and for each $i\in\prz{1}{r}$ it holds that $\phi(R_i)=\phi(S_i)$ and $\tops^k(R_i(0))\cong\tops^k(S_i(0))$, 
	as well as $\tops^k(R(|R|))\cong\tops^k(S(|S|))$.
	In particular, two runs $R$, $S$ of length $0$ are $(k,\phi)$-parallel when $\tops^k(R(0))\cong\tops^k(S(0))$.
	When saying that two runs are $(k,\phi)$-parallel we implicitly mean that they are $k$-upper.
\end{defi}

We claim that if runs $R$ and $S$ are $(k,\phi)$-parallel, and $R$ is divided in any way into $k$-upper runs $R=R_1'\circ\cdots\circ R_m'$,
then $S$ can be as well divided into $k$-upper runs $S=S_1'\circ\cdots\circ S_m'$ such that 
for each $i\in\prz{1}{m}$ it holds that $\phi(R_i')=\phi(S_i')$ and $\tops^k(R_i'(0))\cong\tops^k(S_i'(0))$, 
as well as $\tops^k(R(|R_{}|))\cong\tops^k(S(|S|))$.
Indeed, on the one hand, each nonempty $R_i'$ can be further subdivided into $k$-upper runs of the finest decomposition.
On the other hand, for each empty $R_i'$ we can insert an empty $S_i'$ into the sequence for $S$.

As already mentioned, to each configuration $c$ we assign its \emph{$(\calA,\phi)$-type} (simply called \emph{type} when $\calA$ and $\phi$ are fixed), which comes from a finite set.
Before giving a definition, we state two theorems, which describe required properties of our types.

\begin{thm}\lab{thm:types}
	Let $R$ be a $k$-upper run, where $k\in\prz{0}{n}$, 
	and let $c$ be a configuration having the same $(\calA,\phi)$-type and the same positionless topmost $k$-stack as $R(0)$. 
	Then from $c$ we can start a run that is $(k,\phi)$-parallel to $R$.
\end{thm}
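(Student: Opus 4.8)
The plan is to prove Theorem~\ref{thm:types} by induction on the length of $R$, using the characterization of $k$-upper runs from Proposition~\ref{prop:upper} and a parallel (mutual) induction establishing the analogous statement for $k$-returns. The key point is that the $(\calA,\phi)$-type of a configuration will be designed precisely so that it records, for each pair of states $(p,q)$ and each order $r$, whether there is an $r$-return starting in state $p$ (applied to the stack below the topmost $k$-stack) ending in state $q$, together with the element of $M$ read along it and enough information to recurse; since there are finitely many such pieces of data, the type set is finite. I would first state and prove a companion lemma: if $R$ is an $r$-return and $c$ has the same type and same positionless topmost $r$-stack as $R(0)$, then from $c$ one can start a run $(r,\phi)$-parallel to $R$ — but here ``$(r,\phi)$-parallel'' must be read in the weaker sense that the run is an $r$-return reading the same $M$-element and ending with a positionless topmost $(r-1)$-stack equal to that of $R(|R|)$. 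This companion statement is what the type actually encodes directly.

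The induction on $|R|$ then proceeds by cases following Proposition~\ref{prop:upper}. If $|R|=0$ (Case~1), the run of length $0$ from $c$ works, since $c$ and $R(0)$ have equal positionless topmost $k$-stacks. If $|R|=1$ (Case~2), the single transition is $\read$, a $\push^r_\gamma$, or a $\pop^r$ with $r\le k$; in each case the transition $\delta$ depends only on the state and the topmost $0$-stack symbol, which are determined by the positionless topmost $k$-stack (and hence agree for $c$), so the corresponding transition is available from $c$ and produces a configuration with the required positionless topmost $k$-stack. If $R$ decomposes as a composition of two nonempty $k$-upper runs (Case~4), I peel off the first maximal $k$-upper prefix $R_1$, apply the inductive hypothesis to get a $(k,\phi)$-parallel run from $c$ ending in some $c'$ with $\tops^k(c')\cong\tops^k(R_1(|R_1|))$; here I crucially need that $c'$ has the \emph{same type} as $R_1(|R_1|)$, which requires an auxiliary observation — that the type of a configuration is determined by the type of $R(0)$ together with the operations performed — i.e. a ``type transfer along runs'' lemma proven alongside. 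Then apply the inductive hypothesis again to $\subrun{R}{|R_1|}{|R|}$ starting from $c'$ and compose.

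The interesting case is Case~3: the first transition performs $\push^r_\gamma$ for $r\ge k+1$ and $\subrun{R}{1}{|R|}$ is an $r$-return. By Proposition~\ref{prop:push-return-jest-fajny}, $R$ leaves the positionless topmost $k$-stack essentially unchanged. Here I would invoke the companion $r$-return statement: the $r$-return $\subrun{R}{1}{|R|}$ is ``witnessed'' in the type of $R(1)$, hence (by type transfer across the single $\push^r_\gamma$) in the type of $R(0)$, hence in the type of $c$; this lets us start from $c$ the transition $\push^r_\gamma$ followed by a matching $r$-return, producing a run $(k,\phi)$-parallel to $R$. Proving the companion statement is itself an induction on the length of the $r$-return following Proposition~\ref{prop:return}, and in its Case~3 one splits the return into a $k$-return (for $k\ge r$) followed by an $r$-return, which is exactly the recursive structure the type must encode. \textbf{The main obstacle} is setting up the type so that it is simultaneously (a) finite, (b) preserved/transformable in a controlled way under every stack operation, and (c) rich enough that the two mutual inductions go through — in particular getting the bookkeeping right so that after peeling a $k$-upper prefix the residual configuration provably has a type that again satisfies the hypothesis; this is where the bulk of the definitional and verification work lies, and it is presumably why the authors devote a whole section to it and remark that Theorem~\ref{thm:types} is here proven in a weaker form than in the conference version.
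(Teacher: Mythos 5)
Your plan matches the paper's actual proof strategy: the paper proves Theorem~\ref{thm:types} via a key lemma (Lemma~\ref{lem:rownowazne}) established by induction on $|R|$ following the case analysis of Proposition~\ref{prop:upper}, with a mutually-inductive companion statement for $r$-returns (Lemmas~\ref{lem:run2rd} and~\ref{lem:rd2run}) following Proposition~\ref{prop:return}, and with types built from ``run descriptors'' that record exactly the data you describe — for each order $r$, which returns into which continuations are available from the stack below, tagged with the corresponding monoid element — the finiteness coming from the bounded recursion $\calT^n,\calT^{n-1},\dots,\calT^0$. Your ``type transfer'' concern in the composition case is handled in the paper by propagating run descriptors backwards from $R(|R|)$ to $R(0)$ and carrying an explicit annotated stack through the reconstruction, but this is a presentational difference rather than a different method.
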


In addition to types, we also define an equivalence relation over infinite sequences of configurations of $\calA$, called $(\calA,\phi)$-sequence-equivalence, 
which has finitely many equivalence classes.
The goal is to specify whether the number of $\sharp$ symbols read by a run constructed in Theorem~\ref{thm:types} is big or small.
However, instead of having ``big'' and ``small'' numbers, we say whether their sequence is bounded or unbounded.
This is made precise in the following theorem.

\begin{thm}\lab{thm:stypes}
	Let $R\circ R'$ be a run in which $R$ is $k$-upper and $R'$ is an $n$-return, where $k\in\prz{0}{n}$.
	Let $c_1,c_2,\dots$ and $d_1,d_2,\dots$ be infinite sequences of configurations that are $(\calA,\phi)$-sequence-equivalent, and
	in which all configurations have the same $(\calA,\phi)$-type and the same positionless topmost $k$-stack as $R(0)$.
	Then for each $i$ there exist runs $S_i\circ S_i'$ from $c_i$, and $T_i\circ T_i'$ from $d_i$ in which $S_i$ and $T_i$ are $(k,\phi)$-parallel to $R$, 
	and $S_i'$ and $T_i'$ are $n$-returns such that $\phi(S_i')=\phi(T_i')=\phi(R')$, and such that
	the sequences $\sharp(S_1\circ S_1'),\sharp(S_2\circ S_2'),\dots$ and $\sharp(T_1\circ T_1'),\sharp(T_2\circ T_2'),\dots$ are either both bounded or both unbounded.
\end{thm}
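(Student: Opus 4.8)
The plan is to prove both theorems simultaneously by a single induction on $k$, since the definitions of $(\calA,\phi)$-type and $(\calA,\phi)$-sequence-equivalence will themselves be given by induction on $k$, and the inductive construction of types must be justified by the statements of Theorems~\ref{thm:types} and~\ref{thm:stypes} at the next lower level. For the base case $k=0$: a $0$-upper run either is empty, or is a single non-$\pop$ transition, or decomposes (by Proposition~\ref{prop:upper}) into pieces each of which either does a local step or begins with $\push^r_\gamma$ ($r\geq 1$) followed by an $r$-return. The type of a configuration $c$ at level $0$ must record, for the topmost $0$-stack symbol $\gamma$ and each state $q$, what happens when we run the automaton in such a way that it eventually does the $r$-return: this is captured by Proposition~\ref{prop:return}, which reduces $r$-returns to compositions and shorter $r$-returns, ultimately bottoming out in a $\pop^r$. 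Concretely, the level-$0$ type should be a function taking a state and returning, for each order $r$, the set of pairs $(q', m)$ of possible exit states $q'$ together with the monoid value $m\in M$ read, when the automaton starts in state $q$ on top of $c$ and performs an $r$-return; plus, for Theorem~\ref{thm:stypes}, a sequence-equivalence class recording whether the number of $\sharp$'s along such returns, over the members of the sequence, is bounded or unbounded. Finiteness is immediate since $Q$, $M$, and $\{r : 1\le r\le n\}$ are finite, and ``bounded/unbounded'' is a finite piece of data once we fix all the finitely many relevant combinatorial choices.

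For the inductive step, given types and sequence-equivalence at level $k-1$, we define the level-$k$ type of a configuration $c$ to be the level-$(k-1)$ type of $c$ together with the level-$(k-1)$ type (and sequence-equivalence class) of $\pop^k(c)$ if it exists — intuitively, what the automaton sees once it exhausts the topmost $(k-1)$-stack and starts working on what lies below. More precisely, since a $k$-upper run $R$ decomposes into $k$-upper pieces, and each piece that is not a single local step begins with $\push^r_\gamma$ for $r\ge k+1$ followed by an $r$-return (Proposition~\ref{prop:upper}, Case (3)), and such an $r$-return, when $r\ge k+1$, operates at a level strictly above $k$, hence it is analyzed using the level-$k$ type of the configuration it starts from. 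To prove Theorem~\ref{thm:types} at level $k$: given the $k$-upper run $R$ from $R(0)$ and a configuration $c$ with the same level-$k$ type and the same positionless topmost $k$-stack, decompose $R = R_1\circ\dots\circ R_r$ into minimal non-empty $k$-upper runs. We build the $(k,\phi)$-parallel run from $c$ piece by piece: each local-step piece is replayed verbatim (the topmost $0$-stack symbols agree since the positionless topmost $k$-stacks agree); each $\push^r_\gamma$-then-$r$-return piece is simulated by invoking the appropriate clause of the type definition, which by construction guarantees that from $c$ (with matching type) we can perform a $\push^r_\gamma$ and then an $r$-return with the same $\phi$-value and the same resulting positionless topmost $k$-stack — here we use Proposition~\ref{prop:push-return-jest-fajny} to see that the topmost $k$-stack is restored up to $\cong$, and we use the recursive content of the type (for $r\ge k+1$, the level-$k$ type already ``knows'' about $r$-returns; for $r\le k$ this case does not arise in a minimal piece). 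After replaying all pieces the resulting run is $(k,\phi)$-parallel to $R$ by the definition of parallelism and the observation preceding Theorem~\ref{thm:types} about refining decompositions.

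For Theorem~\ref{thm:stypes} itself we layer the $\sharp$-counting on top of this construction: given $R\circ R'$ with $R$ a $k$-upper run and $R'$ an $n$-return, and the two sequence-equivalent sequences $(c_i)$, $(d_i)$, we first apply (the construction behind) Theorem~\ref{thm:types} to produce from each $c_i$ a run $S_i$ parallel to $R$, and from each $d_i$ a run $T_i$ parallel to $R$; then we extend by an $n$-return. The number $\sharp(S_i\circ S_i')$ splits as $\sharp(S_i) + \sharp(S_i')$. For the $\sharp(S_i)$ part, each $k$-upper piece of $S_i$ that began with a $\push^r_\gamma$ ($r\ge k+1$) followed by an $r$-return contributes a $\sharp$-count controlled by the level-$k$ sequence-equivalence class of the configuration it started from — and these starting configurations form a sequence $(\calA,\phi)$-sequence-equivalent across $i$ (this is exactly the compositional bookkeeping that sequence-equivalence is designed to support), so the partial sums are simultaneously bounded or unbounded for the $c_i$'s and the $d_i$'s. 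For the $\sharp(S_i')$ part, $S_i'$ is an $n$-return, so — after peeling off, via Proposition~\ref{prop:return}, the initial transitions down to the single $\pop^n$ that closes it — its $\sharp$-count and $\phi$-value are again determined by finitely many invocations of $r$-return clauses recorded in the types and sequence classes of the relevant configurations, all of which stay sequence-equivalent between the $c$-side and the $d$-side. Summing finitely many simultaneously-bounded-or-unbounded sequences of nonnegative integers preserves the property, and that delivers the final claim, with $\phi(S_i') = \phi(T_i') = \phi(R')$ coming directly from the $\phi$-matching built into each clause. \emph{The main obstacle} is setting up the mutual recursion so that the level-$k$ type genuinely captures the behaviour of $r$-returns for all $r\ge k+1$ in a way that is both finite and closed under the compositions dictated by Propositions~\ref{prop:upper} and~\ref{prop:return}; in particular one must be careful that the ``type of $\pop^k(c)$'' component is well-defined and that the definitions of type and sequence-equivalence are not circular — they are stratified by $k$, and within a fixed $k$ the sequence-equivalence refines the type, so there is no vicious cycle, but making this precise (and checking that sequence-equivalence has finitely many classes, which requires the boundedness predicate to depend only on finitely many ``slots'') is the delicate bookkeeping that the rest of Section~\ref{sec:types} must carry out.
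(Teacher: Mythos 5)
Your high-level strategy (stratify types and sequence-equivalence by $k$, decompose via Propositions~\ref{prop:upper} and~\ref{prop:return}, replay piece by piece) is a reasonable sketch of \emph{why} one would hope such a theorem holds, but it assumes away the two points that constitute the actual content of Section~\ref{sec:types}, and the second of them is a genuine gap rather than deferred bookkeeping.

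First, your notion of type --- for each state $q$ and order $r$, the set of pairs (exit state, monoid value) reachable by $r$-returns from $c$, plus the analogous data for $\pop^k(c)$ --- records the return behaviour of the \emph{actual} stack of $c$. But the returns you need to reproduce do not start from $c$'s stack: they start from configurations in which the topmost $k$-stack has already been rewritten by the earlier part of the simulated run and then copied by a $\push^r$. An $r$-return from such a hybrid stack interleaves work on the (modified, shared-with-$R$) upper part with work on the (unmodified, differing-between-$c$ and $R(0)$) lower part, and it may re-enter copies of the lower part several times. A type that is not parametric in ``what currently sits on top'' cannot be composed across these interleavings; this is exactly what the paper's assumption sets $\ass^i(\sigma)$ and the composer/well-formedness machinery provide, and it is why types there are attached to stack fragments rather than to whole configurations.

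Second, and more seriously for Theorem~\ref{thm:stypes} itself: your ``bounded/unbounded'' flag is not well-defined as stated. For a fixed sequence $(c_i)$ and a fixed slot, there are in general many runs from each $c_i$ meeting all the constraints ($(k,\phi)$-parallelism, matching $\phi$-value of the $n$-return), with wildly different $\sharp$-counts; ``the number of $\sharp$'s along such returns'' is a set of sequences, not a sequence. The theorem requires you to \emph{choose} runs on the $c$-side and the $d$-side whose $\sharp$-counts are simultaneously bounded or simultaneously unbounded, so you need a dichotomy of the form: either some admissible choice yields bounded counts, or \emph{every} admissible choice yields unbounded counts --- and this dichotomy must be expressible by finite data that transfers through sequence-equivalence. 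Your proposal offers no mechanism for this; the phrase ``this is exactly the compositional bookkeeping that sequence-equivalence is designed to support'' is precisely the claim to be proven. In the paper this is resolved by sandwiching the $\sharp$-count of a reproduced run between $\low(\sbf)$ and $\high(\sbf)$ for an annotated stack $\sbf$ (Lemmas~\ref{lem:low-high-len} and~\ref{lem:rownowazne}), and then proving $\high(\sbf)\leq H(\low(\sbf))$ (Proposition~\ref{prop:common-bound}); the productivity flags on run descriptors exist solely to make $\low$ a genuine lower bound when a copied sub-stack is revisited many times. Sequence-equivalence is then defined as agreement of the set of run descriptors realizable with \emph{bounded $\high$}, which is what makes the dichotomy transfer. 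Without an analogue of the $\low$/$\high$ relation, your induction cannot close: the flag you want to propagate is not a function of the data you carry.
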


Let us mention briefly how this theorem is used in Section~\ref{sec:ostatni}.
We consider there a configuration $c$ reached after reading a complicated word, containing some blocks of stars, separated by some brackets.
Using a pumping lemma developed in Section~\ref{sec:pumping}, we increase the number of stars read in one of such blocks, 
obtaining configurations $c_1, c_2, \dots$ at the end of the run (where consecutive configurations are reached after reading more and more stars in the considered block).
It is ensured that all $c_i$ have the same $(\calA,\phi)$-type, and the same positionless topmost $k$-stack.
Likewise, we increase the number of stars read in some other block of stars, obtaining configurations $d_1, d_2, \dots$.
Having more blocks of stars than classes of the $(\calA,\phi)$-sequence-equivalence relation 
we can ensure that the sequences $c_1,c_2,\dots$ and $d_1,d_2,\dots$ are $(\calA,\phi)$-sequence-equivalent, by the pigeonhole principle.
Theorem~\ref{thm:stypes} says that the two sequences of configurations cannot be distinguished by runs (of a specific form) starting in these configurations:
the runs contain sharps corresponding to stars either from both considered blocks of stars
(and then both sequences  $\sharp(S_1\circ S_1'),\sharp(S_2\circ S_2'),\dots$ and $\sharp(T_1\circ T_1'),\sharp(T_2\circ T_2'),\dots$ are unbounded)
or from none of them (and then both these sequences are bounded).

The $n$-returns in Theorem~\ref{thm:stypes} should be understood as accepting runs.
Indeed, in Section~\ref{sec:ostatni} we increase by $1$ the order of an arbitrary $(n-1)$-DPDA, and we add a $\pop^n$ operation just before reaching an accepting state;
after such a modification, a run is accepting if and only if it is an $n$-return.
This trick is performed only for uniformity of presentation: instead of considering accepting runs as a separate concept, we see them as a special case of returns (and returns are used anyway).

One may be puzzled by the fact that Theorem~\ref{thm:types} talks about a $k$-upper run, while Theorem~\ref{thm:stypes} about a $k$-upper run composed with an $n$-return.
This difference is application-driven: the first theorem needs to be used without an $n$-return, while the second one with an $n$-return.
In fact Theorem~\ref{thm:types} is true also with an $n$-return at the end, and Theorem~\ref{thm:stypes} also without an $n$-return.

\begin{exa}
	Consider the $3$-DPDA and the $1$-upper run from Example~\ref{ex:types}, with the difference that now
	whenever a $b$ symbol is removed from the stack during the analysis of the topmost $2$-stack, the DPDA reads the $\sharp$ symbol from the input.
	Additionally, suppose that when the topmost $1$-stack becomes empty (a bottom-of-stack symbol is uncovered), the DPDA performs $\pop^3$; this $\pop^3$ serves as the $3$-return $R'$.
	Then basically we need two equivalence classes of sequences of configurations (recall that only for sequences with the same positionless topmost $1$-stack the relation is meaningful): 
	one where the number of $b$ symbols in the topmost $2$-stacks in the configurations is bounded, and one where this number is unbounded.
	Depending on this fact, the runs read either a bounded or an unbounded number of sharps.
	Of course in general we need more classes than just two (``bounded'' and ``unbounded''), 
	because, for example, another $1$-upper run (having a different image under $\phi$) might read one sharp per each $c$ symbol found on the stack (instead of the $b$ symbols).
\end{exa}

The rest of this section is devoted to defining types and sequence-equivalence, and proving Theorems~\ref{thm:types} and \ref{thm:stypes}.
This is independent from the rest of the paper.

\subsection{Definition of Types}

The types considered here are similar to stack automata of Broadbent, Carayol, Hague, and Serre~\cite{saturation}, 
as well as to intersection types of Kobayashi~\cite{Kobayashi09}.
Notice, however, that we extend them by a productive/nonproductive flag, which is not present there.
This flag is essential for our proof, since we want to estimate the number of $\sharp$ symbols read by a run, not just to determine existence of some kind of runs.
On the other hand, in the conference version of the current paper \cite{ho-new} we were using types that were directly describing returns (while here returns correspond to using an assumption);
these types were more complicated.

\subsubsection*{Run Descriptors}

\makebox[0cm]{\ }
We label stacks by \emph{run descriptors}.
To label a $k$-stack $s^k$, where $k\in\prz{0}{n}$, we can use a run descriptor from a set $\calT^k$.
The sets $\calT^k$ are defined inductively as follows:
\begin{align*}
	\calT^k &= Q\times\calP(M\times\calT^n)\times\calP(M\times\calT^{n-1})\times\dots\times\calP(M\times\calT^{k+1})\times\{\np,\pr\}\,,
\end{align*}
where $\calP(X)$ denotes the power set of $X$.
We use lowercase Greek letters ($\sigma,\tau,\dots$) to denote elements of $\calT^k$, uppercase Greek letters ($\Psi,\Phi,\dots$) to denote subsets of $M\times\calT^k$,
and uppercase Greek letters with a tilde ($\widetilde\Psi,\widetilde\Phi,\dots$) to denote subsets of $\calT^k$; to all of them we often attach $k$ in superscript.

A run descriptor in $\calT^k$ is of the form $\sigma=(p,\Psi^n,\Psi^{n-1},\dots,\Psi^{k+1},f)$.
Its first coordinate, $p$, is called the state of $\sigma$.
The sets $\Psi^i$, for $i\in\prz{k+1}{n}$, are called \emph{assumption sets} of $\sigma$, and are denoted $\ass^i(\sigma)$.
The last coordinate, $f$, is called a \emph{productivity flag} of $\sigma$.
When $f=\np$, we say that $\sigma$ is \emph{nonproductive}; otherwise, it is \emph{productive}.
By $\calT_\np$ and $\calT_\pr$ we denote the subsets of $\bigcup_{k\in\prz{0}{n}}\calT^k$ containing only nonproductive and productive run descriptors, respectively.

A run descriptor $\sigma$ assigned to some $k$-stack $s^k$ describes a run that starts in a configuration with state $p$ and topmost $k$-stack $s^k$.
The run descriptor ``can be used'' only when the stack $t^n:t^{n-1}:\dots:t^{k+1}:s^k$ in this configuration is such that for each $i\in\prz{k+1}{n}$ to the $i$-stack $t^i$ we have assigned $\pi_2(\Psi^i)$.
An assumption $(m,\tau)\in\Psi^i$ is used when (a copy of) the stack $t^i$ becomes uncovered.
The run descriptor $\tau$ describes a run from such a configuration $d$;
this run is a suffix of the run from $c=(p,t^n:t^{n-1}:\dots:t^{k+1}:s^k)$.
The run from $c$ to $d$, which uncovers $t^i$, is an $i$-return.
The monoid element $m$ describes the word $w$ read by the return: $m=\phi(w)$.

Beside of the state $p$, and the assumption sets, in $\sigma$ we also have a productivity flag.
Roughly speaking, the run descriptor $\sigma$ is productive if $s^k$ is itself responsible for reading some $\sharp$ symbols.
It means that either some reading of a $\sharp$ symbol is performed ``inside $s^k$'', or some productive run descriptor 
(coming from some assumption set $\Psi^i$) is used at least twice as an assumption
(the latter also increases the number of $\sharp$ symbols read, since some reading described by this productive assumption is repeated).
Thanks to the productivity flag, we can estimate the number of $\sharp$ symbols read, by calculating the number of productive run descriptors used.

One may wonder which runs have a description by a run descriptor.
The answer is that all runs: we do not restrict ourselves to any specific kind of runs at this point.

We now give more intuitions on run descriptors, in particular cases.
Run descriptors in $\calT^n$, assigned to stacks $s^n$ of the maximal order $n$, are simply of the form $(p,f)$.
When the starting state $p$ is fixed, we only have two run descriptors: $(p,\np)$ and $(p,\pr)$.
The former describes runs from $(p,s^n)$ that do not read any $\sharp$ symbols, while the latter those that do read some $\sharp$ symbols.

A run descriptor in $\calT^{n-1}$ is of the form $(p,\Psi^n,f)$.
When assigned to a stack $s^{n-1}$, it describes a run $R$ from a configuration of the form $c=(p,t^n:s^{n-1})$.
It is possible that $R$ never visits $t^n$, and only builds on top of $s^{n-1}$ (i.e., $R$ is $(n-1)$-upper).
In this situation, the set of assumptions $\Psi^n$ is empty, and the flag $f$ simply says whether $R$ reads some $\sharp$ symbols.
The opposite case is that $R$ uncovers the stack $t^n$ in some configuration $d=(q,t^n)$,
that is, that some its prefix $\subrun{R}{0}{i}$ is an $n$-return.
In this situation $\Psi^n=\{(\phi(\subrun{R}{0}{i}),\tau)\}$, where $\tau$ describes the suffix $\subrun{R}{i}{|R|}$.
Because we are considering the highest order, when $t^n$ is uncovered in $R(i)$ there are no other copies of $t^n$.
This means that only a single assumption may be used for $t^n$ (i.e., $|\Psi^n|\leq 1$), and this assumption is used only once.
The flag $f$ simply says whether the prefix $\subrun{R}{0}{i}$ reads some $\sharp$ symbol.

For run descriptors in $\calT^{n-2}$ the situation becomes more interesting.
We explain this by means of an example.

\begin{exa}
	Consider a run $R$ such that
	\begin{itemize}
	\item	$R(0)=(p_0,t^n:t^{n-1}:s^{n-2})$,
	\item	$\subrun{R}{0}{i}$ is an $(n-1)$-return with $R(i)=(p_1,(t^n:t^{n-1}:u^{n-2}):\posp(t^{n-1}))$ (notice that $\subrun{R}{0}{i}$ performs some $\push^n_\gamma$ without a corresponding $\pop^n$);
	\item	$\subrun{R}{i}{j}$ is an $n$-return with $R(j)=(p_2,t^n:t^{n-1}:u^{n-2})$;
	\item	$\subrun{R}{j}{k}$ is an $(n-1)$-return with $R(k)=(p_3,t^n:t^{n-1})$;
	\item	$\subrun{R}{k}{l}$ is an $n$-return with $R(l)=(p_4,t^n)$.
	\end{itemize}
	Let us see how such a run is described by a run descriptor $\sigma=(p_0,\Psi^n,\Psi^{n-1},f)\in\calT^{n-2}$,
	which can be assigned to the $(n-2)$-stack $s^{n-2}$.
	Necessarily $\Psi^n$ is a singleton containing $(\phi(\subrun{R}{0}{l}),\tau_4)$ for $\tau_4$ describing the suffix $\subrun{R}{l}{|R|}$.
	The set $\Psi^{n-1}$ contains in general two elements: $(\phi(\subrun{R}{0}{i}),\tau_1)$ for $\tau_1$ describing the suffix $\subrun{R}{i}{|R|}$,
	and $(\phi(\subrun{R}{0}{k}),\tau_3)$ for $\tau_3$ describing the suffix $\subrun{R}{k}{|R|}$.
	If $\tau_1\neq\tau_3$, then $f$ says whether $\subrun{R}{0}{i}$ or $\subrun{R}{j}{k}$ reads some $\sharp$ symbol
	(and the flags in $\tau_1,\tau_3,\tau_4$ are responsible for the subruns $\subrun{R}{i}{k}$, $\subrun{R}{k}{l}$, and $\subrun{R}{l}{|R|}$, respectively).
	In may also happen that $\tau_1=\tau_3$. 
	In this situation, we say that the run descriptor $\tau_1$ is used twice as an assumption.
	Then $f=\pr$ if $\subrun{R}{0}{i}$ or $\subrun{R}{j}{k}$ reads some $\sharp$ symbol, but also when $\tau_1$ is productive 
	(i.e., when a productive run descriptor is used more than once as an assumption).
	The intuition for this is that now while looking at run descriptors in $\pi_2(\Psi^{n-1})$ it is not visible that there are two subruns $\subrun{R}{i}{j}$ and $\subrun{R}{k}{l}$ reading $\sharp$ symbols,
	as they both correspond to the same assumption;
	by setting $f=\pr$ we reflect the fact that $R$ reads more $\sharp$ symbols than in the situation when every assumption would be used only once.\footnote{
		\new{One can imagine two possible definitions of ``the same assumption'': we may compare either only run descriptors (in our case, $\tau_1=\tau_3$),
		or pairs consisting of a monoid element and a run descriptor (in our case, $(\phi(\subrun{R}{0}{i}),\tau_1)=(\phi(\subrun{R}{0}{k}),\tau_3)$).
		At first glance both definitions look equally good, or the latter definition seems to be more natural than the former.
		It turns out, however, that the latter definition is problematic.
		The difficulty is that for pairs that are originally different, $(m_1,\sigma_1)\neq(m_2,\sigma)$, 
		it may happen that after} \new{multiplying them by a monoid element they become equal, $(m\cdot m_1,\sigma_1)=(m\cdot m_2,\sigma)$.
		We prefer to avoid this, and hence we stick to the former definition.}}
\end{exa}

We remark that a run descriptor $\sigma=(p,\Psi^n,\Psi^{n-1},\dots,\Psi^{k+1},f)$ should not be seen as a classical implication of the form
``if the stacks below the topmost $k$-stack satisfy assumptions $\Psi^n,\Psi^{n-1},\dots,\Psi^{k+1}$, then there exists a run satisfying some properties''.
It is much closer to an implication in a linear logic.
Indeed, the conclusion of the implication is trivial, as it, basically, says only that there exists some (arbitrary) run.
The interesting information about the run is contained in the sets of assumptions:
by specifying an assumption, we say that there is a suffix of the run corresponding to this assumption.
In particular, it is essential that there are no redundant assumptions (every assumption ``has to be used'', at least once).
Moreover, the information whether some assumptions are used more than once is also recorded, in the productivity flag.

\subsubsection*{Composers}

\makebox[0cm]{\ }
For $m\in M$ and $\Psi\subseteq M\times\calT^k$ we use the notation $m\circ\Psi$ for $\{(m\cdot m',\sigma)\mid(m',\sigma)\in\Psi\}$.
Given a run descriptor $\sigma=(p,\Psi^n,\Psi^{n-1},\dots,\Psi^{l+1},f)\in\calT^l$, for $k\in\prz{l}{n}$
by $\red^{k}(\sigma)$ we denote the ``reduced'' run descriptor $(p,\Psi^n,\Psi^{n-1},\dots,\Psi^{k+1},g)\in\calT^{k}$ in which 
\begin{align*}
	g=\np \mbox{}\Leftrightarrow ( f=\np\mbox{, and }\pi_2(\Psi^{i})\subseteq\calT_\np\mbox{ for each }i\in\prz{l+1}{k})\,.
\end{align*}
The following proposition is a direct consequence of the definition.

\begin{prop}\lab{prop:assoc-red}
	For $0\leq l\leq j\leq k\leq n$ and $\sigma\in\calT^l$ it holds that $\red^k(\red^j(\sigma))=\red^k(\sigma)$.
\qed\end{prop}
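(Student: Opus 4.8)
The plan is to prove this by plainly unfolding the definition of $\red^k(\cdot)$ on both sides of the claimed equality and comparing the resulting tuples coordinate by coordinate. Write $\sigma=(p,\Psi^n,\Psi^{n-1},\dots,\Psi^{l+1},f)\in\calT^l$. First I would compute $\red^j(\sigma)=(p,\Psi^n,\dots,\Psi^{j+1},g)$, where by definition $g=\np$ holds exactly when $f=\np$ and $\pi_2(\Psi^i)\subseteq\calT_\np$ for every $i\in\prz{l+1}{j}$; then, applying $\red^k$ to this run descriptor in $\calT^j$, I get $\red^k(\red^j(\sigma))=(p,\Psi^n,\dots,\Psi^{k+1},g')$, where $g'=\np$ exactly when $g=\np$ and $\pi_2(\Psi^i)\subseteq\calT_\np$ for every $i\in\prz{j+1}{k}$. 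On the other side, $\red^k(\sigma)=(p,\Psi^n,\dots,\Psi^{k+1},g'')$, where $g''=\np$ exactly when $f=\np$ and $\pi_2(\Psi^i)\subseteq\calT_\np$ for every $i\in\prz{l+1}{k}$.

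Next I would observe that the state $p$ and all the assumption sets $\Psi^n,\dots,\Psi^{k+1}$ literally coincide in $\red^k(\red^j(\sigma))$ and in $\red^k(\sigma)$ — they are just copied through each application of $\red$ — so the whole statement reduces to checking the single equality $g'=g''$ of productivity flags. Substituting the characterization of $g$ into that of $g'$, one sees that $g'=\np$ holds exactly when $f=\np$ and $\pi_2(\Psi^i)\subseteq\calT_\np$ for every $i\in\prz{l+1}{j}\cup\prz{j+1}{k}$. Since $l\leq j\leq k$, we have $\prz{l+1}{j}\cup\prz{j+1}{k}=\prz{l+1}{k}$, so this is precisely the condition characterizing $g''=\np$. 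Hence $g'=g''$, and the two tuples agree in every coordinate.

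This proposition is genuinely just definition-chasing, so there is no real obstacle; the only mild point of care is the degenerate ranges. When $j=l$ the interval $\prz{l+1}{j}$ is empty and when $j=k$ the interval $\prz{j+1}{k}$ is empty, in which case the corresponding conjunction over $i$ is vacuously true (equivalently $\red^l(\sigma)=\sigma$ and $\red^k$ is idempotent on $\calT^k$); this is handled automatically by the convention $\prz{a}{a-1}=\emptyset$ used throughout the paper, so the interval-union identity still holds and the argument goes through unchanged.
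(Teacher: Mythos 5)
Your proof is correct and is exactly the definition-unfolding that the paper has in mind when it states this proposition as ``a direct consequence of the definition'' and omits the proof. The key observation --- that the assumption sets pass through unchanged and the productivity flags agree because $\prz{l+1}{j}\cup\prz{j+1}{k}=\prz{l+1}{k}$ --- is the whole content, and your handling of the degenerate ranges is fine.
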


We now define composers, which are used to compose run descriptors corresponding to smaller stacks into run descriptors corresponding to greater stacks.

\begin{defi}\lab{def:composer}
	Consider a tuple $(\Phi^k,\Phi^{k-1}\dots,\Phi^l;\Psi^k;f)$, where $0\leq l\leq k\leq n$, 
	$\Phi^i\subseteq M\times\calT^i$ for each $i\in\prz{l}{k}$, $\Psi^k\subseteq M\times\calT^k$, and $f\in\{\np,\pr\}$.
	Such a tuple is called a \emph{composer} if
	\begin{enumerate}[label=(C\arabic*)]
	\item\lab{pkt:compo-ass}
		$\Phi^i=\bigcup\{m\circ\ass^i(\sigma)\mid(m,\sigma)\in\Phi^l\}$ for each $i\in\prz{l+1}{k}$,
	\item\lab{pkt:compo-red}
		$\Psi^k=\{(m,\red^k(\sigma))\mid(m,\sigma)\in\Phi^l\}$,
	\item\lab{pkt:compo-inj}
		$|\pi_2(\Psi^k)|=|\pi_2(\Phi^l)|$ (which means that each $\sigma\in\pi_2(\Phi^l)$ gives a different $\red^k(\sigma)$), and
	\item\lab{pkt:compo-flag}
		$f=\np$ if and only if
		$\pi_2(\ass^i(\sigma))\cap\pi_2(\ass^i(\tau))\subseteq\calT_\np$ for each $i\in\prz{l+1}{k}$ and each $\sigma,\tau\in\pi_2(\Phi^l)$ such that $\sigma\neq\tau$.
	\end{enumerate}
\end{defi}

Suppose that we have a $k$-stack $t^k=s^k:s^{k-1}:\dots:s^l$, where, for $i\in[l,k]$, elements of $\Phi^i$ ``are assigned'' to $s^i$.
Intuitively, a tuple $(\Phi^k,\Phi^{k-1}\dots,\Phi^l;\Psi^k;f)$ is a composer, if in such a situation the set $\Psi^k$ can be assigned to the whole stack $t^k$.
As we see, the definition does not depend on the stacks that are actually composed, only on the sets $\Phi^i$ assigned to these stacks.
One can think about $\Phi^k,\Phi^{k-1}\dots,\Phi^l$ as about inputs to the composer, and about $\Psi^k$ and $f$ as outputs.
Nevertheless, already $\Phi^l$ determines all remaining coordinates of the composer (when $l$ and $k$ are fixed).
We remark that not every set $\Phi^l\subseteq M\times\calT^l$ can be used in a composer, due to Condition~\ref{pkt:compo-inj} of the definition.

The definition says that run descriptors assigned to $t^k$ are of the form $\red^k(\sigma)$ for $\sigma$ assigned to $s^l$ 
(the reason is that if the topmost $k$-stack of a configuration is $t^k$, then its topmost $l$-stack is $s^l$).
Moreover, the assumptions of $\sigma$ that are contained in $\ass^i(\sigma)$ for $i\in\prz{l+1}{k}$ have to be realized by the stacks $s^i$.
We notice that the run descriptor $\red^k(\sigma)$ is productive when $\sigma$ is productive or some of the assumptions in $\ass^i(\sigma)$ for $i\in\prz{l+1}{k}$ is productive
(cf.~the definition of $\red^k(\sigma)$);
in other words, in a run corresponding to the run descriptor, the part corresponding to the stack $t^k$ is productive when some part corresponding to $s^i$ for some $i\in\prz{l}{k}$ is productive.

The composer itself also has a productivity flag $f$.
The intuition is that we set this flag to $\pr$ if the runs described by elements of $\Psi^k$ read more $\sharp$ symbols than those described by elements of $\Phi^k,\Phi^{k-1}\dots,\Phi^l$, in total.
This is the case when a productive run descriptor (coming from some $\Phi^i$ for $i\in\prz{l+1}{k}$) is used as an assumption for more than one element of $\Phi^l$.
While summing over all run descriptors from all $\Phi^i$, such a run descriptor is added only once, but it contributes to more than one run descriptor in $\Psi^k$.

Another important issue is that in $\Phi^k,\Phi^{k-1}\dots,\Phi^l$ we only have elements that really contribute while constructing elements of $\Psi^k$.
Simultaneously, we require that every run descriptor in $\Psi^k$ has exactly one realization by run descriptors from $\Phi^k,\Phi^{k-1}\dots,\Phi^l$
(i.e., it is obtained by reducing exactly one run descriptor from $\Phi^l$).
The justification for both these properties is the same: we want to ``approximate'' the number of $\sharp$ symbols read by runs described by elements of $\Psi^k$ 
while looking at the number of $\sharp$ symbols read by runs described by elements of $\Phi^k,\Phi^{k-1}\dots,\Phi^l$.
Any redundant run descriptors in $\Phi^k,\Phi^{k-1}\dots,\Phi^l$ would bias our calculations;
multiple decompositions of a single element of $\Psi^k$ would also bias our calculations.

In the sets $\Phi^k,\Phi^{k-1}\dots,\Phi^l$ and $\Psi^k$ we also have monoid elements, not only run descriptors.
Intuitively, a pair $(m,\sigma)\in\Psi^k$ (or $(m,\sigma)\in\Phi^i$) corresponds to a run consisting of two parts:
the first part reads a word evaluating to $m$ and uncovers the stack $t^k$ ($s^i$, respectively);
the second part starts when the topmost $k$-stack is $t^k$ (the topmost $i$-stack is $s^i$), and it is described by $\sigma$.
The monoid elements in $\Psi^k$ are the same as in $\Phi^l$, since uncovering $t^k$ means uncovering $s^l$.
On the other hand, elements in $\Phi^i$ for $i\in\prz{l+1}{k}$ are obtained as the composition of $m$ coming from $(m,\sigma)\in\Phi^l$ (describing the word read before uncovering $s^l$) 
and of $m'$ coming from a particular assumption $(m',\tau)\in\ass^i(\sigma)$ (describing the word read after uncovering $s^l$, but before uncovering $s^i$).

\begin{exa}\lab{ex:composer}
	Suppose that $n=2$.
	Let $\tau_\np\in\calT^1\cap\calT_\np$ and $\tau_\pr\in\calT^1\cap\calT_\pr$.
	Consider the following elements of $\calT^0$:
	\begin{align*}
		\sigma_1&=(p,\Psi^2,\{(m_1,\tau_\np),(m_2,\tau_\pr)\},\np)\,,&
		\sigma_2&=(p,\Psi^2,\{(m_3,\tau_\np)\},\np)\,,\displaybreak[0]\\
		\sigma_1'&=(p,\Psi^2,\{(m_3,\tau_\np)\},\pr)\,,&
		\sigma_3&=(q,\Phi^2,\{(m_4,\tau_\np),(m_5,\tau_\pr)\},\pr)\,,
	\end{align*}
	and the following elements of $\calT^1$:
	\begin{align*}
		\xi_1&=(p,\Psi^2,\pr)\,,&
		\xi_2&=(p,\Psi^2,\np)\,,&
		\xi_3&=(q,\Phi^2,\pr)\,.
	\end{align*}
	It holds that $\red^1(\sigma_i)=\xi_i$ for $i\in\{1,2,3\}$, and $\red^1(\sigma_1')=\xi_1$.
	We have composers resulting in a single pair from $M\times\calT^1$, like
	\begin{align*}
		&(\{(m_6\cdot m_1,\tau_\np),(m_6\cdot m_2,\tau_\pr)\},\{(m_6,\sigma_1)\};\{(m_6,\xi_1)\};\np)\,,&&\mbox{or}\displaybreak[0]\\
		&(\{(m_6\cdot m_3,\tau_\np)\},\{(m_6,\sigma_1')\};\{(m_6,\xi_1)\};\np)\,.
	\end{align*}
	Notice that these two composers have the same ``output set''.
	We may also repeat the same run descriptor with multiple monoid elements:
	\begin{align*}
		(\{(m_6\cdot m_3,\tau_\np),(m_7\cdot m_3,\tau_\np)\},\{(m_6,\sigma_1'),(m_7,\sigma_1')\};\{(m_6,\xi_1),(m_7,\xi_1)\};\np)\,.
	\end{align*}
	Here, the situation that $m_6\cdot m_3=m_7\cdot m_3$ is allowed as well (and then the first set has only a single element).
	Observe that that all the above composers are nonproductive, even though they involve productive run descriptors.
	Next, we can also have composers involving multiple run descriptors, like
	\begin{align*}
		(\{(m_6\cdot m_3,\tau_\np)\},\{(m_6,\sigma_1'),(m_6,\sigma_2)\};\{(m_6,\xi_1),(m_6,\xi_2)\};\np)\,.
	\end{align*}
	This composer is again nonproductive: 
	although there is a run descriptor $\tau_\np$ that appears in assumption sets of both $\sigma_1'$ and $\sigma_2$, this run descriptor $\tau_\np$ is nonproductive.
	But if a productive run descriptor $\tau_\pr$ appears in assumption sets of two run descriptors (like for $\sigma_1$ and $\sigma_3$), we have a productive composer:
	\begin{align*}
		(\{(m_6\cdot m_1,\tau_\np),(m_7\cdot m_4,\tau_\np),(m_6\cdot m_2,\tau_\pr),(m_7\cdot m_5,\tau_\pr)\},\{(m_6,\sigma_1),(m_7,\sigma_3)\};&\\
			&\hspace{-7em}\{(m_6,\xi_1),(m_7,\xi_3)\};\pr)\,.
	\end{align*}
	As negative examples, we have the following tuples, which are not composers:
	\begin{align*}
		&(\{(m_6\cdot m_3,\tau_\np),(m_6\cdot m_2,\tau_\pr)\},\{(m_6,\sigma_1')\};\{(m_6,\xi_1)\};\np)\,,&&\mbox{and}\displaybreak[0]\\
		&(\{(m_6\cdot m_1,\tau_\np),(m_6\cdot m_2,\tau_\pr),(m_7\cdot m_3,\tau_\np)\},\{(m_6,\sigma_1),(m_7,\sigma_1')\};\\
			&&&\hspace{-4em}\{(m_6,\xi_1),(m_7,\xi_1)\};\np)\,.
	\end{align*}
	In the first tuple, the first set contains a redundant pair $(m_6\cdot m_2,\tau_\pr)$, which is forbidden.
	In the second tuple, we have simultaneously two realizations of $\xi_1$: one using $\sigma_1$, and one using $\sigma_1'$; this is forbidden as well.
\end{exa}

In the next proposition we notice that composers are associative.

\begin{prop}\label{prop:assoc-compo}
	Let $0\leq l\leq j\leq k\leq n$.
	For all fixed $\Phi^k,\Phi^{k-1},\dots,\Phi^l,\Psi^k$ and $f\in\{\np,\pr\}$, the following two conditions are equivalent:
	\begin{itemize}
	\item	$(\Phi^k,\Phi^{k-1},\dots,\Phi^l;\Psi^k;f)$ is a composer, and
	\item	$(\Phi^j,\Phi^{j-1},\dots,\Phi^l;\Psi^j;f_1)$ and $(\Phi^k,\Phi^{k-1},\dots,\Phi^{j+1},\Psi^j;\Psi^k;f_2)$ are composers
		for some $\Psi^j$ and $f_1,f_2$ such that $(f=\np)\Leftrightarrow(f_1=f_2=\np)$.
	\end{itemize}
\end{prop}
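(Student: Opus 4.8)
The plan is to unfold Definition~\ref{def:composer} on both sides of the claimed equivalence and verify the four conditions \ref{pkt:compo-ass}--\ref{pkt:compo-flag} coordinate by coordinate. The only substantive ingredients beyond bookkeeping are Proposition~\ref{prop:assoc-red}, which gives $\red^k(\red^j(\sigma))=\red^k(\sigma)$ for $\sigma\in\calT^l$, and the immediate observation from the definition of $\red$ that reducing to level $j$ leaves the higher assumption sets untouched, i.e.\ $\ass^i(\red^j(\sigma))=\ass^i(\sigma)$ for every $i\in\prz{j+1}{n}$; throughout, the ``bottom set'' playing the role of $\Phi^l$ in the upper composer $(\Phi^k,\dots,\Phi^{j+1},\Psi^j;\Psi^k;f_2)$ is $\Psi^j$.

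For the direction from the single composer to the pair, I would put $\Psi^j:=\{(m,\red^j(\sigma))\mid(m,\sigma)\in\Phi^l\}$ — exactly the value forced by condition~\ref{pkt:compo-red} for the lower composer — and define $f_1$ (resp.\ $f_2$) to be $\np$ precisely when $\pi_2(\ass^i(\sigma))\cap\pi_2(\ass^i(\tau))\subseteq\calT_\np$ for all distinct $\sigma,\tau\in\pi_2(\Phi^l)$ and all $i\in\prz{l+1}{j}$ (resp.\ $i\in\prz{j+1}{k}$). Then I check: \ref{pkt:compo-ass} for the lower composer is the restriction of \ref{pkt:compo-ass} for the big one to indices $\le j$; \ref{pkt:compo-ass} for the upper composer follows because, by invariance of $\ass^i$ under $\red^j$, the family $\{\,m\circ\ass^i(\rho)\mid(m,\rho)\in\Psi^j\,\}$ equals $\{\,m\circ\ass^i(\sigma)\mid(m,\sigma)\in\Phi^l\,\}$ for $i>j$; \ref{pkt:compo-red} for the lower composer is the definition of $\Psi^j$, and for the upper composer it reduces via $\red^k(\red^j(\sigma))=\red^k(\sigma)$ to \ref{pkt:compo-red} for the big composer; \ref{pkt:compo-inj} for the big composer together with Proposition~\ref{prop:assoc-red} forces $\red^j$ to be injective on $\pi_2(\Phi^l)$ (if $\red^j(\sigma)=\red^j(\tau)$ then $\red^k(\sigma)=\red^k(\tau)$), whence $|\pi_2(\Psi^j)|=|\pi_2(\Phi^l)|=|\pi_2(\Psi^k)|$ gives \ref{pkt:compo-inj} for both small composers; and \ref{pkt:compo-flag} holds for them by the choice of $f_1,f_2$. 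Finally, since $\prz{l+1}{k}$ splits as $\prz{l+1}{j}\cup\prz{j+1}{k}$, condition~\ref{pkt:compo-flag} for the big composer says exactly that $f=\np$ iff both the $\prz{l+1}{j}$-part and the $\prz{j+1}{k}$-part hold, i.e.\ iff $f_1=f_2=\np$.

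For the converse, I am handed the two small composers and the relation $(f=\np)\Leftrightarrow(f_1=f_2=\np)$. Condition~\ref{pkt:compo-red} for the lower composer forces $\Psi^j=\{(m,\red^j(\sigma))\mid(m,\sigma)\in\Phi^l\}$, and \ref{pkt:compo-inj} for it makes $\red^j$ injective on $\pi_2(\Phi^l)$, so $(m,\sigma)\mapsto(m,\red^j(\sigma))$ is a bijection $\Phi^l\to\Psi^j$ preserving $\ass^i$ for $i>j$. Using this I chain the conditions upward exactly as before: \ref{pkt:compo-ass} for the big composer on indices $\le j$ is \ref{pkt:compo-ass} of the lower composer, and on indices $>j$ it comes from \ref{pkt:compo-ass} of the upper composer after rewriting $\Psi^j$ in terms of $\Phi^l$; \ref{pkt:compo-red} follows from \ref{pkt:compo-red} of the upper composer composed with $\red^k(\red^j(\sigma))=\red^k(\sigma)$; \ref{pkt:compo-inj} follows from $|\pi_2(\Psi^k)|=|\pi_2(\Psi^j)|=|\pi_2(\Phi^l)|$; and \ref{pkt:compo-flag} for the big composer follows by splitting $\prz{l+1}{k}$ again, transporting the $\prz{j+1}{k}$-condition over $\pi_2(\Psi^j)$ back to $\pi_2(\Phi^l)$ through that bijection, and invoking the given relation between $f$ and $f_1,f_2$.

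I expect the only real obstacle to be administrative: keeping straight that $\Psi^j$ is simultaneously an ``output'' of the lower composer and the ``input bottom'' of the upper one, and that the map $\pi_2(\Phi^l)\to\pi_2(\Psi^j)$ induced by $\red^j$ is genuinely bijective and preserves the assumption sets above level $j$ — this is precisely what makes the productivity-flag condition~\ref{pkt:compo-flag}, whose index range splits cleanly at $j$, line up with the required equivalence $(f=\np)\Leftrightarrow(f_1=f_2=\np)$. No step beyond this bookkeeping is anticipated to be hard.
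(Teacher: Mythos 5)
Your proposal is correct and follows essentially the same route as the paper's proof: it fixes $\Psi^j=\{(m,\red^j(\sigma))\mid(m,\sigma)\in\Phi^l\}$ as forced by Condition~\ref{pkt:compo-red}, and then verifies Conditions~\ref{pkt:compo-ass}--\ref{pkt:compo-flag} coordinate by coordinate using exactly the two facts the paper uses, namely Proposition~\ref{prop:assoc-red} and the invariance $\ass^i(\red^j(\sigma))=\ass^i(\sigma)$ for $i>j$, with the index range $\prz{l+1}{k}$ split at $j$ for the productivity flag.
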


\begin{proof}
	We use symbols $X$ and $X_1, X_2$ for the tuples from the first and second item, respectively.
	First, notice that the definition of a composer instantiated either for $X$, or for both $X_1$ and $X_2$ simultaneously, 
	implies that $\Phi^i\subseteq M\times\calT^i$ for $i\in\prz{l}{k}$, and $\Psi^k\subseteq M\times\calT^k$.
	While proving the right-to-left implication, Condition~\ref{pkt:compo-red} of the definition of a composer instantiated for $X_1$ implies that
	$\Psi^j=\{(m,\red^j(\sigma))\mid(m,\sigma)\in\Phi^l\}$.
	While proving the opposite implication, we choose $\Psi^j$ in this way; the effect is that $X_1$ satisfies Condition~\ref{pkt:compo-red}.
	
	We see that $\ass^i(\red^j(\sigma))=\ass^i(\sigma)$ for $i\in\prz{j+1}{k}$ and $\sigma\in\calT^l$,
	so Condition~\ref{pkt:compo-ass} for $X$ is equivalent to the conjunction of Condition~\ref{pkt:compo-ass} for $X_1$ and $X_2$.
	
	By Proposition~\ref{prop:assoc-red} we have that $\red^k(\red^j(\sigma))=\red^k(\sigma)$ for $\sigma\in\calT^l$, which implies that
	Condition~\ref{pkt:compo-red} instantiated for $X$ is equivalent to Condition~\ref{pkt:compo-red} instantiated for $X_2$.
	
	Notice that $|\pi_2(\Psi^k)|\leq|\pi_2(\Psi^j)|\leq|\pi_2(\Phi^l)|$, 
	because $\pi_2(\Psi^k)=\{\red^k(\sigma)\mid\sigma\in\pi_2(\Psi^j)\}$ and $\pi_2(\Psi^j)=\{\red^j(\sigma)\mid\sigma\in\pi_2(\Phi^l)\}$.
	Thus $|\pi_2(\Psi^k)|=|\pi_2(\Phi^l)|$ if and only if $|\pi_2(\Psi^k)|=|\pi_2(\Psi^j)|=|\pi_2(\Phi^l)|$;
	we obtain equivalence for Condition~\ref{pkt:compo-inj}.
	
	Finally, Condition~\ref{pkt:compo-flag} for $X_1$ says that $f_1=\np$ if and only if 
	$\pi_2(\ass^i(\sigma))\cap\pi_2(\ass^i(\tau))\subseteq\calT_\np$ for each $i\in\prz{l+1}{j}$ and each $\sigma,\tau\in\pi_2(\Phi^l)$ such that $\sigma\neq\tau$;
	otherwise $f_1=\pr$.
	For $X_2$ and $f_2$ we have the same with $i\in\prz{j+1}{k}$ (here we use again the fact that $\ass^i(\red^j(\sigma))=\ass^i(\sigma)$).
	Thus, while proving the right-to-left implication we have these equivalences for $f_1$ and $f_2$; they imply that $(f=\np)\Leftrightarrow(f_1=f_2=\np)$.
	While proving the left-to-right implication we define $f_1$ and $f_2$ so that this is satisfied;
	then we have Condition~\ref{pkt:compo-flag} for $X_1$ and $X_2$, and the equivalence $(f=\np)\Leftrightarrow(f_1=f_2=\np)$.
\end{proof}

\subsubsection*{Derivation Trees}

\makebox[0cm]{\ }
Next, we say when a run descriptor $\sigma$ from $\calT^0$ can be assigned to a stack symbol $\gamma$.
To this end, we define how statements of the form $\gamma\vdash\sigma$ can be derived.
Such a statement means that from a configuration with topmost stack symbol $\gamma$ one can start a run described by the run descriptor $\sigma$ 
(assuming that the stacks below $\gamma$ satisfy the assumptions of $\sigma$).
Actually, it is not enough to define when $\gamma\vdash\sigma$ is true;
we explicitly need to handle derivation trees justifying such statements.
Thus, in Definition~\ref{def:types} we define the notion of a \emph{derivation tree for $\gamma\vdash\sigma$}.
Having such a derivation tree $D$, $\gamma\vdash\sigma$ is called the \emph{conclusion} of $D$,
and $\sigma$ is called the run descriptor of $D$ and is denoted $\rd(D)$.

\begin{defi}\lab{def:types}
	We define the set of \emph{derivation trees} as the smallest set satisfying the following conditions.
	Let $p$ be a state, and $\gamma$ a stack symbol.
	\begin{enumerate}
	\item\lab{pkt:dt-empty}
		A triple $(\dtempty\gamma,p)$ is a derivation tree for $\gamma\vdash(p,\emptyset,\emptyset,\dots,\emptyset,\np)$.
	\item\lab{pkt:dt-read}
		Suppose that $\delta(\gamma,p)=\read(\vec{q})$ and that $D'$ is a derivation tree for $\gamma\vdash\tau$, where the state of $\tau$ is $\vec{q}(a)$ for some $a\in A$.
		Denote $\Phi^i=\phi(a)\circ\ass^i(\tau)$ for $i\in\prz{1}{n}$.
		Then $(\dtread p,D')$ is a derivation tree for $\gamma\vdash(p,\Phi^n,\Phi^{n-1},\dots,\Phi^1,f)$, where $f=\np$ if and only if $\tau\in\calT_\np$ and $a\neq\sharp$.
	\item\lab{pkt:dt-pop}
		Suppose that $\delta(\gamma,p)=(q,\pop^k)$, and that $\tau^k\in\calT^k$ is a run descriptor with state $q$.
		Then $(\dtpop\gamma,p,\tau^k)$ is a derivation tree for
		\begin{align*}
			\gamma\vdash(p,\ass^n(\tau^k),\ass^{n-1}(\tau^k),\dots,\ass^{k+1}(\tau^k),\{(\mathbf{1}_M,\tau^k)\},\emptyset,\dots,\emptyset,\np)\,.
		\end{align*}
	\item\lab{pkt:dt-push}
		Suppose that $\delta(\gamma,p)=(q,\push^k_\alpha)$, and that $D'$ is a derivation tree for $\alpha\vdash\tau$, where the state of $\tau$ is $q$.
		Denote $\Psi^i=\ass^i(\tau)$ for $i\in\prz{1}{n}$.
		Moreover, suppose that $(\Phi^k,\Phi^{k-1},\dots,\Phi^0;\Psi^k;f)$ is a composer,
		and $\mathfrak{D}$ is a set of derivation trees, all having the stack element $\gamma$ in their conclusion, and such that $\{\rd(E)\mid E\in\mathfrak{D}\}=\pi_2(\Phi^0)$ and $|\mathfrak{D}|=|\pi_2(\Phi^0)|$.
		Let
		\begin{align*}
			\Upsilon^i=\left\{\begin{array}{ll}
				\Psi^i  & \mbox{for }i\in\prz{k+1}{n}\,,\\
				\Phi^i  & \mbox{for }i=k\,,\\
				\Psi^i\cup\Phi^i  & \mbox{for }i\in\prz{1}{k-1}\,.
			\end{array}\right.
		\end{align*}
		Then $(\dtpush\gamma,p,D',\mathfrak{D})$ is a derivation tree for $\gamma\vdash(p,\Upsilon^n,\Upsilon^{n-1},\dots,\Upsilon^1,g)$,
		where $g=\np$ if and only if $f=\np$, and $\{\tau\}\cup\pi_2(\Phi^0)\subseteq\calT_\np$,
		and $\pi_2(\Psi^i)\cap\pi_2(\Phi^i)\subseteq\calT_\np$ for each $i\in\prz{1}{k-1}$.
	\end{enumerate}
\end{defi}

The \emph{depth} of a derivation tree $D$, denoted $\depth(D)$, is defined naturally: it is $0$ in Cases~(\ref{pkt:dt-empty}) and (\ref{pkt:dt-pop}),
$1+\depth(D')$ in Case~(\ref{pkt:dt-read}), and $1+\max(\depth(D'),\max_{E\in\mathfrak{D}}\depth(E))$ in Case~(\ref{pkt:dt-push}).

Notice that a derivation tree $D$ determines its conclusion.
This can be seen by induction on the structure of $D$.
In Cases~(\ref{pkt:dt-empty}) and~(\ref{pkt:dt-pop}) this is immediate.
In Case~(\ref{pkt:dt-read}), $\gamma$ and $\tau$ are determined by $D'$, and the letter $a$ is determined by $p$ and by the state of $\tau$
(recall that $\vec{q}$ is required to be injective, by the definition of the $\read$ operation);
this already fixes the whole conclusion of $D$. Case~(\ref{pkt:dt-push}) is the most complicated one.
First, we can see that $\pi_2(\Phi^0)$ and all $\Psi^i$ are fixed by $D'$ and $\mathfrak{D}$. 
Then, by the definition of a composer, we have that the set $\Phi^0$ has to contain those pairs $(m,\tau)\in M\times\pi_2(\Phi^0)$ for which $(m,\red^k(\tau))\in\Psi^k$.
The set $\Phi^0$ fixes the rest of the composer, and thus the whole conclusion of $D$.

We now comment on the intuitions staying behind Definition~\ref{def:types}.
Let $D$ be a derivation tree for $\gamma\vdash\sigma$, where the state of $\sigma$ is $p$.
We should have in mind a run $R$ with $R(0)=(p,s^n:s^{n-1}:\dots:s^1:u^0_\gamma)$, where $\posl(u^0_\gamma)=\gamma$.
Basically, the run descriptor $\sigma$ describes such a run, and the derivation tree $D$ specifies parts of this run for which the $0$-stack $u^0_\gamma$ is responsible.

Case~(\ref{pkt:dt-empty}) corresponds to an empty run ($|R|=0$).
Thus, the assumption sets of $\sigma$ are empty (stacks $s^i$ are never uncovered), and the run descriptor is nonproductive.

In the remaining cases, nonempty runs are considered.
Case~(\ref{pkt:dt-read}) talks about a run $R$ starting with a $\read$ operation.
Say that this operation reads a symbol $a$, and ends in a state $\vec{q}(a)$.
The derivation tree $D'$ (and, in particular, its run descriptor $\tau$) describes the suffix $\subrun{R}{1}{|R|}$.
Thus, we require that the state in $\tau$ is $\vec{q}(a)$.
Because $\subrun{R}{0}{1}$ does not modify the stack, assumptions from $\ass^i(\tau)$, referring to $\tops^i(\pop^i(R(1)))$,
refer simultaneously to $\tops^i(\pop^i(R(0)))$ (for $i\in\prz{1}{n}$).
This is expressed by the fact that as $\ass^i(\sigma)$ we almost take assumptions from $\ass^i(\tau)$.
The only difference is that we multiply the monoid element in every pair by $\phi(a)$ on the left.
This is because monoid elements in $\ass^i(\tau)$ correspond to some prefixes of $\subrun{R}{1}{|R|}$, 
while monoid elements in $\ass^i(\sigma)$ should talk about prefixes of the whole $R$;
the latter prefixes additionally read the symbol $a$ at the very beginning.
Our new run descriptor $\sigma$ is productive either when the first operation reads the $\sharp$ symbol,
or when the run descriptor $\tau$, describing the rest of the run, is productive.
Recall that $\vec{q}$ is required to be injective (cf.\ the definition of the $\read$ operation), so seeing the state $\vec{q}(a)$ we can determine the symbol $a$ that was read.

In Case~(\ref{pkt:dt-pop}) the first operation of $R$ is $\pop^k$, which leads to a configuration of the form $(q,s^n:s^{n-1}:\dots:s^k)$.
The suffix $\subrun{R}{1}{|R|}$ is described by a run descriptor $\tau^k$.
In particular, the state in $\tau^k$ should be $q$.
For $i\in\prz{1}{k-1}$ the $i$-stack $s^i$ is never uncovered (it is destroyed by the $\pop^k$ operation), so the assumption set $\ass^i(\sigma)$ is set to $\emptyset$.
The $k$-stack $s^k$ is uncovered after the first operation, so we put $\tau^k$ to the assumption set $\ass^k(\sigma)$, 
together with the neutral element of the monoid (because the word read by $\subrun{R}{0}{1}$ is empty).
This is the only pair in $\ass^k(\sigma)$, because before $R(1)$ no copies of $s^k$ are created, so $s^k$ cannot be uncovered again.
For $i\in\prz{k+1}{n}$, the assumption set $\ass^i(\tau^k)$ talks about uncovering $s^i$, and hence it is taken as $\ass^i(\sigma)$ 
(the monoid elements remain unchanged, because $\subrun{R}{0}{1}$ does not read any symbols).
Notice that, unlike for the $\read$ operation, we do not include in $D$ any derivation tree talking about $\tau^k$.
This is because in $D$ we only describe the part of $R$ for which the $0$-stack $u^0_\gamma$ is responsible: 
on the one hand, after the $\read$ operation, the $0$-stack $u^0_\gamma$ is still on the top of the stack and is responsible for continuing the run;
on the other hand, after the $\pop^k$ operation, $u^0_\gamma$ is no longer present on the stack.
The run descriptor $\sigma$ is nonproductive: $u^0_\gamma$ is not responsible for reading any $\sharp$ symbol, 
and every assumption of $\sigma$ is used only once (for $i\in\prz{k+1}{n}$, we simply pass every assumption from $\ass^i(\sigma)$ to $\tau^k$;
independently, $\tau^k$ may use these assumptions more than once, but this is a responsibility of $\tau^k$, not of $\sigma$).

Finally, we have Case~(\ref{pkt:dt-push}), where the first operation of $R$ is $\push^k_\alpha$, leading to a configuration of the form 
$(q,s^n:s^{n-1}:\dots:s^{k+1}:t^k:\posp(s^{k-1}:s^{k-2}:\dots:s^1:u^0_\alpha))$ with $t^k=s^k:s^{k-1}:\dots:s^1:u^0_\gamma$ and with $\posl(u^0_\alpha)=\alpha$.
A run descriptor $\tau$ (having state $q$) describes the suffix $\subrun{R}{1}{|R|}$.
In $D$ we include a derivation tree $D'$ for $\alpha\vdash\tau$, talking about the parts of $\subrun{R}{1}{|R|}$ for which the new topmost $0$-stack $u^0_\alpha$ is responsible,
because $u^0_\gamma$ is responsible for creating $u^0_\alpha$.
For $i\in\prz{1}{k-1}\cup\prz{k+1}{n}$, the assumption set $\ass^i(\tau)$ refers to the $i$-stack $s^i$, 
and hence we take assumptions from $\ass^i(\tau)$ to $\ass^i(\sigma)$.
The assumption set $\ass^k(\tau)$, in turn, refers to the $k$-stack $t^k$, not directly to $s^k$.
Because of that, we need a composer to decompose the assumption set $\ass^k(\tau)$ to sets $\Phi^k,\Phi^{k-1},\dots,\Phi^0$, referring to stacks $s^k,s^{k-1},\dots,s^1,u^0_\gamma$.
Then, for $i\in\prz{1}{k}$, assumptions from $\Phi^i$ are taken to $\ass^i(\sigma)$.
Elements of $\Phi^0$ refer to our $0$-stack $u^0_\gamma$, and hence should be realized by our derivation tree.
Thus, for every run descriptor in $\Phi^0$ we provide a derivation tree (in the set $\mathfrak{D}$).
We should set $\sigma$ to be productive if $u^0_\gamma$ is responsible for reading some $\sharp$ symbols, 
or when some productive run descriptor from an assumption set is used more than once.
Maybe this happens inside $\tau$ or inside some run descriptor from $\Phi^0$; when any of them is productive, we set $\sigma$ to be productive.
But it may also happen that a productive run descriptor is used as an assumption by $\tau$ and by some run descriptor from $\Phi^0$,
or that a productive run descriptor is used as an assumption by multiple run descriptors from $\Phi^0$ (in the latter case, the composer is productive);
in both these situation we also set $\sigma$ to be productive.

As already said, a derivation tree $D$ is not a representation of the whole run $R$, it only talks about parts of $R$ for which the topmost $0$-stack $u^0_\gamma$ is responsible.
It is not, however, a precise representation even of these parts.
This can be seen in Case~(\ref{pkt:dt-push}) of the definition.
It is possible that $\tau$ uses some run descriptor $\xi$ from the assumption set $\ass^k(\tau)$ more than once.
This means that we have multiple suffixes of the run $R$ described by $\xi$.
But in $\mathfrak{D}$ we include only one derivation tree talking about $\xi$, corresponding to one of the suffixes of $R$.
Thus, in a sense, a derivation tree for $\gamma\vdash\sigma$ is a proof that a run described by $\sigma$ exists, rather than a way of representing any such run.

\newcommand{\nonempty}{\mathsf{ne}}
\begin{exa}\lab{exa:der-tree}
	Consider the $2$-DPDA $\calA_1$ depicted below; its stack alphabet is $\{\gamma\}$, and input alphabet $A=\{a,b,\sharp\}$.
	\begin{center}
		\import{pics/}{example-7-12.pdf_tex_ok}
	\end{center}
	The arrow from $q_1$ to $q_2$ denotes that $\delta(q_1,\gamma)=(q_2,\push^1_\gamma)$, the arrow from $q_4$ to $q_5$ denotes that $\delta(q_4)=\read(\vec{q})$ with $\vec{q}(a)=q_5$, and so on.
	Take also a monoid $M=\{1,\nonempty\}$ with $1\cdot 1=1$ and $1\cdot\nonempty=\nonempty\cdot 1=\nonempty\cdot\nonempty=\nonempty$, 
	and a morphism $\phi:A^*\to M$ that maps the empty word to $1$ and all nonempty words to $\nonempty$.

	
	Consider run descriptors
	\begin{align*}
		\sigma_i&=(q_i,\{(1,(q_3,\pr))\},\emptyset,\np)&&\mbox{for }i\in\{5,6,7\}\,,\mbox{ and}\displaybreak[0]\\
		\sigma_{4,f}&=(q_4,\{(\nonempty,(q_3,\pr))\},\emptyset,f)&&\mbox{for }f\in\{\np,\pr\}\,.
	\end{align*}
	They describe runs that are compositions of a $2$-return, and of a run that starts in the state $q_3$ and reads some $\sharp$ symbols (is productive).
	For $i\in\{5,6,7\}$, the $2$-return should not read anything.
	Thus, in this case $D_i=(\dtpop\gamma,q_i,(q_3,\pr))$ is a derivation tree for $\gamma\vdash\sigma_i$.
	On the other hand, for $i=4$ the $2$-return should read a nonempty word, which should contain some $\sharp$ symbol when $f=\pr$.
	We can derive $\gamma\vdash\sigma_{4,\pr}$ by $D_{4,\sharp}=(\dtread q_4,D_7)$, and $\gamma\vdash\sigma_{4,\np}$ by $D_{4,a}=(\dtread q_4,D_5)$, as well as by $D_{4,b}=(\dtread q_4,D_6)$.
	
	For $f\in\{\np,\pr\}$, denote $\sigma_{4,f}^1=\red^1(\sigma_{4,f})$, that is, $\sigma_{4,f}^1=(q_4,\{(\nonempty,(q_3,\pr))\},f)$.
	The next run descriptors that we consider are
	\begin{align*}
		\sigma_{3,f}=(q_3,\{(\nonempty,(q_3,\pr))\},\{(1,\sigma_{4,f}^1)\},\np)&&\mbox{for }f\in\{\np,\pr\}\,.
	\end{align*}
	They describe runs whose some suffix (that starts before reading anything, and after a $1$-return) is described by $\sigma_{4,f}^1$, 
	and some other suffix (that starts after reading a nonempty word, and after a $2$-return) is described by $(q_3,\pr)$.
	We can derive $\gamma\vdash\sigma_{3,f}$ by $D_{3,f}=(\dtpop\gamma,q_3,\sigma_{4,f}^1)$, for $f\in\{\np,\pr\}$.
	Notice that the derivation trees $D_{3,f}$ (and the former ones as well) specify precisely only a prefix of a run; 
	more precisely, the fragment of the run that corresponds to the topmost $0$-stack, before using an assumption.
	It is not checked that the assumption can be fulfilled by any stack that would be placed below the topmost $0$-stack.
	We can equally well have a derivation tree like $(\dtpop\gamma,q_3,\xi_4)$ with $\xi_4=(q_4,\{(1,(q_1,\pr))\},\pr)$, 
	which derives $\gamma\vdash(q_3,\{(1,(q_1,\pr))\},\{(1,\xi_4)\},\np)$, although it is impossible to deliver $\xi_4$ by any $1$-stack 
	(because, e.g., there is no $2$-return that ends in the state $q_1$).
	
	Next, for $f_1,f_2\in\{\np,\pr\}$ consider run descriptors
	\begin{align*}
		&\sigma_{2,f_1,f_2}=(q_2,\{(\nonempty,(q_3,\pr))\},\{(1,\sigma_{4,f_1}^1),(\nonempty,\sigma_{4,f_2}^1)\},f_3)\,,&&\mbox{where }f_3=\pr\Leftrightarrow f_1=f_2=\pr\,.
	\end{align*}
	We can derive $\gamma\vdash\sigma_{2,f_1,f_2}$ by $D_{2,f_1,f_2}=(\dtpush\gamma,q_2,D_{3,f_1},\{D_{3,f_2}\})$.
	The derivation tree has to specify fragments of the run for which the topmost $0$-stack is responsible.
	When applied to a stack $s^2:s^1:s^0$, the $\push^2_\gamma$ operation (resulting in $(s^2:s^1:s^0):\posp(s^1:s^0)$) splits the topmost $0$-stack $s^0$ into two copies;
	$D_{2,f_1,f_2}$ says that, with the new copy of $s^0$, we should continue according to the derivation tree $D_{3,f_1}$.
	The run descriptor $\rd(D_{3,f_1})$ (i.e., $\sigma_{3,f_1}$) says that the $1$-stack $s^1$ will be used with run descriptor $\sigma_{4,f_1}^1$ (before reading anything),
	and that the $2$-stack $s^2:s^1:s^0$ will be used with run descriptor $(q_3,\pr)$ (after reading a nonempty word).
	The definition of our derivation tree involves a composer
	\begin{align*}
		(\{(\nonempty,(q_3,\pr))\}, \{(\nonempty,\sigma_{4,f_2}^1)\}, \{(\nonempty,\sigma_{3,f_2})\};\{(\nonempty,(q_3,\pr))\};\np)\,.
	\end{align*}
	It says that in order to provide $(q_3,\pr)$ by the $2$-stack $s^2:s^1:s^0$,
	we have to provide $\sigma_{3,f_2}$ by $s^0$, and $\sigma_{4,f_2}^1$ by $s^1$, and $(q_3,\pr)$ by $s^2$.
	Because $(q_3,\pr)$ for $s^2:s^1:s^0$ will be used after reading a nonempty word (the pair in the output set of the composer is $(\nonempty,(q_3,\pr))$),
	also the run descriptors for $s^0, s^1, s^2$ will be used after reading a nonempty word.
	The derivation tree $D_{2,f_1,f_2}$ says also how the lower copy of $s^0$ provides $\sigma_{3,f_2}$; this is described by the derivation tree $D_{3,f_2}$.
	In the order-$1$ assumption set of $\sigma_{2,f_1,f_2}$ we put both assumptions, $(1,\sigma_{4,f_1}^1)$ and $(\nonempty,\sigma_{4,f_2}^1)$;
	the former needs to be provided by the new copy of $s^1$, and the latter by the lower copy of $s^1$.
	Notice that $\sigma_{2,\pr,\pr}$ is productive, because for $f_1=f_2=\pr$ the same (productive) run descriptor $\sigma_{4,\pr}^1$ is used for $s^1$ twice.
	
	Consider also run descriptors 
	\begin{align*}
		\sigma_{1,f}=(q_1,\{(\nonempty,(q_3,\pr))\},\emptyset,f)&&\mbox{for }f\in\{\np,\pr\}\,.
	\end{align*}
	We can derive $\gamma\vdash\sigma_{1,\np}$ by $D_{1,a,a}=(\dtpush\gamma,q_1,D_{2,\np,\np},\{D_{4,a}\})$.
	After performing $\push^1_\gamma$ from $s^2:s^1:s^0$, we obtain $s^2:(s^1:s^0):\posp(s^0)$.
	The derivation tree $D_{2,\np,\np}$ specifies the behavior of the new topmost $0$-stack.
	The composer 
	\begin{align*}
		(\emptyset,\{(1,\sigma_{4,\np}),(\nonempty,\sigma_{4,\np})\};\{(1,\sigma_{4,\np}^1),(\nonempty,\sigma_{4,\np}^1)\};\np)
	\end{align*}
	specifies how the assumptions of $\rd(D_{2,\np,\np})$ (i.e., $\sigma_{2,\np,\np}$) can be realized by the $1$-stack $s^1:s^0$.
	It says that (the lower copy of) $s^0$ should provide $\sigma_{4,\np}$, and the derivation tree $D_{4,a}$ specifies how it is provided.
	The whole $D_{1,a,a}$ corresponds to a run that reads the letter $a$ twice.
	Similarly, $\gamma\vdash\sigma_{1,\np}$ can be derived by $D_{1,b,b}=(\dtpush\gamma,q_1,D_{2,\np,\np},\{D_{4,b}\})$,
	which corresponds to a run that reads the letter $b$ twice.
	On the other hand, there is no derivation tree corresponding to a run that first reads the letter $a$, and then the letter $b$ (or vice versa).
	The reason is that we have to provide exactly one realization of $\sigma_{4,\np}$ for $s^0$; this can be either $D_{4,a}$, or $D_{4,b}$, but not both.
	The statement $\gamma\vdash\sigma_{1,\pr}$ can be derived by $D_{1,\sharp,\sharp}=(\dtpush\gamma,q_1,D_{2,\pr,\pr},\{D_{4,\sharp}\})$,
	by $D_{1,\sharp,a}=(\dtpush\gamma,q_1,D_{2,\pr,\np},\{D_{4,\sharp},D_{4,a}\})$,
	by $D_{1,a,\sharp}=(\dtpush\gamma,q_1,D_{2,\np,\pr},\{D_{4,\sharp},D_{4,a}\})$,
	and by similar derivation trees using the letter $b$ instead of $a$.
	Notice that here we have sets with two derivation trees, $D_{4,\sharp}$ and $D_{4,a}$;
	this is because one provides a realization for $\sigma_{4,\pr}$, and the other for $\sigma_{4,\np}$.

	Finally, consider run descriptors
	\begin{align*}
		\tau_7&=(q_7,\emptyset,\emptyset,\np)\,,\displaybreak[0]\\
		\tau_4&=(q_4,\emptyset,\emptyset,\pr)\,,&&\mbox{and}\displaybreak[0]\\
		\tau_3&=(q_3,\emptyset,\{(1,\tau_4^1)\},\np)&&\mbox{with }\tau_4^1=\red^1(\tau_4)=(q_4,\emptyset,\pr)\,.
	\end{align*}
	Run descriptors $\tau_4$ and $\tau_7$ have empty assumption sets, hence they describe runs that never uncover stacks that are below the topmost $0$-stack
	(in other words, these are runs whose every prefix is $0$-upper).
	We can derive $\gamma\vdash\tau_7$ by $E_7=(\dtempty\gamma,q_7)$, which corresponds to the empty run;
	$\gamma\vdash\tau_4$ by $E_4=(\dtread q_4,E_7)$, which corresponds to the run that reads $\sharp$ and stops;
	and $\gamma\vdash\tau_3$ by $E_3=(\dtpop\gamma,q_3,\tau_4^1)$.
\end{exa}

\subsubsection*{Annotated Stacks}

A derivation tree provides an information about a part of a run $R$ for which a particular $0$-stack is responsible.
In order to describe the whole $R$, we have to specify derivation trees for all $0$-stacks in $R(0)$.
To the end, we annotate stacks using sets of derivation trees.

An \emph{annotated $k$-stack} is a positionless\footnote{%
	\new{It turns out that while considering annotated $k$-stacks we do not need positions, 
	so for notational convenience we assume that annotated stacks are positionless (i.e., their $0$-stacks do not contain positions, conversely to non-annotated stacks).}}
$k$-stack over an extended alphabet, whose elements are pairs $(\gamma,\mathfrak{D})$, where $\gamma\in\Gamma$ 
and $\mathfrak{D}$ is a set of derivation trees having conclusions with the stack symbol $\gamma$, and different run descriptors
(that is, $\rd(D)=\rd(E)$ for $D,E\in\mathfrak{D}$ implies $D=E$).
Annotated stacks are denoted using boldface letters, often with their order written in the superscript: $\sbf^0$, $\tt^5$, etc.
The projection of each letter in an annotated $k$-stack $\sbf^k$ to the $\Gamma$ coordinate is denoted by $\st(\sbf^k)$.

We also define the type of an annotated $k$-stack, which is a subset of $\calT^k$:
\begin{gather*}
		\type((\gamma,\mathfrak{D}))=\{\rd(D)\mid D\in\mathfrak{D}\}\,,\qquad\qquad\type([\,])=\emptyset\,,\displaybreak[0]\\
		\type(\sbf^k:\sbf^{k-1})=\{\red^k(\sigma)\mid\sigma\in\type(\sbf^{k-1})\}\,.
\end{gather*}

We always want to annotate stacks in a consistent way.
Intuitively, when a run descriptor assigned to some stack element requires some assumptions, 
then the part of the stack that is below has to deliver annotations fulfilling these assumptions.
Simultaneously, all annotations have to be useful: 
they cannot provide derivation trees for run descriptors that do not appear as assumptions of annotations assigned higher in the stack.
To formalize this, we define below when an annotated $k$-stack $\sbf^k$ is \emph{well-formed}.

\begin{defi}\lab{def:well-formed}
	Each annotated $0$-stack, and the empty annotated $k$-stack for each $k\geq 1$, are always well-formed.
	An annotated $k$-stack $\sbf^k:\sbf^{k-1}$ is well-formed if both $\sbf^k$ and $\sbf^{k-1}$ are well-formed, 
	and $\type(\sbf^k)=\bigcup_{}\{\pi_2(\ass^k(\sigma))\mid\sigma\in\type(\sbf^{k-1})\}$, and $|\type(\sbf^k:\sbf^{k-1})|=|\type(\sbf^{k-1})|$.
\end{defi}

In the sequel, generally we only consider well-formed annotated stacks (except for some moments when we first define an annotated stack, and then we prove that it is well-formed).

Notice that beside of the condition mentioned earlier (saying that $\type(\sbf^k)$ provides exactly assumptions for run descriptors in $\type(\sbf^{k-1})$),
we also have the second condition, $|\type(\sbf^k:\sbf^{k-1})|=|\type(\sbf^{k-1})|$, 
saying that every run descriptor in $\type(\sbf^k:\sbf^{k-1})$ has exactly one realization by a composition of run descriptors from $\type(\sbf^k)$ and $\type(\sbf^{k-1})$.
Both conditions have the same goal, which is also the same as for analogous conditions in the definition of a composer (Definition~\ref{def:composer}):
we want to estimate the number of $\sharp$ symbols read by a run by looking at the number of productive run descriptors associated to $0$-stacks in an annotated stacks.
Two realizations of the same run descriptor, as well as realizations of a run descriptor not appearing in an assumption set, would bias these calculations.

The definition of types and well-formedness connects only the type of $\sbf^k:\sbf^{k-1}$ with the types of $\sbf^k$ and $\sbf^{k-1}$, 
but similar conditions can be written for a stack of the form $\sbf^k:\sbf^{k-1}:\dots:\sbf^l$.

\begin{prop}\lab{prop:well-formed-multi}
	Let $0\leq l\leq k\leq n$, and let $\sbf=\sbf^k:\sbf^{k-1}:\dots:\sbf^l$ be an annotated $k$-stack in which each $\sbf^i$ is well-formed.
	Then,
	\begin{enumerate}[label=\new{(T\arabic*)}]
	\item\lab{pkt:types-red}
		$\type(\sbf)=\{\red^k(\sigma)\mid\sigma\in\type(\sbf^l)\}$.
	\end{enumerate}
	Moreover, $\sbf$ is well-formed if and only if
	\begin{enumerate}[label=\new{(T\arabic*)}]
	\setcounter{enumi}{1}
	\item\lab{pkt:types-ass}
		$\type(\sbf^i)=\bigcup\{\pi_2(\ass^i(\sigma))\mid\sigma\in\type(\sbf^l)\}$ for every $i\in\prz{l+1}{k}$, 
	\item\lab{pkt:types-inj}
		$|\type(\sbf)|=|\type(\sbf^l)|$.
	\end{enumerate}
\end{prop}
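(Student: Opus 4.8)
The plan is to prove (T1) and the well-formedness equivalence together, by induction on $k-l$. In the base case $k=l$ everything is immediate: $\sbf=\sbf^l$, and $\red^l$ is the identity on $\calT^l$ (its reduction ranges over the empty index set $\prz{l+1}{l}$, hence changes neither the assumption sets nor the productivity flag), so (T1) reads $\type(\sbf^l)=\type(\sbf^l)$; condition (T2) is vacuous, (T3) is the tautology $|\type(\sbf^l)|=|\type(\sbf^l)|$, and $\sbf^l$ is well-formed by hypothesis, so the stated equivalence holds trivially.

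For the inductive step I would assume $l<k$ and write $\sbf=\sbf^k:\tt$ with $\tt=\sbf^{k-1}:\sbf^{k-2}:\dots:\sbf^l$, an annotated $(k-1)$-stack each of whose layers is well-formed, so that the induction hypothesis applies to $\tt$. Property (T1) for $\sbf$ then follows by a short chain: the defining clause gives $\type(\sbf^k:\tt)=\{\red^k(\sigma)\mid\sigma\in\type(\tt)\}$, (T1) for $\tt$ rewrites this as $\{\red^k(\red^{k-1}(\sigma))\mid\sigma\in\type(\sbf^l)\}$, and Proposition~\ref{prop:assoc-red} (with $l\leq k-1\leq k$) collapses it to $\{\red^k(\sigma)\mid\sigma\in\type(\sbf^l)\}$; note this uses only that each $\sbf^i$ is well-formed, not that $\sbf$ or $\tt$ is. For the equivalence I would unwind Definition~\ref{def:well-formed}: $\sbf=\sbf^k:\tt$ is well-formed iff $\sbf^k$ is well-formed (given), $\tt$ is well-formed, $\type(\sbf^k)=\bigcup\{\pi_2(\ass^k(\sigma))\mid\sigma\in\type(\tt)\}$, and $|\type(\sbf)|=|\type(\tt)|$. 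Using (T1) for $\tt$ together with the identity $\ass^k(\red^{k-1}(\sigma))=\ass^k(\sigma)$ (valid since $l<k$, so $\red^{k-1}$ retains all assumption sets of order $\geq k$), the third condition becomes exactly (T2) for $i=k$. By the induction hypothesis, ``$\tt$ well-formed'' is equivalent to ``(T2) for $i\in\prz{l+1}{k-1}$ together with $|\type(\tt)|=|\type(\sbf^l)|$'', and combining the two ranges of $i$ gives (T2) in full.

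It then remains to reconcile the cardinalities, which I expect to be the only delicate point. From (T1) applied to $\sbf$ and to $\tt$, plus $\red^k(\red^{k-1}(\sigma))=\red^k(\sigma)$, the assignment $\red^{k-1}(\sigma)\mapsto\red^k(\sigma)$ is a well-defined surjection $\type(\tt)\to\type(\sbf)$, and $\red^{k-1}$ restricted to $\type(\sbf^l)$ surjects onto $\type(\tt)$; hence $|\type(\sbf)|\leq|\type(\tt)|\leq|\type(\sbf^l)|$. Therefore (T3), i.e.\ $|\type(\sbf)|=|\type(\sbf^l)|$, holds iff both inequalities are equalities, i.e.\ iff $|\type(\sbf)|=|\type(\tt)|$ (the fourth well-formedness condition) and $|\type(\tt)|=|\type(\sbf^l)|$ (the cardinality half of ``$\tt$ well-formed''). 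Feeding these two equalities respectively into Definition~\ref{def:well-formed} for $\sbf$ and into the induction hypothesis for $\tt$, and collecting the pieces, yields ``$\sbf$ well-formed $\iff$ (T2) $\wedge$ (T3)'', completing the induction. Apart from this bookkeeping, every step is a direct expansion of the definitions of $\type$, $\red$, and well-formedness.
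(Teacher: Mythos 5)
Your proof is correct and follows essentially the same route as the paper's: induction on $k-l$ with the decomposition $\sbf=\sbf^k:\tt$, the identities $\red^k\circ\red^{k-1}=\red^k$ and $\ass^k(\red^{k-1}(\sigma))=\ass^k(\sigma)$, and the cardinality sandwich $|\type(\sbf)|\leq|\type(\tt)|\leq|\type(\sbf^l)|$ to reconcile \ref{pkt:types-inj} with the two intermediate equalities. The only (immaterial) difference is that you organize the equivalence as a single chain of biconditionals where the paper argues the two implications separately.
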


\begin{proof}
	Induction on $k-l$.
	Suppose first that $k-l=0$.
	In this case, Item~\ref{pkt:types-red} holds because $\sbf=\sbf^l$ and because $\red^k(\sigma)=\sigma$ when $\sigma\in\calT^k$.
	Moreover $\sbf$ is well-formed by assumption, Item~\ref{pkt:types-ass} is true because $\prz{l+1}{k}=\emptyset$ is empty,
	and Item~\ref{pkt:types-inj} is true because $\sbf=\sbf^l$; thus we have the equivalence.

	Suppose now that $k-l\geq 1$, and denote $\tt=\sbf^{k-1}:\sbf^{k-2}:\dots:\sbf^l$.
	We then have $\sbf=\sbf^k:\tt$.
	We apply the induction assumption to $\tt=\sbf^{k-1}:\sbf^{k-2}:\dots:\sbf^l$.
	By Item~\ref{pkt:types-red} of the induction assumption, $\type(\tt)=\{\red^{k-1}(\sigma)\mid\sigma\in\type(\sbf^l)\}$,
	and by the definition of the type of $\sbf=\sbf^k:\tt$,
	\begin{align*}
		\type(\sbf)=\{\red^k(\sigma)\mid\sigma\in\type(\tt)\}=\{\red^k(\red^{k-1}(\sigma))\mid\sigma\in\type(\sbf^l)\}\,.
	\end{align*}
	Recalling that $\red^k(\red^{k-1}(\sigma))=\red^k(\sigma)$, we obtain Item~\ref{pkt:types-red}.

	Suppose that $\sbf$ is well-formed.
	Then, in particular, $\tt$ is well-formed, so Items~\ref{pkt:types-ass} and~\ref{pkt:types-inj} hold for the substack $\tt$ by the induction assumption.
	Item~\ref{pkt:types-ass} from the induction assumption is the same as our Item~\ref{pkt:types-ass} for $i\in\prz{l+1}{k-1}$.
	By well-formedness of $\sbf=\sbf^k:\tt$,
	\begin{align*}
		\type(\sbf^k)=\bigcup\{\pi_2(\ass^k(\sigma))\mid\sigma\in\type(\tt)\}=\bigcup\{\pi_2(\ass^k(\red^{k-1}(\sigma)))\mid\sigma\in\type(\sbf^l)\}\,.
	\end{align*}
	Recalling that $\ass^k(\red^{k-1}(\sigma))=\ass^k(\sigma)$ we obtain Item~\ref{pkt:types-ass} for $i=k$.
	Moreover, Item~\ref{pkt:types-inj} of the induction assumption says that $|\type(\tt)|=|\type(\sbf^l)|$, and by well-formedness of $\sbf=\sbf^k:\tt$ we have that $|\type(\sbf)|=|\type(\tt)|$, 
	thus $|\type(\sbf)|=|\type(\sbf^l)|$ (Item~\ref{pkt:types-inj}).
	
	Conversely, suppose that Items~\ref{pkt:types-ass} and~\ref{pkt:types-inj} hold for $\sbf$.
	Because $\type(\sbf)$ is the image of $\type(\tt)$ under the function $\red^k$, and $\type(\tt)$ is the image of $\type(\sbf^l)$ under the function $\red^{k-1}$,
	we necessarily have $|\type(\sbf)|\leq|\type(\tt)|\leq|\type(\sbf^l)|$, and thus $|\type(\sbf)|=|\type(\sbf^l)|$ (Item~\ref{pkt:types-inj})
	implies that $|\type(\sbf)|=|\type(\tt)|=|\type(\sbf^l)|$; we thus have Item~\ref{pkt:types-inj} for $\tt$.
	Moreover, Item~\ref{pkt:types-ass} for $\tt$ is a direct consequence of this item for $\sbf$ (we only restrict the considered orders $i$ to $\prz{l+1}{k-1}$).
	Thus, by the induction assumption, $\tt$ is well-formed.
	Using Item~\ref{pkt:types-ass} for $i=k$, the equality $\ass^k(\sigma)=\ass^k(\red^{k-1}(\sigma))$, and Item~\ref{pkt:types-red} for $\tt$ we obtain that
	\begin{align*}
		\type(\sbf^k)&=\bigcup\{\pi_2(\ass^k(\sigma))\mid\sigma\in\type(\sbf^l)\}\\
			&=\bigcup\{\pi_2(\ass^k(\red^{k-1}(\sigma)))\mid\sigma\in\type(\sbf^l)\}
			=\bigcup\{\pi_2(\ass^k(\sigma))\mid\sigma\in\type(\tt)\}\,.
	\end{align*}
	Together with the equality $|\type(\sbf)|=|\type(\tt)|$ this implies that $\sbf=\sbf^k:\tt$ is well-formed.
\end{proof}

An annotated stack $\sbf$ is called \emph{singular} if $|\type(\sbf)|=1$.
When an annotated $n$-stack $\sbf^n$ is singular, we define $\conf(\sbf^n)$ to be the configuration $(q,\poslinv(\st(\sbf^n)))$, 
where $q$ is the state of the only run descriptor in $\type(\sbf^n)$.

As the type of a configuration $c$, denoted $\type_{\calA,\phi}(c)$, we take the union of $\type(\tops^0(\sbf^n))$ over all well-formed singular annotated $n$-stacks $\sbf^n$ such that $\conf(\sbf^n)=c$,
\begin{align*}
	\type_{\calA,\phi}(c)=\bigcup\{\type(\tops^0(\sbf^n))\mid\sbf^n\mbox{ well-formed, } \conf(\sbf^n)=c\}\,.
\end{align*}

We remark that in the union we could also allow well-formed annotated stacks $\sbf^n$ that are not necessarily singular, 
but are such that $\poslinv(\st(\sbf^n))$ is the stack of $c$ and the state of all run descriptors in $\type(\sbf^n)$ is the state of $c$.
On the other hand, in general there does not exist a single well-formed annotated stack $\sbf^n$ such that $\type(\tops^0(\sbf^n))=\type_{\calA,\phi}(c)$. 
Namely, we can have a situation like in Example~\ref{ex:composer}, where we cannot assign both $\sigma_1$ and $\sigma_1'$ to the topmost $0$-stack,
as both of them result in the same run descriptor $\xi_1$ for the topmost $1$-stack.

Actually, we see a direct connection between the well-formedness property and composers.

\begin{prop}\lab{prop:composer}
	Let $0\leq l\leq k\leq n$, let $\sbf=\sbf^k:\sbf^{k-1}:\dots:\sbf^l$ be an annotated $k$-stack in which each $\sbf^i$ is well-formed, and let $\Psi^k\subseteq M\times\calT^k$.
	The following two conditions are equivalent:
	\begin{itemize}
	\item	there exists a composer $(\Phi^k,\Phi^{k-1},\dots,\Phi^l;\Psi^k;f)$ such that $\pi_2(\Phi^i)=\type(\sbf^i)$ for each $i\in\prz{l}{k}$, and
	\item	$\sbf$ is well-formed and $\pi_2(\Psi^k)=\type(\sbf)$.
	\end{itemize}
\end{prop}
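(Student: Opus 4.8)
The plan is to do no induction of our own, but simply to read the four defining conditions \ref{pkt:compo-ass}--\ref{pkt:compo-flag} of a composer against the three conditions \ref{pkt:types-red}--\ref{pkt:types-inj} of Proposition~\ref{prop:well-formed-multi}, applicable here because each $\sbf^i$ is assumed well-formed. Recall that under that assumption Proposition~\ref{prop:well-formed-multi} already gives Item~\ref{pkt:types-red} unconditionally, i.e.\ $\type(\sbf)=\{\red^k(\sigma)\mid\sigma\in\type(\sbf^l)\}$, and asserts that $\sbf$ itself is well-formed exactly when Items~\ref{pkt:types-ass} and~\ref{pkt:types-inj} hold.

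For the forward (top-to-bottom) implication I would start from a composer $(\Phi^k,\dots,\Phi^l;\Psi^k;f)$ with $\pi_2(\Phi^i)=\type(\sbf^i)$ for all $i\in\prz{l}{k}$. Applying $\pi_2$ to Condition~\ref{pkt:compo-red} and invoking Item~\ref{pkt:types-red} (with $\pi_2(\Phi^l)=\type(\sbf^l)$) yields $\pi_2(\Psi^k)=\{\red^k(\sigma)\mid\sigma\in\type(\sbf^l)\}=\type(\sbf)$. Applying $\pi_2$ to Condition~\ref{pkt:compo-ass} gives $\type(\sbf^i)=\bigcup\{\pi_2(\ass^i(\sigma))\mid\sigma\in\type(\sbf^l)\}$ for $i\in\prz{l+1}{k}$, which is Item~\ref{pkt:types-ass}; and Condition~\ref{pkt:compo-inj} reads $|\pi_2(\Psi^k)|=|\pi_2(\Phi^l)|$, i.e.\ $|\type(\sbf)|=|\type(\sbf^l)|$, which is Item~\ref{pkt:types-inj}. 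Thus $\sbf$ is well-formed by Proposition~\ref{prop:well-formed-multi} and $\pi_2(\Psi^k)=\type(\sbf)$, as wanted; note that the flag $f$ plays no role, it is only there to witness the existential.

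For the backward (bottom-to-top) implication, given that $\sbf$ is well-formed and $\pi_2(\Psi^k)=\type(\sbf)$, I would exhibit the (essentially forced) composer with $\Phi^l=\{(m,\sigma)\mid\sigma\in\type(\sbf^l),\ (m,\red^k(\sigma))\in\Psi^k\}$, with $\Phi^i=\bigcup\{m\circ\ass^i(\sigma)\mid(m,\sigma)\in\Phi^l\}$ for $i\in\prz{l+1}{k}$ (so Condition~\ref{pkt:compo-ass} is built in), and with $f$ as dictated by Condition~\ref{pkt:compo-flag}. One then checks, all via Items~\ref{pkt:types-red}--\ref{pkt:types-inj}: that $\pi_2(\Phi^l)=\type(\sbf^l)$, since every $\red^k(\sigma)$ with $\sigma\in\type(\sbf^l)$ lies in $\type(\sbf)=\pi_2(\Psi^k)$; that $\{(m,\red^k(\sigma))\mid(m,\sigma)\in\Phi^l\}=\Psi^k$ (inclusion by definition of $\Phi^l$, reverse inclusion because any $(m,\rho)\in\Psi^k$ has $\rho=\red^k(\sigma)$ for some $\sigma\in\type(\sbf^l)$), i.e.\ Condition~\ref{pkt:compo-red}; that $|\pi_2(\Psi^k)|=|\type(\sbf)|=|\type(\sbf^l)|=|\pi_2(\Phi^l)|$, i.e.\ Condition~\ref{pkt:compo-inj}; and that $\pi_2(\Phi^i)=\bigcup\{\pi_2(\ass^i(\sigma))\mid\sigma\in\type(\sbf^l)\}=\type(\sbf^i)$ for $i\in\prz{l+1}{k}$. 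The remaining membership requirements $\Phi^i\subseteq M\times\calT^i$ and $\Psi^k\subseteq M\times\calT^k$ are immediate.

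The one delicate point, which I would emphasise, is the shape of $\Phi^l$ in the backward direction: by Example~\ref{ex:composer}, $\Psi^k$ may pair the same run descriptor with several distinct monoid elements, so $\Phi^l$ cannot be built by choosing a single pair per run descriptor of $\type(\sbf^l)$ — the set-builder definition above is the right one, and the fact that it yields a genuine composer rests on $\red^k$ being injective on $\type(\sbf^l)$, which is precisely the twin ``no-duplicate-realisation'' requirements: Condition~\ref{pkt:compo-inj} on the composer side and Item~\ref{pkt:types-inj} on the well-formedness side. Everything else is a routine unwinding of definitions; alternatively one could run an induction on $k-l$ using Propositions~\ref{prop:assoc-compo} and~\ref{prop:assoc-red}, but the direct matching of conditions above is shorter.
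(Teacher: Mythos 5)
Your proof is correct and follows essentially the same route as the paper's: the forward direction reads Conditions (C1)--(C3) of the composer off against Items (T1)--(T3) of Proposition~\ref{prop:well-formed-multi}, and the backward direction constructs the same forced set $\Phi^l=\{(m,\sigma)\mid\sigma\in\type(\sbf^l),\ (m,\red^k(\sigma))\in\Psi^k\}$ and verifies the four conditions, with $f$ chosen freely to satisfy (C4). The remark about $\Psi^k$ possibly pairing one run descriptor with several monoid elements, and the role of the two injectivity conditions, matches the paper's treatment exactly.
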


\begin{proof}
	Suppose first that we have a composer $(\Phi^k,\Phi^{k-1},\dots,\Phi^l;\Psi^k;f)$ such that 
	\begin{align}
		\pi_2(\Phi^i)&=\type(\sbf^i)&&\mbox{for }i\in\prz{k}{l}\,.\label{eq:7.13.1}
	\end{align}
	Notice that
	\begin{align}\label{eq:7.13.2}
		\type(\sbf)=\{\red^k(\sigma)\mid\sigma\in\type(\sbf^l)\}=\{\red^k(\sigma)\mid\sigma\in\pi_2(\Phi^l)\}=\pi_2(\Psi^k)\,,
	\end{align}
	where the consecutive equalities follow from Item~\ref{pkt:types-red} of Proposition~\ref{prop:well-formed-multi}, from Equality~\eqref{eq:7.13.1}, 
	and from Condition~\ref{pkt:compo-red} of Definition~\ref{def:composer}, respectively.
	Moreover, for $i\in\prz{l+1}{k}$,
	\begin{align}\label{eq:7.13.3}
		\type(\sbf^i)=\pi_2(\Phi^i)=\bigcup\{\pi_2(\ass^i(\sigma))\mid\sigma\in\pi_2(\Phi^l)\}=\bigcup\{\pi_2(\ass^i(\sigma))\mid\sigma\in\type(\sbf^l)\}\,,
	\end{align}
	where the equalities are consequences of Equality~\eqref{eq:7.13.1}, of Condition~\ref{pkt:compo-ass} of Definition~\ref{def:composer}, and of Equality~\eqref{eq:7.13.1} again.
	Simultaneously,
	\begin{align}\label{eq:7.13.4}
		|\type(\sbf)|=|\pi_2(\Psi^k)|=|\pi_2(\Phi^l)|=|\type(\sbf^l)|\,,
	\end{align}
	where the consecutive equalities follow from Equality~\eqref{eq:7.13.2}, from Condition~\ref{pkt:compo-inj} of Definition~\ref{def:composer}, and from Equality~\eqref{eq:7.13.1}, respectively.
	Equalities~\eqref{eq:7.13.3} and~\eqref{eq:7.13.4} give Items~\ref{pkt:types-ass} and~\ref{pkt:types-inj} of Proposition~\ref{prop:well-formed-multi}, 
	which implies that $\sbf$ is well-formed; together with Equality~\eqref{eq:7.13.2} this gives the thesis.
	
	Conversely, suppose that $\sbf$ is well-formed and 
	\begin{align}\label{eq:7.13.5}
		\pi_2(\Psi^k)=\type(\sbf)\,.
	\end{align}
	We define
	\begin{align}
		\Phi^l&=\{(m,\sigma)\mid\sigma\in\type(\sbf^l)\land(m,\red^k(\sigma))\in\Psi^k\}\,,&&\mbox{and}\label{eq:7.13.6}\displaybreak[0]\\
		\Phi^i&=\bigcup\{m\circ\ass^i(\sigma)\mid(m,\sigma)\in\Phi^l\}\qquad\mbox{for }i\in\prz{l+1}{k}\,.\label{eq:7.13.7}
	\end{align}
	By Equality~\eqref{eq:7.13.5} and by Item~\ref{pkt:types-red} of Proposition~\ref{prop:well-formed-multi} we have that
	\begin{align}\label{eq:7.13.8}
		\pi_2(\Psi^k)=\type(\sbf)=\{\red^k(\sigma)\mid\sigma\in\type(\sbf^l)\}\,.
	\end{align}
	This means that for every $\sigma\in\type(\sbf^l)$ there is some $m$ such that $(m,\red^k(\sigma))\in\Psi^k$.
	In the light of Equality~\eqref{eq:7.13.6} this implies that 
	\begin{align}\label{eq:7.13.10}
		\pi_2(\Phi^l)=\type(\sbf^l)\,.
	\end{align}
	On the other hand, by Equality~\eqref{eq:7.13.7} and by Item~\ref{pkt:types-ass} of Proposition~\ref{prop:well-formed-multi}, for $i\in\prz{l+1}{k}$, we have that
	\begin{align}\label{eq:7.13.11}
		\pi_2(\Phi^i)=\bigcup\{\pi_2(\ass^i(\sigma))\mid\sigma\in\pi_2(\Phi^l)\}=\type(\sbf^i)\,.
	\end{align}
	Let us now check particular conditions of Definition~\ref{def:composer}.
	Condition~\ref{pkt:compo-ass} is immediate from Equality~\eqref{eq:7.13.7}.
	By Equality~\eqref{eq:7.13.6}, for every pair $(m,\sigma)\in\Phi^l$, the pair $(m,\red^k(\sigma))$ is in $\Psi^k$.
	Conversely, by Equality~\eqref{eq:7.13.8}, every pair in $\Psi^k$ is of the form $(m,\red^k(\sigma))$ with $\sigma\in\type(\sbf^l)$, hence $(m,\sigma)\in\Phi^l$ by Equality~\eqref{eq:7.13.6}.
	This implies Condition~\ref{pkt:compo-red}.
	Using consecutively Equality~\eqref{eq:7.13.5}, Item~\ref{pkt:types-inj} of Proposition~\ref{prop:well-formed-multi}, and Equality~\eqref{eq:7.13.10}, we obtain Condition~\ref{pkt:compo-inj}:
	\begin{align*}
		|\pi_2(\Psi^k)|=|\type(\sbf)|=|\type(\sbf^l)|=|\pi_2(\Phi^l)|\,.
	\end{align*}
	Condition~\ref{pkt:compo-flag} always holds for some $f\in\{\np,\pr\}$.
	Thus $(\Phi^k,\Phi^{k-1},\dots,\Phi^l;\Psi^k;f)$ is a composer, which together with Equalities~\eqref{eq:7.13.10} and~\eqref{eq:7.13.11} gives the thesis.	
\end{proof}

\begin{exa}\lab{exa:types}
	This is a continuation of Example~\ref{exa:der-tree}.
	With the derivation trees considered there,
	\begin{align*}
		\sbf_1&=[[(\gamma,\{E_4\}),(\gamma,\{E_3\})],[(\gamma,\emptyset),(\gamma,\{D_{1,\sharp,a}\})]]&&\mbox{and}\displaybreak[0]\\
		\sbf_2&=[[(\gamma,\{E_4\}),(\gamma,\{E_3\})],[(\gamma,\emptyset),(\gamma,\{D_{4,\sharp},D_{4,a}\}),(\gamma,\{D_{2,\pr,\np}\})]]
	\end{align*}
	are well-formed (singular) annotated $2$-stacks.
	On the other hand, the following annotated $2$-stacks are not well-formed:
	\begin{align*}
		\sbf_3&=[[(\gamma,\{E_4\}),(\gamma,\{E_3\})],[(\gamma,\{E_1\}),(\gamma,\{D_{1,\sharp,a}\})]]\,,\displaybreak[0]\\
		\sbf_4&=[[(\gamma,\{E_4\}),(\gamma,\{E_3\})],[(\gamma,\emptyset),(\gamma,\{D_{4,\sharp},D_{4,a},D_{4,b}\}),(\gamma,\{D_{2,\pr,\np}\})]]\,,&&\mbox{and}\displaybreak[0]\\
		\sbf_5&=[[(\gamma,\{E_4\}),(\gamma,\{E_3\})],[(\gamma,\emptyset),(\gamma,\{D_{4,\sharp}\}),(\gamma,\{D_{2,\pr,\np}\})]]\,.
	\end{align*}
	In $\sbf_3$, we provide a spare derivation tree $E_1$, not needed by the derivation trees assigned above.
	In $\sbf_4$, two derivation trees, $D_{4,a}$ and $D_{4,b}$, provide the same run descriptor.
	Finally, in $\sbf_5$, we are missing a derivation tree that would provide $\sigma_{4,\np}$.
	
\end{exa}

When $\widetilde\Psi$ is a subset of the type of a well-formed annotated $k$-stack $\sbf$, 
we can remove some of the annotations in $\sbf$ in order to obtain a well-formed annotated $k$-stack $\sbf{\restriction}_{\widetilde\Psi}$ whose type is $\widetilde\Psi$.
We do this by induction:
\begin{itemize}
\item	For $k=0$, we restrict the set of derivation trees in $\sbf$ to those trees whose run descriptor is in $\widetilde\Psi$.
\item	The type of the empty stack is empty, so we need not to restrict it in any way.
\item	For $\sbf=\sbf^k:\sbf^{k-1}$, we restrict $\sbf^{k-1}$ to the set $\widetilde\Phi$ containing those $\sigma\in\type(\sbf^{k-1})$
	for which $\red^k(\sigma)\in\widetilde\Psi$, and we restrict $\sbf^k$ to $\bigcup_{\sigma\in\widetilde\Phi}\pi_2(\ass^k(\sigma))$.
\end{itemize}

\subsubsection*{Plan for the Remaining Part of the Section}

We have already defined types of configurations, as needed for Theorem~\ref{thm:types}.
Type of a configuration $c$ is defined via existence of annotated stacks $\sbf$ such that $\conf(\sbf)=c$.
Theorem~\ref{thm:types} says that if two configurations $c,d$ have the same type, and we have a run starting in $c$, then a similar run starts $d$.
Roughly, the strategy of the proof is as follows.
First, basing on the run starting in $c$, we construct an annotated stack $\sbf$ with $\conf(\sbf)=c$, corresponding to this run.
More precisely, we do not process the whole run in this way, but rather its particular fragments.
Then, because the types of $c$ and $d$ equal, there exists an annotated stack $\tt$ with $\conf(\tt)=d$, and with $\rd(\tt)=\rd(\sbf)$.
Having $\tt$ we proceed in the opposite direction: basing on the annotated stack $\tt$ we construct a run starting in $d$, satisfying the thesis of the theorem.

Recall that an annotated stack is, roughly, a description of a run.
The run described by an annotated stack is called \emph{annotated run}, and is defined in Subsection~\ref{sec:annotated-run}.
In Lemma~\ref{lem:rd2run} (located in Subsection~\ref{sec:assumptions-used}) we prove that annotated runs have the expected form, 
that is, that using an assumption of a run descriptor corresponds to performing a return.
In Subsection~\ref{sec:completeness} we present the opposite direction: how to construct an annotated stack basing on a run.
The proof of Theorem~\ref{thm:types} is finalized in Subsection~\ref{sec:copy-upper}.

Simultaneously, we prepare ourselves for a proof of Theorem~\ref{thm:stypes}, which additionally talks about the number of $\sharp$ symbols read by a run.
We thus need to estimate the number of $\sharp$ symbols read by an annotated run.
To this end, in Subsection~\ref{sec:annotated-run} we define two numbers corresponding to an annotated stack $\sbf$, namely $\low(\sbf)$ and $\high(\sbf)$.
They provide a lower bound and, respectively, an upper bound for the number of $\sharp$ symbols read by an annotated run starting in $\sbf$, as showed in Lemma~\ref{lem:low-high-len}.
(We also define there a third number, $\len(\sbf)$, and we prove that it gives an upper bound for the length of an annotated run starting in $\sbf$.
Its role is auxiliary: it is only needed for showing that the constructed annotated run is finite.
Actually, this is needed already while proving Theorem~\ref{thm:types}.)
The essential property is that $\high(\sbf)$ can be bounded by a function of $\low(\sbf)$, as shown in Proposition~\ref{prop:common-bound}, contained in Subsection~\ref{sec:common-bound}.
Thus $\low(\sbf)$ itself estimates the number of $\sharp$ symbols read by an annotated run starting in $\sbf$.
With such a function $\low$ we can define sequence-equivalence, as needed in Theorem~\ref{thm:stypes}.
Namely, having a sequence of configurations (all of the same type), for every run descriptor $\sigma$ it is enough to know one think:
whether there is a sequence of annotated stacks corresponding to these configurations (and to the run descriptor $\sigma$), 
and such that the values of $\low$ for these annotated stacks are bounded.
If this is the case, using these annotated stacks we can reproduce runs that read a bounded number of $\sharp$ symbol.
If not, then a sequence of annotated stacks corresponding to the considered configurations also exists, but the values of $\low$ for these annotated stacks are unbounded, 
and in effect the reproduced runs read an unbounded number of $\sharp$ symbols.
A proof of Theorem~\ref{thm:stypes}, following these ideas, is given in Subsection~\ref{sec:seq-equiv}.

\subsection{Annotated Runs}\lab{sec:annotated-run}

In this subsection we describe how, having an annotated stack, we can reproduce a run.
This is formalized in the notion of annotated runs.
We also relate the number of $\sharp$ symbols read by a run with the number of productive run descriptors in annotations of the starting configuration.
Later, in Subsection~\ref{sec:completeness}, we do the converse: we show how to construct an annotated stack basing on a run.
Recall that when a well-formed annotated stack is singular, then its topmost $0$-stack is singular as well (cf.~Item~\ref{pkt:types-inj} of Proposition~\ref{prop:well-formed-multi}).

\begin{defi}\lab{def:successor}
	Let $\sbf=\sbf^n:\sbf^{n-1}:\dots:\sbf^0$ be a well-formed singular annotated $n$-stack, where $\sbf^0=(\gamma,\{D\})$.
	We define the \emph{successor} of $\sbf$.
	\begin{enumerate}
	\item\lab{pkt:succ-stop}
		If $D=(\dtempty\gamma,p)$, then $\sbf$ has no successor.
	\item	If $D=(\dtread p,D')$, then the successor is $\sbf^n:\sbf^{n-1}:\dots:\sbf^1:(\gamma,\{D'\})$.
	\item	If $D=(\dtpop\gamma,p,\tau^k)$, then the successor is $\sbf^n:\sbf^{n-1}:\dots:\sbf^k$.
	\item\lab{pkt:succ-push}
		Suppose that $D=(\dtpush\gamma,p,D',\mathfrak{D})$.
		Let $\alpha$, $k$, $\Psi^i$, $\Phi^i$ be as in Definition~\ref{def:types}(\ref{pkt:dt-push}).
		In this situation, the successor of $\sbf$ is
		\begin{align*}
			&\sbf^n:\sbf^{n-1}:\dots:\sbf^{k+1}:\tt^k:\sbf^{k-1}{\restriction}_{\pi_2(\Psi^{k-1})}:\sbf^{k-2}{\restriction}_{\pi_2(\Psi^{k-2})}:\dots:\sbf^1{\restriction}_{\pi_2(\Psi^1)}:(\alpha,\{D'\})\,,
		\end{align*}
		where $\tt^k=\sbf^k{\restriction}_{\pi_2(\Phi^k)}:\sbf^{k-1}{\restriction}_{\pi_2(\Phi^{k-1})}:\dots:\sbf^1{\restriction}_{\pi_2(\Phi^1)}:(\gamma,\mathfrak{D})$.
	\end{enumerate}
\end{defi}

We notice that in Case~(\ref{pkt:succ-push}) for $i\in\prz{1}{k}$ we have that $\pi_2(\Phi^i)\subseteq\type(\sbf^i)$,
and for $i\in\prz{1}{k-1}$ we have that $\pi_2(\Psi^i)\subseteq\type(\sbf^i)$, and thus the restrictions are legal.
Indeed, from Proposition~\ref{prop:well-formed-multi} applied to $\sbf$ we know that $\type(\sbf^i)=\pi_2(\ass^i(\rd(D)))$ for all $i\in\prz{1}{n}$;
moreover, by Definition~\ref{def:types}(\ref{pkt:dt-push}), $\ass^i(\sigma)=\Psi^i\cup\Phi^i$ for $i\in\prz{1}{k-1}$, and $\ass^k(\sigma)=\Phi^k$.

\begin{prop}%
	Let $\tt$ be 
	the successor of a well-formed singular annotated $n$-stack $\sbf$.
	Then $\tt$ is singular, well-formed, and $\conf(\tt)$ is a successor of $\conf(\sbf)$ (in the considered automaton).
\end{prop}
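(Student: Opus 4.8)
The plan is a case analysis on the shape of the unique derivation tree $D$ with $\sbf^0=(\gamma,\{D\})$, following the four cases of Definition~\ref{def:successor}; Case~(\ref{pkt:succ-stop}) is vacuous, since then $\sbf$ has no successor. In each remaining case I would first apply Proposition~\ref{prop:well-formed-multi} to $\sbf$ (with $l=0$, $k=n$), which gives $\type(\sbf^i)=\pi_2(\ass^i(\rd(D)))$ for all $i\in\prz{1}{n}$, and then apply the same proposition to $\tt$ in order to reduce ``$\tt$ is well-formed'' to equalities between assumption sets of run descriptors that are read off directly from Definition~\ref{def:types}. Two observations keep the verification short: by Item~\ref{pkt:types-inj} of that proposition, once $\tt$ is known to be well-formed its singularity is equivalent to singularity of its topmost $0$-stack, which is always a singleton annotation here; and $\st(\uu{\restriction}_{\widetilde\Psi})=\st(\uu)$, so the $\Gamma$-content of $\tt$ is obtained from that of $\sbf$ exactly by applying the relevant stack operation (nothing, $\pop^k$, or $\push^k_\alpha$), which together with the state read off from the unique run descriptor in $\type(\tt)$ yields ``$\conf(\tt)$ is the successor of $\conf(\sbf)$''.

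For the $\read$ case $\tt=\sbf^n:\dots:\sbf^1:(\gamma,\{D'\})$ with $\tau=\rd(D')$, and Definition~\ref{def:types}(\ref{pkt:dt-read}) gives $\ass^i(\rd(D))=\phi(a)\circ\ass^i(\tau)$, hence $\type(\sbf^i)=\pi_2(\ass^i(\tau))$ for all $i$; this is precisely Item~\ref{pkt:types-ass} for $\tt$, and Item~\ref{pkt:types-inj} holds because the topmost $0$-stack of $\tt$ is singular, so $\tt$ is well-formed and singular with $\type(\tt)=\{\red^n(\tau)\}$ whose state is the state $\vec q(a)$ of $\tau$. The $\pop^k$ case is analogous via Definition~\ref{def:types}(\ref{pkt:dt-pop}): there $\ass^k(\rd(D))=\{(\mathbf 1_M,\tau^k)\}$ and $\ass^i(\rd(D))=\ass^i(\tau^k)$ for $i\in\prz{k+1}{n}$, so $\type(\sbf^k)=\{\tau^k\}$ (in particular $\sbf^k\neq[\,]$, which makes $\pop^k$ applicable to $\conf(\sbf)$), and Proposition~\ref{prop:well-formed-multi} applied to $\tt=\sbf^n:\dots:\sbf^k$ with $l=k$ yields that $\tt$ is well-formed and singular with $\type(\tt)=\{\red^n(\tau^k)\}$.

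The $\push^k_\alpha$ case is the main one. With the notation of Definitions~\ref{def:types}(\ref{pkt:dt-push}) and~\ref{def:successor}(\ref{pkt:succ-push}) we have $\tt=\sbf^n:\dots:\sbf^{k+1}:\tt^k:\vv^{k-1}$, where $\vv^{k-1}=\sbf^{k-1}{\restriction}_{\pi_2(\Psi^{k-1})}:\dots:\sbf^1{\restriction}_{\pi_2(\Psi^1)}:(\alpha,\{D'\})$ and $\tt^k=\sbf^k{\restriction}_{\pi_2(\Phi^k)}:\dots:\sbf^1{\restriction}_{\pi_2(\Phi^1)}:(\gamma,\mathfrak D)$; all restrictions are legal (as noted right after Definition~\ref{def:successor}) and produce the sets $\pi_2(\Phi^i)$, $\pi_2(\Psi^i)$ as types, while $\type((\gamma,\mathfrak D))=\pi_2(\Phi^0)$ because $|\mathfrak D|=|\pi_2(\Phi^0)|$ forces distinct run descriptors. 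I would then argue in four steps. First, each layer of $\tt^k$ is well-formed, so since $(\Phi^k,\dots,\Phi^0;\Psi^k;f)$ is a composer, Proposition~\ref{prop:composer} applied to $\tt^k$ shows that $\tt^k$ is well-formed with $\type(\tt^k)=\pi_2(\Psi^k)$. Second, because $\Psi^j=\ass^j(\tau)$ with $\tau=\rd(D')$, Proposition~\ref{prop:well-formed-multi} applied to $\vv^{k-1}$ shows it is well-formed and singular with $\type(\vv^{k-1})=\{\red^{k-1}(\tau)\}$. Third, using $\ass^k(\red^{k-1}(\tau))=\ass^k(\tau)=\Psi^k$, Definition~\ref{def:well-formed} applied to $\tops^k(\tt)=\tt^k:\vv^{k-1}$ gives that this $k$-stack is well-formed and singular with type $\{\red^k(\tau)\}$. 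Fourth, using $\ass^i(\red^k(\tau))=\ass^i(\tau)=\Psi^i=\ass^i(\rd(D))$ for $i\in\prz{k+1}{n}$, Proposition~\ref{prop:well-formed-multi} with $l=k$ applied to $\tt=\sbf^n:\dots:\sbf^{k+1}:\tops^k(\tt)$ gives that $\tt$ is well-formed and singular with $\type(\tt)=\{\red^n(\tau)\}$, whose state is the state $q$ of $\tau$. Finally, tracking $\st$ through $\push^k_\alpha$ shows $\poslinv(\st(\tt))=\push^k_\alpha(\pi_2(\conf(\sbf)))$, so $\conf(\tt)$ is the (unique) successor of $\conf(\sbf)$.

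The argument is almost entirely bookkeeping; the one place that needs a genuine idea is the $\push^k_\alpha$ case, and there the decisive input is Proposition~\ref{prop:composer}, which turns the composer hidden inside $D$ (Definition~\ref{def:types}(\ref{pkt:dt-push})) into well-formedness of the freshly assembled topmost $k$-stack $\tt^k$. The residual difficulty is purely notational: keeping track of which substacks of $\sbf$ land inside $\tt^k$ and which inside the new topmost $(k-1)$-stack $\vv^{k-1}$, and checking that $\red^{k-1}$ and $\red^k$ compose as needed (Proposition~\ref{prop:assoc-red}) and leave the higher assumption sets untouched.
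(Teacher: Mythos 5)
Your proposal is correct and follows essentially the same route as the paper's proof: the same case analysis on the shape of $D$, the same use of Proposition~\ref{prop:well-formed-multi} to read off the layer types from the assumption sets of $\rd(D)$, and the same appeal to Proposition~\ref{prop:composer} to establish well-formedness of $\tt^k$ in the $\push^k_\alpha$ case. The only (harmless) difference is that you assemble $\tt$ bottom-up in several intermediate steps ($\vv^{k-1}$, then $\tt^k:\vv^{k-1}$, then all of $\tt$), whereas the paper invokes Proposition~\ref{prop:well-formed-multi} once on the full decomposition after listing the types of all layers.
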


\begin{proof}
	Let $\sbf=\sbf^n:\sbf^{n-1}:\dots:\sbf^0$, and $\sbf^0=(\gamma,\{D\})$, and $\sigma=\rd(D)$.
	All $\sbf^i$ are well-formed, because $\sbf$ is well-formed.
	Moreover, by Proposition~\ref{prop:well-formed-multi}, $\type(\sbf^i)=\pi_2(\ass^i(\sigma))$ for all $i\in\prz{1}{n}$.
	
	We have several cases depending on the shape of $\sbf$.
	We cannot have $D=(\dtempty\gamma,p)$, as then $\sbf$ has no successor.
	
	Suppose that $D=(\dtread p,D')$.
	Recall from Definition~\ref{def:types} that $\pi_2(\ass^i(\rd(D'))=\pi_2(\ass^i(\sigma))$ for all $i\in\prz{1}{n}$.
	By Proposition~\ref{prop:well-formed-multi}, $\tt$ is singular (its $\type$ is a singleton $\{\red^n(\rd(D'))\}$) and well-formed.
	Moreover, $\delta(\gamma,p)=\read(\vec{q})$, where $\vec{q}(a)$ equals the state of $\rd(D')$ for some $a\in A$.
	The transition from $\conf(\sbf)$ reading this $a$ leads to $\conf(\tt)$.
	
	Next, suppose that $D=(\dtpop\gamma,p,\tau^k)$.
	By Definition~\ref{def:types} we have that $\pi_2(\ass^k(\sigma))=\{\tau^k\}$ (hence $\type(\sbf^k)=\{\tau^k\}$)
	and $\ass^i(\sigma)=\ass^i(\tau^k)$ (hence $\type(\sbf^i)=\pi_2(\ass^i(\tau^k))$) for $i\in\prz{k+1}{n}$.
	Proposition~\ref{prop:well-formed-multi} applied to $\tt=\sbf^n:\sbf^{n-1}:\dots:\sbf^k$ implies that it is singular and well-formed.
	Moreover, $\delta(\gamma,p)=(q,\pop^k)$, where $q$ is the state of $\tau^k$, so the transition from $\conf(\sbf)$ leads to $\conf(\tt)$.

	Finally, suppose that $D=(\dtpush\gamma,p,D',\mathfrak{D})$.
	Let $\alpha$, $k$, $\Psi^i$, $\Phi^i$, $f$ be as in Definition~\ref{def:types}(\ref{pkt:dt-push}), and $\tt^k$ as in Definition~\ref{def:successor}(\ref{pkt:succ-push}).
	Obviously, $\type(\sbf^i{\restriction}_{\pi_2(\Phi^i)})=\pi_2(\Phi^i)$, for all $i\in\prz{1}{k}$;
	moreover, $\type((\gamma,\mathfrak{D}))=\{\rd(E)\mid E\in\mathfrak{D}\}=\pi_2(\Phi^0)$, by Definition~\ref{def:types}(\ref{pkt:dt-push}).
	Furthermore, again by this definition, $(\Phi^k,\Phi^{k-1},\dots,\Phi^0;\Psi^k;f)$ is a composer.
	It follows from Proposition~\ref{prop:composer} that $\tt^k$ is well-formed, and $\type(\tt^k)=\pi_2(\Psi^k)$
	(recall that $\tt^k=\sbf^k{\restriction}_{\pi_2(\Phi^k)}:\sbf^{k-1}{\restriction}_{\pi_2(\Phi^{k-1})}:\dots:\sbf^1{\restriction}_{\pi_2(\Phi^1)}:(\gamma,\mathfrak{D})$).
	By Definition~\ref{def:types}(\ref{pkt:dt-push}), for all $i\in\prz{1}{n}$ we have that $\Psi^i=\ass^i(\rd(D'))$.
	Moreover,
	\begin{itemize}
	\item	for $i\in\prz{k+1}{n}$, we have $\ass^i(\sigma)=\Psi^i$ by Definition~\ref{def:types}(\ref{pkt:dt-push}), so
		$\type(\sbf^i)=\pi_2(\ass^i(\sigma))=\pi_2(\Psi^i)=\pi_2(\ass^i(\rd(D')))$;
	\item	$\type(\tt^k)=\pi_2(\Psi^k)=\pi_2(\ass^k(\rd(D'))$;
	\item	for $i\in\prz{1}{k-1}$, we have that $\type(\sbf^i{\restriction}_{\pi_2(\Psi^i)})=\pi_2(\Psi^i)=\pi_2(\ass^i(\rd(D'))$.
	\end{itemize}
	In effect, $\tt$, which is a composition of these stacks, and of $(\alpha,\{D'\})$ is singular and well-formed by Proposition~\ref{prop:well-formed-multi}.
	Additionally, $\delta(\gamma,p)=(q,\push^k_\alpha)$, where $q$ is the state of $\rd(D')$, so the transition from $\conf(\sbf)$ leads to $\conf(\tt)$.
\end{proof}

An \emph{annotated run} $\RR$ is a sequence $\sbf_0,\dots,\sbf_m$ of well-formed singular $n$-stacks in which $\sbf_i$ is the successor of $\sbf_{i-1}$ for each $i\in\prz{1}{m}$.
By replacing each $\sbf_i$ by $\conf(\sbf_i)$ we obtain a run denoted $\st(\RR)$.

Notice that an annotated stack $\sbf$ may have less successors than $\conf(\sbf)$.
Indeed, in the case of $D=(\dtempty\gamma,p)$ there are no successors of $\sbf$, but $\conf(\sbf)$ may have successors.
Similarly, in the case of $D=(\dtread p,D')$ there is exactly one successor of $\sbf$ (the state in $\rd(D')$ determines which letter should be read),
while in a run from $\conf(\sbf)$ we can read any letter.

\begin{exa}
	Recall the $2$-DPDA $\calA_1$ from Example~\ref{exa:der-tree},
	and the annotated stack $\sbf_1=[[(\gamma,\{E_4\}),(\gamma,\{E_3\})],[(\gamma,\emptyset),(\gamma,\{D_{1,\sharp,a}\})]]$ from Example~\ref{exa:types}.
	The successors of $\sbf_1$ are, consecutively,
	\begin{align*}
		&[[(\gamma,\{E_4\}),(\gamma,\{E_3\})],[(\gamma,\emptyset),(\gamma,\{D_{4,\sharp},D_{4,a}\}),(\gamma,\{D_{2,\pr,\np}\})]]\,,\displaybreak[0]\\
		&[[(\gamma,\{E_4\}),(\gamma,\{E_3\})],[(\gamma,\emptyset),(\gamma,\{D_{4,a}\}),(\gamma,\{D_{3,\np}\})],[(\gamma,\emptyset),(\gamma,\{D_{4,\sharp}\}),(\gamma,\{D_{3,\pr}\})]]\,,\displaybreak[0]\\
		&[[(\gamma,\{E_4\}),(\gamma,\{E_3\})],[(\gamma,\emptyset),(\gamma,\{D_{4,a}\}),(\gamma,\{D_{3,\np}\})],[(\gamma,\emptyset),(\gamma,\{D_{4,\sharp}\})]]\,,\displaybreak[0]\\
		&[[(\gamma,\{E_4\}),(\gamma,\{E_3\})],[(\gamma,\emptyset),(\gamma,\{D_{4,a}\}),(\gamma,\{D_{3,\np}\})],[(\gamma,\emptyset),(\gamma,\{D_7\})]]\,,\displaybreak[0]\\
		&[[(\gamma,\{E_4\}),(\gamma,\{E_3\})],[(\gamma,\emptyset),(\gamma,\{D_{4,a}\}),(\gamma,\{D_{3,\np}\})]]\,,\displaybreak[0]\\
		&[[(\gamma,\{E_4\}),(\gamma,\{E_3\})],[(\gamma,\emptyset),(\gamma,\{D_{4,a}\})]]\,,\displaybreak[0]\\
		&[[(\gamma,\{E_4\}),(\gamma,\{E_3\})],[(\gamma,\emptyset),(\gamma,\{D_5\})]]\,,\displaybreak[0]\\
		&[[(\gamma,\{E_4\}),(\gamma,\{E_3\})]]\,,\displaybreak[0]\\
		&[[(\gamma,\{E_4\})]]\,,\displaybreak[0]\\
		&[[(\gamma,\{E_7\})]]\,;
	\end{align*}
	the last of them has no more successors.
	In the transition between the first and the second line, $D_{2,\pr,\np}$ says that the new topmost $0$-stack should be annotated by $D_{3,\pr}$, 
	and that the previously topmost $0$-stack should be annotated by $D_{3,\np}$.
	Because $\ass^1(\rd(D_{3,\pr}))=\sigma_{4,\pr}^1$ and $\ass^1(\rd(D_{3,\np}))=\sigma_{4,\np}^1$, 
	we know that $D_{4,\sharp}$ should be taken to the topmost $1$-stack (we have that $\red^1(\rd(D_{4,\sharp}))=\sigma_{4,\pr}^1$),
	and that $D_{4,a}$ should be left in the second topmost $1$-stack (we have that $\red^1(\rd(D_{4,a}))=\sigma_{4,\np}^1$).

	We can see that not every run is of the form $\st(\RR)$ for some annotated run $\RR$.
	For example, this is the case for the run that starts in $(q_1,\poslinv([[\gamma,\gamma],[\gamma,\gamma]]))$ and reads $a$, then $b$, and then $\sharp$.
	Indeed, in order to obtain such a run as an annotated run, to the topmost $0$-stack we have to assign a hypothetical derivation tree $D_{1,a,b}$, saying that we should first read $a$ and then $b$, 
	but there is no such derivation tree (as already explained in Example~\ref{exa:der-tree}).
	Another run that is not of the form $\st(\RR)$ for any annotated run $\RR$ is the run that starts in $(q_2,\poslinv([[\gamma,\gamma],[\gamma,\gamma,\gamma]]))$ and reads $a$, then $b$, and then $\sharp$.
	This time the problem is that to the second topmost $0$-stack we cannot assign simultaneously $D_{4,a}$ and $D_{4,b}$, as they both have the same run descriptor.
	More generally, the $\push$ in Case~(\ref{pkt:succ-push}) of Definition~\ref{def:successor} leaves the same annotations in the original substack as in the copied substack, up to a restriction, 
	which causes that fragments of the annotated run corresponding to these substacks have to be the same.
\end{exa}

A priori there might exist an infinite annotated run.
But, as we see below, this is impossible: 
always after some number of steps we reach an annotated stack with no successors (Case~(\ref{pkt:succ-stop}) of Definition~\ref{def:successor}).
Moreover, we show that the number of $\sharp$ symbols read by the run starting in an annotated stack $\sbf$ can be estimated by the number of productive run descriptors in the annotations of $\sbf$.
To this end, to each well-formed annotated stack $\sbf$ we assign three natural numbers: $\low(\sbf)$, $\high(\sbf)$, and $\len(\sbf)$.
The first two of them give a lower and an upper bound on the number of $\sharp$ symbols read by our run, and the last one gives an upper bound on the length of the run.

\begin{defi}\lab{def:exp}
	For positive integers $m_1,\dots,m_k$ we define $\pow(m_1,\dots,m_k)$ by induction on $k$:
	\begin{align*}
		&\pow()=1\,,&&\mbox{and}&&
		\pow(m_1,m_2,\dots,m_k)=(1+m_1)^{\pow(m_2,\dots,m_k)}-1\,.
	\end{align*}
\end{defi}

Notice that, in particular, $\pow(m_1)=m_1$ and $\pow(m_1,m_2)=(1+m_1)^{m_2}-1$.

\begin{defi}\lab{def:low-high}
	For a well-formed annotated $k$-stack $\sbf$ we define natural numbers $\low(\sbf)$, $\high(\sbf)$, and $\len(\sbf)$ by induction on the structure of $\sbf$.
	\begin{itemize}
	\item	If $\sbf=(\gamma,\mathfrak{D})$, we take
		\begin{align*}
			\low(\sbf)&=|\type(\sbf)\cap\calT_\pr|\,,\displaybreak[0]\\
			\high(\sbf)&=\prod_{D\in\mathfrak{D}\mid\rd(D)\in\calT_\pr}C_{\depth(D)}\,,&&\mbox{and}\displaybreak[0]\\
			\len(\sbf)&=\prod_{D\in\mathfrak{D}}C_{\depth(D)}\,,
		\end{align*}
		where $C_z$ is defined inductively:
		\begin{align*}
			C_0=2\,,&&\mbox{and}&& C_{z+1}=(2|\calT^0|)^n\cdot(C_z)^{|\calT^0|+1}\,.
		\end{align*}
	\item	We take $\low([\,])=0$ and $\high([\,])=\len([\,])=1$.
	\item	If $\sbf=\sbf^k:\sbf^{k-1}$, we take
		\begin{align*}
			\low(\sbf)&=\sum_{\sigma\in\type(\sbf^{k-1})}\big(\low(\sbf^k{\restriction}_{\pi_2(\ass^k(\sigma))})+\low(\sbf^{k-1}{\restriction}_{\{\sigma\}})\big)\,,\displaybreak[0]\\
			\high(\sbf)&=\prod_{\sigma\in\type(\sbf^{k-1})}\pow\big(\high(\sbf^k{\restriction}_{\pi_2(\ass^k(\sigma))}),\high(\sbf^{k-1}{\restriction}_{\{\sigma\}})\big)\,,&&\mbox{and}\displaybreak[0]\\
			\len(\sbf)&=\prod_{\sigma\in\type(\sbf^{k-1})}\pow\big(\len(\sbf^k{\restriction}_{\pi_2(\ass^k(\sigma))}),\len(\sbf^{k-1}{\restriction}_{\{\sigma\}})\big)\,.
		\end{align*}
	\end{itemize}
\end{defi}

The three numbers are interesting for us, because of the following lemma.
Recall that, for a run $R$, by $\sharp(R)$ we denote the number of $\sharp$ symbols read by $R$.

\begin{lem}\lab{lem:low-high-len}
	If $\RR$ is an annotated run,
	\begin{align*}
			&\low(\RR(0))\leq\sharp(\st(\RR))+\low(\RR(|\RR|))\,,\displaybreak[0]\\
			&\high(\RR(0))\geq\sharp(\st(\RR))+\high(\RR(|\RR|))\,,&&\mbox{and}\displaybreak[0]\\
			&\len(\RR(0))\geq|\RR|+\len(\RR(|\RR|))\,.
	\end{align*}
\end{lem}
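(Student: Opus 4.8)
The plan is to prove all three inequalities simultaneously by induction on the length $|\RR|$ of the annotated run, with the base case $|\RR|=0$ being trivial since all three reduce to equalities (and $\sharp$ of a length-$0$ run is $0$). For the inductive step, I would write $\RR=\sbf_0,\sbf_1,\dots,\sbf_m$ with $m\geq 1$, so that $\sbf_1,\dots,\sbf_m$ is a shorter annotated run to which the induction hypothesis applies. The crucial observation is that $\sharp(\st(\RR))=\sharp(\st(\subrun{\RR}{1}{m}))+e$, where $e\in\{0,1\}$ is $1$ exactly when the first transition is a $\read$ of the $\sharp$ symbol. So it suffices to compare $\low(\sbf_0)$ with $\low(\sbf_1)+e$ (and analogously for $\high$ with the inequality reversed, and for $\len$ with $|\RR|=1+|\subrun{\RR}{1}{m}|$ and no $\sharp$-term).

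Thus the heart of the argument is a \emph{local} lemma comparing $\low,\high,\len$ of a well-formed singular annotated $n$-stack $\sbf$ with those of its successor $\tt$. I would prove this by a case analysis following Definition~\ref{def:successor}, with $\sbf=\sbf^n:\dots:\sbf^1:(\gamma,\{D\})$ and $\sigma=\rd(D)$. The $\dtread$ and $\dtpop$ cases should be straightforward: for $\dtread$, the only change at the top $0$-stack is replacing $D$ by $D'$, and the productivity flag of $D$ is $\pr$ iff $\rd(D')\in\calT_\pr$ or the read letter is $\sharp$, which after unfolding the definitions of $\low$ (a count of productive run descriptors) gives exactly $\low(\sbf)=\low(\tt)+e$; for $\high$ and $\len$ one uses $C_{\depth(D)}=C_{1+\depth(D')}\geq (2|\calT^0|)^n\cdot C_{\depth(D')}^{|\calT^0|+1}$ together with the multiplicative form of $\high$, $\len$ over a $k$-stack, which grows when we add $0$-stacks and when we increase depths. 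For $\dtpop$, $\tt=\sbf^n:\dots:\sbf^k$, and $\sigma$ is nonproductive with $\ass^k(\sigma)=\{(\mathbf 1_M,\tau^k)\}$, $\type(\sbf^k)=\{\tau^k\}$, $\type(\sbf^i)=\pi_2(\ass^i(\tau^k))$ for $i>k$; one checks that $\low$, $\high$, $\len$ of $\sbf$ and $\tt$ coincide (or $\high,\len$ of $\sbf$ dominate) by expanding the definition of $\low(\sbf^k:\sbf^{k-1}:\dots)$ repeatedly and matching the summands/factors indexed by $\type(\sbf^{k-1}),\dots$ against those for $\tt$, noting that the extra $0$-stack $(\gamma,\{D\})$ is nonproductive so it contributes nothing to $\low$ and a factor $C_{\depth(D)}\geq 2>1$ to $\len$.

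The $\dtpush$ case will be the main obstacle. Here $\tt=\sbf^n:\dots:\sbf^{k+1}:\tt^k:\sbf^{k-1}{\restriction}_{\pi_2(\Psi^{k-1})}:\dots:\sbf^1{\restriction}_{\pi_2(\Psi^1)}:(\alpha,\{D'\})$ with $\tt^k=\sbf^k{\restriction}_{\pi_2(\Phi^k)}:\dots:\sbf^1{\restriction}_{\pi_2(\Phi^1)}:(\gamma,\mathfrak{D})$, so the $0$-stack $(\gamma,\{D\})$ has effectively been ``split'' into $(\gamma,\mathfrak{D})$ (inside $\tt^k$) and $(\alpha,\{D'\})$ (on top), and the stacks $\sbf^1,\dots,\sbf^{k-1}$ get duplicated (each copy restricted to the relevant assumption set). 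For $\low$: the key inequality is $|\type(\sbf)\cap\calT_\pr|\geq|\{\rd(D')\}\cap\calT_\pr| + |\pi_2(\Phi^0)\cap\calT_\pr|$, i.e. $\sigma=\rd(D)$ is productive whenever $\rd(D')$ is or some element of $\pi_2(\Phi^0)$ is — which is precisely guaranteed by the productivity clause $g=\np\iff(f=\np,\ \{\tau\}\cup\pi_2(\Phi^0)\subseteq\calT_\np,\ \pi_2(\Psi^i)\cap\pi_2(\Phi^i)\subseteq\calT_\np)$ of Definition~\ref{def:types}(\ref{pkt:dt-push}); one then argues that the duplicated lower stacks $\sbf^i$ only add nonnegative terms to $\low(\tt)$, since each copy of $\sbf^i$ is restricted so that the two copies partition (or sub-partition) the productive run descriptors of $\sbf^i$ with at worst overlap on \emph{nonproductive} ones (again by the $g$-clause). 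For $\high$: the relevant estimate is that $C_{\depth(D)}$, with $\depth(D)=1+\max(\depth(D'),\max_{E\in\mathfrak D}\depth(E))$, is large enough to dominate the product $\prod_{\sigma'\in\type(\tt^k:\dots)}\pow(\dots)$ appearing in $\high(\tt)$ — this is exactly why $C_{z+1}$ is defined with the exponent $|\calT^0|+1$ and the base factor $(2|\calT^0|)^n$, bounding the number of run descriptors ($\leq|\calT^0|$) and the nesting depth ($\leq n$) in the resulting stack — together with a $\sharp$-free first transition so no $e$-term is needed. The $\len$ bound is analogous to $\high$ but counts all derivation trees and additionally must absorb the ``$1+$'' from $|\RR|=1+|\subrun{\RR}{1}{m}|$, which is why $C_0=2$ rather than $1$. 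I expect the bookkeeping of matching summands/factors under the restriction operators ${\restriction}$ — verifying that $\type(\sbf^i{\restriction}_{\pi_2(\Phi^i)})=\pi_2(\Phi^i)$, etc., and that the index sets line up — to be the most delicate and verbose part, relying throughout on Proposition~\ref{prop:well-formed-multi} and Proposition~\ref{prop:composer} to know the types of the substacks of $\tt$.
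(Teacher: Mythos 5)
Your proposal follows essentially the same route as the paper: reduce to annotated runs of length $1$, do a case analysis on the derivation tree annotating the topmost $0$-stack, use Proposition~\ref{prop:low-high-singular} together with the productivity clause of Definition~\ref{def:types}(\ref{pkt:dt-push}) (and the composer inequalities of Lemma~\ref{lem:low-high-composer}) for $\low$, and the growth of the constants $C_z$ plus the $\pow$ inequalities for $\high$ and $\len$. One small caveat: in the $\dtpush$ case your displayed ``key inequality'' $|\type(\sbf)\cap\calT_\pr|\geq|\{\rd(D')\}\cap\calT_\pr|+|\pi_2(\Phi^0)\cap\calT_\pr|$ cannot hold literally (the left side is at most $1$); what is actually needed, and what the paper uses, is the implication that if $\rd(D)$ is \emph{productive} then at least one of the intermediate inequalities is strict, so that the unit $\low(\sbf^0)=1$ is absorbed by the slack --- which your surrounding discussion of the $g$-clause correctly identifies.
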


This is one of key lemmas of this section.
We now give some examples and intuitions staying behind this lemma, and behind the definitions of $\low$, $\high$, and $\len$;
after that, we prove this lemma.

We see that the last inequality of Lemma~\ref{lem:low-high-len} bounds the length of an annotated run $\RR$ by $\len(\RR(0))$, 
that is, by a function of the annotated stack $\RR(0)$, in which the annotated run starts.
Similarly, the second inequality bounds the number of $\sharp$ symbols read by $\RR$, by another function of $\RR(0)$, namely by $\high(\RR(0))$.
The additional components added on the right of these inequalities only strengthen them.
Conversely, the role of the first inequality is to give a lower bound for the number of $\sharp$ symbols read by $\RR$.
If $\RR$ is maximal (i.e., cannot be prolonged), 
then the topmost $0$-stack of $\RR(\RR)$ is annotated by a derivation tree of the form $(\dtempty\gamma,q)$, and all other $0$-stacks are annotated by empty sets.
In effect $\low(\RR(|\RR|))=0$, and we simply obtain that $\low(\RR(0))\leq\sharp(\st(\RR))$.
But for an arbitrary run $\RR$, which may end prematurely, even before reading any $\sharp$ symbol, we have to add $\low(\RR(|\RR|))$ on the right side of the inequality.

Roughly speaking, $\low(\RR(0))$ counts the number of productive run descriptors in the annotations of $\RR(0)$.
The intuition is that every productive run descriptor is responsible for increasing the number of $\sharp$ symbols read, 
so the first inequality of Lemma~\ref{lem:low-high-len} should hold with such a definition of $\low$.

We see that the function $\high$ takes into account the same productive run descriptors as $\low$, 
but instead of sums we use products and the $\pow$ function.
Indeed, it is shown in Proposition~\ref{prop:pr-positive} that if all run descriptors in the annotations of a substack $\sbf^{k-1}$ are nonproductive, then $\high(\sbf^{k-1})=1$ (and $\low(\sbf^{k-1})=0$).
Suppose that $\sbf^{k-1}$ is singular, that is, its type is a singleton $\{\sigma\}$.
If $\sbf^k:\sbf^{k-1}$ is well-formed, we have $\type(\sbf^k)=\pi_2(\ass^k(\sigma))$.
Observe that $\pow(x,1)=(1+x)^1-1=x$ for every $x$, and thus 
\begin{align*}
	\high(\sbf^k:\sbf^{k-1})&=\pow(\high(\sbf^k{\restriction}_{\pi_2(\ass^k(\sigma))}),\high(\sbf^{k-1}{\restriction}_{\{\sigma\}}))\\
		&=\pow(\high(\sbf^k),\high(\sbf^{k-1}))
		=\pow(\high(\sbf^k),1)
		=\high(\sbf^k)\,.
\end{align*}
This is similar to the behavior of the $\low$ function, as in such a case we also have
\begin{align*}
	\low(\sbf^k:\sbf^{k-1})&=\low(\sbf^k{\restriction}_{\pi_2(\ass^k(\sigma))})+\low(\sbf^{k-1}{\restriction}_{\{\sigma\}})\\
		&=\low(\sbf^k)+\low(\sbf^{k-1})=\low(\sbf^k)+0=\low(\sbf^k)\,.
\end{align*}

Technical details of the definition of $\high$ were chosen so that it is possible to perform a proof, but two facts are important here.
First, nonproductive run descriptors cannot be responsible for increasing the number of $\sharp$ symbols read,
and thus (in order to obtain the second inequality of Lemma~\ref{lem:low-high-len}) 
it is enough to have a function $\high$ that takes into account only productive run descriptors (i.e., ignores nonproductive run descriptors).
Second, because $\high(\RR(0))$ and $\low(\RR(0))$ are counting the same productive run descriptors, only in a different way, the two values are related.
Namely, $\high(\RR(0))$ can be bounded by a function of $\low(\RR(0))$.
This essential property is shown in Proposition~\ref{prop:common-bound}, in the next subsection.

The function $\len$ is defined very similarly to $\high$, but it takes into account all run descriptors, not only productive ones.
Thus, roughly, it depends on the size of the annotated stack.

\begin{exa}
	We continue the previous examples, concerning the $2$-DPDA $\calA_1$ from Example~\ref{exa:der-tree}.
	Consider the annotated stack $\sbf_1=[[\tt_4,\tt_3],[\tt_2,\tt_1]]$, where 
	\begin{align*}
		&\tt_1=(\gamma,\{D_{1,\sharp,a}\})\,,&&\tt_3=(\gamma,\{E_3\})\,,\displaybreak[0]\\
		&\tt_2=(\gamma,\emptyset)\,,&&\tt_4=(\gamma,\{E_4\})\,.
	\end{align*}
	We have that
	\begin{align*}
		&\type(\tt_1)=\{\rd(D_{1,\sharp,a})\}=\{\sigma_{1,\pr}\}\,,&&\pi_2(\ass^1(\sigma_{1,\pr}))=\emptyset\,,\displaybreak[0]\\
		&\type(\tt_2)=\emptyset\,,\displaybreak[0]\\
		&\type(\tt_3)=\{\rd(E_3)\}=\{\tau_3\}\,,&&\pi_2(\ass^1(\tau_3))=\{\tau_4^1\}\,,\displaybreak[0]\\
		&\type(\tt_4)=\{\rd(E_4)\}=\{\tau_4\}\,.&&\pi_2(\ass^1(\tau_4))=\emptyset\,.
	\end{align*}
	In effect 
	\begin{align*}
		&\type([\tt_2,\tt_1])=\{\red^1(\sigma)\mid\sigma\in\type(\tt_1)\}=\{\red^1(\sigma_{1,\pr})\}\,,&&\pi_2(\ass^2(\red^1(\sigma_{1,\pr})))=\{(q_3,\pr)\}\,,\displaybreak[0]\\
		&\type([\tt_2])=\{\red^1(\sigma)\mid\sigma\in\type(\tt_2)\}=\emptyset\,,\displaybreak[0]\\
		&\type([\tt_4,\tt_3])=\{\red^1(\sigma)\mid\sigma\in\type(\tt_3)\}=\{\red^1(\tau_3)\}\,,&&\pi_2(\ass^2(\red^1(\tau_3)))=\emptyset\,,\displaybreak[0]\\
		&\type([\tt_4])=\{\red^1(\sigma)\mid\sigma\in\type(\tt_4)\}=\{\red^1(\tau_4)\}=\{\tau_4^1\}\,,\qquad\mbox{and finally}\hspace{-20em}\displaybreak[0]\\
		&\type([[\tt_4,\tt_3]])=\{\red^2(\sigma)\mid\sigma\in\type([\tt_4,\tt_3])\}=\{\red^2(\red^1(\tau_3))\}=\{(q_3,\pr)\}\,.\hspace{-20em}
	\end{align*}
	Recall that $\tau_3\in\calT_\np$ and $\sigma_{1,\pr},\tau_4\in\calT_\pr$.
	We can compute $\low(\sbf_1)$ as follows:
	\begin{align*}
		\low([\tt_4])&=\sum_{\sigma\in\type(\tt_4)}\big(\low([\,]{\restriction}_{\pi_2(\ass^1(\sigma))})+\low(\tt_4{\restriction}_{\{\sigma\}})\big)\\
			&=\low([\,]{\restriction}_\emptyset)+\low(\tt_4{\restriction}_{\{\tau_4\}})=\low([\,])+\low(\tt_4)=0+1=1\,,\displaybreak[0]\\
		\low([[\tt_4,\tt_3]])&=\sum_{\sigma\in\type([\tt_4,\tt_3])}\big(\low([\,]{\restriction}_{\pi_2(\ass^2(\sigma))})+\low([\tt_4,\tt_3]{\restriction}_{\{\sigma\}})\big)\\
			&=\low([\,]{\restriction}_\emptyset)+\low([\tt_4,\tt_3]{\restriction}_{\{\red^1(\tau_3)\}})=0+\low([\tt_4,\tt_3])\\
			&=\sum_{\sigma\in\type(\tt_3)}\big(\low([\tt_4]{\restriction}_{\pi_2(\ass^1(\sigma))})+\low(\tt_3{\restriction}_{\{\sigma\}})\big)\\
			&=\low([\tt_4]{\restriction}_{\{\tau_4^1\}})+\low(\tt_3{\restriction}_{\{\tau_3\}})=\low([\tt_4])+\low(\tt_3)=1+0=1\,,\displaybreak[0]\\
		\low([\tt_2])&=\sum_{\sigma\in\type(\tt_2)}\big(\low([\,]{\restriction}_{\pi_2(\ass^1(\sigma))})+\low(\tt_2{\restriction}_{\{\sigma\}})\big)=0\,,\displaybreak[0]\\
		\low([\tt_2,\tt_1])&=\sum_{\sigma\in\type(\tt_1)}\big(\low([\tt_2]{\restriction}_{\pi_2(\ass^1(\sigma))})+\low(\tt_1{\restriction}_{\{\sigma\}})\big)\\
			&=\low([\tt_2]{\restriction}_{\emptyset})+\low(\tt_1{\restriction}_{\{\sigma_{1,\pr}\}})=\low([\tt_2])+\low(\tt_1)=0+1=1\,,\displaybreak[0]\\
		\low(\sbf_1)&=\sum_{\sigma\in\type([\tt_2,\tt_1])}\big(\low([[\tt_4,\tt_3]]{\restriction}_{\pi_2(\ass^2(\sigma))})+\low([\tt_2,\tt_1]{\restriction}_{\{\sigma\}})\big)\\
			&=\low([[\tt_4,\tt_3]]{\restriction}_{\{(q_3,\pr)\}})+\low([\tt_2,\tt_1]{\restriction}_{\{\red^1(\sigma_{1,\pr})\}})\\
			&=\low([[\tt_4,\tt_3]])+\low([\tt_2,\tt_1])=1+1=2\,.
	\end{align*}
	We see that the restrictions of annotated stacks appearing in the above formulas do not modify these annotated stacks (this is the case because all annotations are either singletons or empty sets).

	Next, we compute $\high(\sbf_1)$.
	To this end, recall that $\depth(D_{1,\sharp,a})=2$ and $\depth(E_4)=1$.
	This time we give the formulas ignoring the restrictions, as again they do not change anything:
	\begin{align*}
		\high([\tt_4,\tt_3])&=\pow(\high([\tt_4]),\high(\tt_3))=\pow(\pow(\high([\,]),\high(\tt_4)),\high(\tt_3))\\
			&=\pow(\pow(1,C_{\depth(E_4)}),1)=\pow(\pow(1,C_1),1)=2^{C_1}-1\,,\displaybreak[0]\\
		\high([\tt_2,\tt_1])&=\pow(\high([\tt_2]),\high(\tt_1))=\pow(1,C_\depth(D_{1,\sharp,a}))=2^{C_2}-1\,,\displaybreak[0]\\
		\high(\sbf_1)&=\pow(\high([[\tt_4,\tt_3]]),\high([\tt_2,\tt_1]))\\
			&=\pow(\pow(\high([\,]),\high([\tt_4,\tt_3])),\high([\tt_2,\tt_1]))\\
			&=\pow(\pow(1,2^{C_1}-1),2^{C_2}-1)=2^{(2^{C_1}-1)(2^{C_2}-1)}-1\,.
	\end{align*}
	Notice that $\high(\tt_3)=1$, because $\type(\tt_3)$ contains only a nonproductive run descriptor.

	Similarly, we can compute $\len(\sbf_1)$, but this should also take into account $E_3$, whose depth is $0$:
	\begin{align*}
		\len([\tt_4,\tt_3])&=\pow(\len([\tt_4]),\len(\tt_3))=\pow(\pow(\len([\,]),\len(\tt_4)),\len(\tt_3))\\
			&=\pow(\pow(1,C_{\depth(E_4)}),C_{\depth(E_3)})=\pow(\pow(1,C_1),C_0)=2^{C_1\cdot C_0}-1\,,\displaybreak[0]\\
		\len([\tt_2,\tt_1])&=\pow(\len([\tt_2]),\len(\tt_1))=\pow(1,C_\depth(D_{1,\sharp,a}))=2^{C_2}-1\,,\displaybreak[0]\\
		\len(\sbf_1)&=\pow(\len([[\tt_4,\tt_3]]),\len([\tt_2,\tt_1]))\\
			&=\pow(\pow(\len([\,]),\len([\tt_4,\tt_3])),\len([\tt_2,\tt_1]))\\
			&=\pow(\pow(1,2^{C_1\cdot C_0}-1),2^{C_2}-1)=2^{(2^{C_1\cdot C_0}-1)(2^{C_2}-1)}-1\,.
	\end{align*}
\end{exa}

We have said that $\low$ counts the number of productive run descriptors in all annotations.
This is a good high-level intuition, but strictly speaking this is not true.
Indeed, say that we have two run descriptors $\sigma,\sigma'\in\type(\sbf^{k-1})$ such that $\ass^k(\sigma)=\ass^k(\sigma')$.
Then, in the formula for $\low(\sbf^k:\sbf^{k-1})$ we add $\low(\sbf^k{\restriction}_{\pi_2(\ass^k(\sigma))})$ twice (once for $\sigma$, and once for $\sigma'$).
This is illustrated by the next example.

\begin{exa}
	Consider the $2$-DPDA $\calA_2$ depicted below; its stack alphabet is $\{\gamma\}$, and input alphabet $\{\sharp\}$.
	\begin{center}
		\import{pics/}{example-7-24.pdf_tex_ok}
	\end{center}
	This time we consider the trivial monoid $M=\{1\}$.
	Denote
	\begin{align*}
		\sigma_i^1=(q_i,\{(1,(q_6,\pr))\},\pr)&&\mbox{for }i\in\{3,4,7\}\,.
	\end{align*}
	We are interested in derivation trees
	\begin{align*}
		D_1&=(\dtpush\gamma,q_1,(\dtpop\gamma,q_2,\sigma_3^1),\{(\dtpop\gamma,q_6,\sigma_7^1)\})\,,\displaybreak[0]\\
		D_i&=(\dtpop\gamma,q_i,\sigma_4^1)&&\mbox{for $i\in\{3,7\}$, and}\displaybreak[0]\\
		D_4&=(\dtread q_4,(\dtpop\gamma,q_5,(q_6,\pr)))\,.
	\end{align*}
	The annotated $1$-stack $\sbf=[(\gamma,\{D_4\}),(\gamma,\{D_3,D_7\}),(\gamma,\{D_1\})]$ is well-formed.
	Notice that the run descriptor of $D_4$ is productive, while run descriptors of $D_1$, $D_3$, and $D_7$ are nonproductive.
	We can see, though, that $D_1$ uses both $D_3$ and $D_7$ as assumptions, and both $D_3$ and $D_7$ use $D_4$ as an assumption.
	In effect $\low(\sbf)$ counts the nonproductive run descriptor $\rd(D_4)$ twice:
	\begin{align*}
		\low(\sbf)&=\low([(\gamma,\{D_4\}),(\gamma,\{D_3,D_7\})])+\low((\gamma,\{D_1\}))\displaybreak[0]\\
			&=\low([(\gamma,\{D_4\})])+\low((\gamma,\{D_3\}))+\low([(\gamma,\{D_4\})])+\low((\gamma,\{D_7\}))+0\\
			&=1+0+1+0+0=2\,.
	\end{align*}
\end{exa}

\begin{rem}
	Consider the function $\beth_n(k)$ defined by
	\begin{align*}
		\beth_0(k)=k&&\mbox{and}&&\beth_{n+1}(k)=2^{\beth_n(k)}\,.
	\end{align*}
	One can construct an $n$-DPDA $\calA$ that recognizes the language $\{\sharp^ka\sharp^{\beth_{n-1}(k)}\mid k\in\Nat\}$
	(see Blumensath~\cite[Example 9]{blumensath-pumping} for a very similar construction).
	After reading a prefix $\sharp^ka$, the number of $0$-stacks in the $n$-stack $s$ of $\calA$ is linear in $k$.
	It is possible to annotate $s$, resulting in an annotated stack $\sbf$, such that the maximal annotated run starting from $\sbf$ reads $\beth_{n-1}(k)$ $\sharp$ symbols.
	It follows that the $\high$ function (and thus $\len$ as well) has to be at least $(n-1)$-fold exponential in the size of an annotated $n$-stack.
	According to our definition, $\high$ and $\len$ are (in the worst case) $(n+1)$-fold exponential, which is slightly larger than necessary.
	We believe that it is possible to save these two exponentiations, at the cost of complicating proofs.
\end{rem}

We now prove Lemma~\ref{lem:low-high-len}, which fills the rest of this subsection.
We start by proving some (in)equalities regarding the $\pow$ function.

\begin{prop}
	The following is true for all positive integers:
	\begin{align}
		&\pow(a_1,\dots,a_k,\pow(b_1,\dots,b_l))=\pow(a_1,\dots,a_k,b_1,\dots,b_l)\,,\label{eq:wl1}\displaybreak[0]\\
		&\pow(a_1,\dots,a_k,\pow(c_0,c_1,\dots,c_l),b_1,\dots,b_l)\leq\nonumber\\
			&\hspace{5cm}\leq \pow(a_1,\dots,a_k,c_0,b_1c_1,\dots,b_lc_l)\,,\label{eq:wl2}\displaybreak[0]\\
		&\pow(a_1,\dots,a_{i-1},a_i^x,a_{i+1},\dots,a_{k-1},a_k)\leq \pow(a_1,\dots,a_{k-1},xa_k)&&\mbox{for }i<k\,,\label{eq:wl3}\displaybreak[0]\\
		&\pow(a_1,\dots,a_{k-1},a_k)+1\leq \pow(a_1,\dots,a_{k-1},a_k+1)\,,\label{eq:wl4}\displaybreak[0]\\
		&\pow(a_1,\dots,a_k)\cdot\pow(b_1,\dots,b_k)\leq \pow(a_1b_1,\dots,a_kb_k)\,.\label{eq:wl5}
	\end{align}
\end{prop}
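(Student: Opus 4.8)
The plan is to prove all five inequalities by induction, using throughout the identity
\begin{align}
	1+\pow(m_1,m_2,\dots,m_k)=(1+m_1)^{\pow(m_2,\dots,m_k)}\,,\label{eq:plan-pow-id}
\end{align}
which is immediate from Definition~\ref{def:exp}, together with the obvious facts that $\pow(m_1,\dots,m_k)\geq 1$ and that $\pow$ is non-decreasing in every argument (the latter shown by an easy induction, using that the base $1+m_1$ in~\eqref{eq:plan-pow-id} is at least $2$); I also use $\pow(m)=m$. With these in hand, Equality~\eqref{eq:wl1} follows by induction on $k$: for $k=0$ it becomes $\pow(\pow(b_1,\dots,b_l))=\pow(b_1,\dots,b_l)$, true since $\pow(m)=m$, and in the inductive step both sides have the shape $(1+a_1)^{(\cdot)}-1$ with exponent the $(k-1)$-ary instance, to which the induction hypothesis applies.

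The only genuinely delicate inequality is~\eqref{eq:wl5}; the other two ``$\leq$'' statements reduce to it. I would first isolate the two-variable lemma
\begin{align}
	\bigl((1+a)^A-1\bigr)\bigl((1+b)^B-1\bigr)\leq(1+ab)^{AB}-1\qquad\text{for all }a,b,A,B\geq 1\,.\label{eq:plan-2var}
\end{align}
When $B=1$ (and symmetrically $A=1$), \eqref{eq:plan-2var} reads $b\bigl((1+a)^A-1\bigr)\leq(1+ab)^A-1$, which holds term by term after expanding both sides by the binomial theorem and using $b^i\geq b$ for $i\geq 1$. When $A,B\geq 2$, one has $(1+a)^A(1+b)^B\leq(1+ab)^A(1+ab)^B=(1+ab)^{A+B}\leq(1+ab)^{AB}$, since $A+B\leq AB$; and since $(1+a)^A\geq 4$ and $(1+b)^B\geq 4$ the elementary bound $(u-1)(v-1)\leq uv-1$ applies, so combining the two gives~\eqref{eq:plan-2var}. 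Equality~\eqref{eq:wl5} is then obtained by induction on $k$: writing $A=\pow(a_2,\dots,a_k)$ and $B=\pow(b_2,\dots,b_k)$ (both equal to $1$ when $k=1$), the induction hypothesis yields $\pow(a_2b_2,\dots,a_kb_k)\geq AB$, hence by~\eqref{eq:plan-pow-id} and monotonicity $\pow(a_1b_1,\dots,a_kb_k)\geq(1+a_1b_1)^{AB}-1$; since the left-hand side of~\eqref{eq:wl5} equals $\bigl((1+a_1)^A-1\bigr)\bigl((1+b_1)^B-1\bigr)$ by~\eqref{eq:plan-pow-id}, the lemma~\eqref{eq:plan-2var} closes the gap.

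Given~\eqref{eq:wl5}, the remaining inequalities are mechanical. For~\eqref{eq:wl2} I would peel off the common prefix $a_1,\dots,a_k$ by induction (both sides being $(1+a_1)^{(\cdot)}-1$), reducing to $k=0$; there, by~\eqref{eq:plan-pow-id} the left-hand side equals $(1+c_0)^{\pow(c_1,\dots,c_l)\,\pow(b_1,\dots,b_l)}-1$ and the right-hand side equals $(1+c_0)^{\pow(b_1c_1,\dots,b_lc_l)}-1$, so the claim is precisely~\eqref{eq:wl5} placed in the exponent. For~\eqref{eq:wl3}, peeling off $a_1,\dots,a_{i-1}$ reduces to the case $i=1$; there, $1+a_1^x\leq(1+a_1)^x$ (binomial theorem) bounds the left-hand side by $(1+a_1)^{x\,\pow(a_2,\dots,a_k)}-1$, after which it suffices to establish the auxiliary inequality $x\cdot\pow(c_1,\dots,c_m)\leq\pow(c_1,\dots,c_{m-1},xc_m)$ by a separate induction on $m$, whose step uses $Y^x\geq xY$ for $Y\geq 2$. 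Finally~\eqref{eq:wl4} is a direct induction on $k$: for $k\geq 2$ the hypothesis gives $\pow(a_2,\dots,a_{k-1},a_k+1)\geq\pow(a_2,\dots,a_k)+1$, hence $\pow(a_1,\dots,a_{k-1},a_k+1)=(1+a_1)^{\pow(a_2,\dots,a_{k-1},a_k+1)}-1\geq(1+a_1)\cdot(1+a_1)^{\pow(a_2,\dots,a_k)}-1\geq 2\bigl(1+\pow(a_1,\dots,a_k)\bigr)-1=2\,\pow(a_1,\dots,a_k)+1\geq\pow(a_1,\dots,a_k)+1$. The main obstacle in all of this is the two-variable lemma~\eqref{eq:plan-2var}, and through it Equality~\eqref{eq:wl5}; once these are available, systematically peeling off leading arguments and invoking~\eqref{eq:plan-pow-id} makes every other step routine.
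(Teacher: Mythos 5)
Your proposal is correct and follows essentially the same route as the paper: Equality~\eqref{eq:wl1} by induction, Inequality~\eqref{eq:wl5} via the two-variable product lemma with a case split on the exponents, Inequality~\eqref{eq:wl2} reduced to~\eqref{eq:wl5} in the exponent, Inequality~\eqref{eq:wl3} via the auxiliary bound $x\cdot\pow(c_1,\dots,c_m)\leq\pow(c_1,\dots,c_{m-1},xc_m)$, and Inequality~\eqref{eq:wl4} by a direct induction. The only local difference is the sub-case $B=1$ of the product lemma, where your termwise binomial comparison (using $b^i\geq b$ for $i\geq 1$) replaces the paper's ad hoc computation with $\bigl(\tfrac{b_1+1}{2}\bigr)^x$ and is arguably cleaner.
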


\proof
	Equality~\eqref{eq:wl1} can be shown by induction on $k$.
	For $k=0$ we have that
	\begin{align*}
		\pow(\pow(b_1,\dots,b_l))&=(1+\pow(b_1,\dots,b_l))^{\pow()}-1\\
			&=(1+\pow(b_1,\dots,b_l))^1-1=\pow(b_1,\dots,b_l)\,,
	\end{align*}
	and for $k>0$ we directly use the induction assumption:
	\begin{align*}
		\pow(a_1,\dots,a_k,\pow(b_1,\dots,b_l))&=(1+a_1)^{\pow(a_2,\dots,a_k,\pow(b_1,\dots,b_l))}-1\\
			&=(1+a_1)^{\pow(a_2,\dots,a_k,b_1,\dots,b_l)}-1\\
			&=\pow(a_1,\dots,a_k,b_1,\dots,b_l)\,.
	\end{align*}
	
	For Inequality~\eqref{eq:wl2} suppose first that $k=0$.
	By Inequality~\eqref{eq:wl5}, which we prove below, it follows that
	\begin{align*}
		\pow(\pow(c_0,c_1,\dots,c_l),b_1,\dots,b_l)&=(1+\pow(c_0,c_1,\dots,c_l))^{\pow(b_1,\dots,b_l)}-1\\
			&=(1+(1+c_0)^{\pow(c_1,\dots,c_l)}-1)^{\pow(b_1,\dots,b_l)}-1\\
			&=(1+c_0)^{\pow(b_1,\dots,b_l)\cdot\pow(c_1,\dots,c_l)}-1\\
			&\leq(1+c_0)^{\pow(b_1c_1,\dots,b_lc_l)}-1
			=\pow(c_0,b_1c_1,\dots,b_lc_l)\,.
	\end{align*}
	It is easy to see that $\pow$ is monotone, thus the general form of Inequality~\eqref{eq:wl2} follows from the above special form thanks to Equality~\eqref{eq:wl1}:
	\begin{align*}
		&\pow(a_1,\dots,a_k,\pow(c_0,c_1,\dots,c_l),b_1,\dots,b_l)\\
			&\hspace{4em}=\pow(a_1,\dots,a_k,\pow(\pow(c_0,c_1,\dots,c_l),b_1,\dots,b_l))\\
			&\hspace{4em}\leq\pow(a_1,\dots,a_k,\pow(c_0,b_1c_1,\dots,b_lc_l))=\pow(a_1,\dots,a_k,c_0,b_1c_1,\dots,b_lc_l)\,.
	\end{align*}
	
	Heading toward proving Inequality~\eqref{eq:wl3}, we first show that
	\begin{align}
		x\cdot\pow(a_{i+1},\dots,a_k)\leq\pow(a_{i+1},\dots,a_{k-1},xa_k)\,,\label{eq:aux3}
	\end{align}
	where $i<k$, and the numbers $x,a_{i+1},\dots,a_k$ are positive integers.
	This is shown by induction on $k-i$.
	When $k-i=1$, we simply have that
	\begin{align*}
		x\cdot\pow(a_k)=x\cdot((1+a_k)^1-1)=(1+xa_k)^1-1=\pow(xa_k)\,.
	\end{align*}
	Suppose that $k-i>1$.
	Notice that $xb\leq b^x$ for all $x\in\Nat$ and $b\geq 2$.
	Thus,
	\begin{align*}
		x\cdot\pow(a_{i+1},\dots,a_k)&=x\cdot((1+a_{i+1})^{\pow(a_{i+2},\dots,a_k)}-1)\\
			&\leq x\cdot(1+a_{i+1})^{\pow(a_{i+2},\dots,a_k)}-1
			\leq (1+a_{i+1})^{x\cdot \pow(a_{i+2},\dots,a_k)}-1\\
			&\leq (1+a_{i+1})^{\pow(a_{i+2},\dots,a_{k-1},xa_k)}-1
			=\pow(a_{i+1},\dots,a_{k-1},xa_k)\,.
	\end{align*}
	Above, the first inequality holds because $x\geq 1$; 
	the second inequality follows from the inequality $xb\leq b^x$, where we notice that $(1+a_{i+1})^{\pow(a_{i+2},\dots,a_k)}\geq 2$ (because $a_{i+1}\geq 1$ and $\pow(\dots)\geq 1$);
	the third inequality is an application of the induction assumption.
	From Inequality~\eqref{eq:aux3} it follows that
	\begin{align*}
		\pow(a_i^x,a_{i+1},\dots,a_k)&=(1+a_i^x)^{\pow(a_{i+1},\dots,a_k)}-1
			\leq(1+a_i)^{x\cdot\pow(a_{i+1},\dots,a_k)}-1\\
			&\leq(1+a_i)^{\pow(a_{i+1},\dots,a_{k-1},xa_k)}-1=\pow(a_{i+1},\dots,a_{k-1},xa_k)\,.
	\end{align*}
	This gives the thesis, by Equality~\eqref{eq:wl1} and by monotonicity of $\pow$:
	\begin{align*}
		&\pow(a_1,\dots,a_{i-1},a_i^x,a_{i+1},\dots,a_{k-1},a_k)=\pow(a_1,\dots,a_{i-1},\pow(a_i^x,a_{i+1},\dots,a_{k-1},a_k))\\
		&\hspace{5em}\leq\pow(a_1,\dots,a_{i-1},\pow(a_i,\dots,a_{k-1},xa_k))
		=\pow(a_1,\dots,a_{k-1},xa_k)\,.
	\end{align*}
	
	Inequality~\eqref{eq:wl4} is shown by induction on $k$.
	For $k=1$ we simply have
	\begin{align*}
		\pow(a_1)+1=(1+a_1)^1-1+1=(1+a_1+1)^1-1=\pow(a_1+1)\,.
	\end{align*}
	For $k>1$ we use the induction assumption as follows (the first inequality below holds because $a_1\geq 1$):
	\begin{align*}
		\pow(a_1,\dots,a_{k-1},a_k)+1&=(1+a_1)^{\pow(a_2,\dots,a_k)}-1+1
		\leq(1+a_1)^{\pow(a_2,\dots,a_k)+1}-1\\
		&\leq(1+a_1)^{\pow(a_2,\dots,a_{k-1},a_k+1)}-1
		=\pow(a_1,\dots,a_{k-1},a_k+1)\,.
	\end{align*}
	
	Inequality~\eqref{eq:wl5} is also shown by induction on $k$.
	For $k=0$ the thesis is trivial:
	\begin{align*}
		\pow()\cdot\pow()=1\cdot 1=1=\pow()\,.
	\end{align*}
	Suppose now that $k\geq 1$, and denote $x=\pow(a_2,\dots,a_k)$ and $y=\pow(b_2,\dots,b_k)$.
	We claim that
	\begin{align}
		((1+a_1)^x-1)((1+b_1)^y-1)\leq(1+a_1b_1)^{xy}-1\,.\label{eq:aux5}
	\end{align}
	Let us prove this inequality.
	By symmetry, we can assume that $x\geq y$.
	We have three cases.
	If $x=y=1$, Inequality~\eqref{eq:aux5} simply says that
	\begin{align*}
		((1+a_1)^1-1)((1+b_1)^1-1)=a_1b_1\leq(1+a_1b_1)^{1\cdot 1}-1\,.
	\end{align*}
	Next, suppose that $x\geq 2$ and $y=1$.
	We see that
	\begin{align*}
		0&\leq(b_1-1)^2\,,\displaybreak[0]\\
		0&\leq b_1^2-2b_1+1\,,\displaybreak[0]\\
		4b_1&\leq b_1^2+2b_1+1\,,\displaybreak[0]\\
		4b_1&\leq (b_1+1)^2\,,\displaybreak[0]\\
		b_1&\leq \left(\frac{b_1+1}{2}\right)^2\,.
	\end{align*}
	Because $x\geq 2$ and $b_1\geq 1$, it follows that
	\begin{align}
		b_1\leq \left(\frac{b_1+1}{2}\right)^x\,.\label{eq:aux5a}
	\end{align}
	Next, observe that
	\begin{align}
		0&\leq(a_1-1)(b_1-1)\,,\displaybreak[0]\nonumber\\
		0&\leq a_1b_1-a_1-b_1+1\,,\displaybreak[0]\nonumber\\
		a_1b_1+a_1+b_1+1&\leq 2a_1b_1+2\,,\displaybreak[0]\nonumber\\
		(1+a_1)(b_1+1)&\leq 2+2a_1b_1\,,\displaybreak[0]\nonumber\\
		(1+a_1)\left(\frac{b_1+1}{2}\right)&\leq 1+a_1b_1\,.\label{eq:aux5b}
	\end{align}
	Using Inequalities~\eqref{eq:aux5a} and~\eqref{eq:aux5b} we obtain Inequality~\eqref{eq:aux5}:
	\begin{align*}
		((1+a_1)^x-1)((1+b_1)^1-1)&=(1+a_1)^x\cdot b_1-b_1
			\leq (1+a_1)^x\cdot b_1-1\\
			&\leq(1+a_1)^x\left(\frac{b_1+1}{2}\right)^x-1\leq(1+a_1b_1)^{x\cdot 1}-1\,.
	\end{align*}
	The remaining case is when $x\geq y\geq 2$.
	In this case we have that 
	\begin{align*}
		((1+a_1)^x-1)((1+b_1)^y-1)&\leq(1+a_1)^x(1+b_1)^y-1
			\leq(1+a_1)^x(1+b_1)^x-1\\
			&\leq(1+a_1b_1)^x(1+a_1b_1)^x-1
			=(1+a_1b_1)^{x\cdot 2}-1\\
			&\leq(1+a_1b_1)^{x\cdot y}-1\,.
	\end{align*}
	Thus, we have shown Inequality~\eqref{eq:aux5} in all cases.
	Using this inequality and the induction assumption, we can conclude that
	\begin{align*}
		\pow(a_1,\dots,a_k)\cdot\pow(b_1,\dots,b_k)&=((1+a_1)^x-1)((1+b_1)^y-1)
			\leq(1+a_1b_1)^{x\cdot y}-1\\
			&\leq(1+a_1b_1)^{\pow(a_2b_2,\dots,a_kb_k)}-1
			=\pow(a_1b_1,\dots,a_kb_k)\,.\tag*{\qed}
	\end{align*}

Heading toward the proof of Lemma~\ref{lem:low-high-len}, we now observe some auxiliary properties.

\begin{prop}\lab{prop:pr-positive}
	Let $\sbf$ be a well-formed annotated stack.
	If $\type(\sbf)\subseteq\calT_\np$ then $\low(\sbf)=0$ and $\high(\sbf)=1$; otherwise $\low(\sbf)\geq 1$ and $\high(\sbf)\geq 2$.
\end{prop}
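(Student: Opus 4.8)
The plan is to proceed by induction on the structure of the well-formed annotated stack $\sbf$, following exactly the three cases of Definition~\ref{def:low-high}. The key observation that makes the induction go through is that both $\low$ and $\high$ are ``monotone in productivity'': they only grow when productive run descriptors are present, and they stay at their minimal values ($0$ and $1$ respectively) precisely when no productive run descriptor occurs anywhere in the relevant annotations.

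First I would handle the base cases. If $\sbf=(\gamma,\mathfrak{D})$ is an annotated $0$-stack, then $\low(\sbf)=|\type(\sbf)\cap\calT_\pr|$, which is $0$ exactly when $\type(\sbf)\subseteq\calT_\np$ and at least $1$ otherwise; and $\high(\sbf)=\prod_{D\in\mathfrak D\mid\rd(D)\in\calT_\pr}C_{\depth(D)}$, which is the empty product $1$ when no $\rd(D)$ is productive, and otherwise is a product of factors each $\geq C_0=2$, hence $\geq 2$. For the empty stack $\sbf=[\,]$ we have $\type([\,])=\emptyset\subseteq\calT_\np$ and indeed $\low([\,])=0$, $\high([\,])=1$, so the claim holds vacuously.

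For the inductive step, let $\sbf=\sbf^k:\sbf^{k-1}$ with both $\sbf^k$ and $\sbf^{k-1}$ well-formed. Recall from Proposition~\ref{prop:well-formed-multi} (specifically the well-formedness of $\sbf$) that $\type(\sbf^k)=\bigcup\{\pi_2(\ass^k(\sigma))\mid\sigma\in\type(\sbf^{k-1})\}$. The first thing to check is that $\type(\sbf)\subseteq\calT_\np$ if and only if $\type(\sbf^{k-1})\subseteq\calT_\np$ \emph{and} $\type(\sbf^k)\subseteq\calT_\np$ (equivalently, $\type(\sbf^{k-1}\restriction_{\{\sigma\}})\subseteq\calT_\np$ and $\type(\sbf^k\restriction_{\pi_2(\ass^k(\sigma))})\subseteq\calT_\np$ for every $\sigma\in\type(\sbf^{k-1})$). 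Indeed, $\type(\sbf)=\{\red^k(\sigma)\mid\sigma\in\type(\sbf^{k-1})\}$, and by the definition of $\red^k$ a reduced descriptor $\red^k(\sigma)$ is nonproductive iff $\sigma$ itself is nonproductive and every assumption in $\ass^i(\sigma)$ for $i\in\prz{k}{?}$ — here, for the one-step reduction, the assumptions contained in $\ass^k(\sigma)$, i.e.\ exactly the members of $\type(\sbf^k)$ coming from $\sigma$ — is nonproductive. Now from the recursive formulas: $\low(\sbf)=\sum_{\sigma}(\low(\sbf^k\restriction_{\pi_2(\ass^k(\sigma))})+\low(\sbf^{k-1}\restriction_{\{\sigma\}}))$ is a sum of nonnegative integers by the induction hypothesis, so it equals $0$ iff every summand is $0$ iff (by induction) every $\type(\sbf^k\restriction_{\pi_2(\ass^k(\sigma))})$ and every $\type(\sbf^{k-1}\restriction_{\{\sigma\}})$ is contained in $\calT_\np$, which by the equivalence above is exactly $\type(\sbf)\subseteq\calT_\np$; and if some summand is $\geq 1$ then $\low(\sbf)\geq 1$. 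For $\high$: $\high(\sbf)=\prod_{\sigma}\pow(\high(\sbf^k\restriction_{\pi_2(\ass^k(\sigma))}),\high(\sbf^{k-1}\restriction_{\{\sigma\}}))$; by induction each argument of $\pow$ is a positive integer, and $\pow(a,b)=(1+a)^b-1\geq 1$ always, with $\pow(a,b)=1$ iff $a=b=1$ (since $(1+a)^b-1=1$ forces $1+a=2$, $b=1$). Hence $\high(\sbf)=1$ iff every factor is $1$ iff $\high(\sbf^k\restriction_{\pi_2(\ass^k(\sigma))})=\high(\sbf^{k-1}\restriction_{\{\sigma\}})=1$ for all $\sigma$, which by induction means all those types are nonproductive, again equivalent to $\type(\sbf)\subseteq\calT_\np$; otherwise some factor is $\geq\pow(a,b)$ with $a\geq 2$ or $b\geq 2$, hence $\geq 2$, so $\high(\sbf)\geq 2$.

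The only mildly delicate point — and the step I expect to require the most care — is the equivalence ``$\type(\sbf)\subseteq\calT_\np$ iff $\type(\sbf^{k-1})\subseteq\calT_\np$ and $\type(\sbf^k)\subseteq\calT_\np$'', because it depends on unwinding the definition of $\red^k$ together with the well-formedness identity $\type(\sbf^k)=\bigcup\{\pi_2(\ass^k(\sigma))\mid\sigma\in\type(\sbf^{k-1})\}$, and on verifying that the restrictions $\sbf^k\restriction_{\pi_2(\ass^k(\sigma))}$ exactly carve out the portions of $\type(\sbf^k)$ relevant to the productivity of $\red^k(\sigma)$. Once that bookkeeping is done, the arithmetic for $\low$ (sums of nonnegatives) and $\high$ (products of $\pow$-values, using $\pow(a,b)\geq 1$ with equality iff $a=b=1$) is routine.
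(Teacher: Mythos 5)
Your proposal is correct and follows essentially the same route as the paper: structural induction on $\sbf$, the base cases read off directly from Definition~\ref{def:low-high}, and the inductive step resting on the equivalence $\type(\sbf)\subseteq\calT_\np\Leftrightarrow\forall_{\sigma\in\type(\sbf^{k-1})}(\sigma\in\calT_\np\land\pi_2(\ass^k(\sigma))\subseteq\calT_\np)$ (which, via well-formedness, is your formulation in terms of $\type(\sbf^{k-1})$ and $\type(\sbf^k)$), followed by the same arithmetic for sums of nonnegative integers and products of $\pow$-values with $\pow(a,b)=1$ iff $a=b=1$. The one point you flagged as delicate (unwinding $\red^k$ for the one-step reduction, where only $\ass^k(\sigma)$ matters) is resolved exactly as you guessed.
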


\proof
	By induction on the structure of $\sbf$. In the base cases of a $0$-stack and of an empty $k$-stack, the thesis follows directly from Definition~\ref{def:low-high}.
	In the induction step denote $\sbf=\sbf^k:\sbf^{k-1}$.
	Recall that $\type(\sbf)=\{\red^k(\sigma)\mid\sigma\in\type(\sbf^{k-1})\}$ (by the definition of types), so
	\begin{align*}
		\type(\sbf)\subseteq\calT_\np\Leftrightarrow \forall_{\sigma\in\type(\sbf^{k-1})}\red^k(\sigma)\in\calT_\np\,.
	\end{align*}
	Moreover,
	\begin{align*}
		\red^k(\sigma)\in\calT_\np\Leftrightarrow(\sigma\in\calT_\np\land\pi_2(\ass^k(\sigma))\subseteq\calT_\np)
	\end{align*}
	for $\sigma\in\calT^{k-1}$ (by the definition of $\red^k$).
	It follows that
	\begin{align*}
		\type(\sbf)\subseteq\calT_\np\Leftrightarrow \forall_{\sigma\in\type(\sbf^{k-1})}(\sigma\in\calT_\np\land\pi_2(\ass^k(\sigma))\subseteq\calT_\np)\,.
	\end{align*}
	
	If $\type(\sbf)\subseteq\calT_\np$ then, by the induction assumption, $\low(\sbf^k{\restriction}_{\pi_2(\ass^k(\sigma))})=\low(\sbf^{k-1}{\restriction}_{\{\sigma\}})=0$ 
	and $\high(\sbf^k{\restriction}_{\pi_2(\ass^k(\sigma))})=\high(\sbf^{k-1}{\restriction}_{\{\sigma\}})=1$ for all $\sigma\in\type(\sbf^{k-1})$;
	in effect
	\begin{align*}
		\low(\sbf)&=\sum_{\sigma\in\type(\sbf^{k-1})}\big(\low(\sbf^k{\restriction}_{\pi_2(\ass^k(\sigma))})+\low(\sbf^{k-1}{\restriction}_{\{\sigma\}})\big)
			=\sum_{\sigma\in\type(\sbf^{k-1})}(0+0)=0\,,&\mbox{and}\displaybreak[0]\\
		\high(\sbf)&=\prod_{\sigma\in\type(\sbf^{k-1})}\pow\big(\high(\sbf^k{\restriction}_{\pi_2(\ass^k(\sigma))}),\high(\sbf^{k-1}{\restriction}_{\{\sigma\}})\big)\,.\\
			&=\prod_{\sigma\in\type(\sbf^{k-1})}\pow(1,1)
			=\prod_{\sigma\in\type(\sbf^{k-1})}1=1\,.
	\end{align*}

	Conversely, if $\type(\sbf)\not\subseteq\calT_\np$ then, by the induction assumption, at least one among $\low(\sbf^k{\restriction}_{\pi_2(\ass^k(\sigma))})$ 
	and $\low(\sbf^{k-1}{\restriction}_{\{\sigma\}})$ for $\sigma\in\type(\sbf^{k-1})$ is positive (and all other are nonnegative);
	in effect $\low(\sbf)$, being their sum, is positive.
	Similarly, at least one among $\high(\sbf^k{\restriction}_{\pi_2(\ass^k(\sigma))})$ and $\high(\sbf^{k-1}{\restriction}_{\{\sigma\}})$ is greater than $1$ (and all other are positive);
	in effect some $\pow\big(\high(\sbf^k{\restriction}_{\pi_2(\ass^k(\sigma))}),\high(\sbf^{k-1}{\restriction}_{\{\sigma\}})\big)$ is greater than $1$
	(notice that $\pow(2,1)=(1+2)^1-1=2$, and $\pow(1,2)=(1+1)^2-1=3$, and that $\pow$ is monotone), 
	and thus their product $\high(\sbf)$ is greater than $1$.
\qed

\begin{prop}\label{prop:expand-to-singular}
	For every well-formed annotated stack $\sbf$,
	\begin{gather*}
		\low(\sbf)=\sum_{\sigma\in\type(\sbf)}\low(\sbf{\restriction}_{\{\sigma\}})\,,\qquad
		\high(\sbf)=\prod_{\sigma\in\type(\sbf)}\high(\sbf{\restriction}_{\{\sigma\}})\,,\qquad\mbox{and}\displaybreak[0]\\
		\len(\sbf)=\prod_{\sigma\in\type(\sbf)}\len(\sbf{\restriction}_{\{\sigma\}})\,.
	\end{gather*}
\end{prop}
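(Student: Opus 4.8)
The plan is to prove all three equalities simultaneously by induction on the structure of $\sbf$, relying on two simple facts about well-formed annotated stacks. The first is that when $\sbf=\sbf^k:\sbf^{k-1}$ is well-formed, the defining requirement $|\type(\sbf)|=|\type(\sbf^{k-1})|$ makes the map $\sigma\mapsto\red^k(\sigma)$ a bijection from $\type(\sbf^{k-1})$ onto $\type(\sbf)=\{\red^k(\sigma)\mid\sigma\in\type(\sbf^{k-1})\}$; I will write $\sigma_\tau$ for the preimage of $\tau\in\type(\sbf)$. The second, which I would establish by a quick side induction, is that restricting a well-formed annotated stack to its own type returns the stack unchanged, and hence $(\sbf{\restriction}_{\widetilde\Psi}){\restriction}_{\widetilde\Psi}=\sbf{\restriction}_{\widetilde\Psi}$ whenever $\widetilde\Psi\subseteq\type(\sbf)$; the only input needed here is the already-recorded property $\type(\sbf{\restriction}_{\widetilde\Psi})=\widetilde\Psi$.

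First I would dispatch the base cases. For $\sbf=[\,]$ we have $\type(\sbf)=\emptyset$, so the right-hand sides are the empty sum $0=\low([\,])$ and the empty products $1=\high([\,])=\len([\,])$. For $\sbf=(\gamma,\mathfrak{D})$ the map $D\mapsto\rd(D)$ is a bijection of $\mathfrak{D}$ with $\type(\sbf)$ (derivation trees in $\mathfrak{D}$ carry pairwise distinct run descriptors), and $\sbf{\restriction}_{\{\sigma\}}$ is $(\gamma,\{D_\sigma\})$ for the unique $D_\sigma$ with $\rd(D_\sigma)=\sigma$; the three identities then fall out by rewriting $|\type(\sbf)\cap\calT_\pr|$ as $\sum_{\sigma}|\{\sigma\}\cap\calT_\pr|$ and the two products over $D\in\mathfrak{D}$ (in $\high$ and in $\len$) as products over $\sigma\in\type(\sbf)$.

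The heart of the argument is the inductive step $\sbf=\sbf^k:\sbf^{k-1}$. Here I would fix $\tau\in\type(\sbf)$, set $\sigma=\sigma_\tau$, and unwind the definition of the restriction to see that $\sbf{\restriction}_{\{\tau\}}=(\sbf^k{\restriction}_{\pi_2(\ass^k(\sigma))}):(\sbf^{k-1}{\restriction}_{\{\sigma\}})$, whose topmost $(k-1)$-stack $\sbf^{k-1}{\restriction}_{\{\sigma\}}$ has type $\{\sigma\}$. Feeding this decomposition into the clause of Definition~\ref{def:low-high} for a stack of the form $\sbf^k:\sbf^{k-1}$, the defining sum (resp.\ product) ranges over the singleton $\{\sigma\}$ and, after applying idempotence of restriction, collapses to $\low(\sbf{\restriction}_{\{\tau\}})=\low(\sbf^k{\restriction}_{\pi_2(\ass^k(\sigma))})+\low(\sbf^{k-1}{\restriction}_{\{\sigma\}})$, and likewise $\high(\sbf{\restriction}_{\{\tau\}})=\pow(\high(\sbf^k{\restriction}_{\pi_2(\ass^k(\sigma))}),\high(\sbf^{k-1}{\restriction}_{\{\sigma\}}))$ and the same for $\len$. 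Summing (resp.\ multiplying) these over $\tau\in\type(\sbf)$ and reindexing via the bijection $\tau\mapsto\sigma_\tau$ reproduces verbatim the defining expressions for $\low(\sbf)$, $\high(\sbf)$, $\len(\sbf)$ from Definition~\ref{def:low-high}, finishing the induction. Note that this step does not actually invoke the inductive hypothesis for smaller stacks — it is a pure reindexing — so strictly speaking the only recursion is in the side lemma on restrictions.

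The main obstacle, such as it is, is purely notational: one must keep careful track of the nested restrictions, verify that $\sbf{\restriction}_{\{\tau\}}$ really decomposes as the restricted $\sbf^k$ over the restricted $\sbf^{k-1}$, and confirm that re-restricting an already-restricted substack by the same set does nothing (which is exactly where $\type(\sbf{\restriction}_{\widetilde\Psi})=\widetilde\Psi$ is used). There is no conceptual difficulty — the proposition is essentially the statement that the defining sums and products of $\low$, $\high$, $\len$ can be regrouped run-descriptor by run-descriptor over the singular restrictions of $\sbf$.
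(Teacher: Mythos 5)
Your proof is correct and follows essentially the same route as the paper's: both arguments proceed by a case analysis on the structure of $\sbf$, use the bijection $\sigma\mapsto\red^k(\sigma)$ between $\type(\sbf^{k-1})$ and $\type(\sbf)$ guaranteed by well-formedness, decompose $\sbf{\restriction}_{\{\tau\}}$ as the correspondingly restricted substacks, and reindex the defining sums and products of Definition~\ref{def:low-high}. Your explicit remark that re-restricting by the same set is the identity (via $\type(\sbf{\restriction}_{\widetilde\Psi})=\widetilde\Psi$) makes precise a step the paper leaves implicit, but the substance is identical.
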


\proof
	We analyze Definition~\ref{def:low-high}.
	Suppose first that $\sbf=(\gamma,\mathfrak{D})$ (i.e., that $\sbf$ is of order $0$).
	Because $\type(\sbf{\restriction}_{\{\sigma\}})=\{\sigma\}$, 
	\begin{align*}
		\low(\sbf)&=|\type(\sbf)\cap\calT_\pr|
			=\sum_{\sigma\in\type(\sbf)}|\{\sigma\}\cap\calT_\pr|\\
			&=\sum_{\sigma\in\type(\sbf)}|\type(\sbf{\restriction}_{\{\sigma\}})\cap\calT_\pr|
			=\sum_{\sigma\in\type(\sbf)}\low(\sbf{\restriction}_{\{\sigma\}})\,.
	\end{align*}
	Recall that $\type(\sbf)=\{\rd(D)\mid D\in\mathfrak{D}\}$, and that $\sbf{\restriction}_{\{\sigma\}}=(\gamma,\{D\in\mathfrak{D}\mid\rd(D)=\sigma\})$; thus,
	\begin{align*}
		\high(\sbf)&=\prod_{D\in\mathfrak{D}\mid\rd(D)\in\calT_\pr}C_{\depth(D)}\\
			&=\prod_{\sigma\in\type(\sbf)}\left(\prod_{D\in\mathfrak{D}\mid\rd(D)=\sigma\in\calT_\pr}C_{\depth(D)}\right)
			=\prod_{\sigma\in\type(\sbf)}\high(\sbf{\restriction}_{\{\sigma\}})\,,\qquad\mbox{and}\displaybreak[0]\\
		\len(\sbf)&=\prod_{D\in\mathfrak{D}}C_{\depth(D)}
			=\prod_{\sigma\in\type(\sbf)}\left(\prod_{D\in\mathfrak{D}\mid\rd(D)=\sigma}C_{\depth(D)}\right)
			=\prod_{\sigma\in\type(\sbf)}\len(\sbf{\restriction}_{\{\sigma\}})\,.
	\end{align*}

	If $\sbf=[\,]$, then $\type(\sbf)=\emptyset$, and thus
	\begin{gather*}
		\low(\sbf)=0=\sum_{\sigma\in\emptyset}\low(\sbf{\restriction}_{\{\sigma\}})\,,\qquad
		\high(\sbf)=1=\prod_{\sigma\in\emptyset}\high(\sbf{\restriction}_{\{\sigma\}})\,,\qquad\mbox{and}\displaybreak[0]\\
		\len(\sbf)=1=\prod_{\sigma\in\emptyset}\len(\sbf{\restriction}_{\{\sigma\}})\,.
	\end{gather*}
	
	Finally, suppose that $\sbf=\sbf^k:\sbf^{k-1}$.
	Recall that $\type(\sbf)=\{\red^k(\tau)\mid\tau\in\type(\sbf^{k-1})\}$.
	Moreover, by well-formedness of $\sbf$, for every $\sigma\in\type(\sbf)$ there is exactly one $\tau\in\type(\sbf^{k-1})$ such that $\red^k(\tau)=\sigma$;
	denote it $\tau_\sigma$.
	By the definition of a restriction, we have that $\sbf{\restriction}_{\{\sigma\}}=\sbf^k{\restriction}_{\pi_2(\ass^k(\tau_\sigma))}:\sbf^{k-1}{\restriction}_{\{\tau_\sigma\}}$.
	Recalling that $\type(\sbf{\restriction}_{\{\sigma\}})=\{\sigma\}$, we obtain
	\begin{align*}
		\low(\sbf)&=\sum_{\tau\in\type(\sbf^{k-1})}\big(\low(\sbf^k{\restriction}_{\pi_2(\ass^k(\tau))})+\low(\sbf^{k-1}{\restriction}_{\{\tau\}})\big)\\
			&=\sum_{\sigma\in\type(\sbf)}\big(\low(\sbf^k{\restriction}_{\pi_2(\ass^k(\tau_\sigma))})+\low(\sbf^{k-1}{\restriction}_{\{\tau_\sigma\}})\big)
			=\sum_{\sigma\in\type(\sbf)}\low(\sbf{\restriction}_{\{\sigma\}})\,,\displaybreak[0]\\
		\high(\sbf)&=\prod_{\tau\in\type(\sbf^{k-1})}\pow\big(\high(\sbf^k{\restriction}_{\pi_2(\ass^k(\tau))}),\high(\sbf^{k-1}{\restriction}_{\{\tau\}})\big)\\
			&=\prod_{\sigma\in\type(\sbf)}\pow\big(\high(\sbf^k{\restriction}_{\pi_2(\ass^k(\tau_\sigma))}),\high(\sbf^{k-1}{\restriction}_{\{\tau_\sigma\}})\big)
			=\prod_{\sigma\in\type(\sbf)}\high(\sbf{\restriction}_{\{\sigma\}})\,,&&\mbox{and}\displaybreak[0]\\
		\len(\sbf)&=\prod_{\tau\in\type(\sbf^{k-1})}\pow\big(\len(\sbf^k{\restriction}_{\pi_2(\ass^k(\tau))}),\len(\sbf^{k-1}{\restriction}_{\{\tau\}})\big)\\
			&=\prod_{\sigma\in\type(\sbf)}\pow\big(\len(\sbf^k{\restriction}_{\pi_2(\ass^k(\tau_\sigma))}),\len(\sbf^{k-1}{\restriction}_{\{\tau_\sigma\}})\big)
			=\prod_{\sigma\in\type(\sbf)}\len(\sbf{\restriction}_{\{\sigma\}})\,.
		\tag*{\qed}
	\end{align*}

\begin{prop}\lab{prop:low-high-singular}
	Let $0\leq l\leq k\leq n$, and let $\sbf=\sbf^k:\sbf^{k-1}:\dots:\sbf^l$ be a well-formed annotated $k$-stack that is singular.
	In this situation
	\begin{align*}
		\low(\sbf)&=\sum_{i=l}^k\low(\sbf^i)\,,\displaybreak[0]\\
		\high(\sbf)&=\pow(\high(\sbf^k),\high(\sbf^{k-1}),\dots,\high(\sbf^l))\,,\qquad\mbox{and}\displaybreak[0]\\
		\len(\sbf)&=\pow(\len(\sbf^k),\len(\sbf^{k-1}),\dots,\len(\sbf^l))\,.
	\end{align*}
\end{prop}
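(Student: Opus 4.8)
The plan is to prove all three identities simultaneously by induction on $k-l$. The base case $k=l$ is immediate: then $\sbf=\sbf^l$, so the right-hand sides are $\low(\sbf^l)$, $\pow(\high(\sbf^l))=\high(\sbf^l)$, and $\pow(\len(\sbf^l))=\len(\sbf^l)$, using that $\pow(m_1)=m_1$.

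For the inductive step I would write $\sbf=\sbf^k:\tt$ with $\tt=\sbf^{k-1}:\sbf^{k-2}:\dots:\sbf^l$, and first check that $\tt$ is again a well-formed singular annotated stack, so that the induction hypothesis applies to it. Well-formedness of $\tt$ is part of the definition of well-formedness of $\sbf=\sbf^k:\tt$ (Definition~\ref{def:well-formed}); that same definition gives $|\type(\sbf)|=|\type(\tt)|$, so from $|\type(\sbf)|=1$ we get $|\type(\tt)|=1$, i.e.\ $\tt$ is singular. Writing $\type(\tt)=\{\tau\}$, Definition~\ref{def:well-formed} also yields $\type(\sbf^k)=\bigcup\{\pi_2(\ass^k(\sigma))\mid\sigma\in\type(\tt)\}=\pi_2(\ass^k(\tau))$. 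This is exactly what makes the restrictions occurring in Definition~\ref{def:low-high} vacuous: unwinding the definition of the restriction operation, $\sbf^k{\restriction}_{\pi_2(\ass^k(\tau))}=\sbf^k{\restriction}_{\type(\sbf^k)}=\sbf^k$ and $\tt{\restriction}_{\{\tau\}}=\tt{\restriction}_{\type(\tt)}=\tt$.

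With these observations, the defining equations of $\low$, $\high$, $\len$ for $\sbf=\sbf^k:\tt$ (the ``$\sbf^k:\sbf^{k-1}$'' case of Definition~\ref{def:low-high}) have their sums and products indexed by the singleton $\type(\tt)=\{\tau\}$, so they collapse to
\begin{align*}
	\low(\sbf)=\low(\sbf^k)+\low(\tt)\,,\qquad \high(\sbf)=\pow\big(\high(\sbf^k),\high(\tt)\big)\,,\qquad \len(\sbf)=\pow\big(\len(\sbf^k),\len(\tt)\big)\,.
\end{align*}
Then I would apply the induction hypothesis to the well-formed singular $(k-1)$-stack $\tt=\sbf^{k-1}:\dots:\sbf^l$, obtaining $\low(\tt)=\sum_{i=l}^{k-1}\low(\sbf^i)$, $\high(\tt)=\pow(\high(\sbf^{k-1}),\dots,\high(\sbf^l))$, and $\len(\tt)=\pow(\len(\sbf^{k-1}),\dots,\len(\sbf^l))$. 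Substituting into the first identity finishes the $\low$ case at once; for $\high$ and $\len$ one flattens the nested $\pow$ using Equality~\eqref{eq:wl1} (observing that all the $\high$- and $\len$-values are positive integers, as that equality requires). The argument for $\len$ is verbatim the same as for $\high$.

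This is an essentially routine bookkeeping induction; the only point that needs care — hence the ``main obstacle'' — is the step in the inductive case showing that $\tt$ inherits well-formedness and singularity and, crucially, that $\type(\sbf^k)$ is precisely $\pi_2(\ass^k(\tau))$, so that the restrictions in Definition~\ref{def:low-high} act trivially. Once that is established, the conclusion follows by unwinding the definitions and a single use of Equality~\eqref{eq:wl1}.
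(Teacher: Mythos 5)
Your proposal is correct and follows essentially the same route as the paper's proof: induction on $k-l$, peeling off $\sbf^k$ from $\tt=\sbf^{k-1}:\dots:\sbf^l$, using well-formedness and singularity to see that the restrictions in Definition~\ref{def:low-high} are vacuous so the sums and products collapse to a single term, and then flattening the nested $\pow$ via Equality~\eqref{eq:wl1}. No gaps.
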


\begin{proof}
	Induction on $k-l$.
	For $k-l=0$, we simply have $\sbf=\sbf^k$; both sides of each equality are the same (recall that $\pow(x)=x$ for every $x$).
	
	Suppose that $k-l\geq 1$, and denote $\tt=\sbf^{k-1}:\sbf^{k-2}:\dots:\sbf^l$; we have that $\sbf=\sbf^k:\tt$.
	Because $\sbf$ is well-formed and $\type(\sbf)$ is a singleton, also $\type(\tt)$ is a singleton $\{\sigma\}$, where $\type(\sbf^k)=\pi_2(\ass^k(\sigma))$.
	In effect, restricting $\tt$ to $\{\sigma\}$ or $\sbf^k$ to $\pi_2(\ass^k(\sigma))$ does not change the annotated stacks, so, by definition,
	\begin{align*}
		\low(\sbf)&=\sum_{\sigma\in\type(\tt)}\big(\low(\sbf^k{\restriction}_{\pi_2(\ass^k(\sigma))})+\low(\tt{\restriction}_{\{\sigma\}})\big)
			=\low(\sbf^k)+\low(\tt)\,.
	\end{align*}
	Similarly,
	\begin{align*}
		\high(\sbf)=\pow(\high(\sbf^k),\high(\tt))&&\mbox{and}&&\len(\sbf)=\pow(\len(\sbf^k),\len(\tt))\,.
	\end{align*}
	From the induction assumption we know that
	\begin{align*}
		\low(\tt)&=\sum_{i=l}^{k-1}\low(\sbf^i)\,,\\
		\high(\tt)&=\pow(\high(\sbf^{k-1}),\high(\sbf^{k-2}),\dots,\high(\sbf^l))\,,\qquad\mbox{and}\\
		\len(\tt)&=\pow(\len(\sbf^{k-1}),\len(\sbf^{k-2}),\dots,\len(\sbf^l))\,.
	\end{align*}
	By substituting this to equalities for $\low(\sbf)$, $\high(\sbf)$, and $\len(\sbf)$ we obtain the thesis,
	where in the case of $\high$ and $\len$ we additionally use Equality~\eqref{eq:wl1}.
\end{proof}

\begin{prop}\lab{prop:restriction}
	Let $0\leq l\leq k\leq n$, let $\sbf=\sbf^k:\sbf^{k-1}:\dots:\sbf^l$ be a well-formed annotated $k$-stack, and let $\sigma\in\type(\sbf^l)$.
	In this situation
	\begin{align*}
		\sbf{\restriction}_{\{\red^k(\sigma)\}}=\sbf^k{\restriction}_{\pi_2(\ass^k(\sigma))}:\sbf^{k-1}{\restriction}_{\pi_2(\ass^{k-1}(\sigma))}:\dots:
			\sbf^{l+1}{\restriction}_{\pi_2(\ass^{l+1}(\sigma))}:\sbf^l{\restriction}_{\{\sigma\}}\,.
	\end{align*}
\end{prop}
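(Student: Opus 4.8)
The plan is to induct on $k-l$, unfolding the recursive definition of the restriction operation one order at a time. The base case $k=l$ needs no work: then $\sbf=\sbf^l$ and, since $\sigma\in\calT^l$, we have $\red^k(\sigma)=\red^l(\sigma)=\sigma$, so both sides of the claimed equality collapse to $\sbf^l{\restriction}_{\{\sigma\}}$ (on the right, the range $\prz{l+1}{k}$ of factors is empty).

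For the inductive step I would write $\tt=\sbf^{k-1}:\sbf^{k-2}:\dots:\sbf^l$, so that $\sbf=\sbf^k:\tt$; the substack $\tt$ is well-formed because $\sbf=\sbf^k:\tt$ is (Definition~\ref{def:well-formed}). Unwinding the definition of $\sbf{\restriction}_{\{\red^k(\sigma)\}}$ for the compound stack $\sbf^k:\tt$ gives $\sbf{\restriction}_{\{\red^k(\sigma)\}}=\sbf^k{\restriction}_{\bigcup_{\tau\in\widetilde\Phi}\pi_2(\ass^k(\tau))}:\tt{\restriction}_{\widetilde\Phi}$, where $\widetilde\Phi=\{\tau\in\type(\tt)\mid\red^k(\tau)=\red^k(\sigma)\}$. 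The heart of the proof --- and the step I expect to be the main obstacle --- is to identify $\widetilde\Phi$ exactly as the singleton $\{\red^{k-1}(\sigma)\}$. Here I would appeal to well-formedness: by Item~\ref{pkt:types-red} of Proposition~\ref{prop:well-formed-multi} we have $\type(\tt)=\{\red^{k-1}(\rho)\mid\rho\in\type(\sbf^l)\}$ and $\type(\sbf)=\{\red^k(\rho)\mid\rho\in\type(\sbf^l)\}$, while by Item~\ref{pkt:types-inj} (applied to the well-formed stacks $\tt$ and $\sbf$) $|\type(\tt)|=|\type(\sbf^l)|=|\type(\sbf)|$. Since $\type(\sbf^k:\tt)=\{\red^k(\tau)\mid\tau\in\type(\tt)\}$ by the definition of types, $\red^k$ maps $\type(\tt)$ onto $\type(\sbf)$; being a surjection between finite sets of equal cardinality it is a bijection, hence injective on $\type(\tt)$. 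As $\red^{k-1}(\sigma)\in\type(\tt)$ and $\red^k(\red^{k-1}(\sigma))=\red^k(\sigma)$ by Proposition~\ref{prop:assoc-red}, this forces $\widetilde\Phi=\{\red^{k-1}(\sigma)\}$ --- and, in passing, shows $\red^k(\sigma)\in\type(\sbf)$, so that the restriction on the left is legal.

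What remains is bookkeeping. Because $\red^{k-1}$ leaves assumption sets of order $\geq k$ untouched, $\ass^k(\red^{k-1}(\sigma))=\ass^k(\sigma)$, so the $\sbf^k$-factor obtained above is precisely $\sbf^k{\restriction}_{\pi_2(\ass^k(\sigma))}$. For the remaining factor I apply the induction hypothesis to $\tt=\sbf^{k-1}:\dots:\sbf^l$ (well-formed, with the same $\sigma\in\type(\sbf^l)$), which expands $\tt{\restriction}_{\{\red^{k-1}(\sigma)\}}$ into $\sbf^{k-1}{\restriction}_{\pi_2(\ass^{k-1}(\sigma))}:\dots:\sbf^{l+1}{\restriction}_{\pi_2(\ass^{l+1}(\sigma))}:\sbf^l{\restriction}_{\{\sigma\}}$; concatenating with the $\sbf^k$-factor yields the asserted identity. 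The only recurring point of care is to check, each time a restriction is written down, that its index set lies inside the type of the stack being restricted, which is immediate from well-formedness in every case.
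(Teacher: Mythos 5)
Your proof is correct and rests on the same two key facts as the paper's own argument: the injectivity of $\red^k$ on $\type(\sbf^l)$ (equivalently, on $\type(\tt)$) forced by the cardinality condition in Proposition~\ref{prop:well-formed-multi}, and the identity $\ass^i(\red^j(\sigma))=\ass^i(\sigma)$ for $i>j$. The paper argues directly --- it identifies the bottom factor of the restriction as $\sbf^l{\restriction}_{\{\sigma\}}$ and then reads off all the intermediate factors at once via Item~\ref{pkt:types-ass} --- whereas you unfold the recursive definition of the restriction by induction on $k-l$; this is an organizational difference only, and your induction simply makes explicit the level-by-level unfolding that the paper leaves implicit.
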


\begin{proof}
	Proposition~\ref{prop:well-formed-multi} used for $\sbf$ implies that 
	$\type(\sbf)=\{\red^k(\sigma)\mid\sigma\in\type(\sbf^l)\}$ and $|\type(\sbf)|=|\type(\sbf^l)|$;
	the latter means that $\red^k(\sigma)=\red^k(\sigma')$ implies $\sigma=\sigma'$ for $\sigma,\sigma'\in\type(\sbf^l)$.

	Observe that $\tops^l(\sbf{\restriction}_{\{\red^k(\sigma)\}})$ equals $\sbf^l$ restricted to a subset of $\type(\sbf^l)$.
	Proposition~\ref{prop:well-formed-multi} used for $\sbf{\restriction}_{\{\red^k(\sigma)\}}$ implies that this subset is a singleton $\{\sigma'\}$,
	and that $\red^k(\sigma')=\red^k(\sigma)$.
	This implies that $\sigma'=\sigma$, by the previous paragraph.
	Using also Item~\ref{pkt:types-ass} of Proposition~\ref{prop:well-formed-multi}, we see that 
	\begin{align*}
		\sbf{\restriction}_{\{\red^k(\sigma)\}}=\sbf^k{\restriction}_{\pi_2(\ass^k(\sigma))}:\sbf^{k-1}{\restriction}_{\pi_2(\ass^{k-1}(\sigma))}:\dots:
			\sbf^{l+1}{\restriction}_{\pi_2(\ass^{l+1}(\sigma))}:\sbf^l{\restriction}_{\{\sigma\}}\,,
	\end{align*}
	as required.
\end{proof}

Next, we observe how the functions $\low$, $\high$, and $\len$ interplay with composing annotated stacks.

\begin{lem}\lab{lem:low-high-composer}
	Let $0\leq l\leq k\leq n$, let $(\Phi^k,\Phi^{k-1},\dots,\Phi^l;\Psi^k;f)$ be a composer, 
	and let $\sbf=\sbf^k:\sbf^{k-1}:\dots:\sbf^l$ be a well-formed annotated $k$-stack such that $\type(\sbf^i)=\pi_2(\Phi^i)$ for each $i\in\prz{l}{k}$.
	In this situation
	\begin{align}
		&\sum_{i=l}^k\low(\sbf^i)\leq \low(\sbf)\,,\label{eq:lhc1}\\
		&\sum_{i=l}^k\low(\sbf^i)<\low(\sbf)&&\mbox{if }f=\pr\,,\label{eq:lhc2}\\
		&\pow\big(\high(\sbf^k),\high(\sbf^{k-1}),\dots,\high(\sbf^{l+1}),\left|\calT^0\right|^n\cdot\high(\sbf^l)\big)\geq \high(\sbf)\,,\label{eq:lhc3}\\
		&\pow\big(\high(\sbf^k),\high(\sbf^{k-1}),\dots,\high(\sbf^{l+1}),\high(\sbf^l)\big)\geq \high(\sbf)&&\mbox{if }f=\np\,,\label{eq:lhc4}\\
		&\pow\big(\len(\sbf^k),\len(\sbf^{k-1}),\dots,\len(\sbf^{l+1}),\left|\calT^0\right|^n\cdot\len(\sbf^l)\big)\geq \len(\sbf)\,.\label{eq:lhc5}
	\end{align}
\end{lem}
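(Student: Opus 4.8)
The plan is to reduce all five (in)equalities to a counting problem over $\type(\sbf^l)$, using the singular case as a stepping stone. Since $\pi_2(\Phi^i)=\type(\sbf^i)$ for all $i\in\prz{l}{k}$, Proposition~\ref{prop:composer} shows that $\sbf$ is well-formed with $\pi_2(\Psi^k)=\type(\sbf)$, and Proposition~\ref{prop:well-formed-multi} shows that $\red^k$ restricts to a bijection from $\type(\sbf^l)$ onto $\type(\sbf)$; write $\tau_\sigma$ for the preimage of $\sigma$. For each $\sigma\in\type(\sbf)$, Proposition~\ref{prop:restriction} gives $\sbf{\restriction}_{\{\sigma\}}=\sbf^k{\restriction}_{\pi_2(\ass^k(\tau_\sigma))}:\dots:\sbf^{l+1}{\restriction}_{\pi_2(\ass^{l+1}(\tau_\sigma))}:\sbf^l{\restriction}_{\{\tau_\sigma\}}$, a singular well-formed $k$-stack, so Proposition~\ref{prop:low-high-singular} expresses $\low$, $\high$, $\len$ of $\sbf{\restriction}_{\{\sigma\}}$ through the corresponding values of the restricted components $\sbf^i{\restriction}_{\pi_2(\ass^i(\tau_\sigma))}$ (for $i>l$) and of $\sbf^l{\restriction}_{\{\tau_\sigma\}}$. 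Combining this with Proposition~\ref{prop:expand-to-singular} (which splits $\low(\sbf)$ as a sum, and $\high(\sbf)$, $\len(\sbf)$ as products, over $\sigma\in\type(\sbf)$) and with a second use of Proposition~\ref{prop:expand-to-singular} to break each $\sbf^i{\restriction}_{\pi_2(\ass^i(\tau))}$ into its singular pieces $\sbf^i{\restriction}_{\{\rho\}}$ for $\rho\in\pi_2(\ass^i(\tau))$, everything becomes controlled by the multiplicities $m^i_\rho=|\{\tau\in\type(\sbf^l)\mid\rho\in\pi_2(\ass^i(\tau))\}|$, for $i\in\prz{l+1}{k}$ and $\rho\in\type(\sbf^i)$. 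Condition~\ref{pkt:compo-ass} of Definition~\ref{def:composer} (equivalently, Item~\ref{pkt:types-ass} of Proposition~\ref{prop:well-formed-multi}) guarantees $m^i_\rho\geq 1$, and $m^i_\rho\leq|\type(\sbf^l)|\leq|\calT^l|\leq|\calT^0|$ since distinct such $\tau$ are distinct elements of $\calT^l$.

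With this in place, Inequalities~\eqref{eq:lhc1} and~\eqref{eq:lhc2} are almost immediate: one obtains $\low(\sbf)=\sum_{i=l+1}^k\sum_{\rho\in\type(\sbf^i)}m^i_\rho\cdot\low(\sbf^i{\restriction}_{\{\rho\}})+\low(\sbf^l)$, which is $\geq\sum_{i=l}^k\low(\sbf^i)$ because $m^i_\rho\geq 1$. When $f=\pr$, Condition~\ref{pkt:compo-flag} provides an order $i_0\in\prz{l+1}{k}$ and a productive $\rho_0\in\type(\sbf^{i_0})$ with $m^{i_0}_{\rho_0}\geq 2$; since $\low(\sbf^{i_0}{\restriction}_{\{\rho_0\}})\geq 1$ by Proposition~\ref{prop:pr-positive}, the summand indexed by $\rho_0$ contributes at least one extra unit, giving the strict inequality~\eqref{eq:lhc2}.

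For Inequalities~\eqref{eq:lhc3} and~\eqref{eq:lhc5} I would start from $\high(\sbf)=\prod_{\tau\in\type(\sbf^l)}\pow\big(\high(\sbf^k{\restriction}_{\pi_2(\ass^k(\tau))}),\dots,\high(\sbf^{l+1}{\restriction}_{\pi_2(\ass^{l+1}(\tau))}),\high(\sbf^l{\restriction}_{\{\tau\}})\big)$ and iterate Inequality~\eqref{eq:wl5} to pull the product over $\tau$ inside $\pow$ coordinate by coordinate. The last coordinate becomes $\prod_{\tau}\high(\sbf^l{\restriction}_{\{\tau\}})=\high(\sbf^l)$, while the coordinate for an order $i>l$ becomes $\prod_{\rho\in\type(\sbf^i)}\high(\sbf^i{\restriction}_{\{\rho\}})^{m^i_\rho}\leq\high(\sbf^i)^{|\calT^0|}$ (using $m^i_\rho\leq|\calT^0|$ and $\high\geq 1$). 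Applying Inequality~\eqref{eq:wl3} once per order $i\in\prz{l+1}{k}$ moves each exponent $|\calT^0|$ onto the last coordinate, turning it into $|\calT^0|^{k-l}\cdot\high(\sbf^l)$; since $k-l\leq n$ and $\pow$ is monotone, this is at most $\pow\big(\high(\sbf^k),\dots,\high(\sbf^{l+1}),|\calT^0|^n\cdot\high(\sbf^l)\big)$, which is~\eqref{eq:lhc3}. Inequality~\eqref{eq:lhc5} is the verbatim same argument with $\len$ in place of $\high$ everywhere (no productivity to track). For Inequality~\eqref{eq:lhc4}, when $f=\np$ Condition~\ref{pkt:compo-flag} forces $m^i_\rho\leq 1$ for every productive $\rho\in\type(\sbf^i)$ and every $i\in\prz{l+1}{k}$; combined with $\high(\sbf^i{\restriction}_{\{\rho\}})=1$ for nonproductive $\rho$ (Proposition~\ref{prop:pr-positive}), this makes $\prod_{\rho}\high(\sbf^i{\restriction}_{\{\rho\}})^{m^i_\rho}=\high(\sbf^i)$, so no exponents appear and~\eqref{eq:lhc4} follows by monotonicity of $\pow$ alone.

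The main obstacle is the $\high$/$\len$ direction. One has to handle the $\pow$ arithmetic carefully: Inequality~\eqref{eq:wl5} merges only two $\pow$-terms at a time, so it must be iterated over $\type(\sbf^l)$; all arguments must be kept positive integers for \eqref{eq:wl3}, \eqref{eq:wl5}, and monotonicity of $\pow$ to apply; and the factor $|\calT^0|^n$ in the last argument arises precisely because there are at most $n$ intermediate orders $i\in\prz{l+1}{k}$, each contributing a multiplicity bounded by $|\calT^0|$. The $f=\np$ refinement is exactly the point where Condition~\ref{pkt:compo-flag} must be invoked to collapse all of these multiplicities on productive run descriptors down to $1$.
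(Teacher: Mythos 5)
Your proposal is correct and follows essentially the same route as the paper's proof: the same reduction via Propositions~\ref{prop:composer}, \ref{prop:well-formed-multi}, \ref{prop:restriction}, \ref{prop:expand-to-singular}, and~\ref{prop:low-high-singular} to sums/products over $\type(\sbf^l)$, the same use of Inequalities~\eqref{eq:wl5} and~\eqref{eq:wl3} to merge the $\pow$-terms and push the $|\calT^0|^n$ factor into the last argument, and the same invocation of Condition~\ref{pkt:compo-flag} together with Proposition~\ref{prop:pr-positive} for the strictness in~\eqref{eq:lhc2} and the exact identity behind~\eqref{eq:lhc4}. Your explicit multiplicities $m^i_\rho$ are just a repackaging of the paper's double sums and products, so nothing substantive differs.
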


\proof

	Because $\type(\sbf^l)=\pi_2(\Phi^l)$, Proposition~\ref{prop:well-formed-multi} used for $\sbf$ implies that 
	$\type(\sbf)=\{\red^k(\sigma)\mid\sigma\in\pi_2(\Phi^l)\}$ and $|\type(\sbf)|=|\pi_2(\Phi^l)|$,
	which means that the mapping defined by $\sigma\mapsto\red^k(\sigma)$ is a bijection between $\pi_2(\Phi^l)$ and $\type(\sbf)$.
	Moreover, for $\sigma\in\pi_2(\Phi^l)$, 
	\begin{align*}
		\sbf{\restriction}_{\{\red^k(\sigma)\}}=\sbf^k{\restriction}_{\pi_2(\ass^k(\sigma))}:\sbf^{k-1}{\restriction}_{\pi_2(\ass^{k-1}(\sigma))}:\dots:
			\sbf^{l+1}{\restriction}_{\pi_2(\ass^{l+1}(\sigma))}:\sbf^l{\restriction}_{\{\sigma\}}\,,
	\end{align*}
	by Proposition~\ref{prop:restriction}.
	
	For $i\in\prz{l+1}{k}$ and $\sigma\in\pi_2(\Phi^l)$ denote $H^i_\sigma=\high(\sbf^i{\restriction}_{\pi_2(\ass^i(\sigma))})$.
	Using Proposition~\ref{prop:expand-to-singular}, the above property, and Proposition~\ref{prop:low-high-singular}%
	, we obtain that
	\begin{align}
		\low(\sbf)&=\sum_{\tau\in\type(\sbf)}\low(\sbf{\restriction}_{\{\tau\}})
			=\sum_{\sigma\in\pi_2(\Phi^l)}\low(\sbf{\restriction}_{\{\red^l(\sigma)\}})\nonumber\\
			&=\sum_{\sigma\in\pi_2(\Phi^l)}\Big(\low(\sbf^l{\restriction}_{\{\sigma\}}))+\sum_{i=l+1}^k\low(\sbf^i{\restriction}_{\pi_2(\ass^i(\sigma))})\Big)\nonumber\\
			&=\low(\sbf^l)+\sum_{i=l+1}^k\sum_{\sigma\in\pi_2(\Phi^l)}\low(\sbf^i{\restriction}_{\pi_2(\ass^i(\sigma))})\,,
				\qquad\mbox{and}\label{eq:low-composer-1}\displaybreak[0]\\
		\high(\sbf)&=\prod_{\tau\in\type(\sbf)}\high(\sbf{\restriction}_{\{\tau\}})
			=\prod_{\sigma\in\pi_2(\Phi^l)}\high(\sbf{\restriction}_{\{\red^k(\sigma)\}})\nonumber\\
			&=\prod_{\sigma\in\pi_2(\Phi^l)}\pow\big(H^k_{\sigma},H^{k-1}_{\sigma},\dots,H^{l+1}_{\sigma},\high(\sbf^l{\restriction}_{\{\sigma\}})\big)\,.\label{eq:high-composer-1}
	\end{align}

	For each $i\in\prz{l+1}{k}$ it holds that $\type(\sbf^i)=\pi_2(\Phi^i)=\bigcup\{\pi_2(\ass^i(\sigma))\mid\sigma\in\pi_2(\Phi^l)\}$
	(by the definition of a composer, Condition~\ref{pkt:compo-ass}), so, by Proposition~\ref{prop:expand-to-singular},
	\begin{align}
		\low(\sbf^i)=\sum_{\tau\in\type(\sbf^i)}\low(\sbf^i{\restriction}_{\{\tau\}})
			\leq\sum_{\sigma\in\pi_2(\Phi^l)}\sum_{\tau\in\pi_2(\ass^i(\sigma))}\low(\sbf^i{\restriction}_{\{\tau\}})
			=\sum_{\sigma\in\pi_2(\Phi^l)}\low(\sbf^i{\restriction}_{\pi_2(\ass^i(\sigma))})\,.\label{eq:low-composer-2}
	\end{align}
	Altogether, Equality~\eqref{eq:low-composer-1} and Inequality~\eqref{eq:low-composer-2} used for all $i\in\prz{l+1}{k}$ imply Inequality~\eqref{eq:lhc1}.
	
	For Inequality~\eqref{eq:lhc2}, recall from the definition of a composer (Condition~\ref{pkt:compo-flag}) that if $f=\pr$, then for some $i\in\prz{l+1}{k}$,
	some $\tau\in\calT_\pr$ appears in $\pi_2(\ass^i(\sigma))$ simultaneously for two different $\sigma\in\pi_2(\Phi^l)$.
	By Proposition~\ref{prop:pr-positive}, $\low(\sbf^i{\restriction}_{\{\tau\}})\geq 1$ for this $\tau$ (because it is productive).
	Thus, some positive component $\low(\sbf^i{\restriction}_{\{\tau\}})$
	appears in two sums $\sum_{\tau\in\pi_2(\ass^i(\sigma))}\low(\sbf^i{\restriction}_{\{\tau\}})$ in Inequality~\eqref{eq:low-composer-2} used for this $i$, 
	so this inequality (and, in effect, Inequality~\eqref{eq:lhc1}) becomes strict.

	Using Equality~\eqref{eq:high-composer-1}, Inequality~\eqref{eq:wl5}, and Proposition~\ref{prop:expand-to-singular} we obtain
	\begin{align}
		\high(\sbf)&=\prod_{\sigma\in\pi_2(\Phi^l)}\pow\big(H^k_\sigma,H^{k-1}_\sigma,\dots,H^{l+1}_\sigma,\high(\sbf^l{\restriction}_{\{\sigma\}})\big)\nonumber\\
		&\leq\pow\Big(\prod_{\sigma\in\pi_2(\Phi^l)}H^k_\sigma,\prod_{\sigma\in\pi_2(\Phi^l)}H^{k-1}_\sigma,\dots,\prod_{\sigma\in\pi_2(\Phi^l)}H^{l+1}_\sigma,
			\prod_{\sigma\in\pi_2(\Phi^l)}\high(\sbf^l{\restriction}_{\{\sigma\}})\Big)\nonumber\\
		&=\pow\Big(\prod_{\sigma\in\pi_2(\Phi^l)}H^k_\sigma,\prod_{\sigma\in\pi_2(\Phi^l)}H^{k-1}_\sigma,\dots,\prod_{\sigma\in\pi_2(\Phi^l)}H^{l+1}_\sigma,\high(\sbf^l)\Big)\,.\label{eq:uvuv1}
	\end{align}
	Observe that by restricting an annotated stack we can only decrease the value of $\high$.
	Thus, for each $i\in\prz{l+1}{k}$,
	\begin{align*}
		\prod_{\sigma\in\pi_2(\Phi^l)}H^i_\sigma&=\prod_{\sigma\in\pi_2(\Phi^l)}\high(\sbf^i{\restriction}_{\pi_2(\ass^i(\sigma))})\\
			&\leq\prod_{\sigma\in\pi_2(\Phi^l)}\high(\sbf^i)
			=(\high(\sbf^i))^{|\pi_2(\Phi^l)|}\leq(\high(\sbf^i))^{|\calT^0|}\,.
	\end{align*}
	The last inequality is true because $|\pi_2(\Phi^l)|\leq|\calT^{l}|\leq|\calT^0|$.
	Using Inequality~\eqref{eq:wl3} we move the $|\calT^0|$ exponents (there is at most $n$ of them) into the last argument of $\pow$
	and we obtain Inequality~\eqref{eq:lhc3}:
	\begin{align*}
		\high(\sbf)&\leq\pow\big((\high(\sbf^k))^{|\calT^0|},(\high(\sbf^{k-1}))^{|\calT^0|},\dots,(\high(\sbf^{l+1}))^{|\calT^0|},\high(\sbf^l)\big)\leq\\
		&\leq\pow\big(\high(\sbf^k),\high(\sbf^{k-1}),\dots,\high(\sbf^{l+1}),|\calT^0|^n\cdot\high(\sbf^l)\big)\,.
	\end{align*}
	
	Now suppose that $f=\np$. 
	It implies, for each $i\in\prz{l+1}{k}$, that each $\tau\in\pi_2(\Phi^i)\cap\calT_\pr$ belongs to the set $\pi_2(\ass^i(\sigma))$ only for one $\sigma\in\pi_2(\Phi^l)$,
	so all the common factors are equal to $1$ (cf.~Proposition~\ref{prop:pr-positive}): 
	\begin{align*}
		\prod_{\sigma\in\pi_2(\Phi^l)}H^i_\sigma=\prod_{\sigma\in\pi_2(\Phi^l)}\prod_{\tau\in\pi_2(\ass^i(\sigma))}\high(\sbf^i{\restriction}_{\{\tau\}})=
		\prod_{\tau\in\pi_2(\Phi^i)}\high(\sbf^i{\restriction}_{\{\tau\}})=\high(\sbf^i)\,.
	\end{align*}
	By substituting this to Inequality~\eqref{eq:uvuv1} we obtain Inequality~\eqref{eq:lhc4}.
	
	Inequality~\eqref{eq:lhc5} is obtained in the same way as Inequality~\eqref{eq:lhc3}, because the definitions of $\len$ and $\high$ differ only in the base case.
\qed

We are now ready to prove Lemma~\ref{lem:low-high-len}.

\proof[Proof of Lemma~\ref{lem:low-high-len}]
	It is enough to prove the lemma for annotated runs of length $1$.
	Then the result for longer runs follow by an immediate induction.
	Thus, assume that $|\RR|=1$, and denote $\RR(0)=\sbf^n:\sbf^{n-1}:\dots:\sbf^0$ with $\sbf^0=(\gamma,\{D\})$.
	Recall that our goal is to prove the following inequalities:
	\begin{align*}
		&\low(\RR(0))\leq\sharp(\st(\RR))+\low(\RR(1))\,,\displaybreak[0]\\
		&\high(\RR(0))\geq\sharp(\st(\RR))+\high(\RR(1))\,,&&\mbox{and}\displaybreak[0]\\
		&\len(\RR(0))\geq 1+\len(\RR(1))\,.
	\end{align*}

	We have four cases, depending on the shape of $D$.
	
	\paragraph*{Case 1}
	It is impossible that $D$ is of the form $(\dtempty\gamma,p)$, since then $\RR(0)$ would not have a successor.

	\paragraph*{Case 2}
	Suppose that $D=(\dtread p,D')$.
	Then $\RR(1)=\sbf^n:\sbf^{n-1}:\dots:\sbf^1:(\gamma,\{D'\})$.
	Because both $\RR(0)$ and $\RR(1)$ are singular, by Proposition~\ref{prop:low-high-singular} we have that
	\begin{align*}
		\low(\RR(0))=\low(\sbf^0)+\sum_{i=1}^n\low(\sbf^i)&&\mbox{and}&&\low(\RR(1))=\low((\gamma,\{D'\}))+\sum_{i=1}^n\low(\sbf^i)\,.
	\end{align*}
	Thus, the required inequality about $\low$ can be restated as
	\begin{align*}
		\low(\sbf^0)\leq\sharp(\st(\RR))+\low((\gamma,\{D'\}))\,.
	\end{align*}
	It holds when $\rd(D)\in\calT_\np$ (as then $\low(\sbf^0)=0$).
	If $\rd(D)\in\calT_\pr$, then $\low(\sbf^0)=1$, and either $\rd(D')\in\calT_\pr$ or the letter read by $\st(\RR)$ is $\sharp$,
	so the right side is positive.
	
	Again by Proposition~\ref{prop:low-high-singular} we have that
	\begin{align*}
		\high(\RR(0))&=\pow(\high(\sbf^n),\high(\sbf^{n-1}),\dots,\high(\sbf^1),\high(\sbf^0))\,,\qquad\mbox{and}\displaybreak[0]\\
		\high(\RR(1))&=\pow(\high(\sbf^n),\high(\sbf^{n-1}),\dots,\high(\sbf^1),\high((\gamma,\{D'\})))\,.
	\end{align*}%
	If $\rd(D)\in\calT_\np$, then $\sharp(\st(\RR))=0$ and $\rd(D')\in\calT_\np$.
	In this case $\high(\sbf^0)=\high((\gamma,\{D'\}))=1$, so the two sides of the inequality are equal:
	\begin{align*}
		\high(\RR(0))=\sharp(\st(\RR))+\high(\RR(1))\,.
	\end{align*}
	If $\rd(D)\in\calT_\pr$, then using Inequality~\eqref{eq:wl4} we obtain
	\begin{align*}
		\high(\RR(0))&=\pow(\high(\sbf^n),\high(\sbf^{n-1}),\dots,\high(\sbf^1),C_{\depth(D)})\\
			&\geq\pow(\high(\sbf^n),\high(\sbf^{n-1}),\dots,\high(\sbf^1),C_{\depth(D')}+1)\\
			&\geq\pow(\high(\sbf^n),\high(\sbf^{n-1}),\dots,\high(\sbf^1),C_{\depth(D')})+1\\
			&\geq\sharp(\st(\RR))+\high(\RR(1))\,.
	\end{align*}
	In the same way we obtain the required inequality for $\len$.
	
	\paragraph*{Case 3}
	Suppose that $D=(\dtpop\gamma,p,\tau^k)$.
	Then $\RR(1)=\sbf^n:\sbf^{n-1}:\dots:\sbf^k$.
	The operation between $\conf(\RR(0))$ and $\conf(\RR(1))$ is $\pop^k$, so $\sharp(\st(\RR))=0$.
	For $i\in\prz{1}{k-1}$, by Definition~\ref{def:types}(\ref{pkt:dt-pop}) we have that $\ass^i(\rd(D))=\emptyset$, 
	and by well-formedness of $\RR(0)$ we have that $\type(\sbf^i)=\pi_2(\ass^i(\rd(D)))$ (cf.~Proposition~\ref{prop:well-formed-multi});
	in effect $\type(\sbf^i)=\emptyset$, hence $\low(\sbf^i)=0$ and $\high(\sbf^i)=1$.
	By Definition~\ref{def:types}(\ref{pkt:dt-pop}), $\rd(R)\in\calT_\np$, so $\low(\sbf^0)=0$ and $\high(\sbf^0)=1$; moreover $\len(\sbf^0)=C_0=2$.
	Because $\RR(0)$ and $\RR(1)$ are singular, by Proposition~\ref{prop:low-high-singular} we can write
	\begin{align*}
		\low(\RR(0))&=\sum_{i=0}^n\low(\sbf^i)=\sum_{i=k}^n\low(\sbf^i)=\sharp(\st(\RR))+\low(\RR(1))\,,\displaybreak[0]\\
		\high(\RR(0))&=\pow\big(\high(\sbf^n),\high(\sbf^{n-1}),\dots,\high(\sbf^{k}),1,\dots,1\big)\\
			&=\pow\big(\high(\sbf^n),\high(\sbf^{n-1}),\dots,\high(\sbf^{k})\big)=\sharp(\st(\RR))+\high(\RR(1))\,,\qquad\mbox{and}\displaybreak[0]\\
		\len(\RR(0))&=\pow\big(\len(\sbf^n),\len(\sbf^{n-1}),\dots,\len(\sbf^{k}),1,\dots,1,2\big)\\
			&\geq\pow\big(\len(\sbf^n),\len(\sbf^{n-1}),\dots,\len(\sbf^{k}),1,\dots,1,1\big)+1\\
			&=\pow\big(\len(\sbf^n),\len(\sbf^{n-1}),\dots,\len(\sbf^{k}))+1=1+\len(\RR(1))\,,
	\end{align*}
	as required.

	\paragraph*{Case 4}
	Suppose that $D=(\dtpush\gamma,p,D',\mathfrak{D})$.
	Let $\alpha$ and $k$ be such that $\delta(\gamma,p)$ performs $\push^k_\alpha$;
	this transition does not read anything, so $\sharp(\st(\RR))=0$.
	By Definition~\ref{def:types}(\ref{pkt:dt-push}) we have a composer $(\Phi^k,\Phi^{k-1},\dots,\Phi^0;\Psi^k;f)$ such that $\pi_2(\Phi^0)=\{\rd(E)\mid E\in\mathfrak{D}\}$.
	Denote also $\Psi^i=\ass^i(\rd(D'))$ for $i\in\prz{1}{k-1}$.
	By Definition~\ref{def:successor}(\ref{pkt:succ-push}),
	\begin{align*}
		&\RR(1)=\sbf^n:\sbf^{n-1}:\dots:\sbf^{k+1}:\tt^k:\sbf^{k-1}{\restriction}_{\pi_2(\Psi^{k-1})}:\sbf^{k-2}{\restriction}_{\pi_2(\Psi^{k-2})}:\dots:\sbf^1{\restriction}_{\pi_2(\Psi^1)}:(\alpha,\{D'\})\,,
	\end{align*}
	where $\tt^k=\sbf^k{\restriction}_{\pi_2(\Phi^k)}:\sbf^{k-1}{\restriction}_{\pi_2(\Phi^{k-1})}:\dots:\sbf^1{\restriction}_{\pi_2(\Phi^1)}:(\gamma,\mathfrak{D})$.

	From Lemma~\ref{lem:low-high-composer} we obtain the following inequality, which is strict if $f=\pr$:
	\begin{align}
		&\sum_{i=1}^k\low(\sbf^i{\restriction}_{\pi_2(\Phi^i)})+\low((\gamma,\mathfrak{D}))\leq\low(\tt^k)\,.\label{eq:push-uv1}
	\end{align}
	Because $\RR(0)$ is well-formed, $\type(\sbf^i)=\pi_2(\ass^i(\rd(D)))$ for all $i\in\prz{1}{n}$ (cf.~Proposition~\ref{prop:well-formed-multi}).
	By Definition~\ref{def:types}(\ref{pkt:dt-push}), $\ass^k(\sigma)=\Phi^k$, and $\ass^i(\sigma)=\Psi^i\cup\Phi^i$ for $i\in\prz{1}{k-1}$.
	In effect, $\type(\sbf^k)=\pi_2(\Phi^k)$ and $\type(\sbf^i)=\pi_2(\Psi^i)\cup\pi_2(\Phi^i)$ for $i\in\prz{1}{k-1}$, so
	\begin{align}
		\low(\sbf^k)&=\low(\sbf^k{\restriction}_{\pi_2(\Phi^k)})\,,&&\mbox{and}\displaybreak[0]\label{eq:push-low-1}\\
		\low(\sbf^i)&=\sum_{\sigma\in\type(\sbf^i)}\low(\sbf^i{\restriction}_{\{\sigma\}})\nonumber\\
			&\leq\sum_{\sigma\in\pi_2(\Psi^i)}\low(\sbf^i{\restriction}_{\{\sigma\}})+\sum_{\sigma\in\pi_2(\Phi^i)}\low(\sbf^i{\restriction}_{\{\sigma\}})\nonumber\\
			&=\low(\sbf^i{\restriction}_{\pi_2(\Psi^i)})+\low(\sbf^i{\restriction}_{\pi_2(\Phi^i)})&&\mbox{for }i\in\prz{1}{k-1}\,,\label{eq:push-low-2}
	\end{align}
	where the equalities in the second formula are by Proposition~\ref{prop:expand-to-singular}.
	Moreover, if $\pi_2(\Psi^i)\cap\pi_2(\Phi^i)\not\subseteq\calT_\np$ for some $i\in\prz{1}{k-1}$, then the appropriate inequality is strict
	(since the component corresponding to $\tau\in\pi_2(\Psi^i)\cap\pi_2(\Phi^i)\cap\calT_\pr$,
	which is positive by Proposition~\ref{prop:pr-positive}, appears in both sums on the right side, 
	and only once on the left side).
	Because $\RR(0)$ and $\RR(1)$ are singular, 
	\begin{align*}
		\low(\RR(0))&=\sum_{i=0}^n\low(\sbf^i)\,,&\mbox{and}\\
		\low(\RR(1))&=\sum_{i=k+1}^n\low(\sbf^i)+\low(\tt^k)+\sum_{i=1}^{k-1}\low(\sbf^i{\restriction}_{\pi_2(\Psi^i)})+\low((\alpha,\{D'\}))
	\end{align*}
	by Proposition~\ref{prop:low-high-singular}.
	We apply (In)equalities~\eqref{eq:push-low-1} and~\eqref{eq:push-low-2} to the formula for $\low(\RR(0))$; 
	next, we substitute Inequality~\eqref{eq:push-uv1}; we obtain
	\begin{align*}
		\low(\RR(0))&\leq\sum_{i=k+1}^n\low(\sbf^i)+\sum_{i=1}^{k-1}\low(\sbf^i{\restriction}_{\pi_2(\Psi^i)})+\sum_{i=1}^k\low(\sbf^i{\restriction}_{\pi_2(\Phi^i)})+\low(\sbf^0)\leq\\
		&\leq\sum_{i=k+1}^n\low(\sbf^i)+\sum_{i=1}^{k-1}\low(\sbf^i{\restriction}_{\pi_2(\Psi^i)})+\low(\tt^k)-\low((\gamma,\mathfrak{D}))+\low(\sbf^0)=\\
		&=\low(\RR(1))-\low((\alpha,\{D'\}))-\low((\gamma,\mathfrak{D}))+\low(\sbf^0)\leq
			\low(\RR(1))+\low(\sbf^0)\,.
	\end{align*}
	If $\{\rd(D')\}\cup\pi_2(\Phi^0)\not\subseteq\calT_\np$, the last inequality is strict, as we have removed negative components.
	Because $\low(\sbf^0)\leq 1$, if some of the above inequalities was strict, we can remove $\low(\sbf^0)$, and we obtain $\low(\RR(0))\leq\low(\RR(1))$, as required.
	On the other hand, if none of these inequalities was strict, then
	$\pi_2(\Psi^i)\cap\pi_2(\Phi^i)\subseteq\calT_\np$ for each $i\in\prz{1}{k-1}$, and $f=\np$, and $\{\rd(D')\}\cup\pi_2(\Phi^0)\subseteq\calT_\np$;
	from Definition~\ref{def:types}(\ref{pkt:dt-push}) it follows that in this case $\rd(D)\in\calT_\np$, so $\low(\sbf^0)=0$, and we obtain the required inequality as well.

	Next, we prove the inequality for $\high$.
	Denote 
	\begin{align*}
		a_i&=\high(\sbf^i)&&\mbox{for }i\in\prz{k+1}{n}\,,\\
		a_i&=\high(\sbf^i{\restriction}_{\pi_2(\Psi^i)})&&\mbox{for }i\in\prz{1}{k-1}\,,\\
		b_i&=\high(\sbf^i{\restriction}_{\pi_2(\Phi^i)})&&\mbox{for }i\in\prz{1}{k}\,.
	\end{align*}
	Suppose first that $\rd(D)\in\calT_\np$.
	Then, by Definition~\ref{def:types}(\ref{pkt:dt-push}), $\{\rd(D')\}\cup\pi_2(\Phi^0)\subseteq\calT_\np$, and $f=\np$,
	and $\pi_2(\Psi^i)\cap\pi_2(\Phi^i)\subseteq\calT_\np$ for each $i\in\prz{1}{k-1}$. In consequence, $\high(\sbf^0)=\high((\gamma,\mathfrak{D}))=\high((\alpha,\{D'\}))=1$.
	Recall that $\type(\sbf^i)=\pi_2(\Psi^i)\cup\pi_2(\Phi^i)$ for $i\in\prz{1}{k-1}$.
	Thus, by Proposition~\ref{prop:expand-to-singular},
	\begin{align}
		\high(\sbf^i)&=\prod_{\sigma\in\type(\sbf^i)}\high(\sbf^i{\restriction}_{\{\sigma\}})
			=\prod_{\sigma\in\pi_2(\Psi^i)}\high(\sbf^i{\restriction}_{\{\sigma\}})\cdot\prod_{\sigma\in\pi_2(\Phi^i)}\high(\sbf^i{\restriction}_{\{\sigma\}})\hspace{-10em}\nonumber\\
			&=a_i\cdot b_i&&\mbox{for }i\in\prz{1}{k-1}\,;\label{eq:push-eq1}
	\end{align}
	the second equality above holds because for $\sigma\in\pi_2(\Psi^i)\cap\pi_2(\Phi^i)\subseteq\calT_\np$ we have $\high(\sbf^i{\restriction}_{\{\sigma\}})=1$ 
	(cf.~Proposition~\ref{prop:pr-positive}).
	Because $f=\np$, from Lemma~\ref{lem:low-high-composer} we know that
	\begin{align*}
		\pow(b_k,b_{k-1},\dots,b_1,1)=\pow\big(b_k,b_{k-1},\dots,b_1,\high((\gamma,\mathfrak{D}))\big)\geq\high(\tt^k)\,.
	\end{align*}
	Using Proposition~\ref{prop:low-high-singular} and Equalities~\eqref{eq:push-eq1}, then Inequality~\eqref{eq:wl2}, 
	then the above inequality, and then again Proposition~\ref{prop:low-high-singular}, we obtain
	\begin{align*}
		\high(\RR(0))&=\pow(a_n,a_{n-1},\dots,a_{k+1},b_k,a_{k-1}b_{k-1},a_{k-2}b_{k-2},\dots,a_1b_1,1)\\
		&\geq \pow(a_n,a_{n-1},\dots,a_{k+1},\pow(b_k,b_{k-1},\dots,b_1,1),a_{k-1},a_{k-2}\dots,a_1,1)\\
		&\geq\pow(a_n,a_{n-1},\dots,a_{k+1},\high(\tt^k),a_{k-1},a_{k-2},\dots,a_1,1)=\high(\RR(1))\,.
	\end{align*}

	Next, suppose that $\rd(D)\in\calT_\pr$.
	Then Lemma~\ref{lem:low-high-composer} gives us the inequality
	\begin{align}
		\pow\big(b_k,b_{k-1},\dots,b_1,|\calT^0|^n\cdot\high((\gamma,\mathfrak{D}))\big)\geq\high(\tt^k)\,.\label{eq:push-uv3}
	\end{align}
	By definition it holds
	\begin{align*}
		\high(\sbf^0)&=C_{\depth(D)}=(2|\calT^0|)^n\cdot(C_{\depth(D)-1})^{|\calT^0|+1}\\
			&\geq 2^{k-1}\cdot|\calT^0|^n\cdot C_{\depth(D')}\cdot\prod_{E\in\mathfrak{D}}C_{\depth(E)}\\
			&\geq 2^{k-1}\cdot |\calT^0|^n\cdot\high((\alpha,\{D'\}))\cdot\high((\gamma,\mathfrak{D})).
	\end{align*}
	Using Inequality~\eqref{eq:wl3} we replace $2^{k-1}$ in the last argument of $\pow$ by $2$ in the $k-1$ previous arguments;
	then we observe that $(\high(\sbf^i))^2\geq a_ib_i$ for each $i\in\prz{1}{k-1}$; 
	then we use Inequality~\eqref{eq:wl2}, and finally Inequality~\eqref{eq:push-uv3}:
	\begin{align*}
		\high(\RR(0))
		&=\pow\big(\high(\sbf^n),\high(\sbf^{n-1}),\dots,\high(\sbf^1),\high(\sbf^0)\big)\\
		&\geq\pow\big(\high(\sbf^n),\high(\sbf^{n-1}),\dots,\high(\sbf^1),\\
			&\hspace{3.5cm}2^{k-1}\cdot |\calT^0|^n\cdot\high((\alpha,\{D'\}))\cdot\high((\gamma,\mathfrak{D}))\big)\\
		&\geq \pow\big(\high(\sbf^n),\high(\sbf^{n-1}),\dots,\high(\sbf^k),(\high(\sbf^{k-1}))^2,(\high(\sbf^{k-2}))^2,\dots,\\
			&\hspace{3.5cm}(\high(\sbf^1))^2,|\calT^0|^n\cdot\high((\alpha,\{D'\}))\cdot\high((\gamma,\mathfrak{D}))\big)\\
		&\geq \pow\big(a_n,a_{n-1},\dots,a_{k+1},b_k,a_{k-1}b_{k-1},a_{k-2}b_{k-2},\dots,a_1b_1,\\
			&\hspace{3.5cm}|\calT^0|^n\cdot\high((\alpha,\{D'\}))\cdot\high((\gamma,\mathfrak{D}))\big)\\
		&\geq \pow\big(a_n,a_{n-1},\dots,a_{k+1},
			\pow\!\left(b_k,b_{k-1},\dots,b_1,|\calT^0|^n\cdot\high((\gamma,\mathfrak{D}))\right),\\
			&\hspace{3.5cm}a_{k-1},a_{k-2},\dots,a_1,\high((\alpha,\{D'\}))\big)\\
		&\geq\pow\big(a_n,a_{n-1},\dots,a_{k+1},\high(\tt^k),a_{k-1},a_{k-2},\dots,a_1,\high((\alpha,\{D'\}))\big)\\
		&=\high(\RR(1))\,.
	\end{align*}
	
	The inequality for $\len$ can be proved in a very similar way as that for $\high$ in the case $\rd(D)\in\calT_\pr$.
\qed

\subsection{Relating Upper and Lower Bounds}\lab{sec:common-bound}

\makebox[0cm]{\ }
As already mentioned, it is meaningful to consider the functions $\low$ and $\high$ because they are closely related: one is bounded if the other is bounded.
This is shown in the following proposition.

\begin{prop}\lab{prop:common-bound}
	There exists a function $H\colon\Nat\to\Nat$ such that for each configuration $c$ and each run descriptor $\sigma\in\type_{\calA,\phi}(c)$ there exists
	a well-formed annotated $n$-stack $\sbf$ for which $\type(\tops^0(\sbf))=\{\sigma\}$, and $\conf(\sbf)=c$, and $\high(\sbf)\leq H(\low(\sbf))$.
\end{prop}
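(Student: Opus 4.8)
The plan is to start from an arbitrary annotated stack witnessing $\sigma\in\type_{\calA,\phi}(c)$ and to perform surgery on it: replace every derivation tree attached to a $0$-stack by a \emph{shallow} derivation tree with the same run descriptor. Since the value of $\high$ of a $0$-stack is $\prod_{D\ \pr}C_{\depth(D)}$, bounding the depths of the derivation trees bounds the factors $C_{\depth(D)}$ by a constant, and then $\high$ of the whole stack becomes a fixed function of the number of productive run descriptors it carries, which in turn is controlled by $\low$.

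The first ingredient is a \emph{depth-reduction lemma}: there is a constant $d_0$ depending only on $\calA$ and $\phi$ such that whenever $\gamma\vdash\tau$ is derivable (with $\gamma\in\Gamma$, $\tau\in\calT^0$) it has a derivation tree of depth at most $d_0$. This is a least-fixed-point stabilisation argument. For $i\geq 0$ let $R_i\subseteq\Gamma\times\calT^0$ be the set of pairs $(\gamma,\tau)$ possessing a derivation tree of depth at most $i$. Reading off Definition~\ref{def:types} one checks that $R_{i+1}=\Theta(R_i)$ for a monotone operator $\Theta$ on subsets of the finite set $\Gamma\times\calT^0$: cases~(\ref{pkt:dt-empty}) and~(\ref{pkt:dt-pop}) produce depth-$0$ trees with no recursive premises, while in cases~(\ref{pkt:dt-read}) and~(\ref{pkt:dt-push}) the immediate sub-derivations have strictly smaller depth and all side conditions (injectivity of $\vec q$, existence of a composer with a prescribed output set, etc.) speak only about run descriptors and the fixed automaton data, hence are monotone in $R_i$. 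As $R_0\subseteq R_1\subseteq\cdots$ all live in the finite set $\Gamma\times\calT^0$, the chain stabilises after at most $|\Gamma|\cdot|\calT^0|$ steps, and once it does, $R_j=\bigcup_iR_i$ is exactly the set of derivable pairs; so one may take $d_0=|\Gamma|\cdot|\calT^0|$.

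Now, given $\sigma\in\type_{\calA,\phi}(c)$, fix a well-formed singular annotated $n$-stack $\sbf_0$ with $\conf(\sbf_0)=c$ and $\type(\tops^0(\sbf_0))=\{\sigma\}$ (it exists by the definition of $\type_{\calA,\phi}(c)$ together with Item~(T3) of Proposition~\ref{prop:well-formed-multi}). Build $\sbf$ from $\sbf_0$ by replacing, inside every $0$-stack $(\gamma,\mathfrak D)$ of $\sbf_0$, each derivation tree $D\in\mathfrak D$ by a minimal-depth derivation tree for $\gamma\vdash\rd(D)$; such a tree exists and has depth at most $d_0$ by the lemma, and it has the same run descriptor (hence the same productivity flag) as $D$. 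This keeps the set of run descriptors at every $0$-stack unchanged and pairwise distinct, so $\sbf$ is again a legal annotated stack, with the same $\st$ and the same $\type$ at every substack as $\sbf_0$; consequently $\sbf$ is well-formed and singular, $\conf(\sbf)=c$, $\type(\tops^0(\sbf))=\{\sigma\}$, and every derivation tree occurring in $\sbf$ has depth at most $d_0$, so that $C_{\depth(D)}\leq c_0:=C_{d_0}$ for each such $D$.

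It remains to bound $\high(\sbf)$ by a fixed function of $\low(\sbf)$. Here I would prove, by induction on the structure of annotated stacks, an order-indexed statement of the form ``$\high(\tt)\leq\beth_{k+1}(g(\low(\tt)))$ for every well-formed annotated $k$-stack $\tt$ all of whose derivation trees have depth $\leq d_0$'', where $\beth$ is the tower function of the Remark and $g$ a suitable fixed increasing function. The base case of a $0$-stack is immediate ($\high\leq c_0^{\low}$); the step $\tt=\tt^k:\tt^{k-1}$ uses $\high(\tt)=\prod_{\sigma}\pow\big(\high(\tt^k{\restriction}_{\pi_2(\ass^k(\sigma))}),\high(\tt^{k-1}{\restriction}_{\{\sigma\}})\big)$, the induction hypothesis on the (smaller, still shallow) restrictions, the inequalities~\eqref{eq:wl1}--\eqref{eq:wl5} for $\pow$, and Proposition~\ref{prop:pr-positive} so that nonproductive substacks contribute the trivial factor $1$. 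Three features keep the estimate finite: each increment of the stack order costs only one exponentiation (because $\pow(\beth_{k+1}(u),\beth_k(v))\leq\beth_{k+1}(u+v+1)$, a consequence of $\pow(\pow(x,y_1),y_2)=\pow(x,y_1y_2)$ and \eqref{eq:wl5}); the arguments $\low(\tt^k{\restriction}_{\pi_2(\ass^k(\sigma))})$ and $\low(\tt^{k-1}{\restriction}_{\{\sigma\}})$ of the children sum to exactly $\low(\tt)$; and the total number of productive derivation trees in $\tt$ is at most $\low(\tt)$ (apply Lemma~\ref{lem:low-high-composer}\eqref{eq:lhc1}, via Proposition~\ref{prop:composer}, recursively down to the $0$-stacks), which controls how often the bookkeeping ``loses'' an additive constant. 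Taking $H(m):=\beth_{n+1}(g(m))$ then yields the statement. The main obstacle is precisely this last induction: calibrating $g$ (and $H$) so that the product over the at most $|\calT^0|$ values of $\sigma$ and the nesting of $\pow$ across the $n$ stack orders are both absorbed requires care and patience with the $\pow$-inequalities, though no further ideas.
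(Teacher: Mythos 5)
Your proposal is correct, and its skeleton coincides with the paper's: both arguments first replace every derivation tree in a witnessing annotated stack by one with the same conclusion and depth bounded by a constant $d$ (the paper obtains $d$ simply by picking, for each of the finitely many derivable conclusions, a minimal-depth tree; your fixed-point stabilisation argument establishes the same fact with an explicit bound, which is harmless but not needed), and then bound $\high$ by a function of $\low$ via structural induction over shallow annotated stacks. The difference lies in the bookkeeping of that induction. The paper defines $N_k(L)=\big(\pow(N_k(L-1),N_{k-1}(L))\big)^{|\calT^{k-1}|}$ and splits the step for $\sbf^k:\sbf^{k-1}$ into two cases according to whether some single summand $\low(\sbf^k{\restriction}_{\pi_2(\ass^k(\sigma))})$ already equals the whole budget $L$ (in which case all remaining factors are $\pow(1,1)=1$) or every such summand is at most $L-1$. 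Your $\beth_{k+1}(g(\low))$ formulation absorbs the first case automatically via $\pow(x,1)=x$ and handles the second via the exact additivity of $\low$ across the decomposition, which is arguably cleaner; the price is the calibration of $g$, which you defer. Two points to watch there: the inequality $\pow(\beth_{k+1}(u),\beth_k(v))\leq\beth_{k+1}(u+v+1)$ is false for $k=0$ (e.g.\ $(1+2^3)^3-1>2^7$), but you only ever invoke it for $k\geq 1$, where it does hold for $u,v\geq 1$; and a single linear $g$ does not close the induction, because each productive child costs an extra additive constant and there can be up to $\low(\tt)$ of them --- take $g$ superadditive with slack (for instance $g(x)=Cx^2$) or an order-indexed family $g_k$. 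With such a choice the argument goes through, so the gap you flag is genuinely only calibration, not a missing idea.
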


\proof
	Let $d$ be a number such that for each derivation tree there exists a derivation tree with the same conclusion and depth at most $d$;
	such a number exists, because there are only finitely many possible conclusions.
	For each $k\geq 0$ we define a function $N_k\colon\Nat\to\Nat$, and we take $H=N_n$.
	The definition is inductive: $N_k(0)=1$, and, for $L>0$,
	\begin{align*}
		&N_0(L)=(C_d)^{|\calT^0|}\,,\\
		&N_k(L)=\big(\pow(N_k(L-1),N_{k-1}(L))\big)^{|\calT^{k-1}|}&&\mbox{for }k>0\,,
	\end{align*}
	where $C_d$ is the constant from Definition~\ref{def:low-high}.
	
	By definition of a type, for each configuration $c$ and each run descriptor $\sigma\in\type_{\calA,\phi}(c)$ there exists a well-formed annotated $n$-stack $\sbf$ such that
	$\type(\tops^0(\sbf))=\{\sigma\}$ and $\conf(\sbf)=c$.
	We can assume without loss of generality that all derivation trees in $\sbf$ have depth at most $d$: 
	we can safely replace each tree by another (smaller) tree having the same conclusion.
	Thus, it is enough to prove that for each well-formed annotated $k$-stack $\sbf$, in which all derivation trees have depth at most $d$, 
	it holds $\high(\sbf)\leq N_k(\low(\sbf))$.
	
	Denote $L=\low(\sbf)$.
	If $L=0$ then $\high(\sbf)=1=N_k(L)$, thanks to Proposition~\ref{prop:pr-positive}.
	Suppose that $L>0$.
	In this case we prove the thesis by induction on the structure of $\sbf$.
	For a stack $\sbf=(\gamma,\mathfrak{D})$ of order $0$ it holds
	\begin{align*}
		\high(\sbf)\leq\prod_{D\in\mathfrak{D}}C_{\depth(D)}\leq(C_d)^{|\calT^0|}=N_0(L)\,.
	\end{align*}

	Next, consider a stack $\sbf=\sbf^k:\sbf^{k-1}$.
	Recall that $\low(\sbf)$ equals the sum of $\low$ for $\sbf^k{\restriction}_{\pi_2(\ass^k(\sigma))}$ and $\sbf^{k-1}{\restriction}_{\{\sigma\}}$
	over all $\sigma\in\type(\sbf^{k-1})$.
	We have two cases.
	One possibility is that $\low(\sbf^k{\restriction}_{\pi_2(\ass^k(\sigma))})=L$ for some $\sigma\in\type(\sbf^{k-1})$.
	Then $\low$ for all other considered stacks is $0$, so their $\high$ is $1$.
	Using the induction assumption we obtain
	\begin{align*}
		\high(\sbf)\leq\pow(N_k(L),1)\cdot\prod_{\tau\in\type(\sbf^{k-1})\setminus\{\sigma\}}\pow(1,1)=N_k(L)\,.
	\end{align*}
	The opposite situation is that $\low(\sbf^k{\restriction}_{\pi_2(\ass^k(\sigma))})\leq L-1$ for each $\sigma\in\type(\sbf^{k-1})$.
	Observing that $N_k$ is monotone, by the induction assumption 
	$\high(\sbf^k{\restriction}_{\pi_2(\ass^k(\sigma))})\leq N_k(L-1)$ and $\high(\sbf^{k-1}{\restriction}_{\{\sigma\}})\leq N_{k-1}(L)$ for each $\sigma\in\type(\sbf^{k-1})$,
	so we obtain
	\begin{align*}
		\high(\sbf)\leq\prod_{\sigma\in\type(\sbf^{k-1})}\pow(N_k(L-1),N_{k-1}(L))\leq N_k(L)\,.\tag*{\qed}
	\end{align*}

\subsection{Assumptions Are Used in Returns}\lab{sec:assumptions-used}

Our next goal is to formally prove that whenever an assumption of a run descriptor is used in an annotated run, then we have a return.

\begin{lem}\lab{lem:rd2run}
	Let $\sbf=\sbf^n:\sbf^{n-1}:\dots:\sbf^0$ be a well-formed singular annotated $n$-stack, where $\type(\sbf^0)=\{\sigma\}$.
	If $(m,\xi)\in\ass^r(\sigma)$, then there exists an annotated run $\RR$ starting in $\sbf$ such that $\st(\RR)$ is an $r$-return, $\phi(\st(\RR))=m$,
	and $\tops^r(\RR(|\RR|))=\sbf^r{\restriction}_{\{\xi\}}$.
\end{lem}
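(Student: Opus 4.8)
The plan is to argue by induction on $\len(\sbf)$. Since $\type(\sbf^0)=\{\sigma\}$, the annotated $0$-stack $\sbf^0$ is $(\gamma,\{D\})$ for a single derivation tree $D$ with $\rd(D)=\sigma$, so the induction is really driven by the shape of $D$; nevertheless the measure must be $\len$, because in the $\pop$ case we will have to re-enter an unrelated sub-derivation-tree. The well-foundedness of the measure is exactly the last inequality of Lemma~\ref{lem:low-high-len}: applied to annotated runs of length~$1$ it shows that $\len$ strictly decreases from any well-formed singular annotated $n$-stack to its successor, hence $\len(\tt)<\len(\sbf)$ whenever $\tt$ is reachable from $\sbf$ by a nonempty annotated run. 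The case $D=(\dtempty\gamma,p)$ is vacuous, because then every assumption set of $\sigma$ is empty; in all other cases an assumption $(m,\xi)\in\ass^r(\sigma)$ is traced one step at a time.

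If $D=(\dtread p,D')$, write $\tau=\rd(D')$; from Definition~\ref{def:types}(\ref{pkt:dt-read}) the transition reads some letter $a$ with the state of $\tau$ equal to $\vec q(a)$, and $\ass^r(\sigma)=\phi(a)\circ\ass^r(\tau)$, so $m=\phi(a)\cdot m'$ with $(m',\xi)\in\ass^r(\tau)$. The successor $\sbf'$ of $\sbf$ has topmost $0$-stack $(\gamma,\{D'\})$ and the same $r$-th layer $\sbf^r$; the induction hypothesis on $\sbf'$ and $(m',\xi)$ yields an annotated run $\RR'$ with $\st(\RR')\in\ret^r$, $\phi(\st(\RR'))=m'$ and $\tops^r(\RR'(|\RR'|))=\sbf^r{\restriction}_{\{\xi\}}$, and prepending the $\read$-step (legal by Proposition~\ref{prop:return}(2)) finishes it. If $D=(\dtpop\gamma,p,\tau^k)$ and $r=k$, then necessarily $(m,\xi)=(\mathbf{1}_M,\tau^k)$: the single $\pop^k$-step is a $k$-return by Proposition~\ref{prop:return}(1), reads the empty word, and lands with topmost $k$-stack $\sbf^k$, which equals $\sbf^k{\restriction}_{\{\tau^k\}}$ since $\type(\sbf^k)=\pi_2(\ass^k(\sigma))=\{\tau^k\}$ by Proposition~\ref{prop:well-formed-multi}. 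If instead $r>k$, then $\ass^r(\sigma)=\ass^r(\tau^k)$, and after the $\pop^k$-step the (again well-formed singular) successor $\sbf'$ has topmost $0$-stack $(\gamma',\{E\})$ with $\red^k(\rd(E))=\tau^k$, hence $\ass^r(\rd(E))=\ass^r(\tau^k)\ni(m,\xi)$; as $\len(\sbf')<\len(\sbf)$, the induction hypothesis on $\sbf'$ together with Proposition~\ref{prop:return}(2) (a $\pop^k$ with $k<r$ in front of an $r$-return) gives the run. The subcase $r<k$ cannot occur, since then $\ass^r(\sigma)=\emptyset$.

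The substantive case is $D=(\dtpush\gamma,p,D',\mathfrak{D})$ with $\delta(\gamma,p)=(q,\push^k_\alpha)$; set $\tau=\rd(D')$ and use the notation $\Psi^i,\Phi^i$ of Definition~\ref{def:types}(\ref{pkt:dt-push}), so that $\ass^r(\sigma)$ equals $\Psi^r$ for $r>k$, $\Phi^k$ for $r=k$, and $\Psi^r\cup\Phi^r$ for $r<k$. If $(m,\xi)$ lies in a $\Psi$-part, i.e.\ in $\ass^r(\tau)$, proceed as in the $\read$ case: recurse on the successor $\sbf'$ (whose topmost $0$-stack is $(\alpha,\{D'\})$ and whose $r$-th layer is $\sbf^r$ for $r>k$, resp.\ $\sbf^r{\restriction}_{\pi_2(\Psi^r)}$ for $r<k$, restricting to $\sbf^r{\restriction}_{\{\xi\}}$), then prepend the $\push^k$-step, legal by Proposition~\ref{prop:return}(2) since $k\neq r$ here. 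If $(m,\xi)$ lies in a $\Phi$-part (so $r\le k$), then by condition~\ref{pkt:compo-ass} of the composer there are $(m',\rho)\in\Phi^0$ and $(m'',\xi)\in\ass^r(\rho)$ with $m=m'\cdot m''$, and $\rho=\rd(E)$ for the unique $E\in\mathfrak{D}$ with that run descriptor. First apply the induction hypothesis to $\sbf'$ with the assumption $(m',\red^k(\rho))\in\ass^k(\tau)=\Psi^k$ (membership by condition~\ref{pkt:compo-red}): this yields a $k$-return with image $m'$ ending in a stack whose topmost $k$-stack is $\tt^k{\restriction}_{\{\red^k(\rho)\}}$, which by Proposition~\ref{prop:restriction} equals $\sbf^k{\restriction}_{\pi_2(\ass^k(\rho))}:\dots:\sbf^1{\restriction}_{\pi_2(\ass^1(\rho))}:(\gamma,\{E\})$, a well-formed singular annotated $k$-stack with topmost $0$-stack $(\gamma,\{E\})$. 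As this stack is reached from $\sbf'$ by an annotated run, its $\len$ lies below $\len(\sbf)$ (again by Lemma~\ref{lem:low-high-len}), so the induction hypothesis applies once more, with $(m'',\xi)\in\ass^r(\rd(E))$, giving an $r$-return with image $m''$ that ends with topmost $r$-stack $\sbf^r{\restriction}_{\{\xi\}}$. Concatenating the $\push^k$-step and these two returns and using Proposition~\ref{prop:return}(3) (a $\push^k$ with $k\geq r$ in front of a $k$-return followed by an $r$-return) produces the required annotated run, with image $m'\cdot m''=m$.

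I expect the main obstacle to be the bookkeeping in the $\push$ case: reading off the layers of the successor stack from Definition~\ref{def:successor}(\ref{pkt:succ-push}), verifying that $\tt^k{\restriction}_{\{\red^k(\rho)\}}$ really exposes the derivation tree $E$ at the top (combining Proposition~\ref{prop:restriction}, Proposition~\ref{prop:well-formed-multi}, and conditions~\ref{pkt:compo-ass} and~\ref{pkt:compo-red} of the composer), and keeping the monoid elements aligned. The $r$-return clauses of Proposition~\ref{prop:return} are invoked only in the ``easy'' right-to-left direction, so the side condition $\subrun{R}{i}{|R|}\notin\up^{r-1}$ is never checked by hand; it is delivered by that proposition once the $\hist$-level identities match. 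A minor but essential point is the well-foundedness of the induction, which is nothing but the $\len$-inequality of Lemma~\ref{lem:low-high-len}.
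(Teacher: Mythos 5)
Your proposal is correct and follows essentially the same route as the paper's proof: induction on $\len(\sbf)$ justified by the last inequality of Lemma~\ref{lem:low-high-len}, a case split on the shape of the derivation tree annotating the topmost $0$-stack, the splitting $m=m_1\cdot m_2$ via conditions~\ref{pkt:compo-ass} and~\ref{pkt:compo-red} of the composer together with Proposition~\ref{prop:restriction} in the $\push$/$\Phi$-part case, and assembly of the returns via the right-to-left direction of Proposition~\ref{prop:return}. The only (harmless) difference is that you spell out a few steps the paper leaves implicit, such as why the second application of the induction hypothesis in the $\push$ case is licensed by the $\len$-decrease along the intermediate $k$-return.
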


\proof
	We use induction on $\len(\sbf)$.
	Thanks to Lemma~\ref{lem:low-high-len} we can always use the induction assumption for the successor of $\sbf$.
	We have several cases depending on the shape of the derivation tree $D$ in $\sbf^0$ (that is, on the first operation in an annotated run starting in $\sbf$).
	We use the characterization of returns from Proposition~\ref{prop:return}.
	
	\paragraph*{Case 1}
	If $D=(\dtempty\gamma,p)$ then $\ass^r(\sigma)=\emptyset$, so the assumptions cannot hold.
	
	\paragraph*{Case 2}
	Suppose that $D=(\dtread p,D')$.
	Then the successor $\tt$ of $\sbf$ differs from $\sbf$ only in the topmost $0$-stack;
	the new topmost $0$-stack has type $\{\tau\}$ such that $\ass^r(\sigma)=\phi(a)\circ\ass^r(\tau)$,
	where $a$ is the letter read by the step between $\conf(\sbf)$ and $\conf(\tt)$.
	Because $(m,\xi)\in\ass^r(\sigma)$, there exists $m'$ such that $m=\phi(a)\cdot m'$ and $(m',\xi)\in\ass^r(\tau)$.
	By the induction assumption for $\tt$,
	there exists an annotated run $\SS$ starting in $\tt$ such that $\st(\SS)$ is an $r$-return, $\phi(\st(\SS))=m'$, 
	and $\tops^r(\SS(|\SS|))=\sbf^r{\restriction}_{\{\xi\}}$.
	Together with the step between $\sbf$ and $\tt$, it gives us an annotated run as required.
	
	\paragraph*{Case 3}
	Suppose that $D=(\dtpop\gamma,p,\tau)$, where $\tau\in\calT^k$.
	The successor of $\sbf$ is $\tt=\sbf^n:\sbf^{n-1}:\dots:\sbf^k$.
	Recall that $\ass^i(\sigma)=\emptyset$ for $i<k$, so $r\geq k$.
	Moreover, $\ass^k(\sigma)=\{(\mathbf{1}_M,\tau)\}$, and $\type(\sbf^k)=\pi_2(\ass^k(\sigma))=\{\tau\}$ by well-formedness of $\sbf$ (cf.~Proposition~\ref{prop:well-formed-multi}).
	If $r=k$, then $(m,\xi)=(\mathbf{1}_M,\tau)$.
	In this case the annotated run of length $1$ satisfies the thesis.
	Otherwise $r>k$, and $(m,\xi)\in\ass^r(\sigma)=\ass^r(\tau)$.
	Then as well $(m,\xi)\in\ass^r(\tau')$, where $\tau'$ is the run descriptor in the type of $\tops^0(\sbf^k)$ (since $\tau=\red^k(\tau')$).
	The induction assumption for $\tt$ gives us an annotated run $\SS$ starting in $\tt$ such that $\st(\SS)$ is an $r$-return, $\phi(\st(\SS))=m$, 
	and $\tops^r(\SS(|\SS|))=\sbf^r{\restriction}_{\{\xi\}}$.
	Together with the step between $\sbf$ and $\tt$, it gives us an annotated run as required.
	
	\paragraph*{Case 4}
	Suppose that $D=(\dtpush\gamma,p,D',\mathfrak{D})$.
	Let $\alpha$, $k$, $\Psi^i$, $\Phi^i$ be as in Definition~\ref{def:types}(\ref{pkt:dt-push}).
	The successor of $\sbf$ is
	\begin{align*}
		&\tt=\sbf^n:\sbf^{n-1}:\dots:\sbf^{k+1}:\tt^k:\sbf^{k-1}{\restriction}_{\pi_2(\Psi^{k-1})}:\sbf^{k-2}{\restriction}_{\pi_2(\Psi^{k-2})}:\dots:\sbf^1{\restriction}_{\pi_2(\Psi^1)}:(\alpha,\{D'\})\,,
	\end{align*}
	where $\tt^k=\sbf^k{\restriction}_{\pi_2(\Phi^k)}:\sbf^{k-1}{\restriction}_{\pi_2(\Phi^{k-1})}:\dots:\sbf^1{\restriction}_{\pi_2(\Phi^1)}:(\gamma,\mathfrak{D})$.
	Recall from our previous proofs that $\type(\sbf^i)=\pi_2(\Psi^i)$ for $i\in\prz{k+1}{n}$, and $\type(\tt^k)=\pi_2(\Psi^k)$.
	If $(m,\xi)\in\Psi^r$ and $r\neq k$, then the induction assumption for $\tt$ gives us an annotated run $\SS$ starting in $\tt$ such that $\st(\SS)$ is an $r$-return, $\phi(\st(\SS))=m$, 
	and $\tops^r(\SS(|\SS|))=\sbf^r{\restriction}_{\{\xi\}}$; together with the step between $\sbf$ and $\tt$, it gives us an annotated run as required.
	Otherwise $(m,\xi)\in\Phi^r$ and $r\leq k$.
	Recall that we have a composer $(\Phi^k,\Phi^{k-1},\dots,\Phi^0;\Psi^k;f)$.
	The definition of a composer (Condition~\ref{pkt:compo-ass}) gives us some $(m_1,\tau)\in\Phi^0$ and $m_2\in M$ such that $m=m_1\cdot m_2$ and $(m_2,\xi)\in\ass^r(\tau)$.
	We see that $(m_1,\red^k(\tau))\in\Psi^k$ (by Condition~\ref{pkt:compo-red} of the definition).
	The induction assumption for $\tt$ and $(m_1,\red^k(\tau))\in\Psi^k$ gives us an annotated run $\SS$ starting in $\tt$ such that $\st(\SS)$ is a $k$-return, $\phi(\st(\SS))=m_1$, 
	and $\tops^k(\SS(|\SS|))=\tt^k{\restriction}_{\{\red^k(\tau)\}}$.
	By Proposition~\ref{prop:restriction},
	\begin{align*}
		\tt^k{\restriction}_{\{\red^k(\tau)\}}=\sbf^k{\restriction}_{\pi_2(\ass^k(\tau))}:\sbf^{k-1}{\restriction}_{\pi_2(\ass^{k-1}(\tau))}:\dots:\sbf^1{\restriction}_{\pi_2(\ass^1(\tau))}:(\gamma,\mathfrak{D}){\restriction}_{\{\tau\}}\,.
	\end{align*}
	Recalling that $r\leq k$, 
	the induction assumption for $\SS(|\SS|)$ and $(m_2,\xi)\in\ass^r(\tau)$ gives us an annotated run $\TT$ starting in $\SS(|\SS|)$ such that $\st(\TT)$ is an $r$-return, $\phi(\st(\TT))=m_2$,
	and $\tops^r(\TT(|\TT|))=\sbf^r{\restriction}_{\{\xi\}}$.
	The step between $\sbf$ and $\tt$ composed with $\SS$ and then with $\TT$ gives us an annotated run as required.
\qed

\subsection{Completeness of Types}\lab{sec:completeness}

In the previous subsection we have proved soundness of the type system, which means that if a run descriptor is contained in the type of a configuration
then a corresponding run exists from this configuration.
As usual, we need the opposite direction (completeness) as well;
that is, having a run from a configuration, we want to imply that the corresponding run descriptor is in the type of this configuration.
This is shown in Lemma~\ref{lem:run2rd}, which is a converse of Lemma~\ref{lem:rd2run}.
While reversing Lemma~\ref{lem:rd2run} we have to remember that not every run can be extended to an annotated run, 
so in Lemma~\ref{lem:run2rd} we need to use runs, not annotated runs.

\begin{lem}\lab{lem:run2rd}
	Let $R$ be an $r$-return, and let $\xi\in\type_{\calA,\phi}(R(|R|))$.
	Then there exists a run descriptor $\sigma\in\type_{\calA,\phi}(R(0))$ such that $(\phi(R),\red^r(\xi))\in\ass^r(\sigma)$.
\end{lem}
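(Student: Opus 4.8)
The plan is to proceed by induction on the length $|R|$ of the $r$-return $R$, using the characterization of $r$-returns from Proposition~\ref{prop:return}. This is a direct reversal of the case analysis in the proof of Lemma~\ref{lem:rd2run}: there we built an annotated run from a run descriptor; here we build a derivation tree (hence a run descriptor in $\type_{\calA,\phi}$) from a run. The base case is $|R|=1$ with the only transition performing $\pop^r$ (Case (1) of Proposition~\ref{prop:return}): here $R(|R|)=\pop^r(R(0))$, so $\tops^r$ of $R(0)$ is the $r$-stack that gets uncovered, and $\xi\in\type_{\calA,\phi}(R(|R|))$ means there is a well-formed singular annotated $n$-stack $\tt$ with $\conf(\tt)=R(|R|)$ and $\rd(\tops^0(\tt))=\xi$. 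Letting $\gamma$ be the topmost stack symbol of $R(0)$ and $p$ its state, the derivation tree $(\dtpop\gamma,p,\red^r(\xi))$ derives a run descriptor $\sigma$ with $\ass^r(\sigma)=\{(\mathbf{1}_M,\red^r(\xi))\}$ — but we must exhibit a \emph{well-formed} annotated $n$-stack witnessing $\sigma\in\type_{\calA,\phi}(R(0))$, obtained by placing $(\gamma,\{(\dtpop\gamma,p,\red^r(\xi))\})$ on the top of the annotated stack $\tt$ (restricted/extended so that the assumption set $\ass^r(\sigma)$, together with $\ass^i(\red^r(\xi))$ for $i>r$, is matched by the stacks below). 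Since $\phi(R)=\mathbf{1}_M$ here, we are done.

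For the inductive step, I would split according to the first transition of $R$, following Cases (2) and (3) of Proposition~\ref{prop:return}. In Case (2) the first transition is $\read$, or $\pop^k$ for $k<r$, or $\push^k_\gamma$ for $k\neq r$, and $\subrun{R}{1}{|R|}$ is again an $r$-return; I apply the induction hypothesis to $\subrun{R}{1}{|R|}$ to get $\tau\in\type_{\calA,\phi}(R(1))$ with $(\phi(\subrun{R}{1}{|R|}),\red^r(\xi))\in\ass^r(\tau)$, and then reconstruct $\sigma$ for $R(0)$ by prepending the appropriate derivation-tree constructor: $(\dtread p,D')$ in the $\read$ case (adjusting monoid elements by $\phi(a)$ on the left, exactly as in Definition~\ref{def:types}(\ref{pkt:dt-read})); a $(\dtpop\gamma,\ldots)$ or $(\dtpush\gamma,\ldots)$ constructor otherwise, passing the assumption sets through appropriately — in the $\push^k_\gamma$ case one has to split $\ass^k(\tau)$ into sets $\Phi^k,\dots,\Phi^0$ by a composer, which is possible since the annotated stack below $R(1)$ induces such a decomposition by Proposition~\ref{prop:composer}. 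Case (3) is where the first transition is $\push^k_\gamma$ with $k\geq r$, and $\subrun{R}{1}{|R|}$ decomposes as a $k$-return followed by an $r$-return; I apply the induction hypothesis to the $r$-return part, obtaining a run descriptor $\tau$ for the configuration at the end of the $k$-return, and then apply it again (or apply Lemma~\ref{lem:rd2run}-style reasoning in reverse) to the $k$-return to fold $\tau$ into a run descriptor assigned to the copied $0$-stack, which becomes an element of the set $\Phi^0$ fed to the composer for the $(\dtpush\gamma,p,D',\mathfrak{D})$ constructor at $R(0)$.

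The main obstacle I expect is the bookkeeping in Case (3): one must correctly identify that the monoid element recorded in the new assumption is $\phi$ of the $k$-return composed with $\phi$ of the $r$-return, check Conditions~\ref{pkt:compo-ass}--\ref{pkt:compo-flag} of Definition~\ref{def:composer} for the composer we are building (in particular Condition~\ref{pkt:compo-inj}, injectivity of $\red^k$ on $\pi_2(\Phi^0)$, which may force us to choose the annotated stack witnessing membership in $\type_{\calA,\phi}$ carefully rather than arbitrarily), and — crucially — verify that the resulting annotated $n$-stack for $R(0)$ is well-formed so that $\sigma$ genuinely lies in $\type_{\calA,\phi}(R(0))$. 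A secondary subtlety is that $\type_{\calA,\phi}(R(|R|))$ is a \emph{union} over annotated stacks, so the witness for $\xi$ lives in some well-formed singular annotated $n$-stack $\tt_{|R|}$ that we do not control; we must propagate a coherent choice of such annotated stacks backward along $R$, which is exactly what the successor relation of Definition~\ref{def:successor} lets us do in reverse, since every step of $R$ that is mimicked by a step of an annotated run leaves the lower part of the stack essentially untouched (or duplicates it in a controlled way).
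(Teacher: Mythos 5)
Your proposal follows essentially the same route as the paper: induction on $|R|$ via the case analysis of Proposition~\ref{prop:return}, prepending the matching derivation-tree constructor from Definition~\ref{def:types} at each step (with the composer machinery of Proposition~\ref{prop:composer} for the push cases), and in the $\push^k$-plus-$k$-return case applying the induction hypothesis twice — once to the inner $r$-return and once to the $k$-return — before merging the two witness annotated stacks, which is exactly what the paper packages into its auxiliary Lemmas~\ref{lem:run2rd-read}--\ref{lem:run2rd-push-ret}. The subtleties you flag (well-formedness of the constructed witness, injectivity in Condition~\ref{pkt:compo-inj}, and the composition of monoid elements $\phi(S)\cdot\phi(T)$) are precisely the points the paper's auxiliary lemmas take care of, so the sketch is sound.
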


Before proving Lemma~\ref{lem:run2rd}, we first state four auxiliary lemmas.
These lemmas are used not only in the proof of Lemma~\ref{lem:run2rd}, but also in the next subsection.
Actually, every of these lemmas has a part denoted by $(\star)$; these parts are needed only in the next subsection.

\begin{lem}\lab{lem:run2rd-read}
	Let $R$ be a run of length $1$ whose transition is $\read$, and let $\tau\in\type_{\calA,\phi}(R(1))$.
	Then there exists $\sigma\in\type_{\calA,\phi}(R(0))$ such that $\ass^i(\sigma)=\phi(R)\circ\ass^i(\tau)$ for each $i\in\prz{1}{n}$.
	Moreover, there exists a well-formed singular annotated $0$-stack $\vv^0$ such that $\type(\vv^0)=\{\sigma\}$, and the following is satisfied.
	\begin{itemize}[label={$(\star)$}]
	\item	Let $\sbf'$ be a well-formed annotated $n$-stack with $\tops^0(\sbf')=\vv^0$.
		Then there exists an annotated run $\SS$ of length $1$ such that $\SS(0)=\sbf'$, the transition of $\st(\SS)$ is $\read$, 
		it reads the same letter as the transition of $R$, and $\type(\tops^0(\SS(1)))=\{\tau\}$.
	\end{itemize}
\end{lem}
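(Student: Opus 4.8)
The plan is to prove this by a direct ``completeness'' argument for a $\read$ step, essentially reversing the construction of Definition~\ref{def:successor} that produces a successor from an annotated stack.

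First I would unfold the definition of $\type_{\calA,\phi}(R(1))$: since $\tau\in\type_{\calA,\phi}(R(1))$, there is a well-formed singular annotated $n$-stack $\sbf^n=\uu^n:\uu^{n-1}:\dots:\uu^1:(\gamma,\{D'\})$ with $\conf(\sbf^n)=R(1)$ and $\rd(D')=\tau$ (its topmost $0$-stack is singular, hence carries a single derivation tree $D'$, and $\gamma$ is the topmost stack symbol of $R(1)$, which equals that of $R(0)$ because the $\read$ transition does not touch the stack). Writing $p=\pi_1(R(0))$ and letting $a$ be the letter read by $R$, the transition gives $\delta(\gamma,p)=\read(\vec q)$ with $\vec q$ injective and $\pi_1(R(1))=\vec q(a)$; moreover the state of $\tau$ is $\pi_1(R(1))=\vec q(a)$ because $\conf(\sbf^n)=R(1)$. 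These are exactly the hypotheses of Definition~\ref{def:types}(\ref{pkt:dt-read}) applied to $D'$, so $D=(\dtread p,D')$ is a derivation tree for $\gamma\vdash\sigma$ with $\sigma=(p,\Phi^n,\dots,\Phi^1,f)$, where $\Phi^i=\phi(a)\circ\ass^i(\tau)$. By construction $\ass^i(\sigma)=\phi(a)\circ\ass^i(\tau)=\phi(R)\circ\ass^i(\tau)$ for all $i\in\prz{1}{n}$, which is the first claim.

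Next I would build the annotated stack witnessing $\sigma\in\type_{\calA,\phi}(R(0))$ by simply swapping $D'$ for $D$ on top: set $\vv^0=(\gamma,\{D\})$ and $\tt^n=\uu^n:\uu^{n-1}:\dots:\uu^1:\vv^0$. The only point needing care is that $\tt^n$ is again well-formed and singular; this follows from Proposition~\ref{prop:well-formed-multi} together with the elementary observation that $\pi_2(\ass^i(\sigma))=\pi_2(\phi(a)\circ\ass^i(\tau))=\pi_2(\ass^i(\tau))$ --- multiplying by a monoid element does not change the run-descriptor coordinate --- so the condition $\type(\uu^i)=\pi_2(\ass^i(\tau))$ (which holds by well-formedness of $\sbf^n$) turns into $\type(\uu^i)=\pi_2(\ass^i(\sigma))$, i.e.\ Item~\ref{pkt:types-ass}, while Item~\ref{pkt:types-red} gives $\type(\tt^n)=\{\red^n(\sigma)\}$, hence $|\type(\tt^n)|=1$, i.e.\ Item~\ref{pkt:types-inj}. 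Since the $\st$-projections of $\tt^n$ and $\sbf^n$ coincide, the state of $\red^n(\sigma)$ is $p=\pi_1(R(0))$, and $\pi_2(R(1))=\pi_2(R(0))$, we obtain $\conf(\tt^n)=R(0)$; therefore $\sigma\in\type(\tops^0(\tt^n))\subseteq\type_{\calA,\phi}(R(0))$, and $\vv^0$ is the required well-formed singular annotated $0$-stack with $\type(\vv^0)=\{\sigma\}$.

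Finally, for the clause $(\star)$ I would take any well-formed annotated $n$-stack $\sbf'$ with $\tops^0(\sbf')=\vv^0$; by Item~\ref{pkt:types-red} of Proposition~\ref{prop:well-formed-multi} its type is the singleton $\{\red^n(\sigma)\}$, so $\sbf'$ is singular and $\conf(\sbf')$ is defined. Since $\vv^0=(\gamma,\{D\})$ with $D=(\dtread p,D')$, Definition~\ref{def:successor} tells us the successor of $\sbf'$ is obtained by replacing $D$ with $D'$ in the topmost $0$-stack; taking $\SS$ to be $\sbf'$ followed by this successor, $\st(\SS)$ is a length-$1$ run whose transition performs $\read$ (because $\delta(\gamma,p)=\read(\vec q)$) and reads $a$ (the letter $a'$ with $\vec q(a')$ equal to the state of $\rd(D')=\tau$, namely $\vec q(a)$, so $a'=a$ by injectivity of $\vec q$), that is, the same letter as the transition of $R$; and $\type(\tops^0(\SS(1)))=\{\rd(D')\}=\{\tau\}$, as required. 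I do not expect a genuine obstacle: the lemma is bookkeeping on top of Definitions~\ref{def:types} and~\ref{def:successor} and Proposition~\ref{prop:well-formed-multi}, and the one spot to watch is the well-formedness of $\tt^n$ and of $\sbf'$, which rests precisely on $\pi_2$ being insensitive to the monoid multiplication appearing in the $\read$ clause.
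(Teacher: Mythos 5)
Your proposal is correct and follows essentially the same route as the paper's own proof: unfold $\type_{\calA,\phi}(R(1))$ to obtain the annotated stack ending in $(\gamma,\{D'\})$, form $D=(\dtread p,D')$ via Definition~\ref{def:types}(\ref{pkt:dt-read}), re-establish well-formedness after swapping $D'$ for $D$ using Proposition~\ref{prop:well-formed-multi} and the fact that $\pi_2$ is insensitive to the monoid multiplication, and obtain $(\star)$ by taking the successor of $\sbf'$ as in Definition~\ref{def:successor}. The one point you flag as needing care (well-formedness after the swap) is exactly the point the paper also singles out, and your justification is the same.
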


\proof
	By definition of $\type_{\calA,\phi}$, 
	there exists a well-formed annotated stack $\sbf=\sbf^n:\sbf^{n-1}:\dots:\sbf^1:(\gamma,\{D'\})$ such that $\rd(D')=\tau$ and $\conf(\sbf)=R(1)$.
	Well-formedness of $\sbf$ implies that $\type(\sbf^i)=\pi_2(\ass^i(\tau))$ for each $i\in\prz{1}{n}$ (cf.~Proposition~\ref{prop:well-formed-multi}).
	When $p$ is the state of $R(0)$, Definition~\ref{def:types}(\ref{pkt:dt-read}) implies that $D=(\dtread p,D')$ is a derivation tree with conclusion $\gamma\vdash\sigma$, where
	$\sigma=(p,\Phi^n,\Phi^{n-1},\dots,\Phi^1,f)$ and $\Phi^i=\phi(R)\circ\ass^i(\tau)$ for each $i\in\prz{1}{n}$.
	Because $\pi_2(\ass^i(\sigma))=\pi_2(\ass^i(\tau))$, 
	the annotated stack $\sbf^n:\sbf^{n-1}:\dots:\sbf^1:(\gamma,\{D\})$ is well-formed (again, cf.~Proposition~\ref{prop:well-formed-multi}),
	so $\sigma\in\type_{\calA,\phi}(R(0))$.
	
	In order to prove Property~$(\star)$, as $\vv^0$ we take $(\gamma,\{D\})$.
	Clearly $\type(\vv^0)=\{\rd(D)\}=\{\sigma\}$.
	Consider now any well-formed annotated $n$-stack $\sbf'$ with $\tops^0(\sbf')=\vv^0$.
	Let $\SS$ be the annotated run from $\sbf'$ to its successor.
	Because the topmost $0$-stack of $\sbf'$ is annotated by $D=(\dtread p,D')$, the successor of $\sbf'$ indeed exists,
	and the transition of $\st(\SS)$ is $\read$.
	The state of $R(1)$ and the state of $\conf(\SS(1))$ are the same (namely $\pi_1(\rd(D'))$), so $\st(\SS)$ reads the same letter as $R$.
	Moreover, $\tops^0(\SS(1))=(\gamma,\{D'\})$, so $\type(\tops^0(\SS(1)))=\{\rd(D')\}=\{\tau\}$, as required.
\qed

\begin{lem}\lab{lem:run2rd-pop}
	Let $R$ be a run of length $1$ performing $\pop^k$, and let $\tau\in\type_{\calA,\phi}(R(1))$.
	Then there exists $\sigma\in\type_{\calA,\phi}(R(0))$ such that $\ass^i(\sigma)=\ass^i(\tau)$ for each $i\in\prz{k+1}{n}$, 
	and $\ass^k(\sigma)=\{(\mathbf{1}_M,\red^k(\tau))\}$.
	Moreover, there exists a well-formed singular annotated $k$-stack $\vv^k$ such that $\type(\tops^0(\vv^k))=\{\sigma\}$, and the following is satisfied.
	\begin{itemize}[label={$(\star)$}]
	\item	Let $\sbf'$ be a well-formed annotated $n$-stack with $\tops^k(\sbf')=\vv^k$.
		Then there exists an annotated run $\SS$ of length $1$ such that $\SS(0)=\sbf'$, and $\st(\SS)$ performs $\pop^k$, and $\type(\tops^0(\SS(1)))=\{\tau\}$.
	\end{itemize}
\end{lem}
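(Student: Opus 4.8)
The plan is to follow the proof of Lemma~\ref{lem:run2rd-read} almost verbatim; the only new point is that a $\pop^k$ transition destroys the whole topmost $(k-1)$-stack, so the run descriptor $\sigma$ we want is produced by Case~(\ref{pkt:dt-pop}) of Definition~\ref{def:types}, with $\tau^k:=\red^k(\tau)$, rather than by Case~(\ref{pkt:dt-read}). Write $p=\pi_1(R(0))$, $q=\pi_1(R(1))$, and $\gamma=\posl(\tops^0(R(0)))$. Since the transition of $R$ performs $\pop^k$ we have $\delta(\gamma,p)=(q,\pop^k)$, and since $\tau\in\type_{\calA,\phi}(R(1))$ the state of $\tau$ is $q$ (every run descriptor occurring in the type of a configuration carries that configuration's state, by the definition of $\type_{\calA,\phi}$ together with Item~\ref{pkt:types-inj} of Proposition~\ref{prop:well-formed-multi}). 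Hence $\red^k(\tau)\in\calT^k$ has state $q$, so Case~(\ref{pkt:dt-pop}) of Definition~\ref{def:types} yields a derivation tree $D=(\dtpop\gamma,p,\red^k(\tau))$ with conclusion $\gamma\vdash\sigma$ for $\sigma=(p,\ass^n(\tau),\dots,\ass^{k+1}(\tau),\{(\mathbf{1}_M,\red^k(\tau))\},\emptyset,\dots,\emptyset,\np)$, using $\ass^i(\red^k(\tau))=\ass^i(\tau)$ for $i\in\prz{k+1}{n}$. This $\sigma$ already meets the claimed conditions on the assumption sets, and moreover $\ass^i(\sigma)=\emptyset$ for $i\in\prz{1}{k-1}$.

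Next I would build the witnessing annotated stack and $\vv^k$. By the definition of $\type_{\calA,\phi}$, choose a well-formed singular annotated $n$-stack $\tt=\tt^n:\tt^{n-1}:\dots:\tt^0$ with $\conf(\tt)=R(1)$ and $\type(\tt^0)=\{\tau\}$; by Proposition~\ref{prop:well-formed-multi} then $\type(\tt^i)=\pi_2(\ass^i(\tau))$ for all $i\in\prz{1}{n}$, and $\type(\tops^k(\tt))=\{\red^k(\tau)\}$. Writing $\tops^{k-1}(R(0))=r^{k-1}:\dots:r^1:r^0$ with $\posl(r^0)=\gamma$, let $\uu^{k-1}$ be the annotated $(k-1)$-stack that annotates this topmost $0$-stack by $\{D\}$ and every $0$-stack of $r^{k-1},\dots,r^1$ by $\emptyset$; since $\ass^i(\sigma)=\emptyset$ for $i\in\prz{1}{k-1}$, Proposition~\ref{prop:well-formed-multi} gives that $\uu^{k-1}$ is well-formed, singular, with $\type(\uu^{k-1})=\{\red^{k-1}(\sigma)\}$. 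Set $\vv^k:=\tops^k(\tt):\uu^{k-1}$ and let $\sbf$ be $\tt$ with its topmost $k$-stack $\tops^k(\tt)$ replaced by $\vv^k$. From $\type(\tops^k(\tt))=\{\red^k(\tau)\}=\pi_2(\ass^k(\sigma))$, from $\type(\tt^i)=\pi_2(\ass^i(\tau))=\pi_2(\ass^i(\sigma))$ for $i\in\prz{k+1}{n}$, and from $\type(\uu^{k-1})=\{\red^{k-1}(\sigma)\}$, Proposition~\ref{prop:well-formed-multi} yields that $\vv^k$ and $\sbf$ are well-formed and singular with $\type(\tops^0(\vv^k))=\type(\tops^0(\sbf))=\{\sigma\}$. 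Finally $\poslinv(\st(\sbf))=\pi_2(R(0))$, because appending $\uu^{k-1}$ onto $\tops^k(\tt)$ rebuilds $\tops^k(R(0))$ while $\pop^k$ had left everything below the topmost $k$-stack unchanged; hence $\conf(\sbf)=R(0)$ and $\sigma\in\type_{\calA,\phi}(R(0))$.

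For property $(\star)$, let $\sbf'$ be a well-formed annotated $n$-stack with $\tops^k(\sbf')=\vv^k$. Then $\tops^0(\sbf')=\tops^0(\vv^k)=\tops^0(\uu^{k-1})$, which is the singleton-annotated $0$-stack $(\gamma,\{D\})$; hence $\sbf'$ is singular (Item~\ref{pkt:types-inj} of Proposition~\ref{prop:well-formed-multi}). As its topmost $0$-stack is annotated by $D=(\dtpop\gamma,p,\red^k(\tau))$, the annotated run $\SS$ of length $1$ with $\SS(0)=\sbf'$ exists, the transition of $\st(\SS)$ performs $\pop^k$, and $\SS(1)$ is well-formed and singular (the successor of a well-formed singular annotated $n$-stack is again one). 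Since $\vv^k=\tops^k(\tt):\uu^{k-1}$, removing the topmost $(k-1)$-stack of $\tops^k(\sbf')$ leaves precisely $\tops^k(\tt)$, so $\tops^0(\SS(1))=\tops^0(\tops^k(\tt))=\tops^0(\tt)=\tt^0$ and therefore $\type(\tops^0(\SS(1)))=\{\tau\}$, as required.

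The only genuine work beyond pattern-matching against the $\read$ case is the bookkeeping: verifying well-formedness of the empty-annotation padding of orders $1,\dots,k-1$ and the type identities $\type(\uu^{k-1})=\{\red^{k-1}(\sigma)\}$ and $\type(\vv^k)=\{\red^k(\sigma)\}$ through Proposition~\ref{prop:well-formed-multi}, and checking that $\tops^k(\sbf')=\vv^k$ pins down the topmost $(k-1)$-stack of $\sbf'$ tightly enough to identify the topmost $0$-stack produced by the successor computation. I expect this routine but notation-heavy verification, not any conceptual difficulty, to be the main obstacle.
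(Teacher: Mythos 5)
Your proposal is correct and follows essentially the same route as the paper's proof: derive $\sigma$ via Case~(\ref{pkt:dt-pop}) of Definition~\ref{def:types} with $\tau^k=\red^k(\tau)$, annotate the $(k-1)$-stack destroyed by $\pop^k$ with empty sets except for $\{D\}$ on top, set $\vv^k$ to be the annotated topmost $k$-stack of $R(1)$ extended by that padded $(k-1)$-stack, and read off $(\star)$ from the successor definition. The only (immaterial) difference is that you obtain the annotations of $R(1)$ from a full annotated $n$-stack $\tt$ and then discard all but $\tops^k(\tt)$, whereas the paper posits the annotated stack $\sbf^n:\dots:\sbf^k$ directly.
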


\proof
	Denote $R(0)=(p,s^n:s^{n-1}:\dots:s^0)$; then $\pi_2(R(1))=s^n:s^{n-1}:\dots:s^k$.
	By definition of $\type_{\calA,\phi}$, there exists a well-formed annotated stack $\sbf=\sbf^n:\sbf^{n-1}:\dots:\sbf^k$ 
	such that $\type(\tops^0(\sbf^k))=\{\tau\}$, 
	and $\st(\sbf^i)=\posl(s^i)$ for each $i\in\prz{k}{n}$.
	Then, by Proposition~\ref{prop:well-formed-multi}, $\type(\sbf^k)=\{\red^k(\tau)\}$.
	Well-formedness of $\sbf$ implies that $\type(\sbf^i)=\pi_2(\ass^i(\tau))$ for each $i\in\prz{k+1}{n}$ (cf.~Proposition~\ref{prop:well-formed-multi}).
	For $i\in\prz{1}{k-1}$ let $\sbf^i$ be the well-formed annotated $i$-stack such that $\type(\sbf^i)=\emptyset$ and $\st(\sbf^i)=\posl(s^i)$ (we annotate $s^i$ by empty sets).
	Finally, by Definition~\ref{def:types}(\ref{pkt:dt-pop}), $D=(\dtpop s^0,p,\red^k(\tau))$ is a derivation tree with conclusion $\gamma\vdash\sigma$, 
	where $\gamma=\posl(s^0)$ and
	\begin{align*}
		\sigma=(p,\ass^n(\tau),\ass^{n-1}(\tau),\dots,\ass^{k+1}(\tau),\{(\mathbf{1}_M,\red^k(\tau)\},\emptyset,\dots,\emptyset,\np)\,,
	\end{align*}
	so $\sbf^0=(\gamma,\{D\})$ has type $\{\sigma\}$.
	Using Proposition~\ref{prop:well-formed-multi} we observe that $\sbf^n:\sbf^{n-1}:\dots:\sbf^0$ is well-formed, so $\sigma\in\type_{\calA,\phi}(R(0))$.
	
	In order to prove Property~$(\star)$, as $\vv^k$ we take $\sbf^k:\sbf^{k-1}:\dots:\sbf^0$.
	Clearly $\type(\tops^0(\vv^k))=\type(\sbf^0)=\{\sigma\}$.
	Consider now any well-formed annotated $n$-stack $\sbf'$ with $\tops^k(\sbf')=\vv^k$.
	Let $\SS$ be the annotated run from $\sbf'$ to its successor.
	Because the topmost $0$-stack of $\sbf'$ is annotated by $D$, the successor of $\sbf'$ indeed exists, and $\st(\SS)$ performs $\pop^k$.
	Moreover, $\tops^k(\SS(1))=\sbf^k$, so $\type(\tops^0(\SS(1)))=\type(\tops^0(\sbf^k))=\{\tau\}$.
\qed

\begin{lem}\lab{lem:run2rd-push}
	Let $R$ be a run of length $1$ performing $\push^k_\alpha$, and let $\tau\in\type_{\calA,\phi}(R(1))$.
	Then there exists $\sigma\in\type_{\calA,\phi}(R(0))$ such that $\ass^i(\tau)\subseteq\ass^i(\sigma)$ for each $i\in\prz{1}{n}\setminus\{k\}$.
	Moreover, there exists a well-formed singular annotated $0$-stack $\vv^0$ such that $\type(\vv^0)=\{\sigma\}$, and the following is satisfied.
	\begin{itemize}[label={$(\star)$}]
	\item	Let $\sbf'$ be a well-formed annotated $n$-stack with $\tops^0(\sbf')=\vv^0$.
		Then there exists an annotated run $\SS$ of length $1$ such that $\SS(0)=\sbf'$, and $\st(\SS)$ performs $\push^k_\alpha$, and $\type(\tops^0(\SS(1)))=\{\tau\}$.
	\end{itemize}
\end{lem}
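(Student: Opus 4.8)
The plan is to follow the scheme of the proof of Lemma~\ref{lem:run2rd-pop}: extract the required push derivation tree from a witness for $\tau$ at $R(1)$, use it to define $\sigma$ and the annotated $0$-stack $\vv^0$, and then re-assemble a witness for $\sigma$ at $R(0)$.

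First I would write $R(0)=(p,s^n:s^{n-1}:\dots:s^1:s^0)$ with $s^0=(\gamma,x)$, so that $\pi_2(R(1))=s^n:\dots:s^{k+1}:(s^k:\dots:s^0):\posp(s^{k-1}:\dots:s^1:(\alpha,x))$, and, using the definition of $\type_{\calA,\phi}$, pick a well-formed singular annotated $n$-stack $\tt$ with $\conf(\tt)=R(1)$ and $\type(\tops^0(\tt))=\{\tau\}$. Decomposing $\tt$ along the shape of $\pi_2(R(1))$ gives $\tt=\tt^n:\dots:\tt^{k+1}:\tt^k_\circ:\uu^{k-1}:\dots:\uu^1:(\alpha,\{D'\})$, where $\tt^k_\circ=\vv^k:\vv^{k-1}:\dots:\vv^1:(\gamma,\mathfrak{D})$ annotates the lower copy $s^k:\dots:s^0$, each $\uu^i$ ($i<k$) is the $i$-th layer of $\tt$, and $\rd(D')=\tau$. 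By Proposition~\ref{prop:well-formed-multi} applied to $\tt$ and to $\tt^k_\circ$, the $\type$ of the $i$-th layer of $\tt$ equals $\pi_2(\ass^i(\tau))$ for $i\in\prz{1}{n}$; in particular $\type(\tt^k_\circ)=\pi_2(\ass^k(\tau))$, so Proposition~\ref{prop:composer} yields a composer $(\Phi^k,\Phi^{k-1},\dots,\Phi^0;\ass^k(\tau);f)$ with $\pi_2(\Phi^i)=\type(\vv^i)$ for $i\in\prz{1}{k}$ and $\pi_2(\Phi^0)=\type((\gamma,\mathfrak{D}))$. Setting $\Psi^i=\ass^i(\tau)$, noting that the (only) transition of $R$ gives $\delta(\gamma,p)=(q,\push^k_\alpha)$ with $q$ the state of $\tau$, and checking the clauses of Definition~\ref{def:types}(\ref{pkt:dt-push}), I obtain that $D=(\dtpush\gamma,p,D',\mathfrak{D})$ is a derivation tree for $\gamma\vdash\sigma$, where $\ass^i(\sigma)=\Psi^i$ for $i>k$, $\ass^k(\sigma)=\Phi^k$, and $\ass^i(\sigma)=\Psi^i\cup\Phi^i$ for $i<k$. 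In particular $\ass^i(\tau)=\Psi^i\subseteq\ass^i(\sigma)$ for every $i\neq k$, as claimed, and I take $\vv^0:=(\gamma,\{D\})$, so that $\type(\vv^0)=\{\sigma\}$.

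It then remains to exhibit a well-formed singular annotated $n$-stack $\sbf$ with $\conf(\sbf)=R(0)$ and $\tops^0(\sbf)=(\gamma,\{D\})$; by Proposition~\ref{prop:well-formed-multi} this reduces to annotating $s^n,\dots,s^1$ by well-formed stacks $\sbf^i$ with $\st(\sbf^i)=\posl(s^i)$ and $\type(\sbf^i)=\pi_2(\ass^i(\sigma))$ (singularity of $\sbf$ being then automatic, and the layers needing no mutual compatibility beyond having these types). For $i\geq k+1$ I take $\sbf^i=\tt^i$ and for $i=k$ I take $\sbf^k=\vv^k$; their types already equal $\pi_2(\Psi^i)$ and $\pi_2(\Phi^k)$, respectively. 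The hard part is the layers $i<k$: there $\type(\sbf^i)$ must be the \emph{union} $\pi_2(\Psi^i)\cup\pi_2(\Phi^i)$ of the ``new'' assumptions of $\tau$ — realized over $\posl(s^i)$ by the layer $\uu^i$ of $\tt$ — and the ``old'' assumptions of the composer — realized over the same $\posl(s^i)$ by the layer $\vv^i$ of $\tt^k_\circ$ — so these two annotations must be glued into a single well-formed one, since $R(0)$ contains only one copy of $s^i$. I would obtain this by showing that the family of types realizable over a fixed positionless stack is closed under union (and, via the restriction operator ${\restriction}$, under taking subsets), so that $\pi_2(\Psi^i)\cup\pi_2(\Phi^i)$ is realizable over $\posl(s^i)$; the ensuing bookkeeping — selecting the annotations coherently from the two sources and verifying the well-formedness conditions layer by layer with Proposition~\ref{prop:well-formed-multi} — is the technical heart of the argument. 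Finally, Property~$(\star)$ is immediate: for any well-formed annotated $n$-stack $\sbf'$ with $\tops^0(\sbf')=\vv^0=(\gamma,\{D\})$ the successor $\SS(1)$ of $\sbf'$ exists by Definition~\ref{def:successor}(\ref{pkt:succ-push}), its transition performs $\push^k_\alpha$ since $\delta(\gamma,p)=(q,\push^k_\alpha)$, and $\tops^0(\SS(1))=(\alpha,\{D'\})$, whence $\type(\tops^0(\SS(1)))=\{\rd(D')\}=\{\tau\}$.
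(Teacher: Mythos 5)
Your proposal is correct and follows essentially the same route as the paper: extract the witness at $R(1)$, obtain the composer via Proposition~\ref{prop:composer}, form $D=(\dtpush\gamma,p,D',\mathfrak{D})$, and reassemble a witness at $R(0)$ layer by layer. The ``gluing'' step you identify for the layers $i<k$ is exactly what the paper does with its $\oplus$ operator (merging two well-formed annotations over the same positionless stack after restricting one to the complement of the other's type), so the technical heart you flag is indeed the only nontrivial ingredient and is carried out just as you sketch.
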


\proof
	Before starting the actual proof, we observe that for each pair of well-formed annotated stacks $\sbf$, $\tt$ such that $\st(\sbf)=\st(\tt)$
	we can construct a well-formed annotated stack $\sbf\oplus\tt$ whose type is $\type(\sbf)\cup\type(\tt)$ and such that $\st(\sbf\oplus\tt)=\st(\sbf)$.
	We construct $\sbf\oplus\tt$ by induction on the structure of $\sbf$.
	Denote $\widetilde\Psi=\type(\tt)\setminus\type(\sbf)$.
	If $\sbf$ is of order $0$, then we take $\sbf\oplus\tt=(\gamma,\mathfrak{D}\cup\mathfrak{D}')$, where $\sbf=(\gamma,\mathfrak{D})$ and $\tt{\restriction}_{\widetilde\Psi}=(\gamma,\mathfrak{D}')$.
	If $\sbf=\tt=[\,]$, then $\sbf\oplus\tt=[\,]$ is fine.
	If $\sbf=\sbf^j:\sbf^{j-1}$ and $\tt{\restriction}_{\widetilde\Psi}=\tt^j:\tt^{j-1}$, then as $\sbf\oplus\tt$ we take $(\sbf^j\oplus\tt^j):(\sbf^{j-1}\oplus\tt^{j-1})$;
	observe that it is well-formed, because the types of $\sbf$ and $\tt{\restriction}_{\widetilde\Psi}$ are disjoint.

	Denote $R(0)=(p,s^n:s^{n-1}:\dots:s^1:(\gamma,x))$; then $\pi_2(R(1))$ equals
	\begin{align*}
		s^n:s^{n-1}:\dots:s^{k+1}:(s^k:s^{k-1}:\dots:s^1:(\gamma,x)):\posp(s^{k-1}:s^{k-2}:\dots:s^1:(\alpha,x))\,.
	\end{align*}
	By definition of $\type_{\calA,\phi}$, there exists a well-formed annotated stack $\sbf=\sbf^n:\sbf^{n-1}:\dots:\sbf^1:(\alpha,\{D'\})$ 
	in which $\sbf^k=\tt^k:\tt^{k-1}:\dots:\tt^1:(\gamma,\mathfrak{D})$,
	such that $\rd(D')=\tau$, and $\st(\sbf^i)=\posl(s^i)$ for each $i\in\prz{1}{n}\setminus\{k\}$, and $\st(\tt^i)=\posl(s^i)$ for each $i\in\prz{1}{k}$.
	Denote $\Psi^i=\ass^i(\tau)$ for each $i\in\prz{1}{n}$.
	Well-formedness of $\sbf$ implies that $\type(\sbf^i)=\pi_2(\Psi^i)$ for each $i\in\prz{1}{n}$ (cf.~Proposition~\ref{prop:well-formed-multi}),
	and, thanks to Proposition~\ref{prop:composer}, there exists a composer $(\Phi^k,\Phi^{k-1},\dots,\Phi^0;\Psi^k;f)$ such that $\type(\tt^i)=\pi_2(\Phi^i)$ for each $i\in\prz{1}{k}$
	and $\{\rd(E)\mid E\in\mathfrak{D}\}=\pi_2(\Phi^0)$.
	By Definition~\ref{def:types}(\ref{pkt:dt-push}), $D=(\dtpush\gamma,p,D',\mathfrak{D})$ is a derivation tree with conclusion $\gamma\vdash\sigma$, where
	\begin{align*}
		\sigma=(p,\Psi^n,\Psi^{n-1},\dots,\Psi^{k+1},\Phi^k,\Psi^{k-1}\cup\Phi^{k-1},\Psi^{k-2}\cup\Phi^{k-2},\dots,\Psi^1\cup\Phi^1,g)\,.
	\end{align*}
	Using Proposition~\ref{prop:well-formed-multi} we observe that the annotated stack
	\begin{align*}
		\sbf^n:\sbf^{n-1}:\dots:\sbf^{k+1}:\tt^k:(\sbf^{k-1}\oplus\tt^{k-1}):(\sbf^{k-2}\oplus\tt^{k-2}):\dots:(\sbf^1\oplus\tt^1):(\gamma,\{D\})
	\end{align*}
	is well-formed, so $\sigma\in\type_{\calA,\phi}(R(0))$.
	
	In order to prove Property~$(\star)$, as $\vv^0$ we take $(\gamma,\{D\})$.
	Clearly $\type(\vv^0)=\{\rd(D)\}=\{\sigma\}$.
	Consider now any well-formed annotated $n$-stack $\sbf'$ with $\tops^0(\sbf')=\vv^0$.
	Let $\SS$ be the annotated run from $\sbf'$ to its successor.
	Because the topmost $0$-stack of $\sbf'$ is annotated by $D$, the successor of $\sbf'$ indeed exists, and $\st(\SS)$ performs $\push^k_\alpha$.
	Moreover, $\tops^0(\SS(1))=(\alpha,\{D'\})$, so $\type(\tops^0(\SS(1)))=\{\rd(D')\}=\{\tau\}$.
\qed

\begin{lem}\lab{lem:run2rd-push-ret}
	Let $R$ be a run in which $\subrun{R}{0}{1}$ performs $\push^k_\alpha$ and $\subrun{R}{1}{|R|}$ is a $k$-return, let $\tau\in\type_{\calA,\phi}(R(|R|))$,
	and let $\rho\in\type_{\calA,\phi}(R(1))$ be such that $(\phi(R),\red^k(\tau))\in\ass^k(\rho)$.
	Then there exists $\sigma\in\type_{\calA,\phi}(R(0))$ such that $\phi(R)\circ\ass^i(\tau)\subseteq\ass^i(\sigma)$ for each $i\in\prz{1}{k}$.
	Moreover, there exists a well-formed singular annotated $0$-stack $\vv^0$ such that $\type(\vv^0)=\{\sigma\}$, and the following is satisfied.
	\begin{itemize}[label={$(\star)$}]
	\item	Let $\sbf'$ be a well-formed annotated $n$-stack with $\tops^0(\sbf')=\vv^0$.
		Then there exists an annotated run $\SS$ of length $1$ such that $\SS(0)=\sbf'$, and $\st(\SS)$ performs $\push^k_\alpha$, and $\type(\tops^0(\SS(1)))=\{\rho\}$,
		and $\tau\in\type(\tops^0(\pop^k(\SS(1))))$.
	\end{itemize}
\end{lem}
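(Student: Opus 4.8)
The plan is to imitate the proof of Lemma~\ref{lem:run2rd-push}, additionally grafting the information carried by $\tau$ into the annotated structure produced from $\rho$. Write $p$ for the state of $R(0)$ and $\gamma$ for its topmost stack symbol; since the first transition of $R$ performs $\push^k_\alpha$, we have $\delta(\gamma,p)=(q,\push^k_\alpha)$, where $q$ is the state of $\rho$. By Proposition~\ref{prop:push-return-jest-fajny} we have $\tops^k(R(0))\cong\tops^k(R(|R|))$, with $\hist$ matching the $0$-stacks; in particular the topmost $0$-stack of $R(|R|)$ is a copy of $\tops^0(R(0))$, so $\tau$ describes a run attached to that $0$-stack. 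I keep the conventions of the proof of Lemma~\ref{lem:run2rd-push} for the shape of $\pi_2(R(1))$ (a copy of $\tops^k(R(0))$ forming the lower $(k-1)$-stacks of $\tops^k(R(1))$, topped by the freshly created copy rooted at $\alpha$) and I write $\tops^k(R(0))=s^k:s^{k-1}:\dots:s^1:u^0_\gamma$.

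First I would extract, exactly as in Lemma~\ref{lem:run2rd-push}, a well-formed singular annotated $n$-stack $\sbf=\sbf^n:\dots:\sbf^{k+1}:\sbf^k:\sbf^{k-1}:\dots:\sbf^1:(\alpha,\{D'\})$ with $\conf(\sbf)=R(1)$ and $\rd(D')=\rho$, in which $\sbf^k=\tt^k:\tt^{k-1}:\dots:\tt^1:(\gamma,\mathfrak{D}_0)$ is the annotation of the lower copy of $\tops^k(R(0))$ (so $\st(\tt^i)=\posl(s^i)$). By Proposition~\ref{prop:well-formed-multi}, $\type(\sbf^k)=\pi_2(\ass^k(\rho))$, hence Proposition~\ref{prop:composer} supplies a composer $(\Phi^k,\Phi^{k-1},\dots,\Phi^0;\ass^k(\rho);f)$ with $\pi_2(\Phi^i)=\type(\tt^i)$ for $i\in\prz{1}{k}$ and $\pi_2(\Phi^0)=\{\rd(E)\mid E\in\mathfrak{D}_0\}$. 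Since $(\phi(R),\red^k(\tau))\in\ass^k(\rho)$, Conditions~\ref{pkt:compo-red} and~\ref{pkt:compo-inj} of a composer give a unique $\sigma_0\in\pi_2(\Phi^0)$ with $\red^k(\sigma_0)=\red^k(\tau)$, and the pair $(\phi(R),\sigma_0)$ lies in $\Phi^0$. Second, from $\tau\in\type_{\calA,\phi}(R(|R|))$ I extract a well-formed singular annotated $n$-stack $\uu=\uu^n:\dots:\uu^{k+1}:\uu^k:\dots:\uu^1:(\gamma,\{D_\tau\})$ with $\conf(\uu)=R(|R|)$, $\rd(D_\tau)=\tau$, $\st(\uu^i)=\posl(s^i)$, and (Proposition~\ref{prop:well-formed-multi}) $\type(\uu^i)=\pi_2(\ass^i(\tau))$ for $i\in\prz{1}{n}$.

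The grafting: replace every pair $(m,\sigma_0)$ of $\Phi^0$ by $(m,\tau)$, obtaining $\widetilde\Phi^0$, set $\widetilde\Phi^i=\bigcup\{m\circ\ass^i(\sigma')\mid(m,\sigma')\in\widetilde\Phi^0\}$ for $i\in\prz{1}{k}$, and replace in $\mathfrak{D}_0$ the tree with run descriptor $\sigma_0$ by $D_\tau$, obtaining $\mathfrak{D}$. Because $\red^k(\tau)=\red^k(\sigma_0)$ one checks routinely that $(\widetilde\Phi^k,\dots,\widetilde\Phi^0;\ass^k(\rho);\widetilde f)$ is again a composer for a suitable flag $\widetilde f$ (Conditions~\ref{pkt:compo-red} and~\ref{pkt:compo-inj} are preserved and $\widetilde\Phi^0$ determines the rest) and that $\tau\in\pi_2(\widetilde\Phi^0)=\{\rd(E)\mid E\in\mathfrak{D}\}$. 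Then Definition~\ref{def:types}(\ref{pkt:dt-push}) makes $D=(\dtpush\gamma,p,D',\mathfrak{D})$ a derivation tree for $\gamma\vdash\sigma$ with $\ass^k(\sigma)=\widetilde\Phi^k$ and $\ass^i(\sigma)\supseteq\widetilde\Phi^i$ for $i\in\prz{1}{k-1}$; as $(\phi(R),\tau)\in\widetilde\Phi^0$ we get $\phi(R)\circ\ass^i(\tau)\subseteq\widetilde\Phi^i\subseteq\ass^i(\sigma)$ for each $i\in\prz{1}{k}$, which is the required inclusion. To see $\sigma\in\type_{\calA,\phi}(R(0))$, put $D$ on top of the annotated $n$-stack whose level $i$ is $\sbf^i$ for $i\in\prz{k+1}{n}$, is $(\tt^k\oplus\uu^k){\restriction}_{\pi_2(\ass^k(\sigma))}$ for $i=k$, and is $(\sbf^i\oplus\tt^i\oplus\uu^i){\restriction}_{\pi_2(\ass^i(\sigma))}$ for $i\in\prz{1}{k-1}$, where $\oplus$ is the merge of annotated stacks from the proof of Lemma~\ref{lem:run2rd-push}; Proposition~\ref{prop:well-formed-multi} shows this stack is well-formed, singular, and has configuration $R(0)$. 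Finally take $\vv^0=(\gamma,\{D\})$; for Property~$(\star)$, any well-formed annotated $n$-stack $\sbf'$ with $\tops^0(\sbf')=\vv^0$ is singular by Proposition~\ref{prop:well-formed-multi}, its one-step successor $\SS$ performs $\push^k_\alpha$ by Definition~\ref{def:successor}(\ref{pkt:succ-push}), and $\tops^0(\SS(1))=(\alpha,\{D'\})$ has type $\{\rho\}$ while $\tops^0(\pop^k(\SS(1)))=(\gamma,\mathfrak{D})$ has type $\pi_2(\widetilde\Phi^0)\ni\tau$.

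The main obstacle I expect is the grafting step: verifying that substituting the slot $\sigma_0$ by $\tau$ leaves the tuple a composer, and that the lower substacks $s^1,\dots,s^k$ still realize the modified sets $\widetilde\Phi^1,\dots,\widetilde\Phi^k$, which now blend assumptions coming from the surviving trees of $\mathfrak{D}_0$ (delivered by the $\tt^i$ and $\sbf^i$) with the assumptions of $\tau$ (delivered by the $\uu^i$). This is exactly why the annotated stacks obtained from $\rho$ and from $\tau$ must be combined through $\oplus$, and it is the only point where one genuinely needs $\red^k(\tau)=\red^k(\sigma_0)$ rather than the stronger $\tau=\sigma_0$.
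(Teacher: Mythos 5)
Your proposal is correct and follows essentially the same route as the paper: the paper performs your ``grafting'' step by forming the merge $\uu^k\oplus\sbf^k$ of the two annotated $k$-stacks (with the $\tau$-side taking priority, so that the unique realization of $\red^k(\tau)$ becomes the one rooted at $D_\tau$) and only then extracts a single composer from the merged stack via Proposition~\ref{prop:composer}, whereas you extract the composer from the $\rho$-side first and substitute $\sigma_0\mapsto\tau$ by hand. Both arguments rest on the same ingredients — uniqueness of $\sigma_0$ from Conditions~\ref{pkt:compo-red} and~\ref{pkt:compo-inj}, the equality $\red^k(\sigma_0)=\red^k(\tau)$, and the deduplicating $\oplus$-merges at the lower shells — so this is a reorganization of the paper's proof rather than a different one.
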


\proof
	Denote $R(0)=(p,s^n:s^{n-1}:\dots:s^1:(\gamma,x))$; then $\pi_2(R(1))$ is as in the previous lemma,
	and $\tops^k(R(0))\cong\tops^k(R(|R|))$ (due to Proposition~\ref{prop:push-return-jest-fajny}).
	The definition of $\type_{\calA,\phi}$ gives us a well-formed annotated $n$-stack $\uu$ such that $\type(\tops^0(\uu))=\{\tau\}$ 
	and $\conf(\uu)=R(|R|)$,
	and a well-formed annotated stack $\sbf=\sbf^n:\sbf^{n-1}:\dots:\sbf^1:(\alpha,\{D'\})$
	such that $\rd(D')=\rho$ and $\conf(\sbf)=R(1)$.
	Denote $\uu^k=\tops^k(\uu)$.
	Then $\type(\tops^0(\uu^k))=\{\tau\}$, and $\st(\uu^k)=\posl(s^k:s^{k-1}:\dots:s^1:(\gamma,x))$, and $\st(\sbf^i)=\posl(s^i)$ for each $i\in\prz{1}{n}\setminus\{k\}$,
	and $\st(\sbf^k)=\posl(s^k:s^{k-1}:\dots:s^1:(\gamma,x))$.
	Denote $\Psi^i=\ass^i(\rho)$ for each $i\in\prz{1}{n}$.
	Well-formedness of $\sbf$ implies that $\type(\sbf^i)=\pi_2(\Psi^i)$ for each $i\in\prz{1}{n}$ (cf.~Proposition~\ref{prop:well-formed-multi}).
	By Proposition~\ref{prop:well-formed-multi}, $\type(\uu^k)=\{\red^k(\tau)\}$.
	Thanks to the assumption $\red^k(\tau)\in\pi_2(\ass^k(\rho))$ we have that $\type(\uu^k)\subseteq\pi_2(\Psi^k)$. 
	In effect, the annotated stack $\uu^k\oplus\sbf^k$ has type $\pi_2(\Psi^k)$, 
	equal to the type of $\sbf^k$, but additionally $\tau\in\type(\tops^0(\uu^k\oplus\sbf^k))$ 
	(recalling the construction from the previous proof, 
	we see that to $\uu^k\oplus\sbf^k$ we take all annotations from $\uu^k$ and some annotations from $\sbf^k$).
	Denote $\uu^k\oplus\sbf^k=\tt^k:\tt^{k-1}:\dots:\tt^1:(\gamma,\mathfrak{D})$.
	By Proposition~\ref{prop:composer} we have a composer $(\Phi^k,\Phi^{k-1},\dots,\Phi^0;\Psi^k;f)$ such that $\type(\tt^i)=\pi_2(\Phi^i)$ for each $i\in\prz{1}{k}$
	and $\{\rd(E)\mid E\in\mathfrak{D}\}=\pi_2(\Phi^0)$.
	Because $\tau\in\pi_2(\Phi^0)$ and $(\phi(R),\red^k(\tau))\in\Psi^k$, it holds $(\phi(R),\tau)\in\Phi^0$ 
	(thanks to Conditions~\ref{pkt:compo-red} and~\ref{pkt:compo-inj} of the definition of a composer), 
	which implies $\phi(R)\circ\ass^i(\tau)\subseteq\Phi^i$ for each $i\in\prz{1}{k}$ (thanks to Condition~\ref{pkt:compo-ass} of the definition).
	By Definition~\ref{def:types}(\ref{pkt:dt-push}), $D=(\dtpush\gamma,p,D',\mathfrak{D})$ is a derivation tree with conclusion $\gamma\vdash\sigma$, where
	\begin{align*}
		\sigma=(p,\Psi^n,\Psi^{n-1},\dots,\Psi^{k+1},\Phi^k,\Psi^{k-1}\cup\Phi^{k-1},\Psi^{k-2}\cup\Phi^{k-2},\dots,\Psi^1\cup\Phi^1,g)\,.
	\end{align*}
	We observe that the annotated stack
	\begin{align*}
		\sbf^n:\sbf^{n-1}:\dots:\sbf^{k+1}:\tt^k:(\sbf^{k-1}\oplus\tt^{k-1}):(\sbf^{k-2}\oplus\tt^{k-2}):\dots:(\sbf^1\oplus\tt^1):(\gamma,\{D\})
	\end{align*}
	is well-formed (by Proposition~\ref{prop:well-formed-multi}), so $\sigma\in\type_{\calA,\phi}(R(0))$.

	In order to prove Property~$(\star)$, as $\vv^0$ we take $(\gamma,\{D\})$.
	Clearly $\type(\vv^0)=\{\rd(D)\}=\{\sigma\}$.
	Consider now any well-formed annotated $n$-stack $\sbf'$ with $\tops^0(\sbf')=\vv^0$.
	Let $\SS$ be the annotated run from $\sbf'$ to its successor.
	Because the topmost $0$-stack of $\sbf'$ is annotated by $D$, the successor of $\sbf'$ indeed exists, and $\st(\SS)$ performs $\push^k_\alpha$.
	Moreover, $\tops^0(\SS(1))=(\alpha,\{D'\})$, so $\type(\tops^0(\SS(1)))=\{\rd(D')\}=\{\rho\}$.
	On the other hand, $\tops^0(\pop^k(\SS(1)))$ is $(\gamma,\mathfrak{D})$, 
	so its type is $\{\rd(E)\mid E\in\mathfrak{D}\}=\pi_2(\Phi^0)$, and we know that $\tau\in\pi_2(\Phi^0)$.
\qed

\proof[Proof of Lemma~\ref{lem:run2rd}]
	Recall that we are given an $r$-return $R$, and a run descriptor $\xi\in\type_{\calA,\phi}(R(|R|))$,
	and we have to show existence of a run descriptor $\sigma\in\type_{\calA,\phi}(R(0))$ such that $(\phi(R),\red^r(\xi))\in\ass^r(\sigma)$.
	We use induction on the length of the $r$-return $R$.
	Proposition~\ref{prop:return} gives us possible forms of $R$; we analyze these cases.

	Suppose first that $|R|=1$ and the only transition of $R$ performs $\pop^r$.
	We take $\sigma$ from Lemma~\ref{lem:run2rd-pop}, where we take $\xi$ as $\tau$ and $r$ as $k$.
	By assumption $\phi(R)=\mathbf{1}_M$, so $(\phi(R),\red^r(\xi))\in\ass^r(\sigma)$.

	Next, suppose that $\subrun{R}{1}{|R|}$ is an $r$-return, and the first transition of $R$ is $\read$, or performs $\pop^k$ for $k<r$, or $\push^k_\alpha$ for $k\neq r$.
	The induction assumption for $\subrun{R}{1}{|R|}$ gives us a run descriptor $\tau\in\type_{\calA,\phi}(R(1))$ such that $(\phi(\subrun{R}{1}{|R|}),\red^r(\xi))\in\ass^r(\tau)$,
	and Lemma~\ref{lem:run2rd-read}, or \ref{lem:run2rd-pop}, or \ref{lem:run2rd-push}, respectively, 
	used for $\subrun{R}{0}{1}$ gives us a run descriptor $\sigma\in\type_{\calA,\phi}(R(0))$ such that $\phi(\subrun{R}{0}{1})\circ\ass^r(\tau)\subseteq\ass^r(\sigma)$ 
	(where $\phi(\subrun{R}{0}{1})$ may be nontrivial only when the transition is $\read$).

	Finally, suppose that the first transition of $R$ performs $\push^k_\alpha$ for $k\geq r$ and $\subrun{R}{1}{|R|}=S\circ T$ for some $k$-return $S$ and $r$-return $T$.
	The induction assumption for $T$ gives us a run descriptor $\tau\in\type_{\calA,\phi}(T(0))$ such that $(\phi(T),\red^r(\xi))\in\ass^r(\tau)$, and
	the induction assumption for $S$ gives us a run descriptor $\rho\in\type_{\calA,\phi}(R(1))$ such that $(\phi(S),\red^k(\tau))\in\ass^k(\rho)$.
	Using Lemma~\ref{lem:run2rd-push-ret} for $\subrun{R}{0}{1}\circ S$ we obtain a run descriptor $\sigma\in\type_{\calA,\phi}(R(0))$ such that $\phi(S)\circ\ass^r(\tau)\subseteq\ass^r(\sigma)$ 
	(recalling that $r\leq k$), so $(\phi(R),\red^r(\xi))=(\phi(S)\circ\phi(T),\red^r(\xi))\in\phi(S)\circ\ass^r(\tau)\subseteq\ass^r(\sigma)$.
\qed

\subsection{Reproducing Upper Runs}\lab{sec:copy-upper}

Till now we were using types to describe returns from a configuration, but thanks to the decomposition given by Proposition~\ref{prop:upper} we can also describe $r$-upper runs.
This is stated in the following lemma.

\begin{lem}\lab{lem:rownowazne}
	Let $R$ be an $r$-upper run (where $r\in\prz{0}{n}$), and let $\tau\in\type_{\calA,\phi}(R(|R|))$.
	Then there exists a run descriptor $\sigma\in\type_{\calA,\phi}(R(0))$
	and a monotone function $f_R\colon\mathbb{N}\to\mathbb{N}$ such that the following is satisfied.
	\begin{itemize}[label={$(\star)$}]
	\item	Let $\sbf$ be a well-formed annotated $n$-stack such that $\type(\tops^0(\sbf))=\{\sigma\}$ 
		and $\tops^r(\conf(\sbf))\allowbreak\cong\tops^r(R(0))$.
		Then there exists a well-formed annotated $n$-stack $\tt$ such that $\type(\tops^0(\tt))\allowbreak=\{\tau\}$,
		and there exists a run $S$ from $\conf(\sbf)$ to $\conf(\tt)$ that is $(r,\phi)$-parallel to $R$, and such that
		\begin{align*}
			\low(\sbf)\leq f_R(\sharp(S)+\low(\tt))\,,&&\mbox{and}&& f_R(\high(\sbf))\geq\sharp(S)+\high(\tt)\,.
		\end{align*}
	\end{itemize}
\end{lem}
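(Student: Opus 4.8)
The plan is to prove Lemma~\ref{lem:rownowazne} by induction on the length of the $r$-upper run $R$, following the structural decomposition of upper runs supplied by Proposition~\ref{prop:upper}. For each of the four cases of that proposition I will produce the run descriptor $\sigma$ (using the appropriate completeness lemma from Subsection~\ref{sec:completeness}) and define the monotone function $f_R$ in terms of the functions obtained from the induction hypothesis, using composition, addition, and the function $H$ from Proposition~\ref{prop:common-bound} where needed.

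First, the base cases. If $|R|=0$, then $R(0)=R(|R|)$, so I take $\sigma=\tau$ and $f_R$ the identity; for any $\sbf$ with $\type(\tops^0(\sbf))=\{\sigma\}$ and $\tops^r(\conf(\sbf))\cong\tops^r(R(0))$ I take $\tt=\sbf$ and $S$ the empty run, which is $(r,\phi)$-parallel to $R$ by definition, and both inequalities hold trivially. If $|R|=1$ with a single $\read$ step, I invoke Lemma~\ref{lem:run2rd-read} to get $\sigma$ and the annotated $0$-stack $\vv^0$; given $\sbf$ with topmost $0$-stack $\vv^0$, Property~$(\star)$ of that lemma gives the annotated run $\SS$ of length $1$, whose $\st(\SS)$ is an $r$-upper run (a single $\read$) $(r,\phi)$-parallel to $R$; Lemma~\ref{lem:low-high-len} controls $\low$ and $\high$ across this one step, and I have to combine this with the fact that $\sbf$ is singular, so its $\low$ and $\high$ differ from $\tt$'s by a bounded amount. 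For the $\pop^r$-or-$\push^r$ single-step cases I use Lemma~\ref{lem:run2rd-pop} or Lemma~\ref{lem:run2rd-push} analogously — but here there is a subtlety: the given $\sbf$ only constrains the positionless \emph{topmost $r$-stack}, and for $\pop^r$ the lemma's $\vv^k$ lives in the topmost $k=r$-stack, so I must first extend $\tops^r(\sbf)$ by restricting/re-annotating so that $\tops^k(\sbf')=\vv^k$; since $\type(\tops^0)$ already equals $\{\sigma\}$ this is possible while keeping $\conf$'s stack positionless-equal on the relevant part.

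For the inductive cases, Case~(3) of Proposition~\ref{prop:upper} says $R=\subrun{R}{0}{1}\circ\subrun{R}{1}{|R|}$ where the first step is $\push^{r'}_\alpha$ for $r'\geq r+1$ and $\subrun{R}{1}{|R|}\in\ret^{r'}$. Here $R$ is $r$-upper, and by Proposition~\ref{prop:push-return-jest-fajny} the topmost $r$-stack is unchanged; I apply Lemma~\ref{lem:run2rd} to the $r'$-return $\subrun{R}{1}{|R|}$ to get a run descriptor in $\type(R(1))$ with $\tau$ (or rather $\red^{r'}(\tau)$) as an assumption, then Lemma~\ref{lem:run2rd-push-ret} to $\subrun{R}{0}{1}$ together with this return to obtain $\sigma\in\type_{\calA,\phi}(R(0))$; on the reproduction side, given $\sbf$ I use Property~$(\star)$ of Lemma~\ref{lem:run2rd-push-ret} for the first step and then Lemma~\ref{lem:rd2run} to realize the assumption as a genuine $r'$-return annotated run, with $\len$, $\low$, $\high$ controlled by Lemma~\ref{lem:low-high-len}. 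Case~(4), $R=R_1\circ R_2$ with $R_1,R_2$ nonempty $r$-upper, is handled by applying the induction hypothesis to $R_2$ (from $R(|R|)$ backward, getting some intermediate descriptor $\sigma'\in\type_{\calA,\phi}(R_1(|R_1|))$), and then to $R_1$ (with $\tau$ replaced by $\sigma'$), composing the two runs $S_1,S_2$ and setting $f_R=f_{R_1}\circ f_{R_2}$ (or a sum/composition that dominates both); the key point is that $(r,\phi)$-parallelism is compositional in the sense established just before Theorem~\ref{thm:types}, and that $\sharp(S_1\circ S_2)=\sharp(S_1)+\sharp(S_2)$ plus monotonicity of the $f$'s lets the inequalities telescope.

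The main obstacle I expect is the bookkeeping in Case~(3) and in the single-step $\push^r$/$\pop^r$ cases: the $\low$/$\high$ estimates from Lemma~\ref{lem:low-high-len} are stated for annotated runs that begin and end with \emph{concrete} annotated stacks, whereas here I am handed only a singular annotated stack whose \emph{topmost $r$-stack} is pinned down positionless-wise, and I must thread the annotated run through while the part of the stack below the topmost $r$-stack is whatever $\sbf$ supplies. Reconciling "the run constructed by the $(\star)$-properties of the completeness lemmas" with "an annotated run in the sense of Subsection~\ref{sec:annotated-run}" — and checking that the $\low$ of $\sbf$ really is bounded by $f_R$ of $\sharp(S)+\low(\tt)$ rather than by $\low$ of some auxiliary stack — is where the argument needs care; Proposition~\ref{prop:common-bound}'s function $H$ is exactly the tool for converting the one-sided $\high$-bound of Lemma~\ref{lem:low-high-len} back into something controlled by $\low$, so $f_R$ will typically be built as a composition involving $H$, the $\pow$ function, and the inductively obtained functions. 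Everything else — verifying $(r,\phi)$-parallelism, matching states, and the monotonicity of $f_R$ — is routine once the case analysis is set up.
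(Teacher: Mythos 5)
Your overall structure matches the paper's proof: induction on $|R|$ via the decomposition of Proposition~\ref{prop:upper}, the completeness lemmas of Subsection~\ref{sec:completeness} to produce $\sigma$, their $(\star)$-properties plus Lemma~\ref{lem:rd2run} to reproduce the run, and composition of the inductively obtained functions in Case~(4). You also correctly identify the genuinely delicate point, namely that $\sbf$ is only pinned down by $\type(\tops^0(\sbf))=\{\sigma\}$ and by the positionless content of its topmost $r$-stack, while the completeness lemmas hand you a \emph{specific} annotated stack $\vv^k$.

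The one concrete misstep is the tool you propose for resolving that point. Proposition~\ref{prop:common-bound} is an existence statement: it produces \emph{some} annotated stack whose $\high$ is bounded by $H$ of its $\low$; it gives no bound on $\high$ or $\low$ for the \emph{given} stack $\sbf$, so $H$ cannot convert the estimates of Lemma~\ref{lem:low-high-len} into the required inequalities (indeed the paper only uses $H$ later, in the proof of Theorem~\ref{thm:stypes}). What is actually needed is a dedicated swap lemma (Lemma~\ref{lem:swap-annotations} in the paper): for a fixed $\vv^k$, there is a monotone $f_{\vv^k}$ such that replacing the topmost $k$-stack of any $\sbf$ (with matching stack content) by $\vv^k$ changes $\low$ and $\high$ in a controlled way --- the $\low$ bound because $\low$ of any annotation of a fixed positionless $k$-stack is bounded by its number of $0$-stacks times $|\calT^0|$, and the $\high$ bound because only finitely many tuples of arguments of $\pow$ yield a given value. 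With $f_R:=f_{\vv^k}$ the single-step and push-return cases go through exactly as you sketch. A second, minor point: in Case~(4), plain composition $f_{R_1}\circ f_{R_2}$ does not suffice, because $\sharp(S_1)$ sits inside the argument of $f_1$ on one side and outside it on the other; you need a dominating function satisfying both $f_R(a+b)\geq f_1(a+f_2(b))$ and $f_R(a)\geq f_1(a)+f_2(f_1(a))$, which your hedge (``a sum/composition that dominates both'') admits but does not pin down.
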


The idea staying behind a proof of this lemma is that the run $R$ can be split into parts of two kinds.
First, we have parts for which the topmost $r$-stack of $R(0)$ is responsible.
Since in $\conf(\sbf)$ the topmost $r$-stack is the same, we can execute them also from $\conf(\sbf)$.
Second, we have parts not controlled by the topmost $r$-stack of $R(0)$, but (according to Proposition~\ref{prop:upper}) these are returns.
Analogous returns can be executed from $\conf(\sbf)$, because of the run descriptor $\sigma$ in $\type(\tops^0(\sbf))$.

Two aspects of the statement of the lemma can be understood basing on the above idea.
First, the correction function $f_R$ is really needed (i.e., the lemma would be false if the identity function was always taken as $f_R$).
Second, there does not need to exist an annotated run from $\sbf$ to $\tt$; we only prove existence of a (non-annotated) run from $\conf(\sbf)$ to $\conf(\tt)$.
The justification of both these phenomena is the same:
while creating the run from $\conf(\sbf)$, we completely ignore the annotations contained in the topmost $r$-stack of $\sbf$
(while an annotated run from $\sbf$ necessarily follows them);
instead, as long as the topmost $r$-stack of $\st(\sbf)$ controls the run, we copy steps of the run $R$ 
(only after leaving the topmost $r$-stack, we start using the annotations from $\sbf$).
In a sense, $f_R$ describes how much can be lost while ignoring annotations in the topmost $r$-stack of $\sbf$ 
(recall that, while ignoring annotation, this $r$-stack is the same as in $R(0)$, thus fixed; only the annotations are not fixed).

Before proving Lemma~\ref{lem:rownowazne} we show how Theorem~\ref{thm:types} follows from it.
For this purpose the inequalities regarding $\low$ and $\high$ are redundant; they are used later to prove Theorem~\ref{thm:stypes}.

\proof[Proof of Theorem~\ref{thm:types}]
	Recall that we are given a $k$-upper run $R$, and a configuration $c$ having the same $(\calA,\phi)$-type and the same positionless topmost $k$-stack as $R(0)$.
	Consider the run descriptor $\tau=(\pi_1(R(|R|)),\emptyset,\dots,\emptyset,\np)$ and observe that $\tau\in\type_{\calA,\phi}(R(|R|))$ 
	(we annotate the topmost $0$-stack of $R(|R|)$ by the derivation tree from Definition~\ref{def:types}(\ref{pkt:dt-empty})).
	Applying Lemma~\ref{lem:rownowazne} we obtain a run descriptor $\sigma\in\type_{\calA,\phi}(R(0))=\type_{\calA,\phi}(c)$.
	Then we take any well-formed annotated $n$-stack $\sbf$ such that $\type(\tops^0(\sbf))=\{\sigma\}$ and $\conf(\sbf)=c$,
	existing by the definition of $\type_{\calA,\phi}$.
	Since $\tops^k(\conf(\sbf))\cong\tops^k(R(0))$,
	from Property~$(\star)$ of Lemma~\ref{lem:rownowazne} we obtain a run $S$ that starts in $c$ and is $(k,\phi)$-parallel to $R$, as required.
\qed

Below, we give an auxiliary lemma, showing how to construct a function $f_R$ needed for Lemma~\ref{lem:rownowazne}.

\begin{lem}\lab{lem:swap-annotations}
	Let $k\in\prz{0}{n}$, and let $\vv^k$ be a well-formed singular annotated $k$-stack.
	Then there exists a monotone function $f_{\vv^k}\colon\Nat\to\Nat$ such that for all well-formed singular annotated $n$-stacks $\sbf, \sbf'$ with
	$\sbf=\sbf^n:\sbf^{n-1}:\dots:\sbf^{k+1}:\sbf^k$ and $\sbf'=\sbf^n:\sbf^{n-1}:\dots:\sbf^{k+1}:\vv^k$ and $\st(\sbf^k)=\st(\vv^k)$ 
	it holds that $\low(\sbf)\leq f_{\vv^k}(\low(\sbf'))$ and $f_{\vv^k}(\high(\sbf))\geq\high(\sbf')$.
\end{lem}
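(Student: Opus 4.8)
The plan is to reduce the statement to a single fact about the numeric functions $\low$ and $\high$: namely that replacing the topmost $k$-substack $\sbf^k$ of a singular well-formed annotated $n$-stack by another annotated $k$-stack $\vv^k$ with the same underlying stack content changes $\low$ and $\high$ only by a bounded amount, where the bound depends on $\vv^k$ but not on the rest of the stack $\sbf^n:\dots:\sbf^{k+1}$. The key observation is that, since $\sbf$ and $\sbf'$ are singular, Proposition~\ref{prop:low-high-singular} applies: $\low(\sbf)=\sum_{i=k}^n\low(\sbf^i)=\low(\sbf^k)+\big(\sum_{i=k+1}^n\low(\sbf^i)\big)$ and $\low(\sbf')=\low(\vv^k)+\big(\sum_{i=k+1}^n\low(\sbf^i)\big)$, and likewise $\high(\sbf)=\pow(\high(\sbf^n),\dots,\high(\sbf^{k+1}),\high(\sbf^k))$, $\high(\sbf')=\pow(\high(\sbf^n),\dots,\high(\sbf^{k+1}),\high(\vv^k))$. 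The common tail $\sum_{i=k+1}^n\low(\sbf^i)$ and the common prefix $\high(\sbf^n),\dots,\high(\sbf^{k+1})$ are exactly the part I do not control, but they are the same in both stacks.

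For $\low$: from the two displayed equalities, $\low(\sbf)=\low(\sbf^k)+\low(\sbf')-\low(\vv^k)\le\low(\sbf^k)+\low(\sbf')$, so one can simply take the additive constant $\low(\sbf^k)$ into the function. But $\sbf^k$ is not fixed — only $\vv^k$ is. However, $\st(\sbf^k)=\st(\vv^k)$, so $\sbf^k$ is an annotation of the same positionless $k$-stack as $\vv^k$, and hence $|\type(\sbf^k)|$ and the depths of the derivation trees occurring in $\sbf^k$ are bounded in terms of the (finitely many) stack symbols and the number $d$ from Proposition~\ref{prop:common-bound}; actually it is cleaner to note that $\low(\sbf^k)\le|\calT^0|\cdot(\text{number of }0\text{-stacks in }\vv^k)$, a bound depending only on $\vv^k$. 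So I would set $f_{\vv^k}(x)=x+B_1$ for a constant $B_1$ bounding $\low$ of any singular well-formed annotation of $\st(\vv^k)$; then $\low(\sbf)\le\low(\sbf')+B_1=f_{\vv^k}(\low(\sbf'))$ as required, and monotonicity is clear.

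For $\high$: the situation is symmetric but multiplicative. Since $\pow$ is monotone in every argument (this is used implicitly in the $\len$-arguments of Lemma~\ref{lem:low-high-len} and follows from Definition~\ref{def:exp}), replacing $\high(\vv^k)$ by $\high(\sbf^k)$ in the last argument only increases $\high$ if $\high(\sbf^k)\ge\high(\vv^k)$, and more generally $\high(\sbf)=\pow(\dots,\high(\sbf^{k+1}),\high(\sbf^k))\le\pow(\dots,\high(\sbf^{k+1}),B_2\cdot\high(\vv^k))$ where $B_2$ bounds $\high(\sbf^k)$ over all singular well-formed annotations $\sbf^k$ of $\st(\vv^k)$ (such a bound exists by the same finiteness reasoning, e.g.\ via $C_d$ and the number of $0$-stacks, using the $\high$-clauses of Definition~\ref{def:low-high} together with Equations~\eqref{eq:wl3} and~\eqref{eq:wl5} to absorb the multiplicative factor). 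Now by Inequality~\eqref{eq:wl3} I can move the factor $B_2$ out of the last $\pow$-argument into an outer multiplicative constant: $\pow(\dots,B_2\cdot\high(\vv^k))\le B_2'\cdot\pow(\dots,\high(\vv^k))=B_2'\cdot\high(\sbf')$ — wait, \eqref{eq:wl3} moves an \emph{exponent}, not a factor; instead I would use that $\pow(a_1,\dots,a_{k-1},B_2 a_k)\le\pow(B_2 a_1,\dots,B_2 a_k)\le\pow(\dots)$ bounded via~\eqref{eq:wl5} by $\pow(a_1,\dots,a_k)\cdot\pow(B_2,\dots,B_2)$, giving $\high(\sbf)\le C\cdot\high(\sbf')$ for a constant $C$ depending only on $\vv^k$ and $n$. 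Then $f_{\vv^k}(x)=\max(x+B_1,\,C\cdot x)$ works: it is monotone, $f_{\vv^k}(\low(\sbf'))\ge\low(\sbf')+B_1\ge\low(\sbf)$, and $f_{\vv^k}(\high(\sbf))\ge\high(\sbf)\ge$ — no: I need $f_{\vv^k}(\high(\sbf))\ge\high(\sbf')$, so I should instead ensure $f_{\vv^k}(x)\ge C'x$ where $C'$ bounds $\high(\sbf')/\high(\sbf)$, which by the same argument with the roles of $\sbf^k$ and $\vv^k$ swapped is again a constant depending only on $\vv^k$. So the final function is $f_{\vv^k}(x)=\max(x+B_1,\,C''x)$ for a suitable constant $C''$ handling both $\high$-directions.

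The main obstacle I anticipate is purely bookkeeping: producing the two constants cleanly — i.e.\ verifying that $\low$ and $\high$ of an arbitrary singular well-formed annotation of a \emph{fixed} positionless $k$-stack are bounded independently of which annotation is chosen. This uses that there are finitely many derivation trees up to conclusion (the number $d$ of Proposition~\ref{prop:common-bound}), that $|\type(\cdot)|\le|\calT^0|$ at each $0$-stack, and an induction over the structure of the annotated stack mirroring Definition~\ref{def:low-high}, invoking monotonicity of $\pow$ and Inequalities~\eqref{eq:wl3} and~\eqref{eq:wl5}. None of this is deep, but it must be stated carefully so that the bound genuinely does not leak any dependence on $\sbf^n:\dots:\sbf^{k+1}$.
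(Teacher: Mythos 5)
Your treatment of the $\low$ direction is correct and is essentially what the paper does: by Proposition~\ref{prop:low-high-singular} the two sums differ only in the order-$k$ term, and $\low(\sbf^k)$ is bounded by $|\calT^0|$ times the number of $0$-stacks in $\st(\vv^k)$, a quantity depending only on $\vv^k$. The problem is the $\high$ direction, where your plan has a genuine gap: the ratio $\high(\sbf')/\high(\sbf)$ is \emph{not} bounded by any constant depending only on $\vv^k$, so no function of the form $f_{\vv^k}(x)=C''x$ can work. To see this concretely, take $n-k=1$, so $\high(\sbf)=\pow(a_n,\high(\sbf^k))=(1+a_n)^{\high(\sbf^k)}-1$ and $\high(\sbf')=(1+a_n)^{\high(\vv^k)}-1$ with $a_n=\high(\sbf^n)$. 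If $\high(\sbf^k)=1$ and $\high(\vv^k)=2$, the ratio is $((1+a_n)^2-1)/a_n=a_n+2$, which is unbounded as $a_n$ ranges over all possible values of $\high(\sbf^n)$ --- exactly the part of the stack you do not control. The underlying issue is that the last argument of $\pow$ sits in an exponent, so perturbing it has an effect that is exponential in the outer arguments, not multiplicative. Your attempted repair via Inequality~\eqref{eq:wl5} also goes the wrong way: \eqref{eq:wl5} says a \emph{product} of $\pow$-values is \emph{at most} the $\pow$ of the products, so it gives a lower bound on $\pow(a_1,\dots,B_2a_k)$ in terms of $\pow(a_1,\dots,a_k)\cdot\pow(1,\dots,1,B_2)$, not the upper bound you need.

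The fix --- and this is what the paper actually does --- is to abandon linearity of $f_{\vv^k}$ entirely. Define $f_{\vv^k}(N)$ by induction on $N$ so that, in addition to the additive requirement $f_{\vv^k}(N)\geq N+\low(\sbf^k)$ for the $\low$ direction, one also requires $f_{\vv^k}(N)\geq\pow(a_n,a_{n-1},\dots,a_{k+1},\high(\vv^k))$ for \emph{every} tuple $(a_n,\dots,a_{k+1},a_k)$ of positive integers with $\pow(a_n,\dots,a_{k+1},a_k)=N$. This maximum is finite because no entry of such a tuple can exceed $N$, so there are only finitely many tuples to consider; one then adds $f_{\vv^k}(N)\geq f_{\vv^k}(N-1)$ for monotonicity. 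With this definition the $\high$ inequality is immediate from Proposition~\ref{prop:low-high-singular}: $\high(\sbf)$ is itself of the form $\pow(\high(\sbf^n),\dots,\high(\sbf^{k+1}),\high(\sbf^k))$, so $f_{\vv^k}(\high(\sbf))\geq\pow(\high(\sbf^n),\dots,\high(\sbf^{k+1}),\high(\vv^k))=\high(\sbf')$. Note that the resulting $f_{\vv^k}$ can grow like $N^{\high(\vv^k)}$ or faster, which is consistent with Lemma~\ref{lem:rownowazne} only asking for \emph{some} monotone correction function, not a linear one.
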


\proof
	We first define the function $f_{\vv^k}$, and then we show that it satisfies the thesis.
	Suppose that $k$ and $\vv^k$ are fixed.
	We construct $f_{\vv^k}(N)$ by induction on $N$.
	Consider some $N\in\Nat$.
	First, we ensure that $f_{\vv^k}(N)\geq N+\low(\sbf^k)$ for all annotated $k$-stacks $\sbf^k$ such that $\st(\sbf^k)=\st(\vv^k)$.
	This is possible, because $\low(\sbf^k)$ equals the number of productive run descriptors altogether in the types of all $0$-stacks in $\sbf^k$.
	So, although there are infinitely many annotated $k$-stacks $\sbf^k$ such that $\st(\sbf^k)=\st(\vv^k)$, 
	the value of $\low(\sbf^k)$ is bounded by the number of $0$-stacks in $\st(\sbf^k)$ times $|\calT^0|$.
	Next, we ensure that $f_{\vv^k}(N)\geq\pow(a_n,a_{n-1},\dots,a_{k+1},\high(\vv^k))$ for all tuples $(a_n,a_{n-1},\dots,a_{k+1},a_k)$ of positive integers 
	such that $\pow(a_n,a_{n-1},\dots,a_{k+1},a_k)=N$.
	Notice that there are only finitely many such tuples (in particular, none of $a_i$ may be greater than $N$).
	Finally, we ensure that $f_{\vv^k}(N)\geq f_{\vv^k}(N-1)$ (unless $N=0$), in order to ensure monotonicity of $f_{\vv^k}$
	(since we are defining $f_{\vv^k}$ by induction, $f_{\vv^k}(N-1)$ is already defined).
	
	Consider now well-formed singular annotated $n$-stacks $\sbf, \sbf'$ such that $\sbf=\sbf^n:\sbf^{n-1}:\dots:\sbf^{k+1}:\sbf^k$,
	and $\sbf'=\sbf^n:\sbf^{n-1}:\dots:\sbf^{k+1}:\vv^k$, and $\st(\sbf^k)=\st(\vv^k)$.
	Using Proposition~\ref{prop:low-high-singular} and properties of $f_{\vv^k}$ ensured in its definition, we obtain the required inequalities:
	\begin{align*}
		\low(\sbf)&=\low(\sbf^k)+\sum_{i=k+1}^n\low(\sbf^i)\leq f_{\vv^k}\Big(\sum_{i=k+1}^n\low(\sbf^i)\Big)\\
			&\leq f_{\vv^k}\Big(\low(\vv^k)+\sum_{i=k+1}^n\low(\sbf^i)\Big)=f_{\vv^k}(\low(\sbf'))\,,\\
		f_{\vv^k}(\high(\sbf))&=f_{\vv^k}\big(\pow\big(\high(\sbf^n),\high(\sbf^{n-1}),\dots,\high(\sbf^{k+1}),\high(\sbf^k)\big)\big)\\
			&\geq\pow\big(\high(\sbf^n),\high(\sbf^{n-1}),\dots,\high(\sbf^{k+1}),\high(\vv^k)\big)=\high(\sbf')\,.\tag*{\qed}
	\end{align*}

\proof[Proof of Lemma~\ref{lem:rownowazne}]
	The proof is by induction on the length of the $r$-upper run $R$.
	Proposition~\ref{prop:upper} gives us possible forms of $R$; we analyze these cases.

	If $R$ has length $0$, then we can take $\tau$ as $\sigma$ and the identity function as $f_R$;
	given $\sbf$ we take it as $\tt$, and as $S$ we take the run of length $0$ from $\conf(\sbf)$.
	
	Suppose that $R$ has length $1$, and its transition either is $\read$ or performs $\push^k_\alpha$.
	Then we construct $\sigma$ and $\vv^0$ out of $R$ and $\tau$ as in Lemma~\ref{lem:run2rd-read} or Lemma~\ref{lem:run2rd-push}, respectively.
	Recall that $\sigma\in\type_{\calA,\phi}(R(0))$, as needed.
	As $f_R$ we take the function $f_{\vv^0}$ constructed in Lemma~\ref{lem:swap-annotations} for the annotated $0$-stack $\vv^0$.
	Next, we are given a well-formed annotated $n$-stack $\sbf$ such that $\type(\tops^0(\sbf))=\{\sigma\}$ 
	and $\tops^r(\conf(\sbf))\cong\tops^r(R(0))$.
	Let $\sbf'$ be the annotated $n$-stack obtained from $\sbf$ by replacing its topmost $0$-stack with $\vv^0$.
	Because $\type(\vv^0)=\{\sigma\}=\type(\tops^0(\sbf))$, we have that $\sbf'$ is also well-formed.
	As $\tt$ we take the successor of $\sbf'$, and as $S$ the one-step run from $\conf(\sbf')$ (i.e., from $\conf(\sbf)$) to $\conf(\tt)$.
	By Property~$(\star)$ of Lemma~\ref{lem:run2rd-read} or Lemma~\ref{lem:run2rd-push}, respectively,
	the successor of $\sbf'$ indeed exists, and  $\type(\tops^0(\tt))=\{\tau\}$;
	moreover, the run $S$ performs the same transition as $R$, and in the case of $\read$ it reads the same letter, so $S$ is $(r,\phi)$-parallel to $R$.
	Recalling that $f_R$ satisfies the thesis of Lemma~\ref{lem:swap-annotations}, and using Lemma~\ref{lem:low-high-len},
	we obtain the required inequalities:
	\begin{align*}
		&\low(\sbf)\leq f_R(\low(\sbf'))\leq f_R(\sharp(S)+\low(\tt))\,,&&\mbox{and}\\
		&f_R(\high(\sbf))\geq\high(\sbf')\geq\sharp(S)+\high(\tt)\,.
	\end{align*}

	Next, suppose that $R$ has length $1$ and performs $\pop^k$ for $k\leq r$.
	We construct $\sigma$ and $\vv^k$ out of $R$ and $\tau$ as in Lemma~\ref{lem:run2rd-pop}.
	Recall that $\sigma\in\type_{\calA,\phi}(R(0))$, as needed.
	As $f_R$ we take the function $f_{\vv^k}$ constructed in Lemma~\ref{lem:swap-annotations} for the annotated $k$-stack $\vv^k$.
	Then, we are given an annotated stack $\sbf=\sbf^n:\sbf^{n-1}:\dots:\sbf^{k+1}:\sbf^k$ such that $\type(\tops^0(\sbf))=\{\sigma\}$ and $\tops^r(\conf(\sbf))\cong\tops^r(R(0))$.
	Consider $\sbf'=\sbf^n:\sbf^{n-1}:\dots:\sbf^{k+1}:\vv^k$.
	Because $\type(\tops^0(\vv^k))=\{\sigma\}=\type(\tops^0(\sbf))$, 
	by Proposition~\ref{prop:well-formed-multi} we have that $\type(\vv^k)=\{\red^k(\sigma)\}=\type(\tops^k(\sbf))$;
	in effect $\sbf'$ is also well-formed.
	As $\tt$ we take the successor of $\sbf'$, and as $S$ the one-step run from $\conf(\sbf')$ to $\conf(\tt)$.
	By Property~$(\star)$ of Lemma~\ref{lem:run2rd-pop}, the successor of $\sbf$ indeed exists, and $\type(\tops^0(\tt))=\{\tau\}$,
	and the transition of $S$ performs $\pop^k$.
	Clearly $S$ is $(r,\phi)$-parallel to $R$.
	The required inequalities are obtained in the same way as in the previous case, due to Lemmas~\ref{lem:swap-annotations} and~\ref{lem:low-high-len}.
	
	Next, suppose that $\subrun{R}{0}{1}$ performs $\push^k_\alpha$ and $\subrun{R}{1}{|R|}$ is a $k$-return, where $k\geq r+1$.
	This case is similar, but slightly more complicated.
	First, using Lemma~\ref{lem:run2rd}, we construct a run descriptor $\rho\in\type_{\calA,\phi}(R(1))$ such that $(\phi(R),\red^k(\tau))\in\ass^k(\rho)$.
	Then, we construct $\sigma$ and $\vv^0$ out of $R$, $\tau$, and $\rho$ as in Lemma~\ref{lem:run2rd-push-ret}.
	Recall that $\sigma\in\type_{\calA,\phi}(R(0))$, as needed.
	As $f_R$ we take the function $f_{\vv^0}$ constructed in Lemma~\ref{lem:swap-annotations} for the annotated $0$-stack $\vv^0$.
	When we are given $\sbf$, we proceed as follows.
	First, as $\sbf'$ we take the annotated $n$-stack obtained from $\sbf$ by replacing its topmost $0$-stack with $\vv^0$.
	As in the previous cases, $\sbf'$ is well-formed.
	Let $\SS$ be the one-step annotated run from $\sbf'$, and let $\SS(1)=\uu^n:\uu^{n-1}:\dots:\uu^0$.
	By Property~$(\star)$ of Lemma~\ref{lem:run2rd-push-ret}, $\SS$ indeed exists, 
	and $\st(\SS)$ performs $\push^k_\alpha$, and $\type(\uu^0)=\{\rho\}$, and $\tau\in\type(\tops^0(\uu^k))$.
	Because $(\phi(R),\red^k(\tau))\in\ass^k(\rho)$, 
	Lemma~\ref{lem:rd2run} gives us an annotated run $\TT$ starting in $\SS(1)$ such that $\st(\TT)$ is a $k$-return, 
	$\phi(\st(\TT))=\phi(R)$, and $\tops^k(\TT(|\TT|))=\uu^k{\restriction}_{\red^k(\tau)}$.
	As $S$ we take $\st(\SS\circ\TT)$, and as $\tt$ we take $\TT(|\TT|)$.
	Proposition~\ref{prop:restriction} implies that $\{\tau\}=\type(\tops^0(\uu^k{\restriction}_{\red^k(\tau)}))=\type(\tops^0(\tt))$.
	By Proposition~\ref{prop:push-return-jest-fajny}
	we obtain that $\tops^k(R(0))\cong\tops^k(R(|R|))$
	and $\tops^k(S(0))\cong\tops^k(S(|S|))$.
	Since $k\geq r+1$, $\tops^r(R(|R|))\cong\tops^r(S(|S|))$ as well.
	Together with $\phi(S)=\phi(\subrun{S}{1}{|S|})=\phi(\subrun{R}{1}{|R|})=\phi(R)$ this means that $R$ and $S$ are $(r,\phi)$-parallel, 
	because by definition no suffix of a $k$-return can be $(k-1)$-upper ($r$-upper).
	The inequalities are obtained as in the previous cases.

	Finally, suppose that $R$ is a composition of shorter $k$-upper runs $R_1$ and $R_2$.
	The induction assumption used for $R_2$ and for $\tau$ gives us a run descriptor $\rho\in\type_{\calA,\phi}(R_2(0))$ and a function $f_2$.
	Then, the induction assumption used for $R_1$ and for $\rho$ gives us a run descriptor $\sigma\in\type_{\calA,\phi}(R(0))$ and a function $f_1$.
	As $f_R$ we take a monotone function such that for each pair $a$, $b$ of natural numbers it holds $f_R(a)\geq f_1(a)+f_2(f_1(a))$ and $f_R(a+b)\geq f_1(a+f_2(b))$.
	
	Then, we are given a well-formed annotated $n$-stack $\sbf$ such that $\type(\tops^0(\sbf))=\{\sigma\}$ 
	and $\tops^r(\conf(\sbf))\cong\tops^r(R(0))$.
	From the induction assumption for $R_1$ we obtain a well-formed annotated $n$-stack $\uu$ such that $\type(\tops^0(\uu))=\{\rho\}$, 
	and a run $S_1$ from $\conf(\sbf)$ to $\conf(\uu)$ being $(r,\phi)$-parallel to $R_1$.
	Then, from the induction assumption for $R_2$ we obtain a well-formed annotated $n$-stack $\tt$ such that $\type(\tops^0(\tt))=\{\tau\}$,
	and a run $S_2$ from $\conf(\uu)$ to $\conf(\tt)$ being $(r,\phi)$-parallel to $R_2$.
	As $S$ we take the composition of $S_1$ and $S_2$; it is $(r,\phi)$-parallel to $R$.
	Using the inequalities from the induction assumption we obtain
	\begin{align*}
		\low(\sbf)&\leq f_1(\sharp(S_1)+\low(\uu))\leq f_1(\sharp(S_1)+f_2(\sharp(S_2)+\low(\tt)))\\
			&\leq f_R(\sharp(S_1)+\sharp(S_2)+\low(\tt))=f_R(\sharp(S)+\low(\tt))
		\,\\
		f_R(\high(\sbf))&\geq f_1(\high(\sbf))+f_2(f_1(\high(\sbf))
			\geq\sharp(S_1)+\high(\uu)+f_2(\sharp(S_1)+\high(\uu))\\
			&\geq\sharp(S_1)+f_2(\high(\uu))
			\geq\sharp(S_1)+\sharp(S_2)+\high(\tt)=\sharp(S)+\high(\tt)\,.\tag*{\qed}
	\end{align*}

\subsection{Sequence-Equivalence}\lab{sec:seq-equiv}

In the final part of this section we define sequence-equivalence, and we prove Theorem~\ref{thm:stypes}.

\begin{defi}
	Let $(c_i)_{i=1}^\infty$ be a sequence of configurations.
	We define $\stype\!\left((c_i)_{i=1}^\infty\right)\subseteq\calT^0$ to be the set of such $\sigma\in\calT^0$ that
	there exists a sequence of well-formed annotated $n$-stacks $(\sbf_i)_{i=1}^\infty$ for which $\type(\tops^0(\sbf_i))=\{\sigma\}$ and $\conf(\sbf_i)=c_i$ for each $i$, 
	and the sequence $(\high(\sbf_i))_{i=1}^\infty$ is bounded
	(notice that we require the same type $\{\sigma\}$ for all $i$).
	We say that two sequences of configurations, $(c_i)_{i=1}^\infty$ and $(d_i)_{i=1}^\infty$,
	are \emph{$(\calA,\varphi)$-sequence-equivalent} when it holds that
	$\stype\!\left((c_i)_{i=1}^\infty\right)=\stype\!\left((d_i)_{i=1}^\infty\right)$.
\end{defi}

\proof[Proof of Theorem~\ref{thm:stypes}]
	Recall that we are given a run $R\circ R'$ in which $R$ is $k$-upper and $R'$ is an $n$-return;
	we are also given two infinite sequences of configurations $c_1,c_2,\dots$ and $d_1,d_2,\dots$ that are $(\calA,\phi)$-sequence-equivalent, and 
	in which all configurations have the same $(\calA,\phi)$-type and the same positionless topmost $k$-stack as $R(0)$.
	Our goal is to construct, for each $i$, runs $S_i\circ S_i'$ from $c_i$, and $T_i\circ T_i'$ from $d_i$ in which $S_i$ and $T_i$ are $(k,\phi)$-parallel to $R$,
	and $S_i'$ and $T_i'$ are $n$-returns such that $\phi(S_i')=\phi(T_i')=\phi(R')$, and such that
	the sequences $\sharp(S_1\circ S_1'),\sharp(S_2\circ S_2'),\dots$ and $\sharp(T_1\circ T_1'),\sharp(T_2\circ T_2'),\dots$ are either both bounded or both unbounded.
	Let $\xi=(\pi_1(R'(|R'|)),\emptyset,\dots,\emptyset,\np)$. We see that $\xi\in\type_{\calA,\phi}(R'(|R'|))$,
	because we can annotate the topmost $0$-stack $(\gamma,x)$ by $\{(\dtempty\gamma,\pi_1(R'(|R'|)))\}$ and all other $0$-stacks by $\emptyset$.
	Lemma~\ref{lem:run2rd} applied to $R'$ and $\xi$ implies that $\type_{\calA,\phi}(R'(0))$ contains a run descriptor $\tau$ such that $(\phi(R'),\red^n(\xi))\in\ass^n(\tau)$.
	Then, Lemma~\ref{lem:rownowazne} applied to $R$ and $\tau$ gives us a run descriptor $\sigma\in\type_{\calA,\phi}(R(0))$ and a function $f_R$.
	We have two cases.
	
	\paragraph*{Case 1}
	Suppose first that $\sigma\in\stype\!\left((c_i)_{i=1}^\infty\right)$ (hence also $\sigma\in\stype\!\left((d_i)_{i=1}^\infty\right)$).
	Then we have a sequence of annotated $n$-stacks $(\sbf_i)_{i=1}^\infty$ such that $\type(\tops^0(\sbf_i))=\{\sigma\}$ and $\conf(\sbf_i)=c_i$ for each $i$, 
	and the sequence $(\high(\sbf_i))_{i=1}^\infty$ is bounded.
	Recall that the topmost $k$-stacks of $c_i$ and of $R(0)$ are positionless-equal, for each $i$.
	We use Property $(\star)$ of Lemma~\ref{lem:rownowazne} for the annotated stack $\sbf_i$.
	We obtain a well-formed annotated $n$-stack $\tt_i$ such that $\type(\tops^0(\tt_i))=\{\tau\}$, and a run $S_i$ from $c_i$ to $\conf(\tt_i)$ being $(k,\phi)$-parallel to $R$ and such that 
	$f_R(\high(\sbf_i))\geq\sharp(S_i)+\high(\tt_i)$.
	Next, for each $i$ we apply Lemma~\ref{lem:rd2run} for $\tt_i$ and for the pair $(\phi(R'),\red^n(\xi))$.
	We obtain an annotated run $\SS'_i$ starting in $\tt_i$ such that $\st(\SS'_i)$ is an $n$-return, $\phi(\st(\SS'_i))=\phi(R')$, and $\type(\SS'_i(|\SS'_i|))=\{\red^n(\xi)\}$.
	Let $S'_i=\st(\SS'_i)$, and $\uu_i=\SS'_i(|\SS'_i|)$.
	Thanks to Lemma~\ref{lem:low-high-len}, $\high(\tt_i)\geq\sharp(S'_i)+\high(\uu_i)$.
	Because $(\high(\sbf_i))_{i=1}^\infty$ is bounded, we see that the sequence $(\sharp(S_i\circ S'_i))_{i=1}^\infty$ is bounded as well.
	
	We perform the same construction for $(d_i)_{i=1}^\infty$, obtaining runs $T_i\circ T'_i$ from $d_i$, such that $(\sharp(T_i\circ T'_i))_{i=1}^\infty$ is bounded.

	\paragraph*{Case 2}
	This is the opposite case: we suppose that $\sigma\not\in\stype\!\left((c_i)_{i=1}^\infty\right)$.
	Recall that $\sigma\in\type_{\calA,\phi}(R(0))=\type_{\calA,\phi}(c_i)$ for each $i$.
	Using Proposition~\ref{prop:common-bound} we construct a well-formed annotated $n$-stack $\sbf_i$ such that $\type(\tops^0(\sbf_i))=\{\sigma\}$, and $\conf(\sbf_i)=c_i$, 
	and $\high(\sbf_i)\leq H(\low(\sbf_i))$ for a function $H$ not depending on $i$.
	Our assumption ensures that $(\high(\sbf_i))_{i=1}^\infty$ is unbounded, so $(\low(\sbf_i))_{i=1}^\infty$ is unbounded as well.
	We construct the runs exactly in the same way as in Case 1, but this time we concentrate on the opposite inequalities.
	For each $i$ it holds that $\low(\sbf_i)\leq f_R(\sharp(S_i)+\low(\tt_i))\leq f_R(\sharp(S_i\circ S_i')+\low(\uu_i))$.
	Additionally $\low(\uu_i)=0$, because $\type(\uu_i)=\type(\SS_i'(|\SS_i'|))=\{\red^n(\xi)\}\subseteq\calT_\np$ (cf.~Proposition~\ref{prop:pr-positive}).
	It follows that $(\sharp(S_i\circ S_i'))_{i=1}^\infty$ is unbounded, and similarly $(\sharp(T_i\circ T_i'))_{i=1}^\infty$.
\qed

\section{Milestone Configurations}\lab{sec:milestone}

In this section we define so-called milestone configurations and we show their basic properties.
The intuitions are as follows.
Consider a long run reading only stars.
Looking globally, the stack grows (or remains unchanged).
Locally, however, some parts of the stack might be constructed, and a few steps later removed.
In order to handle this behavior, we concentrate on those configurations of the run in which the stack is minimal (in appropriate sense) and will not be destroyed later;
they are called milestone configurations.

The idea of considering milestone configurations comes from Kartzow \cite{kartzow-tree-automatic}, but our definition is slightly different
(namely, their definition is relative to a run, which can be arbitrary, while our definition is absolute: we always consider the run reading only stars).
For this section we fix an $n$-DPDA $\calA$ with stack alphabet $\Gamma$ and with input alphabet $A$ containing a distinguished symbol denoted $\star$ (star).

\begin{defi}\lab{def:milestone}
	We say that a configuration $c$ is a \emph{milestone} (or a milestone configuration) if there exists an infinite run $R$ from $c$ reading only stars, 
	and an infinite set $I$ of indices such that $0\in I$, and $\subrun{R}{i}{j}\in \up^0$ for all $i,j\in I$, $i\leq j$.
\end{defi}

\begin{exa}
	Consider a DPDA of order 3.
	Suppose that there is a run that begins in a stack $\poslinv([[[a,a]]])$, and performs forever the following sequence of operations, in a loop:
	\begin{align*}
		\push^2_a\,,\ \push^3_a\,,\ \pop^1\,,\ \push^3_a\,,\ \pop^2\,,\ \push^3_a\,.
	\end{align*}
	Then the positionless topmost $2$-stack is, alternately, 
	$[[a,a]]$, or $[[a,a],[a,a]]$, or $[[a,a],[a]]$.
	This run does not read any symbols, so it is a degenerate case of an infinite run that reads only stars.
	Configurations with positionless topmost $2$-stack $[[a,a]]$ are milestones (and no other configurations in this run).
	To obtain a less degenerate case, we may consider a loop of transitions as above, but containing additionally a $\read$ transition;
	when a star is read, the loop continues (we do not care what happens when any other symbol is read).
	Then again configurations having $[[a,a]]$ as the topmost $2$-stack are milestones.
\end{exa}

If $c$ is a milestone, $R$ the (unique) infinite run from $c$ reading only stars, and $I$ a set like in the definition of a milestone, 
then for each $i\in I$ the configuration $R(i)$ is a milestone as well.
The following lemma shows that in fact the set $I$ can contain all indices $i$ for which $R(i)$ is a milestone.

\begin{lem}\lab{lem:milestones-sequence}
	Let $R$ be a run between two milestone configurations. If $R$ reads only stars, then it is $0$-upper.
\end{lem}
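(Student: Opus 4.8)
The plan is to reduce to the unique infinite ``star run'' and then argue by induction on $|R|$.

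First I would record a determinism observation: since $\calA$ is deterministic and every $\read$ transition has a $\star$-successor, each configuration $c$ has a \emph{unique} maximal run reading only stars, and any run from $c$ reading only stars is a prefix of it. When $c$ is a milestone this maximal run is infinite; call it $R^c$. Consequently a run $R$ between milestones reading only stars satisfies $R=\subrun{R^c}{0}{m}$ with $c=R(0)$ and $m=|R|$, and $d:=R(m)=R^c(m)$. Let $\mathcal M=\{i : R^c(i)\text{ is a milestone}\}$; then $0,m\in\mathcal M$, and $\mathcal M$ is infinite since any index set witnessing that $c$ is a milestone is contained in $\mathcal M$. I would prove the lemma by induction on $m$. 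If there is $i\in\mathcal M$ with $0<i<m$, then $\subrun{R^c}{0}{i}$ and $\subrun{R^c}{i}{m}$ are strictly shorter runs between milestones reading only stars (note that $\subrun{R^c}{i}{\infty}$ is the infinite star run from $R^c(i)$), so by the induction hypothesis both are $0$-upper, whence $R\in\up^0$ because $\up^0$ is closed under composition (Proposition~\ref{prop:upper}). Thus it remains to treat the case $\mathcal M\cap(0,m)=\emptyset$, i.e.\ $m=\min(\mathcal M\setminus\{0\})$.

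In this case, pick an infinite set $I$ witnessing that $c$ is a milestone, so $0\in I\subseteq\mathcal M$ and $\subrun{R^c}{p}{q}\in\up^0$ for $p\le q$ in $I$. If $m\in I$ we are done immediately, so assume the smallest positive element $q$ of $I$ satisfies $q>m$; then $\subrun{R^c}{0}{q}\in\up^0$ and $R^c(q)$ is a milestone. Since $\subrun{R^c}{0}{q}=R\circ\subrun{R^c}{m}{q}$ and $\up^0$ satisfies the cancellation property ``$R'\circ S\in\up^0$ and $S\in\up^0$ imply $R'\in\up^0$'', it suffices to prove $\subrun{R^c}{m}{q}\in\up^0$. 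Two structural facts drive this. First, a length-$1$ run that is not $0$-upper must perform some $\pop^k$ (every other length-$1$ transition is $0$-upper by Proposition~\ref{prop:upper}), and such a step permanently destroys the then-topmost $0$-stack: from the definition of $\hist$ one checks that if the step $i{-}1\to i$ performs $\pop^k$ then $\hist(\subrun{R^c}{i-1}{j},s^0)\ne\topp^0(R^c(i-1))$ for every $0$-stack $s^0$ of $R^c(j)$ and every $j\ge i$, so $\subrun{R^c}{i-1}{j}\notin\up^0$ for all $j\ge i$. Second, the topmost $0$-stack of a milestone is never destroyed along its star run, for otherwise it could not be the topmost $0$-stack at the arbitrarily large indices of a witnessing index set. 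Applying these to $\subrun{R^c}{m}{q}$ viewed as a run from the milestone $d$: were it not $0$-upper, the first ``uncancellable'' step would be a $\pop^k$ destroying a $0$-stack that one shows descends from $\topp^0(d)$, contradicting the second fact.

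The main obstacle is precisely the end of that argument: verifying that the $0$-stack killed by the offending $\pop^k$ really is (a copy of) $\topp^0(d)$, which amounts to reconciling the index sets witnessing the milestone property of $c$ and of $d$. The subtlety is that $\up^0$ admits only suffix-cancellation, \emph{not} prefix-cancellation (for instance $\push^2_\gamma;\pop^1$ is not $0$-upper while $\push^2_\gamma;\pop^1;\pop^2$ is), so one cannot simply take unions of index sets; instead one must peel $\subrun{R^c}{m}{q}$ apart along its own milestone indices, invoke the induction hypothesis on the pieces lying between consecutive milestones, and track where descendants of $\topp^0(d)$ sit as $\push$ operations make copies. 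I expect this bookkeeping, rather than any one clever step, to be the bulk of the proof.
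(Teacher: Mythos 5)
Your setup (determinism, the unique infinite star run $R^c$, and the reduction to the case with no intermediate milestone via closure of $\up^0$ under composition) is fine, but the remaining case contains a genuine gap, and it is exactly the case where all the difficulty lives. After reducing to $m=\min(\mathcal M\setminus\{0\})$ and picking the first positive $q\in I$ with $q>m$, you need $\subrun{R^c}{m}{q}\in\up^0$ in order to apply suffix-cancellation. But $\subrun{R^c}{m}{q}$ is itself a run between two milestones reading only stars of length $q-m$, which need not be smaller than $m$; so the induction on $|R|$ gives you nothing here, and ``peeling it apart along its own milestone indices and invoking the induction hypothesis on the pieces'' is circular for the same reason — the piece between two consecutive milestones inside $\prz{m}{q}$ is again an arbitrary instance of the lemma. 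The direct argument you sketch as a substitute also does not close: a run can fail to be $0$-upper without any $\pop$ ever destroying a descendant of $\topp^0(d)$ (e.g.\ $\push^2_\gamma$ followed by $\pop^1$ buries the original topmost $0$-stack in the lower copy of the $1$-stack without destroying it), so ``not $0$-upper'' only tells you that $\topp^0(R^c(q))$ does not \emph{descend} from $\topp^0(d)$, not that $\topp^0(d)$ has been killed; your two structural facts therefore do not produce the contradiction.

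The paper escapes this trap by inducting on the \emph{order} rather than the length: it proves, for $k=n,n-1,\dots,0$, that every run between milestones reading only stars is $k$-upper. The step uses Proposition~\ref{prop:pre-jest-fajne} (if $S\circ T$ is $(k-1)$-upper and $T$ is $k$-upper, then $S$ is $(k-1)$-upper) applied to $\subrun{S}{0}{|R|}\circ\subrun{S}{|R|}{i}$ for any $i\geq|R|$ in the witnessing set of $R(0)$: the whole run $\subrun{S}{0}{i}$ is $0$-upper hence $(k-1)$-upper, and the suffix $\subrun{S}{|R|}{i}$ is only required to be $k$-upper, which is exactly what the induction hypothesis at order $k$ supplies. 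This weakening of the demand on the suffix — $k$-upper instead of $0$-upper — is the missing idea; with it, your problematic subrun $\subrun{R^c}{m}{q}$ never needs to be shown $0$-upper at any single stage, only one order better at each step of the descent.
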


\proof 
	We prove by induction on $n-k$, where $k\in\prz{0}{n}$, that each run $R$ as in the lemma is $k$-upper.
	Trivially each run is $n$-upper.
	Now suppose that the thesis holds for some $k>0$,  take a run $R$ between two milestone configurations, and suppose that it reads only stars.
	Let $S$ be the infinite run that starts in $R(0)$ and reads only stars (since $R(0)$ is a milestone, the run is really infinite); $R$ is its prefix.
	Notice that we can find a milestone $S(i)$ such that $i\geq|R|$ and $\subrun{S}{0}{i}$ is $(k-1)$-upper (it can be even $0$-upper):
	it is enough to take any $i\geq|R|$ from the infinite set $I$ from Definition~\ref{def:milestone}.
	From the induction assumption we know that $\subrun{S}{|R|}{i}$ is $k$-upper.
	We conclude that $\subrun{S}{0}{|R|}=R$ is $(k-1)$-upper using Proposition~\ref{prop:pre-jest-fajne} for the run $\subrun{S}{0}{|R|}\circ\subrun{S}{|R|}{i}$.
\qed

Another important property is that in a very long run reading only stars we can find a milestone configuration.
What ``very long'' means of course depends on the size of the configuration where the run starts.

\begin{lem}\lab{lem:schowane-malo-zmieniane}
	Let $l\in\prz{1}{n}$.
	There exists a function $\beta$, assigning a natural number to every positionless $l$-stack, having the following property.
	Let $R$ be a run that reads only stars, let $s^l_{|R|}$ be an $l$-stack of $R(|R|)$, 
	and let $s^l_{i}=\hist(\subrun{R}{i}{|R|},s^l_{|R|})$ for all $i\in\prz{0}{|R|}$.
	If there exist at least $\beta(\posl(s^l_{0}))$ indices $i$ such that 
	$s^l_i=\topp^l(R(i))$,
	then for some index $i$ the configuration $R(i)$ is a milestone and $s^l_i=\topp^l(R(i))$.
\end{lem}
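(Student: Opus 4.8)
The plan is to define the function $\beta$ by induction on $l$, working downward from $l=n$, and to extract milestone configurations by a pumping argument: if a single $l$-stack $s^l_{|R|}$ stays on top for sufficiently many indices, then by the pigeonhole principle two of these configurations have the same state and the same positionless topmost $l$-stack (and, by the types from Section~\ref{sec:types}, the same $(\calA,\phi)$-type), which lets us build an infinite $0$-upper-respecting run by iterating the loop between them. First I would make precise what ``$s^l_i=\topp^l(R(i))$'' gives us: by Proposition~\ref{prop:pre-k} applied to the maximal $0$-upper subruns, between two consecutive such indices the topmost $l$-stack evolves in a controlled way, and in particular for two indices $i<j$ with $s^l_i=\topp^l(R(i))$, $s^l_j=\topp^l(R(j))$ and $\subrun{R}{i}{j}\in\up^l$ one can argue that the positionless topmost $l$-stack can only have finitely many distinct values along such indices only if the $l$-stack itself does not grow without bound -- which is exactly where the recursion on $l$ enters.

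The core of the argument is the base case $l=n$, where $\topp^n(R(i))$ is the whole stack, so the hypothesis says the whole stack of $R(i)$ equals $\hist(\subrun{R}{i}{|R|},\pi_2(R(|R|)))$ for many indices $i$. For such $i<j$, $\subrun{R}{i}{j}$ is $n$-upper (trivially) and maps the whole stack of $R(j)$ back to the whole stack of $R(i)$, meaning $\subrun{R}{i}{j}$ does not shrink the stack below its initial height and in fact $\pi_2(R(i))$ is a substack of $\pi_2(R(j))$ in the sense of the history function. Choosing $\beta$ large enough (larger than the number of pairs (state, positionless topmost $0$-stack), hence $|Q|\cdot|\Gamma|$, times a suitable factor) forces two such indices $i<j$ to agree on state and on the positionless topmost $0$-stack; since $\subrun{R}{i}{j}$ reads only stars and is $n$-upper, and the automaton is deterministic, the loop $\subrun{R}{i}{j}$ can be repeated forever, producing an infinite run from $R(i)$ reading only stars, with the indices $i, j, j', j'', \dots$ forming the set $I$ of Definition~\ref{def:milestone} (the subruns between consecutive elements of $I$ are $0$-upper because -- after the first copy of the loop fixes the positionless topmost $0$-stack -- the loop, when iterated, acts as a $0$-upper run on the enlarged stack). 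Hence $R(i)$ is a milestone and $s^n_i=\topp^n(R(i))=\pi_2(R(i))$, as required.

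For the induction step, suppose $\beta$ is defined for order $l+1$; I would define $\beta$ for a positionless $l$-stack $\posl(s^l_0)$ using $\beta$ evaluated on the positionless $(l+1)$-stacks that could contain a copy of $s^l_0$ during a run. Given a run $R$ with at least $\beta(\posl(s^l_0))$ indices $i$ where $s^l_i=\topp^l(R(i))$, I would look at the topmost $(l+1)$-stack containing $s^l_i$; either this $(l+1)$-stack is also ``stable'' at sufficiently many of these indices, in which case the induction hypothesis for order $l+1$ directly yields a milestone (and that milestone automatically has $s^l_i=\topp^l(R(i))$ once we also track the $l$-stack inside it), or the topmost $(l+1)$-stack changes too often -- but each time a genuinely new topmost $(l+1)$-stack appears above our $s^l_i$ it must have been created by a $\push^{l+1}$ and then later removed by a return (by Proposition~\ref{prop:return} / Proposition~\ref{prop:return-jest-fajny}), and such returns do not touch $s^l_i$, so we can again use pigeonhole on (state, positionless topmost $0$-stack) at the relevant indices and iterate to build the infinite run with its set $I$. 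The bookkeeping that makes the second alternative collapse into a milestone construction -- essentially, that only boundedly many distinct ``shapes'' of the topmost $l$-stack can occur at the privileged indices before a repetition forces a loop -- is the delicate part.

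The main obstacle I anticipate is precisely defining $\beta$ so that the recursion closes: one has to be careful that $\beta(\posl(s^l_0))$ depends only on the positionless $l$-stack we start with (not on $R$), while the $(l+1)$-stacks that can appear on top during $R$ and that we feed into $\beta$ of order $l+1$ are not a priori bounded -- so the reduction must be arranged so that whenever we pass to order $l+1$ we have already fixed a bounded family of candidate positionless $(l+1)$-stacks, or else we short-circuit directly to the pigeonhole-plus-loop argument at order $l$. Getting the interplay between Proposition~\ref{prop:pre-k}, the characterization of returns in Proposition~\ref{prop:return}, and the pigeonhole count right -- in particular checking that the iterated loop really produces subruns between consecutive elements of $I$ that are $0$-upper, which needs the positionless topmost $0$-stack (and not merely the state) to repeat -- is where the real work lies; the rest is routine.
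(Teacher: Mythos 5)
There are two genuine gaps, both in load-bearing steps.

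First, the loop-repetition argument in your base case is invalid. From two indices $i<j$ at which $R$ has the same state and the same positionless topmost $0$-stack you cannot conclude, by determinism, that $\subrun{R}{i}{j}$ ``can be repeated forever'': determinism only fixes the next transition from the \emph{current} topmost symbol and state, and as soon as the run performs a $\pop$ (or descends into a copied substack and comes back via a return) its behaviour depends on the part of the stack \emph{below} the topmost $0$-stack, which differs between $R(i)$ and $R(j)$ because the stack has grown. This is precisely why the paper cannot avoid the type machinery here: one pigeonholes on the pair ($(\calA,\phi)$-type, topmost stack symbol) for a morphism $\phi$ detecting ``only stars'', and then Theorem~\ref{thm:types} reproduces the $0$-upper run $\subrun{R}{i}{j}$ as a $(0,\phi)$-parallel run from $R(j)$. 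Even then one is not done: to iterate forever one must re-establish the coincidence of type and topmost symbol at the end of each replicated block, which requires a second pigeonhole inside the replicated composition (this is exactly what Lemma~\ref{lem:schowane-malo-zmieniane-pom2} does). You mention types parenthetically, but the argument you actually give rests on determinism and does not survive.

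Second, your recursion on $l$ runs in the wrong direction and, as you yourself observe, does not close: reducing order $l$ to order $l+1$ forces you to evaluate $\beta$ on the $(l+1)$-stacks that appear on top during $R$, and these are not bounded in terms of $\posl(s^l_0)$ (arbitrarily many copies of $s^l$ can be pushed above it). The paper resolves this by going \emph{downward}: from the $l$-advancing set of indices where $s^l_i$ is topmost it extracts, order by order, a $(k-1)$-advancing subset of prescribed size (Lemma~\ref{lem:schowane-malo-zmieniane-pom}, proved with Propositions~\ref{prop:pre-k}, \ref{prop:size2pre} and~\ref{prop:pre-jest-fajne}), with the crucial invariant that the topmost $(k-1)$-stack at the minimum of the new set is one of the $(k-1)$-stacks already present in $\posl(\topp^k(R(\min I^k)))$ --- hence everything stays bounded by data extracted from $s^l_0$ alone. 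Ending with a large $0$-advancing set, the pigeonhole on (type, topmost symbol) plus the replication lemma then yields the milestone. Without this downward refinement the quantity $\beta(\posl(s^l_0))$ cannot be defined independently of $R$, and the proof does not go through.
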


\begin{cor}\lab{cor:inf-milestone}
	If $R$ is an infinite run reading only stars,
	then for infinitely many indices $i$ the configuration $R(i)$ is a milestone.
\end{cor}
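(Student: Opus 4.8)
The plan is to derive the corollary from Lemma~\ref{lem:schowane-malo-zmieniane} instantiated at $l=n$, first producing a single milestone in any infinite star-reading run, and then obtaining infinitely many of them by applying this to all suffixes of $R$.

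First I would note that at order $l=n$ the hypothesis of Lemma~\ref{lem:schowane-malo-zmieniane} is essentially free. Every configuration has a unique $n$-stack, namely its whole stack, and every run lies in $\up^n$; applying the defining property of $n$-upper runs to the subrun $\subrun{R}{i}{|R|}$ gives $\hist(\subrun{R}{i}{|R|},\topp^n(R(|R|)))=\topp^n(R(i))$. Hence, in the notation of the lemma, $s^n_i=\topp^n(R(i))$ for \emph{every} $i\in\prz{0}{|R|}$. Consequently, whenever $R$ is a (finite) run reading only stars with $|R|+1\geq\beta(\posl(\pi_2(R(0))))$, Lemma~\ref{lem:schowane-malo-zmieniane} applies with $s^n_{|R|}=\pi_2(R(|R|))$ and yields an index $i$ for which $R(i)$ is a milestone.

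Now let $R$ be an infinite run reading only stars, and fix an arbitrary $N$. The suffix $\subrun{R}{N}{\infty}$ is again an infinite run reading only stars, starting in $R(N)$; restricting it to a prefix of length at least $\beta(\posl(\pi_2(R(N))))$ and applying the previous paragraph produces an index $j$ such that $R(N+j)$ is a milestone. Since $N$ was arbitrary, milestones occur at arbitrarily large positions of $R$, that is, for infinitely many indices $i$ the configuration $R(i)$ is a milestone.

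There is no real difficulty in this argument; the only point requiring a moment's care is the identity $s^n_i=\topp^n(R(i))$, which frees us from having to exhibit, inside the stack of $R(|R|)$, an $l$-stack whose history is topmost sufficiently often — this is automatic at $l=n$, and is precisely the reason for taking $l=n$ rather than a smaller order. The remainder is bookkeeping with the constant $\beta$ and with suffixes of $R$.
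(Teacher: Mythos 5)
Your proposal is correct and follows essentially the same route as the paper: instantiate Lemma~\ref{lem:schowane-malo-zmieniane} at $l=n$ on a sufficiently long prefix to obtain one milestone, then iterate on suffixes. Your explicit justification that $s^n_i=\topp^n(R(i))$ for all $i$ (because every run is $n$-upper) is a welcome spelling-out of why the lemma's hypothesis is automatic at order $n$, which the paper leaves implicit.
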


\proof
	To obtain a first milestone configuration, 
	it is enough to use Lemma~\ref{lem:schowane-malo-zmieniane} for $l:=n$ and for the prefix of $R$ of length $\beta(\posl(\pi_2(R(0))))$.
	We repeat this procedure for the remaining suffix of $R$.\footnote{
		There exists a direct proof of this corollary, not presented here, which is much easier than the proof of Lemma~\ref{lem:schowane-malo-zmieniane}.}
\qed

In order to get some intuitions on Lemma~\ref{lem:schowane-malo-zmieniane}, let us first see why it works for $l=n$.
In this case $s^l_0$ is just the whole $n$-stack of $R(0)$.
Moreover, the assumption that there exist at least $\beta(\posl(s^l_0))$ indices $i$ such that $s^l_i=\topp^l(R(i))$
simply expresses that $|R|+1\geq\beta(\posl(s^l_0))$.
Thus, the lemma says that if we have a long enough run that starts in a configuration with stack $s^l_0$ and reads only stars, then the run reaches a milestone configuration.
This, in turn, means that we cannot decrease the stack $s^l_0$ forever.
Indeed, recall the intuition that a milestone configuration is a minimal configuration, that is, such that the run reading only stars never visits a ``smaller'' configuration.
It is just enough to consider the infinite run reading only stars, and take the ,,smallest'' configuration visited by this run; this should be a milestone configuration.

When $l<n$, the lemma concentrates on the history of a single $l$-stack $s^l_{|R|}$
(another point of view is that it concentrates on the future of a single $l$-stack $s^l_0$).
We look at the fragments of $R$ where this $l$-stack is the topmost $l$-stack; 
the length of these fragments is required to be at least $\beta(\posl(s^l_0))$ in total.
The lemma says that regardless of what happens in other fragments of $R$, 
controlled by other parts of the $n$-stack of $R(0)$ (outside of $s^l_0$),
the stack $s^l_0$ can itself ensure that a milestone configuration is reached.

In the remaining part of the section we prove Lemma~\ref{lem:schowane-malo-zmieniane}.
Our proof strategy is as follows.
The indices $i$ for which $s^l_i$ is the topmost $l$-stack give us a decomposition of an infix of $R$ into many $l$-upper runs.
As a first step, consecutively for $k=l-1,l-2,\dots,0$ we construct a decomposition of an infix of $R$ into many $k$-upper runs.
Then, among the borders of the constructed $0$-upper runs we find two configurations having the same type.
Using Theorem~\ref{thm:types} we can replicate the $0$-upper run between them into arbitrarily many consecutive $0$-upper runs,
proving that these two configurations are milestones.

The division of an infix of $R$ into $k$-upper runs is described using $k$-advancing sets, defined as follows.
Assuming that $R$ is fixed, a set $I^k\subseteq\prz{0}{|R|}$ is called \emph{$k$-advancing} if
\begin{align*}
	\emptyset\neq I^k=\{i\in\prz{\min I^k}{\max I^k}\mid\subrun{R}{i}{\max I^k}\in\up^k\}\,.
\end{align*}
Notice that when $\min I^k\leq i\leq j\in I^k$, then $i$ belongs to $I^k$ if and only if $\subrun{R}{i}{j}$ is $k$-upper.
In other words, a $k$-advancing set not only gives us a decomposition into $k$-upper runs, but also these $k$-upper runs cannot be further subdivided into shorter $k$-upper runs.
The following auxiliary lemma describes our induction step.

\begin{lem}\lab{lem:schowane-malo-zmieniane-pom}
	Let $k\in\prz{1}{n}$, and $N\in\Nat$.
	There exists a function $f^k_N\colon\Nat\to\Nat$, having the following properties.
	Let $R$ be a run that reads only stars, and let $I^k$ be a $k$-advancing set.
	If $|I^k|\geq f^k_N(|\tops^k(R(\min I^k))|)$,
	then there exists a $(k-1)$-advancing subset $I^{k-1}\subseteq I^k$ of size at least $N$, such that 
	$\posl(\tops^{k-1}(R(\min I^{k-1})))$ is one of the $(k-1)$-stacks in $\posl(\tops^k(R(\min I^k)))$.
\end{lem}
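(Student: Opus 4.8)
The plan is to prove Lemma~\ref{lem:schowane-malo-zmieniane-pom} by analyzing the decomposition of the run into $k$-upper runs induced by the $k$-advancing set $I^k$, and then refining it into a $(k-1)$-advancing set. Write $I^k=\{i_0<i_1<\dots<i_m\}$ with $m+1=|I^k|$. For each consecutive pair $i_{j}<i_{j+1}$ the subrun $\subrun{R}{i_j}{i_{j+1}}$ is $k$-upper but has no proper nonempty $k$-upper suffix (by the definition of a $k$-advancing set), so Proposition~\ref{prop:upper} tells us it is either of length $1$ (with a $\read$, a $\push^r_\gamma$, or a $\pop^r$ for $r\leq k$), or its first transition is $\push^r_\gamma$ for some $r\geq k+1$ followed by an $r$-return. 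In particular I would track the size $r_j:=|\tops^k(R(i_j))|$ of the topmost $k$-stack at the border points; by Proposition~\ref{prop:return-jest-fajny} and Proposition~\ref{prop:push-return-jest-fajny} (together with the case analysis of Proposition~\ref{prop:upper}) consecutive values $r_j$ change by at most $1$, and a $\pop^k$ step (the only way $r_j$ can strictly decrease) corresponds precisely to a unit-length $k$-upper run performing $\pop^k$.

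First I would isolate the portion of $I^k$ that lies "above" the starting level. Since $r_j\geq 1$ always and $r_0=|\tops^k(R(\min I^k))|$, if $|I^k|$ is large enough then, by a counting/pigeonhole argument, there is a long infix $I^k\cap[i_a,i_b]$ on which $r_j\geq r_a$ throughout and $r_a=r_b$: the walk $r_0,r_1,\dots,r_m$ is a $\pm1/0$ walk bounded below by $1$, so if it is long it must spend a long stretch at or above some fixed level that it both enters and re-enters at the bottom of that stretch. On that stretch, by Equivalence~\eqref{eq:hist2} in the proof of Propositions~\ref{prop:size2pre}--\ref{prop:return-jest-fajny}, the subrun $\subrun{R}{i_a}{i_b}$ is $(k-1)$-upper, and moreover the topmost $(k-1)$-stacks at levels where $r_j=r_a$ are unmodified copies of $\tops^{k-1}(R(i_a))$; thus $\posl(\tops^{k-1}(R(i_a)))$ is one of the $(k-1)$-stacks in $\posl(\tops^k(R(\min I^k)))$, as required for the conclusion. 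Then I would extract from $[i_a,i_b]$ a genuine $(k-1)$-advancing set $I^{k-1}$: take all indices $i$ in $[i_a,i_b]$ with $\subrun{R}{i}{i_b}\in\up^{k-1}$; this is $(k-1)$-advancing by definition, it is a subset of $I^k$, and its size is controlled because the $(k-1)$-upper runs between consecutive elements of $I^{k-1}$ are "short" — between two consecutive elements of $I^k$ we insert boundedly many elements of $I^{k-1}$, since inside a unit-length $k$-upper run nothing happens and inside a $\push^r_\gamma$-then-$r$-return block the number of times the topmost $(k-1)$-stack returns to its current level is bounded in terms of data independent of the run length. Quantitatively, the worst case is that $I^{k-1}$ has size comparable to the length of the chosen monotone-ish stretch of $I^k$ divided by a constant, which is why $f^k_N$ can be taken of the form roughly $c\cdot(\text{size of }\tops^k)\cdot N$ plus lower-order terms; the precise bookkeeping of $f^k_N$ in terms of $N$ and $|\tops^k(R(\min I^k))|$ is a routine (if fiddly) computation.

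The main obstacle I expect is the careful combinatorics of the walk $r_0,\dots,r_m$ together with controlling how $I^{k-1}$ thins out: I must simultaneously guarantee (i) that the chosen stretch is long enough to yield $|I^{k-1}|\geq N$, (ii) that on this stretch the level never drops below the level at the two endpoints (so that the resulting run is $(k-1)$-upper and the $(k-1)$-stack is an honest substack of the original topmost $k$-stack), and (iii) that I only ever need $|I^k|$ to be a function of $N$ and the \emph{size} of the topmost $k$-stack (not of the whole configuration). Point (iii) is delicate because when the first transition of a block is $\push^r_\gamma$ for $r\geq k+1$, the ensuing $r$-return can be arbitrarily long, but — crucially — it contributes only boundedly many border indices to $I^k$ and to $I^{k-1}$ within that block (it is itself one unit step followed by one return), so the $l$-stack of interest is not the topmost one during it and the length of $R$ is irrelevant to the count. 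I would handle (i) and (ii) by a standard lemma on excursions of integer walks with $\pm1,0$ steps bounded below by $1$: a walk of length $L$ bounded below by $1$ contains, for any target $N$, a subinterval of length $\geq N$ lying weakly above its two endpoints, provided $L$ is large compared to $N$ and to the maximum height reached, and the maximum height is at most $r_0+L$, which loops back into the bound — so actually I would instead argue directly: pick the smallest level $\ell\geq r_0$ such that the walk spends at least (something like) $2N$ steps at level $\geq\ell$ but enters level $\ell$ from below; such $\ell$ exists once $L\gg N\cdot r_0$ by a pigeonhole over the at most $r_0$ relevant levels below the first peak, noting the walk must descend back to level $\le r_0$ whenever it is at a milestone-free minimum. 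With these pieces in place the definition of $f^k_N$ and the verification that $I^{k-1}$ is $(k-1)$-advancing of size $\geq N$ are direct.
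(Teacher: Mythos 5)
Your setup — the walk $r_j=|\tops^k(R(i_j))|$ over the indices of $I^k$, with steps in $\{-1,0,+1\}$ and bounded below by $1$, and the identification (via Equivalence~\eqref{eq:hist2}) of $(k-1)$-upper subruns with right-running-minima of this walk — is the right starting point, but the core combinatorial step does not work. You propose to find one long interval $[i_a,i_b]$ on which the walk stays $\geq r_a$ with $r_a=r_b$, and to harvest the indices $i$ with $\subrun{R}{i}{i_b}\in\up^{k-1}$. By Equivalence~\eqref{eq:hist2} these are exactly the indices in the interval at which the walk sits at level $r_a$, and their number is \emph{not} controlled by the length of the interval: a ``tent'' excursion $r_a,r_a+1,\dots,r_a+L,\dots,r_a+1,r_a$ gives an arbitrarily long interval contributing only its two endpoints. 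Worse, such a nontrivial interval need not exist at all: for a monotonically increasing walk $1,2,3,\dots$ there is no pair $a<b$ with $r_a=r_b$, yet the conclusion must still be established (there, \emph{every} index is a right-running-minimum relative to the last one). Your fallback pigeonhole ``over the at most $r_0$ relevant levels below the first peak'' also fails, because the walk is unbounded above, so the level itself cannot be pigeonholed; and your estimate $f^k_N\approx c\cdot|\tops^k|\cdot N$ is far too small.

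The paper's proof avoids all of this by inducting on $N$ and pigeonholing on the \emph{running minimum} $m_j=\min\{r_i\mid i\in I^k,\ i\leq j\}$, which is non-increasing and confined to $\prz{1}{r}$ with $r=|\tops^k(R(\min I^k))|$. Some value $m$ is attained by at least $f^k_{N-1}(m+1)+1$ indices; the first such index $b$ is a strict new minimum, so $r_i>r_b$ for all earlier $i\in I^k$ — which is also exactly what makes $\tops^{k-1}(R(b))$ an \emph{unmodified} $(k-1)$-stack of $\tops^k(R(\min I^k))$; your construction does not guarantee this either, since if the walk already touched level $r_a$ before $i_a$, the topmost $(k-1)$-stack at $i_a$ may have been modified while it was on top. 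This $b$ contributes exactly \emph{one} element to $I^{k-1}$; the other $N-1$ come from the induction hypothesis applied to the tail starting at the next index $c$, where $|\tops^k(R(c))|\leq m+1$, and Proposition~\ref{prop:size2pre} glues $b$ onto the recursively obtained set. Hence $f^k_N(r)=1+\sum_{m=1}^{r}f^k_{N-1}(m+1)$, a degree-$(N-1)$ polynomial in $r$. The recursion on $N$ is not deferrable bookkeeping; it is the mechanism that produces the $N$ elements one at a time, and no single-stretch harvest can replace it.
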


\proof
	We prove the lemma by induction on $N$.
	For $N=1$ we can take $f^k_1(r):=1$, and then $I^{k-1}:=\{\min I^k\}$.
	Let now $N\geq 2$.
	We take
	\begin{align*}
		f^k_N(r):=1+\sum_{m=1}^rf^k_{N-1}(m+1)\,.
	\end{align*}
	Fix some $R$ and $I^k$ satisfying the assumptions.
	Let $a:=\min I^k$ and $r:=|\tops^k(R(\min I^k))|$.
	For each $j\in I^k$ denote 
	\begin{align*}
		r_j:=|\tops^k(R(j))|&&\mbox{and}&&m_j:=\min\{r_i\colon i\in I^k\land i\leq j\}\,.
	\end{align*}
	Notice that $1\leq m_j\leq r$ (because $r_a=r$) and that $m_j\geq m_{j'}$ for $j\leq j'$.
	From the formula for $f^k_N(r)$ we see that for some $m$ we have at least $f^k_{N-1}(m+1)+1$ indices $j\in I^k$ such that $m_j=m$,
	by the pigeonhole principle (if for every $m$ there were at most $f^k_{N-1}(m+1)$ such indices, 
	in total we would have at most $\sum_{m=1}^rf^k_{N-1}(m+1)=f^k_N(r)-1$ indices in $I^k$, but we have at least $f^k_N(r)$ of them).
	Choose some such $m$; let $b$ be the first index such that $m_b=m$, and $e$ the last such index.
	We see that $m=r_b$.
	
	Let $c$ be the next element of $I^k$ after $b$ (of course $c\leq e$).
	Notice that $r_c\leq r_b+1=m+1$; this follows from Proposition~\ref{prop:pre-k} used for the run $\subrun{R}{b}{c}$.
	Thus,
	\begin{align*}
		|I^k\cap\prz{c}{e}|\geq f^k_{N-1}(m+1)\geq f^k_{N-1}(r_c)\,.
	\end{align*}
	
	We use the induction assumption for $I^k\cap\prz{c}{e}$.
	We obtain a $(k-1)$-advancing subset $J^{k-1}\subseteq I^k\cap\prz{c}{e}$ of size at least $N-1$.
	We take 
	\begin{align*}
		I^{k-1}:=\{i\in\prz{b}{\min J^{k-1}}\mid\subrun{R}{i}{\max J^{k-1}}\in\up^{k-1}\}\cup J^{k-1}\,.
	\end{align*}
	We easily see that $I^{k-1}$ is $(k-1)$-advancing (we add to $J^{k-1}$ exactly these indices for which the appropriate run is $(k-1)$-upper).
	Because $r_i\geq m_i=m=r_b$ for each $i\in I^k\cap\prz{b}{e}$ (hence, in particular, for each $i\in I^k\cap\prz{b}{\max J^{k-1}}$),
	Proposition~\ref{prop:size2pre} implies that $\subrun{R}{b}{\max J^{k-1}}$ is $(k-1)$-upper.
	Thus, $I^{k-1}$ in addition to the $N-1$ elements of $J^{k-1}$ contains at least one additional element $b$.

	Finally, we show that $\posl(\tops^{k-1}(R(b)))$ 
	is one of the $(k-1)$-stacks in $\posl(\tops^k(R(a)))$.
	We know that $r_i\geq m_i>m=r_b$ for each $i\in I^k\cap\prz{a}{b-1}$, so, due to Proposition~\ref{prop:size2pre}, run $\subrun{R}{i}{b}$ is not $(k-1)$-upper.
	This means that the topmost $(k-1)$-stack of $R(b)$ was not modified since $R(a)$.
	On the other hand $\subrun{R}{a}{b}$ is $k$-upper,
	thus, indeed, the topmost $(k-1)$-stack of $R(b)$ is one of the $(k-1)$-stacks in the topmost $k$-stack of $R(a)$ (while ignoring positions annotating the stacks).
\qed

While using Theorem~\ref{thm:types} we need to ensure that the replicated run reads only stars.
For this reason, fix a finite monoid $M$, and a morphism $\varphi\colon A^*\to M$, 
such that its value $\varphi(w)$ determines whether a word $w$ consists only of stars.
Our second auxiliary lemma is used to conclude the proof of Lemma~\ref{lem:schowane-malo-zmieniane}.

\begin{lem}\lab{lem:schowane-malo-zmieniane-pom2}
	Let $S$ be a nonempty $0$-upper run reading only stars, in which $S(|S|)$ has the same $(\calA,\phi)$-type and the same topmost stack symbol as $S(0)$ (where $\phi$ as above).
	Then $S$ can be extended into a run $S\circ T\circ U$ reading only stars, where $T$ and $U$ are nonempty $0$-upper runs, 
	and $U(|U|)$ has the same $(\calA,\phi)$-type and the same topmost stack symbol as $U(0)$.
	As a consequence, $S(0)$ is a milestone.
\end{lem}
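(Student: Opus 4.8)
The plan is to exploit the fact that $S$ is $0$-upper and begins and ends in configurations with the same $(\calA,\phi)$-type and the same topmost stack symbol, so that $S$ can be "re-run" starting from $S(|S|)$. First I would set $c_0 = S(0)$ and $c_1 = S(|S|)$, and note that since $S$ is $0$-upper, $\tops^0(S(0))$ and $\tops^0(S(|S|))$ have the same stack symbol (by assumption), hence (after forgetting positions) the same positionless topmost $0$-stack. Since also $\type_{\calA,\phi}(c_1) = \type_{\calA,\phi}(c_0)$, we may apply Theorem~\ref{thm:types} with $k=0$ to the run $S$ (viewed as a $0$-upper run from $c_0$) and the configuration $c_1$: there is a run $T$ starting in $c_1$ that is $(0,\phi)$-parallel to $S$. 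Being $(0,\phi)$-parallel means, in particular, $T$ is $0$-upper, $\phi(T) = \phi(S)$ (so $T$ also reads only stars, by the choice of $\phi$), and $\tops^0(T(|T|)) \cong \tops^0(S(|S|))$ — so $T(|T|)$ again has the same topmost stack symbol as $c_0$. Moreover $T$ is nonempty because $\phi(T) = \phi(S) \ne \mathbf{1}_M$ would follow if $S$ read at least one star; if $S$ happens to read zero stars we still have $|S| \ge 1$ and the $(0,\phi)$-parallel decomposition forces $|T| = |S| \ge 1$ since the number of maximal $0$-upper pieces is preserved and each is nonempty.

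The subtle point is that $T$ need not end in a configuration of the same \emph{type} as $c_0$: Theorem~\ref{thm:types} only guarantees the same positionless topmost $0$-stack, not the same type, and the type depends on the whole stack below. To repair this I would iterate: the configuration $c_2 = T(|T|)$ has the same topmost stack symbol as $c_0$, and $T$ is a nonempty $0$-upper run from $c_1$ reading only stars with $\type_{\calA,\phi}(c_1) = \type_{\calA,\phi}(c_0)$. We can repeat the argument, producing nonempty $0$-upper runs $T_1 = T, T_2, T_3, \dots$ reading only stars with $T_j$ starting in $c_j := (S \circ T_1 \circ \dots \circ T_{j-1})(\text{end})$, each $c_j$ having the same topmost stack symbol as $c_0$. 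Since there are only finitely many $(\calA,\phi)$-types, two of the configurations among $c_1, c_2, c_3, \dots$, say $c_j$ and $c_{j'}$ with $j < j'$, have the same type. Then I take $T := T_1 \circ \dots \circ T_{j-1}$ (possibly empty — but we can always arrange $j \ge 2$ so it is nonempty, or absorb a step) and $U := T_j \circ \dots \circ T_{j'-1}$: these are nonempty $0$-upper runs reading only stars (compositions of such), $S \circ T \circ U$ reads only stars, and $U(|U|) = c_{j'}$ has the same $(\calA,\phi)$-type as $U(0) = c_j$ and, as noted, the same topmost stack symbol. A little care is needed to ensure $T$ is nonempty: if $j = 1$ then $T$ would be empty, so instead one chooses the repeated pair among $c_2, c_3, \dots$, which is harmless since we are free to run the iteration as long as we like.

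Having produced $S \circ T \circ U$ with $U$ having the loop property ($U(|U|)$ same type and topmost symbol as $U(0)$), the "as a consequence" clause follows by a further iteration to build an infinite run witnessing that $S(0)$ is a milestone. Concretely, starting from $U(0)$ one repeatedly applies Theorem~\ref{thm:types} (with $k=0$) to the $0$-upper run $U$ and the configuration reached at the end of the previous copy — whose type equals that of $U(0)$ — obtaining an infinite concatenation $S \circ T \circ U \circ U_1 \circ U_2 \circ \dots$ of nonempty $0$-upper runs reading only stars. Setting $I$ to be the set of indices marking the boundaries $\{0\} \cup \{|S\circ T|, |S \circ T \circ U|, |S \circ T \circ U \circ U_1|, \dots\}$ — or more simply $\{0\} \cup \{\text{all boundaries of the } U_m \text{ copies}\}$ together with the boundary after $S \circ T$ — we get an infinite set $I$ with $0 \in I$ such that for any $i \le j$ in $I$ the subrun $\subrun{(S\circ T\circ U\circ U_1\circ\cdots)}{i}{j}$ is a composition of $0$-upper runs, hence $0$-upper. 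This is exactly the condition in Definition~\ref{def:milestone}, so $S(0)$ is a milestone. The main obstacle I anticipate is precisely the bookkeeping around the type-versus-topmost-stack gap in Theorem~\ref{thm:types}: one must not claim that a single application of the theorem reproduces a configuration of the same type, but rather harvest a repeated type by the pigeonhole principle over the infinitely many iterated copies, and then verify that the pieces extracted are genuinely nonempty $0$-upper runs reading only stars.
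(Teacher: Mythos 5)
There is a genuine gap in your iteration scheme, and you have in fact put your finger on it yourself without repairing it. To produce $T_2$ from $c_2=T_1(|T_1|)$ you must apply Theorem~\ref{thm:types} to some $0$-upper run $R$ and a configuration having the same $(\calA,\phi)$-type \emph{and} the same positionless topmost $0$-stack as $R(0)$. After the first application you only know that $c_2$ has the same topmost stack symbol as $c_0$; its type is completely uncontrolled, since the type depends on the whole stack and $(0,\phi)$-parallelism only constrains the positionless topmost $0$-stack at the end. So the sequence $T_1,T_2,T_3,\dots$ cannot be constructed beyond $T_1$, and the pigeonhole argument over $c_1,c_2,c_3,\dots$ has nothing to act on. The same defect recurs in your proof of the ``as a consequence'' clause: the configuration at the end of the parallel copy $U_1$ is not known to have the type of $U(0)$, so the infinite concatenation $U\circ U_1\circ U_2\circ\cdots$ cannot be built one step at a time either.

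The paper's proof circumvents this by never applying Theorem~\ref{thm:types} at a configuration of unknown type: it builds, for each $r$, a composition $S_1\circ\dots\circ S_r$ of nonempty $0$-upper runs with $S_1=S$, where the inductive step applies the theorem to the \emph{entire} composition built so far (a $0$-upper run starting at $S(0)$) and the fixed configuration $S(|S|)$, whose type is known to equal $\type_{\calA,\phi}(S(0))$; the resulting parallel run from $S(|S|)$ decomposes into $r$ nonempty $0$-upper pieces, and prepending $S$ yields $r+1$. Note that the decomposition is rebuilt from scratch at each $r$ rather than extended, which is why the type hypothesis is always available. The pigeonhole over stack symbols and types is then applied inside one sufficiently long such composition to extract $T$ and $U$, and the infinite run witnessing the milestone property is obtained by re-applying the whole lemma to $U$ (which now \emph{does} have the matching type at both ends), not by iterating Theorem~\ref{thm:types} directly. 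Your remaining bookkeeping (nonemptiness of the pieces, the fact that $\phi$ certifies that parallel runs read only stars, the choice of the index set $I$) is fine, but the core iteration needs to be restructured along these lines.
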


\proof
	First, we observe that for each $r\in\Nat$ we can construct a composition $S_1\circ\dots\circ S_r$ of $r$ nonempty $0$-upper runs, reading only stars, in which $S_1=S$.
	For $r=1$ this is trivially true.
	Suppose that we have such a composition for some $r$.
	Then Theorem~\ref{thm:types} applied to this composition and to $S(|S|)$
	(where we use the fact that $S(|S|)$ has the same $(\calA,\phi)$-type and the same positionless topmost $0$-stack as $(S_1\circ\dots\circ S_r)(0)$)
	gives us a run that starts in $S(|S|)$ and is $(\calA,\phi)$-parallel to $S_1\circ\dots\circ S_r$.
	Recalling the definition of being $(\calA,\phi)$-parallel, we see that this run is a composition $S'_1\circ\dots\circ S'_r$ 
	of $r$ nonempty $0$-upper runs reading only stars.
	Together with $S$ at the beginning, they give a longer composition as required.
	
	Take such a composition for $r$ equal to the number of stack symbols in our alphabet $\Gamma$, times the number of $(\calA,\varphi)$-types, plus two.
	Then, by the pigeonhole principle, we can find two indices $i,j\in\prz{2}{r}$ with $i<j$ for which $S_j(|S_j|)$ has the same $(\calA,\phi)$-type and the same topmost stack symbol as $S_i(|S_i|)$.
	Skipping the part after $S_j$, we obtain a composition $S\circ T\circ U$ as required.
	
	We can repeat the same construction for $U$, and append two more nonempty $0$-upper runs, out of which the second has equal $(\calA,\phi)$-type and topmost stack symbol at its two ends.
	Continuing this forever, we obtain an infinite run reading only stars, divided into $0$-upper runs.
	Since it starts in $S(0)$, this configuration is a milestone.
\qed

\proof[Proof of Lemma~\ref{lem:schowane-malo-zmieniane}]
	Fix some $l$-stack $s^l_{0}$.
	Let $N_0$ be equal to the number of stack symbols in the alphabet, times the number of $(\calA,\phi)$-types, plus one, where again $\phi$ checks whether a word consists only of stars.
	For $k\in\prz{1}{l}$ we take $N_k=f^k_{N_{k-1}}(r_k)$, where $r_k$ is the maximal size of a $k$-stack that appears in $s^l_{0}$, 
	and $f^k_{N_{k-1}}$ is the function from Lemma~\ref{lem:schowane-malo-zmieniane-pom}.
	We define $\beta(\posl(s^{l}_{0})):=N_{l}$.
	
	Now take a run $R$ and $l$-stacks $s^l_i$ for $i\in\prz{1}{|R|}$, such that the assumptions of the lemma are satisfied.
	First, for each $k\in\prz{0}{l}$ we want to construct a $k$-advancing set $I^k$ of size at least $N_k$,
	such that $\posl(\tops^k(R(\min I^k)))$ is one of the $k$-stacks in $\posl(s^l_{0})$.

	As $I^l$ we take the set of those indices $i$ for which $s^l_i=\topp^l(R(i))$.
	It is immediate from the definitions that $I^l$ is $l$-advancing 
	(recall that $s^l_i=\hist(\subrun{R}{i}{j},s^l_j)$ for $i\leq j$).
	By assumption $|I^l|\geq\beta(\posl(s^l_{0}))=N_l$.
	Moreover, $s^l_{\min I^l}$ was not modified from the beginning of the run (as it was not the topmost $l$-stack),
	so this $l$-stack is positionless-equal to $s^l_{0}$.

	Then by induction on $l-k$, we construct $I^{k-1}$ out of $I^k$ using Lemma~\ref{lem:schowane-malo-zmieniane-pom}.
	Notice that the size of the topmost $k$-stack of $R(\min I^k)$ is at most $r_k$, so we can indeed obtain $I^{k-1}$ of size at least $N_{k-1}$.

	Finally, we have a $0$-advancing set $I^0$ such that $|I^0|\geq N_0$. 
	Observe that in $I^0$ we can find two indices $i,j$ with $i<j$ such that $R(j)$ has the same $(\calA,\phi)$-type and the same topmost stack symbol as $R(i)$,
	by the pigeonhole principle (recall the definition of $N_0$ from the first paragraph of the proof).
	Lemma~\ref{lem:schowane-malo-zmieniane-pom2} applied to $\subrun{R}{i}{j}$ proves that $R(i)$ is a milestone.
	By construction $i\in I^0\subseteq\dots\subseteq I^l$, so $s^l_i=\topp^l(R(i))$.
\qed

\section{Pumping Lemma}\lab{sec:pumping}

In this section we present a pumping lemma, which can be used to change the number of stars read in some place of a run, without changing too much the rest of the run.
For this section we fix an $n$-DPDA $\calA$ with input alphabet $A$ containing a $\star$ symbol.
We also fix a morphism $\phi\colon A^*\to M$ into a finite monoid.

\begin{figure*}
\begin{center}
\includegraphics[scale=0.6]{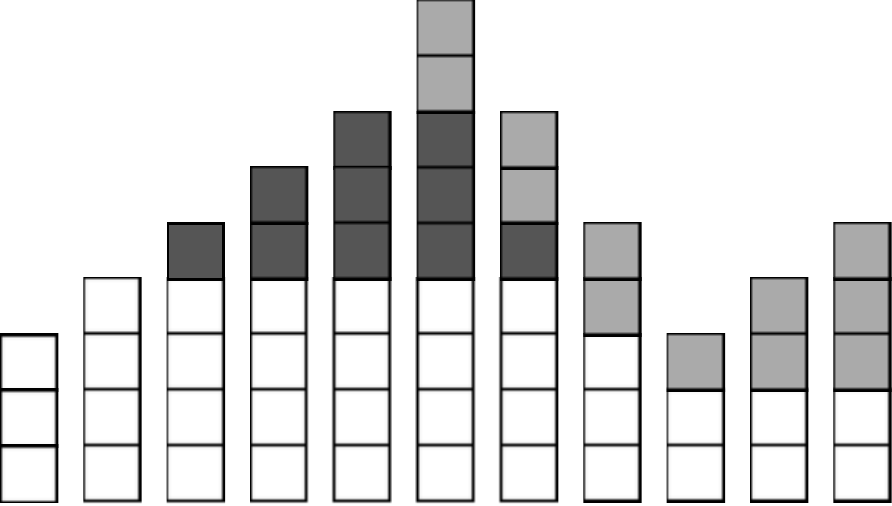}
\hspace{.8cm}
\includegraphics[scale=0.6]{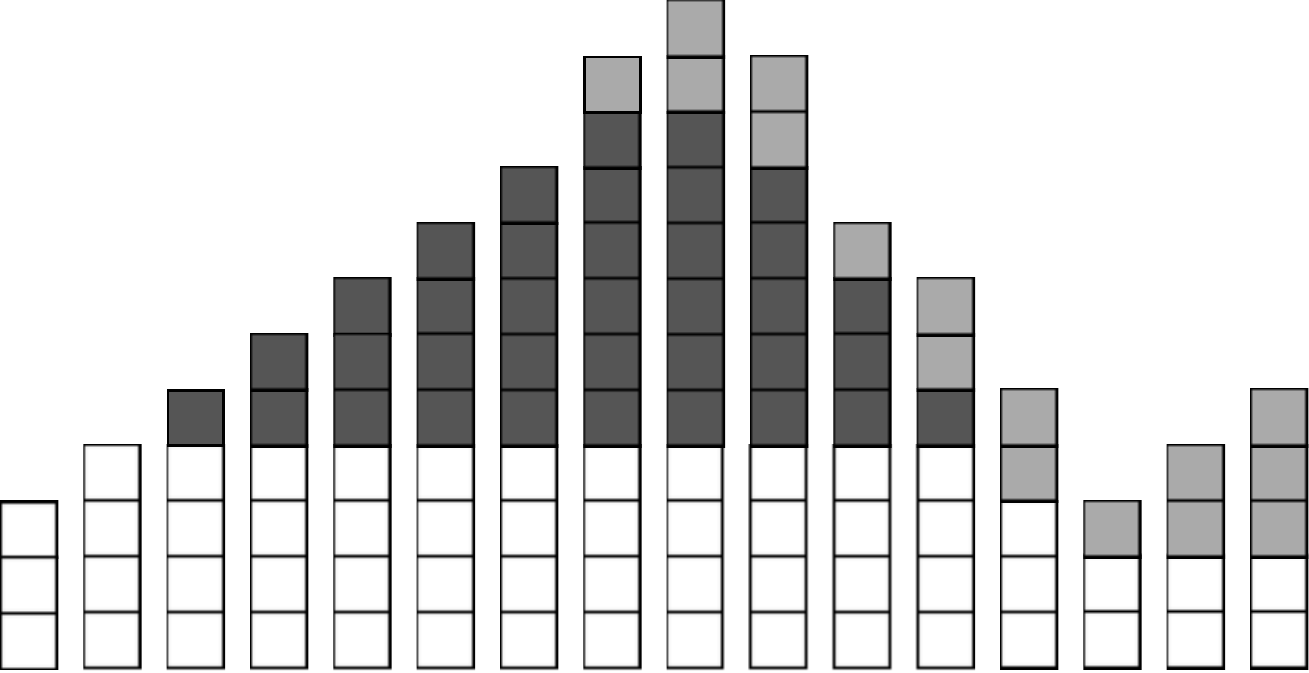}
\end{center}
\caption{An example configuration at the end of a run of a $2$-DPDA, and an analogous configuration after pumping. 
The $2$-stack grows from left to right. 
White symbols were already present in $R(0)$;
dark gray symbols were created while reading stars at the beginning of $R$;
light gray symbols were created later.}
\label{fig:pump1}
\end{figure*}

We start by an intuitive explanation of the pumping lemma.
In the situation that we consider, we have a milestone configuration from which we start runs that first read some number of stars, and later also other symbols.
One possibility is that most of these runs are $k$-upper (for some $k$), except maybe some runs reading a small number of stars at the beginning.
Then we are unable to use our pumping lemma, but we gain the knowledge that our run is $k$-upper.
The opposite situation is that there are runs from this configuration that are not $k$-upper and read arbitrarily many stars at the beginning; our pumping lemma talks about this situation.
Consider such a run $R$ of a $2$-DPDA, whose last configuration is depicted on the left of Figure~\ref{fig:pump1}.
It starts in a milestone, so its initial fragment that reads only stars is basically $0$-upper.
This means that the automaton builds on top of the stack of $R(0)$ (depicted in white), without modifying it;
also in the copies of the topmost $1$-stack the original part is not modified (the automaton can inspect this part, but then it has to be removed).
We consider a run that is not $0$-upper, so later, when we start reading other symbols than stars, the ``white part'' of the topmost $1$-stack is uncovered;
its content is the same as in $R(0)$.
By assumption there exists a run from the same configuration that reads more (arbitrarily many) stars at the beginning, and is not $0$-upper.
When it uncovers the ``white part'' of the topmost $1$-stack, this part is exactly the same as in the original run, so these runs can continue in the same way.
This is depicted on the right of the figure.

Next, we state our pumping lemma.
For uniformity of presentation, we refer there to $(-1)$-upper runs, with the assumption that no run is $(-1)$-upper.

\begin{thm}[Pumping lemma]\lab{thm:pumping}
	For each milestone configuration $c$ there exists a number $\pb(c)$ having the following property.
	Let $R\circ R'$ be a run starting in $c$, where $R$ is not $(k-1)$-upper and reads a word beginning with at least $\pb(c)$ stars, and $R'$ is $k$-upper.
	In such a situation, for each $l\in\Nat$ there exists a run $S\circ S'$ starting in $c$,
	and such that $\varphi(S)=\varphi(R)$, and $S$ reads a word beginning with at least $l$ stars, and $S'$ is $(k,\phi)$-parallel to $R'$.
\end{thm}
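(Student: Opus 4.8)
The plan is to follow the intuition sketched just before the statement: the initial segment of $R$ that reads only stars starts in a milestone $c$, so by Lemma~\ref{lem:milestones-sequence} (together with Corollary~\ref{cor:inf-milestone}) this segment is, modulo reaching the next milestone, essentially $0$-upper. Concretely, let $R_\star$ be the maximal prefix of $R$ reading only stars, and write $R=R_\star\circ R_0$. First I would extract from the infinite star-only run starting at $c$ the sequence of milestone configurations it visits (Corollary~\ref{cor:inf-milestone}), and use Lemma~\ref{lem:schowane-malo-zmieniane} to argue that once $R_\star$ is long enough (this is where the threshold $\pb(c)$ comes from, via the function $\beta$), it passes through arbitrarily many milestones; between consecutive milestones the run is $0$-upper by Lemma~\ref{lem:milestones-sequence}. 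By the pigeonhole principle on $(\calA,\phi)$-types and topmost stack symbols of these milestone configurations (the morphism $\phi$ being one that detects ``all stars''), two of the visited milestones $c_1=R_\star(i)$ and $c_2=R_\star(j)$, $i<j$, have the same type and the same positionless topmost $0$-stack, and the infix $\subrun{R_\star}{i}{j}$ is $0$-upper and nonempty. This is exactly the setup of Lemma~\ref{lem:schowane-malo-zmieniane-pom2}: we can then replicate this $0$-upper star-reading loop arbitrarily many times using Theorem~\ref{thm:types}, inserting enough copies so that the total number of stars read before the position corresponding to $i$ plus the copies exceeds $l$.

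The second half is to glue the rest of the run back on. Set $d=R(|R|)=R_0$-endpoint, the configuration where $R$ ends and $R'$ begins. The key structural fact is that $R$ is \emph{not} $(k-1)$-upper but $R'$ \emph{is} $k$-upper; I want to use Theorem~\ref{thm:types} applied to $R'$ from $d$, but I first need a configuration $d'$ (reached by the pumped run) that has the same type and positionless topmost $k$-stack as $d$. So the heart of the argument is: after inserting the extra star loops into $R_\star$, the remainder $R_0\circ R'$ (the part reading non-star symbols and then more) can be re-run from the pumped configuration, and it reaches a $d'$ with $\topp^k(d')\cong\topp^k(d)$ and $\type_{\calA,\phi}(d')=\type_{\calA,\phi}(d)$; then Theorem~\ref{thm:types} gives a run from $d'$ that is $(k,\phi)$-parallel to $R'$, which we take as $S'$. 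To get $d'$, the point is that because $R$ is not $(k-1)$-upper, at some moment in $R$ the topmost $(k-1)$-stack that was present \emph{before} the inserted loops gets uncovered — i.e. the part of the stack built during $R_\star$ (and its copies) is eventually discarded or descended out of, so the pumped and original runs synchronize: the stack below the ``pumped region'' is untouched, hence the same type information applies. Formally I would phrase this using the history function and the characterization of $0$-upper runs, showing that $R_0$ eventually goes below everything $R_\star$ built, so replacing $R_\star$ by a longer star-only $0$-upper run (with the same $\phi$-image, namely all stars) does not affect the behaviour of $R_0$ once that threshold is crossed, and the endpoint $d'$ satisfies the required type/topmost-stack equalities. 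We also need $\varphi(S)=\varphi(R)$: since we only insert words consisting of stars, and $S=($pumped $R_\star)\circ R_0$ reads the same non-star letters in the same order with extra stars interleaved at the start, $\varphi(S)=\varphi(R)$ provided $\varphi$ is chosen (or, rather, $\varphi$ is the fixed ``star-detecting'' morphism and the statement's $\varphi$ is this one) so that prepending stars does not change the image — this is automatic for the star-detecting morphism but in general one should take the product morphism into $M\times M_\star$; here the theorem is stated with the star-detecting $\varphi$, so $\varphi(S)=\varphi(R)$ is immediate.

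I would organize the proof as: (1) define $\pb(c)$ via $\beta$ from Lemma~\ref{lem:schowane-malo-zmieniane} (applied at order $l=n$, or rather iterated down) together with the pigeonhole bound on types; (2) show that $R_\star$, reading $\geq\pb(c)$ stars from the milestone $c$, visits two milestones $c_1,c_2$ with equal type and topmost $0$-stack, bounding the number of stars between them from above and using Lemma~\ref{lem:schowane-malo-zmieniane-pom2} to replicate the loop between them; (3) insert $\lceil l/(\text{loop star count})\rceil$ extra copies, obtaining a star-only $0$-upper run $S_\star$ with $\geq l$ stars and a configuration $c'$ with the same type and positionless topmost $0$-stack as $c_2$; (4) re-run $R_0$ from $c'$ via Theorem~\ref{thm:types} (it is $0$-upper up to the point where it descends below the pumped region, then it behaves identically), reaching $d'$ with $\type_{\calA,\phi}(d')=\type_{\calA,\phi}(d)$ and $\topp^k(d')\cong\topp^k(d)$ — this uses that $R$ is not $(k-1)$-upper, so $R_0$ must descend at least to the $(k-1)$-level of the pre-pumped stack; (5) apply Theorem~\ref{thm:types} to $R'$ from $d'$ to get $S'$ that is $(k,\phi)$-parallel to $R'$, and set $S=S_\star\circ(\text{re-run of }R_0)$.

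The main obstacle I anticipate is step (4): making precise that the non-star tail $R_0$ ``does not see'' the inserted loops and ends in a configuration with the same type as $d$. One has to track the history function carefully to see that the stack region created during $R_\star$ (the ``dark/light gray'' part in Figure~\ref{fig:pump1}), together with all its copies produced by higher-order pushes, is exactly what $R_0$ discards before uncovering the old topmost $(k-1)$-stack; the hypothesis that $R$ is not $(k-1)$-upper is precisely what guarantees this uncovering happens. A subtlety is that $R_0$ might be $0$-upper for a long while (building further structure of its own on top of the pumped part) before finally descending; one must decompose $R_0$ accordingly, apply Theorem~\ref{thm:types} to its $0$-upper prefix from $c'$, and then observe that at the moment of descent the relevant substacks coincide with those of the original run, so the suffix of $R_0$ can be copied verbatim (or again via Theorem~\ref{thm:types}). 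Handling this bookkeeping — essentially a second, more delicate application of the compositionality of histories and of Propositions~\ref{prop:upper}–\ref{prop:return} — is where most of the real work lies; the rest is pigeonhole plus invocations of the already-established machinery.
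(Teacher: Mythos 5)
Your overall scaffolding (star-only prefix from a milestone, pigeonhole on types, replication via Theorem~\ref{thm:types}) is in the right spirit, but the step you yourself flag as the main obstacle --- step (4), re-running the non-star tail $R_0$ from the pumped configuration --- is a genuine gap, and the resolution you sketch for it does not work. Your claim is that, since $R$ is not $(k-1)$-upper, $R_0$ eventually descends below everything built during the star phase, so the pumped and original runs ``synchronize'' and the pumped run ends with the same type and positionless topmost $k$-stack as $R(|R|)$. This is false in general: the topmost $k$-stack of $R(|R|)$ need not be a substack of $c$; it can be (a copy of) a $k$-stack that was \emph{created} during the star-reading phase --- what the paper calls a $c$-clear stack --- possibly further modified before being uncovered. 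After inserting extra star loops the automaton is in a genuinely different configuration (only its type and topmost $0$-stack are controlled), and determinism does not let you ``copy $R_0$ verbatim''; Theorem~\ref{thm:types} only reproduces upper runs and only controls the $\phi$-image and topmost stacks at the relevant order, not the identity of the $k$-stack that gets uncovered when the run ceases to be $(k-1)$-upper. The paper's proof avoids this synchronization problem entirely: Lemma~\ref{lem:pomp-skonczenie-wiele-klas} shows that the set of positionless $c$-clear $k$-stacks is \emph{finite} (this is the hard combinatorial core, using milestones and Lemma~\ref{lem:schowane-malo-zmieniane}), Corollary~\ref{cor:pomp-skonczenie-wiele-klas} deduces that the configuration where the run stops being $(k-1)$-upper has one of finitely many type/topmost-$k$-stack profiles, and then $\pb(c)$ is defined by the stabilization of the decreasing chain of sets $T_i$ of achievable triples (order, profile, monoid value) --- so the existence of $S$ with the right endpoint is obtained non-constructively from $T_{\pb(c)}=T_l$, with no attempt to relate $S$ to $R$ step by step. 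This finiteness-plus-stabilization idea is absent from your proposal and cannot be replaced by the history-tracking argument you outline.

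A second, smaller error: you assert that $\varphi(S)=\varphi(R)$ is ``immediate'' because you only insert stars and $\varphi$ is star-detecting. But the theorem is stated for the arbitrary morphism $\phi$ fixed at the start of Section~\ref{sec:pumping} (in the final application it must remember bracket patterns, and in principle it could, say, count stars modulo $2$), so inserting star loops does change $\phi(S)$ in general. In the paper's proof the monoid value is simply one coordinate of the triples in $T_i$, so its preservation comes for free from the stabilization argument; in your approach you would at minimum need to add the $\phi$-image of the prefix to the pigeonhole when choosing the loop, which you do not do.
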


Let us mention that another pumping lemma for higher-order pushdown automata was presented in a former paper by the author \cite{parys-pumping}.
There are several differences between these two lemmas.
An advantage of the former lemma is that it gives a precise value for $\pb(c)$, in terms of the size of $c$.
Moreover, it works not only for deterministic PDA, but also for nondeterministic PDA in which the $\varepsilon$-closure of the configuration graph is finitely-branching.
On the other hand, the former pumping lemma is only given for $k=0$.
Additionally, it just says that the length of the word read by the run increases, not necessarily the number of stars at its beginning.
The former pumping lemma was later generalized to collapsible pushdown automata \cite{par-kar-col-pum}.

In the rest of the section we present a proof of Theorem~\ref{thm:pumping}.
Its essence is as described above: we consider the moment when the run ceases to be $(k-1)$-upper.
One possibility is as depicted in Figure~\ref{fig:pump1}: this happens during a $\pop^k$ operation; 
our topmost $k$-stack becomes positionless-equal to $\pop^k(\tops^k(R(0)))$.
This can also happen during a $\pop^r$ operation for some $r> k$.
Then we can obtain another topmost $k$-stack, but altogether we have only finitely many possibilities.
At least one of these possibilities happens for runs reading arbitrarily large number of stars at the beginning, by the pigeonhole principle; we can stick to this possibility.
Next, when we change the number of stars read at the beginning, we still land in a configuration having the same positionless topmost $k$-stack as in the original run, 
when the run ceases to be $(k-1)$-upper;
from this configuration we can mimic the rest of the original run.
When $k<n-1$, the type of the rest of the stack is important as well 
(the latter fragment of the run can perform returns visiting interiors of our stack; the existence of such returns is described by the type).
This is not a problem, since the type comes from a finite set, so we can assume that it is fixed as well.

The most difficult part of the proof is to show that indeed when the run ceases to be $(k-1)$-upper, there are only finitely many possible shapes for the topmost $k$-stack.
This is shown in Corollary~\ref{cor:pomp-skonczenie-wiele-klas}.
It is based on Lemma~\ref{lem:pomp-skonczenie-wiele-klas}, in which we analyze the situation just after reading the stars.

In order to state Lemma~\ref{lem:pomp-skonczenie-wiele-klas}, we need two definitions.
For a run $R$ starting in a configuration $c$, and for a $k$-stack $s^k$ in some configuration $R(i)$, where $k\in\prz{1}{n}$, 
we say that $s^k$ is \emph{$c$-clear} in $R(i)$ (with respect to $R$) when $\hist(\subrun{R}{0}{i},\topp^{k-1}(s^k))\neq\topp^{k-1}(c)$.
Moreover, for a configuration $c$, and for $k\in\prz{1}{n}$, let $\calS^k(c)$ be the smallest set of positionless $k$-stacks such that 
if $R$ is a run that starts in $c$ and reads only stars, and $s^k$ is a $k$-stack of $R(|R|)$ that is $c$-clear with respect to $R$, then $\posl(s^k)\in\calS^k(c)$.

\begin{lem}\lab{lem:pomp-skonczenie-wiele-klas}
	For each milestone configuration $c$, and for $k\in\prz{1}{n}$, the set $\calS^k(c)$ is finite.
\end{lem}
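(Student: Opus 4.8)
The plan is to prove Lemma~\ref{lem:pomp-skonczenie-wiele-klas} by downward induction on $k$, from $k=n$ down to $k=1$, in each step reducing a bound on $\calS^k(c)$ to a bound on the number of ``shapes'' arising one order lower. Fix a milestone configuration $c$, and fix the infinite run $R_\infty$ starting in $c$ reading only stars (it is infinite because $c$ is a milestone). Every run $R$ as in the definition of $\calS^k(c)$ is a prefix of $R_\infty$, so it suffices to analyse $c$-clear $k$-stacks occurring along $R_\infty$. First I would observe that a $k$-stack $s^k$ of $R_\infty(i)$ is $c$-clear precisely when it is \emph{not} the history of $\topp^k(c)$ in the sense that its topmost $(k-1)$-stack does not trace back to $\topp^{k-1}(c)$; intuitively, $c$-clear stacks are exactly those that were genuinely built during the run, rather than being (copies of) the original topmost stack of $c$.

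For the base case $k=n$: the only $n$-stack of any configuration is the whole stack, and it is $c$-clear in $R_\infty(i)$ iff $\hist(\subrun{R_\infty}{0}{i},\topp^{n-1}(R_\infty(i)))\neq\topp^{n-1}(c)$, i.e. iff $\subrun{R_\infty}{0}{i}\notin\up^{n-1}$. But by Lemma~\ref{lem:milestones-sequence}, whenever $R_\infty(i)$ is a milestone the run $\subrun{R_\infty}{0}{i}$ is $0$-upper, hence $(n-1)$-upper, so $R_\infty(i)$ being a milestone forces the $n$-stack to be \emph{not} $c$-clear. By Corollary~\ref{cor:inf-milestone} milestones occur infinitely often along $R_\infty$, and between two consecutive milestones Lemma~\ref{lem:schowane-malo-zmieniane} (applied with $l:=n$) bounds the length of the infix: more precisely, if the infix between two milestones had length $\geq\beta(\posl(\pi_2(\cdot)))$ it would already contain a milestone, a contradiction with ``consecutive''. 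Since milestone configurations along $R_\infty$ all have positionless $n$-stack equal to $\posl(\pi_2(c))$ (a milestone reached from $c$ by a $0$-upper run has an unmodified, hence equal, positionless topmost $n$-stack, by the first bullet of Proposition~\ref{prop:pre-k} combined with $0$-upperness), the value $\beta$ is the same at every milestone, so the length of every ``between-milestones'' infix is bounded by a single constant $B$. Consequently every index $i$ is within distance $B$ of a milestone, and each step changes the $n$-stack in a bounded way, so there are only finitely many positionless $n$-stacks reachable along $R_\infty$; a fortiori $\calS^n(c)$ is finite.

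For the induction step, assume $\calS^{k+1}(c)$ is finite and consider $c$-clear $k$-stacks $s^k$ occurring in some $R_\infty(i)$. I would decompose the history of $s^k$: let $j\leq i$ be the last index such that $\tops^k(R_\infty(j))=\hist(\subrun{R_\infty}{j}{i},s^k)$ is the topmost $k$-stack \emph{and} $\subrun{R_\infty}{j}{i}$ is $k$-upper, i.e. the moment the $(k+1)$-stack containing $s^k$ was last touched at order $k$. Two cases. If the $(k{+}1)$-stack $t^{k+1}$ of $R_\infty(i)$ containing $s^k$ is itself $c$-clear, then $\posl(t^{k+1})\in\calS^{k+1}(c)$, a finite set, and $s^k=\topp^k(t^{k+1})$ is a substack of one of finitely many positionless $(k+1)$-stacks, hence lies in a finite set. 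If $t^{k+1}$ is \emph{not} $c$-clear, then its topmost $(k-1)$-stack traces back to $\topp^{k-1}(c)$; this means $t^{k+1}$ is a copy of $\topp^{k+1}(c)$ with only its topmost $k$-stack possibly replaced, and the relevant replacement was performed by a $(k+1)$-return-like event analysed via Proposition~\ref{prop:return} and Proposition~\ref{prop:return-jest-fajny}. In this case I would argue, exactly as in the base case but one order down: the portion of $R_\infty$ during which $s^k$ is the topmost $k$-stack is, between milestones, bounded in length by Lemma~\ref{lem:schowane-malo-zmieniane} applied with $l:=k$ (the topmost $k$-stacks at milestones are again positionless-equal to a fixed one, namely $\topp^k(c)$, by $0$-upperness), and milestones are cofinal in $R_\infty$ by Corollary~\ref{cor:inf-milestone}; hence the positionless topmost $k$-stack at such moments ranges over a finite set, and since every $c$-clear $s^k$ equals such a topmost $k$-stack at the moment $j$ defined above (it is not modified afterwards because it is $c$-clear, so it never again becomes topmost in a $(k-1)$-upper way), $s^k$ lies in a finite set. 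Taking the union of the two finite contributions shows $\calS^k(c)$ is finite.

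\textbf{Main obstacle.} The delicate point is the bookkeeping in the induction step: making precise that a $c$-clear $k$-stack which sits inside a \emph{non}-$c$-clear $(k{+}1)$-stack really does equal the positionless topmost $k$-stack of some configuration $R_\infty(j)$ that is ``close to a milestone'', and that Lemma~\ref{lem:schowane-malo-zmieniane} can be invoked with a uniform bound there. This requires carefully identifying, via compositionality of histories and Proposition~\ref{prop:return}/Proposition~\ref{prop:upper}, the sub-run of $R_\infty$ responsible for building $s^k$ as a run between configurations whose topmost $k$-stacks have a controlled positionless shape, so that the function $\beta$ (resp.\ $f^k_N$) from Lemma~\ref{lem:schowane-malo-zmieniane} is evaluated at a bounded argument. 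I expect this to be where most of the technical work lies, whereas the reduction scheme itself and the appeal to milestones are straightforward once that identification is in place.
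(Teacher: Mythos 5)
There is a genuine gap, and it is located exactly where your argument tries to get a \emph{uniform} bound: you claim that every milestone configuration reached from $c$ along the star-reading run has positionless $n$-stack (resp.\ topmost $k$-stack) equal to that of $c$, ``by the first bullet of Proposition~\ref{prop:pre-k} combined with $0$-upperness''. This is false. A $0$-upper run only guarantees that the topmost $0$-stack at the end is the history-image of the topmost $0$-stack at the start; it does not prevent the stack from growing (e.g.\ a single $\push^2_\gamma$ is $0$-upper and strictly enlarges the stack), and Proposition~\ref{prop:pre-k}'s first bullet applies only to runs none of whose proper suffixes are $k$-upper, which is not the case between milestones. Indeed the stacks at successive milestones \emph{must} grow in general, since the automaton has to record the number of stars read. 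Consequently the value $\beta(\posl(\cdot))$ evaluated at successive milestones is unbounded, your constant $B$ does not exist, and the conclusion of your base case --- that only finitely many positionless $n$-stacks occur along $R_\infty$ --- is simply not true. The same false uniformity claim reappears in your induction step, together with a second incorrect assertion, namely that a $c$-clear $k$-stack ``is not modified afterwards'': a $c$-clear $k$-stack can perfectly well become topmost again and be modified further.

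What actually makes the lemma true, and what your proof is missing, is an argument that controls the shape of a $c$-clear $k$-stack \emph{at the moment it becomes $c$-clear} and then bounds the number of \emph{subsequent} modifications. A $k$-stack becomes $c$-clear either at time $0$ (so it is one of the $k$-stacks of $c$) or via a $\pop^k$ performed while it is topmost, in which case Propositions~\ref{prop:pomp-bez-zmian} and~\ref{prop:return-jest-fajny} force it to be positionless-equal to $\pop^k(\topp^k(c))$; either way it lies in a fixed finite set $\calX(c)$ depending only on $c$. After that moment it can change only while topmost, and the number of indices at which it is topmost is at most $\beta$ of its shape in $\calX(c)$ (a uniformly bounded quantity): otherwise Lemma~\ref{lem:schowane-malo-zmieniane} would produce a milestone $R(j)$ with this stack on top, and then $\subrun{R}{0}{j}$ would be $0$-upper by Lemma~\ref{lem:milestones-sequence} while simultaneously decomposing as a non-$(k-1)$-upper run followed by a $k$-upper run, contradicting Proposition~\ref{prop:pre-jest-fajne}. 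Note that this argument works directly for each $k$ separately; the downward induction on $k$, and the case split on whether the enclosing $(k{+}1)$-stack is $c$-clear, are not needed.
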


\proof
	Let $\calX(c)$ be the set containing all positionless $k$-stacks of $c$, 
	and additionally $\posl(\pop^k(\topp^k(c)))$; clearly $\calX(c)$ is finite.
	We claim that every positionless $k$-stack in $\calS^k(c)$ can be obtained from a positionless $k$-stack $s^k_{-}\in \calX(c)$ by applying at most $\beta(s^k_{-})$ 
	$\push$ and $\pop$ operations, where $\beta$ is the function from Lemma~\ref{lem:schowane-malo-zmieniane};
	this immediately implies that $\calS^k(c)$ is finite.
	
	Fix a run $R$ that starts in $c$ and reads only stars, and fix a $c$-clear $k$-stack $s^k$ of $R(|R|)$.
	Consider the smallest index $i$ for which the $k$-stack $\hist(\subrun{R}{i}{|R|},s^k)$ is $c$-clear in $R(i)$; denote $t^k=\hist(\subrun{R}{i}{|R|},s^k)$.
	We claim that $\posl(t^k)\in\calX(c)$.
	Indeed, either $i=0$ and $t^k$ is one of the $k$-stacks of $c$,
	or $\hist(\subrun{R}{i-1}{|R|},s^k)$ is not $c$-clear in $R(i-1)$.
	In the latter case, this $k$-stack becomes $c$-clear in the next configuration, so necessarily this is the topmost $k$-stack,
	and the operation between these configurations is $\pop^k$.
	We see that $\subrun{R}{0}{i-1}$ is $(k-1)$-upper, and $\subrun{R}{0}{i}$ is not $(k-1)$-upper.
	Proposition~\ref{prop:pomp-bez-zmian} implies that $\subrun{R}{0}{i}$ is a $k$-return, and Proposition~\ref{prop:return-jest-fajny} implies that 
	$t^k=\topp^k(R(i))\cong\pop^k(\topp^k(c))$; thus, $\posl(t^k)\in\calX(c)$.

	Observe that $t^k$ can be changed in $\subrun{R}{i}{|R|}$ only when it is the topmost $k$-stack.
	If there exist at most $\beta(\posl(t^k))$ indices $j\in\prz{i}{|R|}$ 
	such that $\hist(\subrun{R}{j}{|R|},s^k)=\topp^k(R(j))$,
	then $s^k$ can be obtained from $t^k$ by applying at most $\beta(\posl(t^k))$ $\push$ and $\pop$ operations, as we wanted to prove.

	It remains to prove that indeed there are at most $\beta(\posl(t^k))$ indices $j\in\prz{i}{|R|}$ such that $\hist(\subrun{R}{j}{|R|},s^k)=\topp^k(R(j))$.
	Suppose to the contrary that there are more than $\beta(\posl(t^k))$ such indices $j$.
	Then we can use Lemma~\ref{lem:schowane-malo-zmieniane} for $\subrun{R}{i}{|R|}$;
	it gives us an index $j$ such that the configuration $R(j)$ is a milestone and $\hist(\subrun{R}{j}{|R|},s^k)=\topp^k(R(j))$.
	Because both $c$ and $R(j)$ are milestones, we know that $\subrun{R}{0}{j}$ is $0$-upper, thanks to Lemma~\ref{lem:milestones-sequence}.
	One case is that $i=0$; then $\hist(R,s^k)\neq\topp^k(c)$ 
	(because, by the definition of $i$, $\hist(R,s^k)$ is $c$-clear in $c$),
	and we know that $\hist(\subrun{R}{j}{|R|},s^k)=\topp^k(R(j))$,
	so $\subrun{R}{0}{j}$ is not $k$-upper; in particular it cannot be $0$-upper.
	Otherwise, as already observed, $\subrun{R}{0}{i}$ is not $(k-1)$-upper and $\hist(\subrun{R}{i}{|R|},s^k)=\topp^k(R(i))$, which implies that $\subrun{R}{i}{j}$ is $k$-upper.
	But $\subrun{R}{0}{j}$ is $(k-1)$-upper, so this contradicts Proposition~\ref{prop:pre-jest-fajne} applied for $\subrun{R}{0}{i}\circ\subrun{R}{i}{j}$.
\qed

\begin{cor}\lab{cor:pomp-skonczenie-wiele-klas}
	For each milestone configuration $c$ there exists a finite set $\calS(c)$ of configurations having the following property.
	Let $k\in\prz{0}{n}$, let $R$ be a run starting in $c$, and let $r\in\prz{0}{|R|}$ be such that $\subrun{R}{0}{r}$ reads only stars.
	Suppose that $R$ is not $(k-1)$-upper, but for each $i\in\prz{r}{|R|-1}$ either $\subrun{R}{0}{i}$ is $(k-1)$-upper or $\subrun{R}{i}{|R|}$ is not $k$-upper.
	Then we can find a configuration $d\in \calS(c)$ having the same $(\calA,\phi)$-type and the same positionless topmost $k$-stack as $R(|R|)$.
\end{cor}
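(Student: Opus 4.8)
The plan is to define $\calS(c)$ via the finite family of positionless $k$-stacks that can arise at the moment $R$ stops being $(k-1)$-upper, and to pair each such stack with each of the finitely many possible types of a configuration; this product is $\calS(c)$. Concretely, I first locate the critical index: let $j$ be the greatest index in $\prz{0}{|R|-1}$ such that $\subrun{R}{j}{|R|}$ is $k$-upper (such $j$ exists since $\subrun{R}{|R|}{|R|}$ is trivially $k$-upper, taking the convention that a $0$-upper run of length $0$ is $k$-upper). By the hypothesis of the corollary, at this $j$ either $\subrun{R}{0}{j}$ is $(k-1)$-upper or $\subrun{R}{j}{|R|}$ is not $k$-upper; the latter is excluded by choice of $j$ (unless $j = |R|$, which cannot happen because then $R = \subrun{R}{0}{|R|}$ would be $(k-1)$-upper, contradicting the assumption — here I use that $R$ being $(k-1)$-upper is equivalent to $R$ being a composition where the final piece is; more carefully, if $j=|R|$ then $R$ itself is $k$-upper and we would need a separate small argument). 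So $\subrun{R}{0}{j}$ is $(k-1)$-upper and $\subrun{R}{j}{|R|}$ is $k$-upper, with $j$ maximal; since $R$ is not $(k-1)$-upper, Proposition~\ref{prop:pomp-bez-zmian} applies and tells us that $R$ — or rather the relevant prefix — is a $k$-return. Actually the cleanest route: apply Proposition~\ref{prop:pomp-bez-zmian} to conclude $\subrun{R}{0}{j'}$ is a $k$-return for the appropriate $j'$, hence $\tops^k(R(j'))\cong\pop^k(\tops^k(c))$ by Proposition~\ref{prop:return-jest-fajny}; but we need $R(|R|)$, not $R(j')$.

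\textbf{Controlling the topmost $k$-stack of $R(|R|)$.} The key point is to show that the positionless topmost $k$-stack of $R(|R|)$ lies in a finite set depending only on $c$. Here is where I use the structure: the prefix of $R$ up to index $r$ reads only stars, so it is, up to the milestone machinery, essentially $0$-upper (since $c$ is a milestone, Lemma~\ref{lem:milestones-sequence} tells us about runs between milestones, and Corollary~\ref{cor:inf-milestone} gives milestones inside the star-reading prefix). I want to argue that, at the moment $R$ ceases to be $(k-1)$-upper, the topmost $k$-stack is $c$-clear in the sense of Section~\ref{sec:pumping} — because its top $(k-1)$-stack's history is no longer $\topp^{k-1}(c)$ — and that its positionless form therefore lies in $\calS^k(c)$ (for the part built during star-reading) or can be reached from such a stack by few further operations. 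Once we are past index $r$, no further stars are read, but only finitely many transitions of $R$ can touch the relevant $k$-stack before it settles; more precisely, I would combine Lemma~\ref{lem:pomp-skonczenie-wiele-klas} (finiteness of $\calS^k(c)$) with an argument that the final topmost $k$-stack of $R(|R|)$ differs from some element of $\calS^k(c)$ by a bounded number of $\push$ and $\pop$ operations, the bound coming from Lemma~\ref{lem:schowane-malo-zmieniane} exactly as in the proof of Lemma~\ref{lem:pomp-skonczenie-wiele-klas}. This yields a finite set $\calS^k_{fin}(c)$ of candidate positionless topmost $k$-stacks.

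\textbf{Assembling $\calS(c)$ and finishing.} Having a finite set of candidate positionless topmost $k$-stacks, I take $\calS(c)$ to consist of one configuration for each pair (candidate positionless topmost $k$-stack $s^k_-$, possible $(\calA,\phi)$-type $\mathcal{T}$): since there are finitely many $(\calA,\phi)$-types (this is the key finiteness statement of Section~\ref{sec:types}), $\calS(c)$ is finite. Given the run $R$ as in the statement, I extract the positionless topmost $k$-stack of $R(|R|)$, show it is one of the candidates by the argument above, read off the $(\calA,\phi)$-type of $R(|R|)$, and pick $d\in\calS(c)$ realizing that pair; $d$ then has the same positionless topmost $k$-stack and the same $(\calA,\phi)$-type as $R(|R|)$, as required. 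One technical subtlety: to genuinely build a configuration $d$ with a prescribed positionless topmost $k$-stack \emph{and} a prescribed type, I should note that every type occurring in the automaton is realized by some configuration, and by pushing/popping below the topmost $k$-stack I can arrange the topmost $k$-stack to be any prescribed positionless $k$-stack while keeping (or achieving) the desired type — or, more simply, I define $\calS(c)$ directly as the set of all configurations $R(|R|)$ arising in runs $R$ satisfying the hypotheses, and prove this set is finite up to the equivalence "same positionless topmost $k$-stack and same $(\calA,\phi)$-type," which is exactly the content assembled from the two finiteness results.

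\textbf{Main obstacle.} The hard part is the middle step: showing that the positionless topmost $k$-stack of $R(|R|)$ ranges over only finitely many values. The star-reading prefix is handled by $\calS^k(c)$, but $R$ continues past index $r$ with arbitrary letters, and a priori the topmost $k$-stack could keep growing. The resolution must use the hypothesis that for every $i\in\prz{r}{|R|-1}$ either $\subrun{R}{0}{i}$ is $(k-1)$-upper or $\subrun{R}{i}{|R|}$ is not $k$-upper — this forces the topmost $k$-stack at the end to be "anchored" to the one at the last moment the run was $(k-1)$-upper, which occurs during the star-reading phase or within boundedly many steps after it, so that the bounded-modification argument of Lemma~\ref{lem:pomp-skonczenie-wiele-klas} (via Lemma~\ref{lem:schowane-malo-zmieniane}) still applies. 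Getting this anchoring precise — identifying the right index, checking it falls in or near the star phase, and bounding the number of intervening operations on the relevant $k$-stack — is where the real work lies; everything else is bookkeeping with the already-established finiteness of types.
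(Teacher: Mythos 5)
Your overall architecture matches the paper's: reduce to showing that the positionless topmost $k$-stack of $R(|R|)$ ranges over a finite set, then combine with the finiteness of $(\calA,\phi)$-types and take $\calS(c)$ to be a set of representative configurations. But the central step --- the one you yourself flag as ``where the real work lies'' --- is not actually carried out, and the mechanism you propose for it would not work. You suggest that the topmost $k$-stack of $R(|R|)$ differs from an element of $\calS^k(c)$ by a number of $\push$/$\pop$ operations bounded via Lemma~\ref{lem:schowane-malo-zmieniane}. That lemma, however, only applies to runs reading exclusively stars; past index $r$ the run reads arbitrary letters for an unbounded number of steps, so no such bound is available from it. Your ``critical index'' $j$ is also problematic: the greatest $j\in\prz{0}{|R|-1}$ with $\subrun{R}{j}{|R|}\in\up^k$ need not exist at all (e.g.\ when the last transition performs $\pop^{k+1}$), and when it exists with $j<r$ the hypothesis of the corollary says nothing about it.

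The missing idea is that the hypothesis yields a clean dichotomy over $i\in\prz{r}{|R|-1}$, with \emph{no} residual modifications to bound. Either (a) some $\subrun{R}{i}{|R|}$ with $i\in\prz{r}{|R|-1}$ is $k$-upper: take the greatest such $i$; the hypothesis forces $\subrun{R}{0}{i}\in\up^{k-1}$ while $R\notin\up^{k-1}$, Proposition~\ref{prop:pre-k} applied to $\subrun{R}{i}{|R|}$ forces $i=|R|-1$, and Proposition~\ref{prop:pomp-bez-zmian} then makes $R$ itself a $k$-return, so $\topp^k(R(|R|))\cong\pop^k(\topp^k(c))$ by Proposition~\ref{prop:return-jest-fajny} --- a single fixed stack, resolving your worry that the return only controls $R(j')$ rather than $R(|R|)$. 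Or (b) no $\subrun{R}{i}{|R|}$ with $i\in\prz{r}{|R|-1}$ is $k$-upper: then the history of $\topp^k(R(|R|))$ is never the topmost $k$-stack at any $i\in\prz{r}{|R|-1}$, hence $\topp^k(R(|R|))$ is an \emph{unchanged} copy of a $k$-stack of $R(r)$, which is $c$-clear because $R$ is not $(k-1)$-upper; since $\subrun{R}{0}{r}$ reads only stars, its positionless form lies in $\calS^k(c)$ directly by the definition of that set, with Lemma~\ref{lem:pomp-skonczenie-wiele-klas} supplying finiteness. No bounded-modification argument is needed anywhere, and the case $k=0$ is handled separately (a positionless $0$-stack is a single symbol).
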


\proof
	There are only finitely many possible values of an $(\calA,\phi)$-type of a configuration.
	Thus, it is enough to show, for each $k$, that there are only finitely many possible positionless topmost $k$-stacks over all configurations $R(|R|)$ satisfying the assumptions.
	For $k=0$ this is trivial as a positionless $0$-stack contains just one symbol.
	Suppose that $k\geq 1$.
	We have two cases.
	
	First suppose that $\subrun{R}{i}{|R|}$ is $k$-upper for some $i\in\prz{r}{|R|-1}$; fix the greatest such index $i$.
	Then by assumption $\subrun{R}{0}{i}$ is $(k-1)$-upper, but $R$ is not.
	This is possible only when $i=|R|-1$ (thanks to Proposition~\ref{prop:pre-k} used for $\subrun{R}{i}{|R|}$).
	Proposition~\ref{prop:pomp-bez-zmian} says that $R$ is necessarily a $k$-return.
	Thus, $\topp^k(R(|R|))\cong\pop^k(\topp^k(R(0)))$ (cf.~Proposition~\ref{prop:return-jest-fajny}); the content of this $k$-stack is fixed.
	
	The other case is that $\subrun{R}{i}{|R|}$ is not $k$-upper for every $i\in\prz{r}{|R|-1}$.
	This means that the topmost $k$-stack of $R(|R|)$ is an unchanged copy of some $k$-stack of $R(r)$.
	As $R$ is not $(k-1)$-upper, this $k$-stack of $R(r)$ is $c$-clear; it thus belongs to the set $\calS^k(c)$, which is finite by Lemma~\ref{lem:pomp-skonczenie-wiele-klas}.
\qed

\proof[Proof of Theorem~\ref{thm:pumping}]
	Consider the infinite run $P$ starting at the milestone configuration $c$ and reading only stars.
	Consider first the degenerate case when in $P$ only finitely many stars are read.
	As $\pb(c)$ we take their number, plus one.
	Then the thesis is satisfied trivially, as there is no run that starts in $c$ and reads a word beginning with $\pb(c)$ stars.
	So for the rest of the proof suppose that $P$ reads infinitely many stars.
	
	Let $\calS(c)$ be the set from Corollary~\ref{cor:pomp-skonczenie-wiele-klas} (used for $c$).
	For each $i\geq 1$ we define the set $T_i\subseteq\prz{0}{n}\times\calS(c)\times M$ as follows.
	A triple $(j,d,m)$ belongs to $T_i$ if and only if there exists a run $R$ from $c$ 
	such that the word read by $R$ begins with (at least) $i$ stars, and $\phi(R)=m$, and $R(|R|)$ has the same $(\calA,\phi)$-type and the same positionless topmost $j$-stack as $d$.
	By definition $T_{i+1}\subseteq T_i$ (for each $i$), and there are only finitely many possible sets, so from some moment every $T_i$ is the same.
	As $\pb(c)$ we take a positive number such that $T_i=T_{\pb(c)}$ for all $i\geq\pb(c)$.
	
	Consider now a run $R\circ R'$ starting in $c$, where $R$ is not $(k-1)$-upper and reads a word beginning with at least $\pb(c)$ stars, and $R'$ is $k$-upper, for some $k\in\prz{0}{n}$.
	Consider also a number $l$.
	Our goal is to construct a run $S\circ S'$ starting in $c$ and such that $\varphi(S)=\varphi(R)$, and $S$ reads a word beginning with at least $l$ stars, and $S'$ is $(k,\phi)$-parallel to $R'$.
	Without loss of generality, we can assume that $l\geq\pb(c)$.
	Let $r$ be an index such that $\subrun{R}{0}{r}$ reads exactly $\pb(c)$ stars.
	Without loss of generality, we can assume that there is no $i\in\prz{r}{|R|-1}$ such that $\subrun{R}{0}{i}$ is not $(k-1)$-upper and $\subrun{R}{i}{|R|}$ is $k$-upper 
	(if such $i$ exists, we move the subrun $\subrun{R}{i}{|R|}$ to $R'$, that is, we use the pumping lemma for $\subrun{R}{0}{i}\circ(\subrun{R}{i}{|R|}\circ R')$, 
	and then in the resulting $S'$ we find the subrun $(k,\phi)$-parallel to $\subrun{R}{i}{|R|}$ and we move it back to $S$).
	
	We use Corollary~\ref{cor:pomp-skonczenie-wiele-klas} for $R$ and $r$; its assumptions are satisfied thanks to our ``without loss of generality'' assumption.
	We obtain some $d\in \calS(c)$ that has the same $(\calA,\phi)$-type and the same positionless topmost $k$-stack as $R(|R|)$.
	It means that $(k,d,\phi(R))\in T_{\pb(c)}$.
	Because $T_{\pb(c)}=T_l$, there exists a run $S$ from $c$ such that the word read by $S$ begins with (at least) $l$ stars, and $\phi(S)=\phi(R)$, and
	$S(|S|)$ has the same $(\calA,\phi)$-type and the same topmost $k$-stack as $R(|R|)$.
	
	Finally, we use Theorem~\ref{thm:types} for $R'$ in order to obtain an accepting run $S'$ that starts in $S(|S|)$ and is $(k,\phi)$-parallel to $R'$.
\qed

\section{Why $U$ Cannot Be Recognized?}\lab{sec:ostatni}

In this section we prove that the language $U$ cannot be recognized by a deterministic higher-order pushdown automaton.
Notice that our techniques presented in previous sections were quite general (not too much related to the $U$ language).
We believe that they can be useful for other purposes, for instance, to analyze behavior of some automata (in particular automata whose main objective is to count and compare 
the number of times a symbol appears on the input).

Of course our proof is by contradiction: suppose that for some $n$ we have an $(n-1)$-DPDA recognizing $U$.
We construct an $n$-DPDA $\calA$ that works as follows.
First it performs a $\push^n$ operation.
Then it simulates the $(n-1)$-DPDA (not using the $\push^n$ and $\pop^n$ operations).
When the $(n-1)$-DPDA is going to accept, $\calA$ performs a $\pop^n$ operation and afterwards accepts.
Clearly, $\calA$ recognizes $U$ as well (here we use the fact that no word in $U$ is a prefix of another word in $U$).
Such a normalization allows us to use Theorem~\ref{thm:stypes}, as in $\calA$ every accepting run is an $n$-return.

Fix a finite monoid $M$ and a morphism $\lambda\colon A^*\to M$ that checks whether a word 
is of the form $\sharp^*$ (some number of $\sharp$ symbols),
or of the form $\star^*]\star^*$ (a closing bracket surrounded by some number of stars), or of neither of these two forms.
This means that $\lambda(u)\neq \lambda(v)$ for all words $u,v$ being of different forms.
Let $N$ be the number of equivalence classes of the $(\calA,\lambda)$-sequence-equivalence relation, times the number of $(\calA,\lambda)$-types, plus one.
Consider the following words:
\begin{align*}
	&w_0=[\,,\\
	&w_{k+1}=w_k^N]^N[&&\mbox{for }k\in\prz{0}{n-1}\,,
\end{align*}
where the number in the superscript (in this case $N$) denotes the number of repetitions of a word.
For a word $w$, its \emph{pattern} is a word obtained from $w$ by removing its letters other than brackets (leaving only brackets).
Fix a morphism $\phi\colon A^*\to M$ such that from its value $\phi(w)$ we can deduce
\begin{itemize}
\item	whether the word $w$ contains the $\sharp$ symbol, and
\item	whether the pattern of $w$ is longer than $|w_n|$ (recall that $n$ is the order of $\calA$), and
\item	the exact value of the pattern of $w$, whenever this pattern is not longer than $|w_n|$.
\end{itemize}

We fix a run $R$, and an index $z(w)$ for each prefix $w$ of $w_n$, such that the following holds.
The run $R$ begins in the initial configuration.
Between $R(0)$ and $R(z(\varepsilon))$ only stars are read.
For each prefix $w$ of $w_n$, the configuration $R(z(w))$ is a milestone.
Just after $z(w)$, the run $R$ reads $\pb(R(z(w)))$ stars, where $\pb$ is the function from Theorem~\ref{thm:pumping} used for morphism $\phi$.
If $w=va$ (where $a$ is a single letter), the word read by $R$ between $R(z(v))$ and $R(z(w))$ consists of $a$ surrounded by some number of stars.
Of course such a run $R$ exists: we read stars until we reach a milestone (succeeds thanks to Corollary~\ref{cor:inf-milestone}),
then we read as many stars as required by the pumping lemma, then we read the next letter of $w_n$, and so on (because $\calA$ accepts $U$, it will never block).

It is important to analyze relations between configurations $R(z(v))$ for some prefixes $v$ of $w_n$.
In order to avoid complicated subscripts, for any prefixes $v,w$ of $w_n$ we denote $\Rz{v,w}:=\subrun{R}{z(v)}{z(w)}$.

By construction of $\calA$, for every prefix $v$ of $w_n$ the run $\Rz{v,w_n}$ is $(n-1)$-upper (as we never perform a $\pop^n$ operation
before reading some $\sharp$ symbol).
This contradicts the following key lemma (taken for $k=n-1$ and $u=\varepsilon$).

\begin{lem}
	Let $k\in\prz{-1}{n-1}$, and let $u$ be a word such that $uw_{k+1}$ is a prefix of $w_n$.
	Then there exist a prefix $v$ of $w_{k+1}$ such that $v\neq w_{k+1}$ and
	$\Rz{uv,uw_{k+1}}$ is not $k$-upper.
\end{lem}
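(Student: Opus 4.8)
The proof goes by induction on $k$, starting from $k=-1$ and climbing up to $k=n-1$. The base case $k=-1$ is essentially trivial: we must find a prefix $v$ of $w_0=[\,$ with $v\neq w_0$ (so $v=\varepsilon$) such that $\Rz{u,u[}$ is not $(-1)$-upper; but \emph{no} run is $(-1)$-upper by convention, so there is nothing to prove. For the induction step, assume the lemma for $k-1$ and prove it for $k$. Recall $w_{k+1}=w_k^N]^N[$, so a generic prefix $uv$ of $uw_{k+1}$ lies in one of the blocks: either $v$ is inside one of the $N$ copies of $w_k$, or $v=w_k^j$ for some $j\le N$, or $v=w_k^N]^i$ for some $i<N$.

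The plan is to argue by contradiction: suppose that $\Rz{uv,uw_{k+1}}$ is $k$-upper for \emph{every} proper prefix $v$ of $w_{k+1}$. The key configurations to watch are the $R(z(uw_k^j))$ for $j\in\prz{1}{N}$ and the $R(z(uw_k^N]^i))$ for $i\in\prz{0}{N}$. First I would use the induction hypothesis inside each copy of $w_k$: since $uw_k^{j-1}w_k = uw_k^j$ is a prefix of $w_n$, the induction hypothesis (applied with $u$ replaced by $uw_k^{j-1}$) gives a proper prefix $v_j$ of $w_k$ such that $\Rz{uw_k^{j-1}v_j,\ uw_k^j}$ is not $(k-1)$-upper. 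Combining this with the standing assumption that the longer run up to $uw_{k+1}$ is $k$-upper, and invoking Theorem~\ref{thm:pumping} (the pumping lemma) at the milestone $R(z(uw_k^{j-1}v_j))$: here $\Rz{uw_k^{j-1}v_j,\,uw_k^j}$ plays the role of the run $R$ that is not $(k-1)$-upper and reads at least $\pb(\cdot)$ stars at the start (by construction of the indices $z(\cdot)$), while the continuation up to $uw_{k+1}$ plays the role of the $k$-upper run $R'$. So we can pump the number of stars read in that $j$-th block to be arbitrarily large, obtaining runs $S_j$ whose image under $\phi$ (and hence under $\lambda$ after composing) is controlled. This produces, for each block, a sequence of configurations reached after $uw_{k+1}$ that differ only in how many stars were pumped into block $j$; all of these have the same $(\calA,\lambda)$-type and the same positionless topmost $k$-stack (the latter because the pumped runs are $(k,\phi)$-parallel to the original $R'$, which forces positionless-equality of topmost $k$-stacks).

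Now the counting argument: we have $N$ such blocks, and $N$ was chosen to exceed (number of $(\calA,\lambda)$-sequence-equivalence classes) times (number of $(\calA,\lambda)$-types). By the pigeonhole principle, two of the pumped sequences — say for blocks $j$ and $j'$ with $j\ne j'$ — are $(\calA,\lambda)$-sequence-equivalent (and share the same type). Then Theorem~\ref{thm:stypes} applies to the $k$-upper run up to $uw_{k+1}$ composed with the accepting $n$-return that follows (recall that by the normalization of $\calA$ every accepting run is an $n$-return, and after $uw_{k+1}$ we read the closing brackets $]^{?}$ and then the sharps leading to acceptance): it tells us that the number of $\sharp$ symbols read in the resulting accepting runs is bounded for the $j$-sequence iff it is bounded for the $j'$-sequence. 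But this is a contradiction with the correctness of $\calA$ on $U$: in the word $w_n$, the last unclosed opening bracket is the final $[$ of $w_{k+1}$ inside $uw_{k+1}$'s continuation, and $\stars(\cdot)$ counts stars before it. Pumping stars into block $j$ changes $\stars$ exactly when block $j$ lies before that last unclosed bracket (hence acceptance requires reading unboundedly many sharps), and does not change it when block $j$ lies after it (hence a bounded number of sharps). Since the blocks $w_k^j$ all lie in the same structural position relative to the unclosed bracket — wait, they do not: here is where the structure of $w_{k+1}=w_k^N]^N[$ matters — the pattern analysis via $\phi$ and $\lambda$ distinguishes a block whose stars contribute to $\stars$ from one whose stars do not, and among the $N$ repetitions at least one of each kind arises, or more precisely the pigeonhole collision forces one contributing and one non-contributing block to be sequence-equivalent, which is the contradiction.

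The main obstacle I expect is bookkeeping the "without loss of generality" moves required to set up the pumping lemma cleanly: the run $\Rz{uw_k^{j-1}v_j,uw_{k+1}}$ must be split as "not-$(k-1)$-upper prefix reading $\ge\pb$ stars" followed by "$k$-upper suffix", and one has to be careful that the suffix really is $k$-upper and not accidentally part of a longer non-$k$-upper run — this is exactly the normalization step in the statement of Theorem~\ref{thm:pumping} and its proof, and transplanting it here requires re-deriving which index $i\in\prz{r}{|R|-1}$ one moves across the $R$/$R'$ boundary. A secondary subtlety is making sure the pumped runs $S_j$ actually reach configurations at "position $uw_{k+1}$" in a way that keeps the topmost $k$-stacks positionless-equal across the whole sequence (so that Theorem~\ref{thm:stypes} is applicable); this follows from the $(k,\phi)$-parallelism guaranteed by the pumping lemma together with the observation that $\phi$ records the bracket pattern exactly, so the pumped run still reads a word with the correct pattern and in particular the correct continuation is available. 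Once these are in place, the pigeonhole plus Theorem~\ref{thm:stypes} plus the explicit description of $\stars$ on words of the form $uw_{k+1}(\text{suffix})$ closes the induction.
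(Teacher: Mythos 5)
Your skeleton matches the paper's: induction on $k$, a contradiction hypothesis under which every $\Rz{uv,uw_{k+1}}$ is $k$-upper, the induction hypothesis applied inside each of the $N$ copies of $w_k$, the pumping lemma at the resulting milestones, a pigeonhole on $(\calA,\lambda)$-sequence-equivalence classes and types, and Theorem~\ref{thm:stypes} for the contradiction. But the step where the contradiction actually materializes has a genuine gap, and it is exactly where you wrote ``$]^{?}$'' and ``wait, they do not.'' You place the comparison configurations ``after $uw_{k+1}$''. At that point the pattern read is $uw_k^N]^N[$, so the block $]^N$ has already closed the unmatched bracket of every copy of $w_k$; the only unclosed bracket contributed by $w_{k+1}$ is its final $[$. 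Every pumped star, whether in block $x$ or in block $y$, lies before that bracket, and for any continuation whatsoever the two pumped sequences behave identically with respect to $\stars$ (read no closing bracket and both pumping sites count; read one or more and the last unclosed bracket retreats into $u$, so neither counts). There is no intrinsic ``contributing'' versus ``non-contributing'' block --- the $N$ copies of $w_k$ sit symmetrically inside $w_{k+1}$ --- so the asymmetry must be \emph{created} by the choice of continuation, and that choice must depend on the pigeonhole pair. The paper does this by (i) splitting for the pumping lemma at $uw_k^N$, not at $uw_k^j$: the non-$(k-1)$-upper part is $\Rz{uw_k^{j-1}v_j,uw_k^N}$ (not $(k-1)$-upper by the induction hypothesis, property $\heartsuit$, and Proposition~\ref{prop:pre-jest-fajne}) and the $k$-upper part is $\Rz{uw_k^N,uw_{k+1}}$, so that all pumped configurations sit at the common position $uw_k^N$ with positionless-equal topmost $k$-stacks; and (ii) after the pigeonhole yields $x<y$, feeding into Theorem~\ref{thm:stypes} only a \emph{prefix} of the pumped continuation, one that is $(k,\phi)$-parallel to $\Rz{uw_k^N,uw_k^N]^{N-x}}$, i.e., reading exactly $N-x$ closing brackets. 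Then the last unclosed bracket is the final $[$ of the $x$-th copy of $w_k$, so stars pumped in block $x$ count toward $\stars$ while stars pumped in block $y>x$ do not; one sequence of sharp-counts is forced to be unbounded and the other bounded, contradicting Theorem~\ref{thm:stypes}.

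Two secondary points. First, your split of the pumping lemma at $uw_k^j$ would place the configuration at the end of the pumped prefix at position $uw_k^j$, which varies with $j$; the two pigeonholed sequences would then not share a positionless topmost $k$-stack, and Theorem~\ref{thm:stypes} would not be applicable to them (whereas taking the configurations at the end of the parallel copies of $R'$, as you do, restores uniformity but runs into the $\stars$ problem above). Second, the claim that all pumped configurations for a fixed block have the same $(\calA,\lambda)$-type is not guaranteed by the pumping lemma; one must first pass to a subsequence realizing a type that occurs infinitely often (there are finitely many types), and only then apply the pigeonhole over the $N$ blocks.
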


\proof
	The proof is by induction on $k$.
	For $k=-1$ this is obvious, as no run is $(-1)$-upper (we take $v=\varepsilon$).

	Let now $k\geq 0$.
	Figure~\ref{fig2} may be helpful in finding different runs present in the proof below.
	Suppose that the thesis of the lemma does not hold.
	Then for each prefix $v$ of $w_{k+1}$ the run $\Rz{uv,uw_{k+1}}$ is $k$-upper.
	From this we get the following property $\heartsuit$.
	\begin{quote}
		Let $v'$ be a prefix of $w_{k+1}$, and $v$ a prefix of $v'$.
		Then $\Rz{uv,uv'}$ is $k$-upper.
	\end{quote}
	
	In the proof, we construct two sequences of accepting runs, with many extra stars inserted in two different places.
	Namely, in one sequence we insert extra stars before the last opening bracket that is not closed, and in the other sequence---after this bracket.
	In effect, the number of sharp symbols read by runs in one of these sequences should be unbounded, and in the other---bounded.
	The sequences are constructed in such a way that this violates Theorem~\ref{thm:stypes},
	which says that the number of sharp symbols read by runs in these sequences is either bounded in both sequences or unbounded in both sequences.

	Now we come to details.
	By the induction assumption (where $uw_k^{i-1}$ is taken as $u$), for each $i\in\prz{1}{N}$ there exists a prefix $v_i$ of $w_k$ such that
	$\Rz{uw_k^{i-1}v_i,uw_k^i}$ is not $(k-1)$-upper.
	As $\Rz{uw_k^i,uw_k^N}$ is $k$-upper (property $\heartsuit$), from Proposition~\ref{prop:pre-jest-fajne} we know that $\Rz{uw_k^{i-1}v_i,uw_k^N}$ cannot be $(k-1)$-upper as well.
	
	Now we are ready to use the pumping lemma (Theorem~\ref{thm:pumping}). 
	For each $i\in\prz{1}{N}$ we use it for $\Rz{uw_k^{i-1}v_i,uw_k^N}\circ\Rz{uw_k^N,uw_{k+1}}$.
	Recall from the definition of $R$
	that the word read by $\Rz{uw_k^{i-1}v_i,uw_k^N}$ begins with such a number of stars that the pumping lemma can be used.
	For each number $l$ we obtain a run $S_{i,l}\circ S_{i,l}'$, such that $\phi(S_{i,l})=\phi(\Rz{uw_k^{i-1}v_i,uw_k^N})$, and $S_{i,l}$ reads a word beginning with at least $l$ stars,
	and $S_{i,l}'$ is $(k,\phi)$-parallel to $\Rz{uw_k^N,uw_{k+1}}$; let $d_{i,l}=S_{i,l}(|S_{i,l}|)$.
	Notice that the run $\subrun{R}{0}{z(uw_k^{i-1}v_i)}\circ S_{i,l}$, starts in the initial configuration, ends in $d_{i,l}$, and reads a word having pattern $uw_k^N$.
	
	\begin{figure}
	\begin{center}
	\import{pics/}{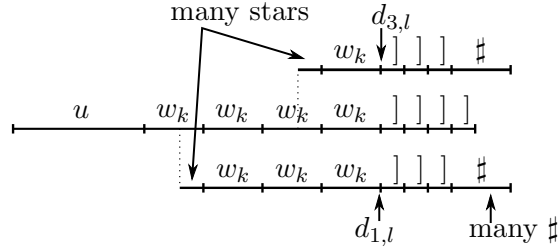}
	\end{center}
	\caption{Illustration of runs appearing in the proof (where $N=4$, $x=1$, $y=3$). Recall that stars can appear between letters of the words.}
	\label{fig2}
	\end{figure}
	
	Because there are finitely many possible $(\calA,\lambda)$-types, 
	we can assume that $\type_{\calA,\lambda}(d_{i,l})=\type_{\calA,\lambda}(d_{i,j})$ for each $i\in\prz{1}{N}$ and each $l$ and $j$.
	Indeed, we can choose (for each $i$ separately) some value of $\type_{\calA,\lambda}(d_{i,l})$ that appears infinitely often, 
	and then we take the subsequence of only these $d_{i,l}$ that give this value.
	
	Since there are more possible indices $i\in\prz{1}{N}$ than the number of classes of the $(\calA,\lambda)$-sequence-equivalence relation, 
	times the number of $(\calA,\lambda)$-types, there have to exist two indices $x$, $y$ with $1\leq x<y\leq N$ such that 
	$\type_{\calA,\lambda}(d_{x,1})=\type_{\calA,\lambda}(d_{y,1})$, and the sequences
	$d_{x,1},d_{x,2},\dots$ and $d_{y,1},d_{y,2},\dots$ are $(\calA,\lambda)$-sequence-equivalent.
	From now we fix these two indices $x,y$.
	Furthermore, because $S_{i,l}'$ is $(k,\phi)$-parallel to $\Rz{uw_k^N,uw_{k+1}}$ for each $i\in\prz{1}{N}$ and each $l$,
	we know that the topmost $k$-stacks of all $d_{x,l}$ and of all $d_{y,l}$ are positionless-equal.
	
	Let $R'$ be a prefix of $S_{x,1}'$ that is $(k,\phi)$-parallel to $\Rz{uw_k^N,uw_k^N]^{N-x}}$.
	Notice that $R'$ consists of $N-x$ runs, each of which is $k$-upper and reads a word of the form $\star^*]\star^*$ (a closing bracket surrounded by some number of stars).
	Let also $R''$ be an $n$-return that starts in $R'(|R'|)$ and reads only $\sharp$ symbols 
	(because $\calA$ recognizes $U$, there is an accepting run $R''$ that starts in $R'(|R'|)$ and reads only $\sharp$ symbols;
	by construction of $\calA$, it is an $n$-return).
	
	Finally, we use Theorem~\ref{thm:stypes} for $\lambda$ (as $\phi$), sequences
	$d_{x,1},d_{x,2},\dots$ (as $c_1,c_2,\dots$) and 
	$d_{y,1},d_{y,2},\dots$ (as $d_1,d_2,\dots$), and for run the $R'\circ R''$.\footnote{
		Notice that we cannot use $\Rz{uw_k^N,uw_k^N]^{N-x}}$ instead of $R'$, because do not know anything about the $(\calA,\lambda)$-type of $R(z(uw_k^N)))$.}
	As noticed above (in particular because $R'(0)=d_{x,1}$), the configurations $R'(0)$, and $d_{x,l}$, and $d_{y,l}$ for each $l$ all have the same $(\calA,\lambda)$-types and positionless topmost $k$-stacks.
	Thus, the assumptions of the theorem are satisfied.
	For each $l$, we obtain runs $S_l=S_l'\circ S_l''$ (from $d_{x,l}$) and $T_l=T_l'\circ T_l''$ (from $d_{y,l}$).
	The word read by any of these runs contains $N-x$ closing brackets with some number of stars around them,
	and after them some number of $\sharp$ symbols.
	
	The runs $\subrun{R}{0}{z(uw_k^{x-1}v_x)}\circ S_{x,l}\circ S_l$ and $\subrun{R}{0}{z(uw_k^{y-1}v_y)}\circ S_{y,l}\circ T_l$ for each $l$ have pattern $uw_k^N]^{N-x}$.
	In this pattern the last opening bracket that is not closed is the last bracket of the $x$-th $w_k$ after $u$.
	Recall that configurations $d_{x,l}$ were obtained by pumping inside the $x$-th $w_k$, so before this bracket;
	for $l\to\infty$ the number of stars inserted there is unbounded.
	From the definition of the language $U$ it follows that the sequence $\sharp(S_1),\sharp(S_2),\dots$ has to be unbounded.
	On the other hand, configurations $d_{y,l}$ were obtained by pumping inside the $y$-th $w_k$, so 
	after the last opening bracket that was not closed (as $y>x$).
	For each $l$ the number of stars before this bracket is the same.
	From the definition of the language $U$ it follows that the sequence $\sharp(T_1),\sharp(T_2),\dots$ has to be constant, hence bounded.
	This contradicts the thesis of Theorem~\ref{thm:stypes}, which says that these sequences are either both bounded or both unbounded.
\qed

\section*{Acknowledgment}

We thank A.~Kartzow, and the anonymous reviewers of this paper and its conference version for their constructive comments.

\bibliographystyle{alpha}
\bibliography{bib}

\end{document}